\def\showauthornotes{0}
\def\showkeys{0}
\def\showdraftbox{0}
\def\confversion{0}
\definecolor{darkred}{rgb}{0.5,0,0}
\definecolor{darkgreen}{rgb}{0,0.35,0}
\definecolor{darkblue}{rgb}{0,0,0.55}
\newcommand{\Authornote}[3]{{\sf\small\color{#3}{[#1: #2]}}}
\newcommand{\Authorcomment}[2]{{\sf \small\color{gray}{[#1: #2]}}}
\newcommand{\Authorfnote}[2]{\footnote{\color{red}{#1: #2}}}
\newcommand{\Authornote}[3]{}
\newcommand{\Authorcomment}[2]{}
\newcommand{\Authorfnote}[2]{}
\newcommand{\draftbox}{\begin{center}
  \fbox{%
    \begin{minipage}{2in}%
      \begin{center}%
        \begin{Large}%
          \textsc{Working Draft}%
        \end{Large}\\
        Please do not distribute%
      \end{center}%
    \end{minipage}%
  }%
\end{center}
\vspace{0.2cm}}
\newcommand{\draftbox}{}
\newtheorem{theorem}{Theorem}[section]
\newtheorem{observation}[theorem]{Observation}
\newtheorem{definition}[theorem]{Definition}
\newtheorem{lemma}[theorem]{Lemma}
\newtheorem{remark}[theorem]{Remark}
\newtheorem{claim}[theorem]{Claim}
\newtheorem{algo}[theorem]{Algorithm}
\def\FullBox{\hbox{\vrule width 6pt height 6pt depth 0pt}}
\def\qed{\ifmmode\qquad\FullBox\else{\unskip\nobreak\hfil
\penalty50\hskip1em\null\nobreak\hfil\FullBox
\parfillskip=0pt\finalhyphendemerits=0\endgraf}\fi}
\def\qedsketch{\ifmmode\Box\else{\unskip\nobreak\hfil
\penalty50\hskip1em\null\nobreak\hfil$\Box$
\parfillskip=0pt\finalhyphendemerits=0\endgraf}\fi}
\def\eps{\varepsilon}
\def\epsilon{\varepsilon}
\def\eps{\epsilon}
\def\phi{\varphi}
\def\cal{\mathcal}
\def\implies{\Rightarrow}
\newcommand{\sub}{\ensuremath{\subseteq}}
\newcommand{\defeq}{:=}
\newcommand{\given}{\;\ifnum\currentgrouptype=16 \middle\fi \vert\;}
\newcommand{\ie}{i.e.,\xspace}
\newcommand{\etal}{et al.\xspace}
\newcommand{\mper}{\,.}
\newcommand{\mcom}{\,,}
\newcommand{\R}{{\mathbb R}}
\newcommand{\E}{{\mathbb E}}
\newcommand{\N}{{\mathbb{N}}}
\let\nfrac=\nicefrac
\newcommand{\abs}[1]{\ensuremath{\left\lvert #1 \right\rvert}}
\newcommand{\norm}[1]{\ensuremath{\left\lVert #1 \right\rVert}}
\newcommand{\ip}[2] {\ensuremath{\left\langle #1 , #2 \right\rangle}}
\newcommand{\one}{{\mathbf{1}}}
\newcommand{\Esymb}{\mathbb{E}}
\newcommand{\Psymb}{\mathbb{P}}
\newcommand{\Varsymb}{\mathrm{Var}}
\DeclareMathOperator*{\ExpOp}{\Esymb}
\def\Pr#1{%
    \ProbabilityRender{\Psymb}{#1}%
}
\def\Ex#1{%
    \ProbabilityRender{\Esymb}{#1}%
}
\def\tildeEx#1{%
    \ProbabilityRender{\widetilde{\Esymb}}{#1}%
}
\def\condPE#1#2{%
	\@ifnextchar\bgroup
	{\ConditionalProbabilityRender{\widetilde{\Esymb}}{#1}{#2}}
	{\ProbabilityRender{\widetilde{\Esymb}}{#1 \given #2}}
}
\def\tildeVar#1{
	\ProbabilityRender{\widetilde{\Varsymb}}{#1}
}
\def\tildeCov#1#2{
	\ProbabilityRender{\tildecov}{#1,#2}
}
\def\ConditionalProbabilityRender#1#2#3#4{
	\renderwithdist{#1}{#2}{#3 \given #4}	
}
\def\ProbabilityRender#1#2{
  \@ifnextchar\bgroup%
  {\renderwithdist{#1}{#2}}
   {\singlervrender{#1}{#2}}
}
\def\singlervrender#1#2{%
   \ensuremath{\mathchoice
       {{#1}\left[ #2 \right]}
       {{#1}[ #2 ]}
       {{#1}[ #2 ]}
       {{#1}[ #2 ]}
   }
}
\def\renderwithdist#1#2#3{%
   \@ifnextchar\bgroup
   {\superfancyrender{#1}{#2}{#3}}
   {\ensuremath{\mathchoice
      {\underset{#2}{#1}\left[ #3 \right]}
      {{#1}_{#2}[ #3 ]}
      {{#1}_{#2}[ #3 ]}
      {{#1}_{#2}[ #3 ]}
     }
   }
}
\def\superfancyrender#1#2#3#4#5{
   \ensuremath{\mathchoice
      {\underset{#1}{{#1}}\left#4 #3 \right#5}
      {{#1}_{#2}#4 #3 #5}
      {{#1}_{#2}#4 #3 #5}
      {{#1}_{#2}#4 #3 #5}
   }
}
\newcommand{\conv}[1]{\mathrm{conv}\inparen{#1}}
\newfont{\inhead}{eufm10 scaled\magstep1}
\newcommand{\calC}{{\cal C}}
\newcommand{\calD}{{\cal D}}
\newcommand{\calF}{{\cal F}}
\newcommand{\calH}{{\cal H}}
\newcommand{\calJ}{{\cal J}}
\newcommand{\calL}{{\cal L}}
\newcommand{\calO}{{\cal O}}
\newcommand{\calT}{{\cal T}}
\newcommand{\calS}{{\cal S}}
\newcommand{\suchthat}{{\;\; : \;\;}}
\DeclareMathOperator\supp{Supp}
\DeclareMathOperator*{\argmin}{\arg\!\min}
\DeclareMathOperator*{\argmax}{\arg\!\max}
\newcommand{\ceil}[1]{\ensuremath{\left\lceil #1 \right\rceil}}
\newcommand{\bigoh}{\operatorname{O}}
\newcommand{\bigomega}{\mathop{\Omega}}
\newcommand{\inparen}[1]{\left(#1\right)}             
\newcommand{\inbraces}[1]{\left\{#1\right\}}           
\newcommand{\Caratheodory}{Carath\'eodory\xspace}
\newcommand{\Zemor}{Z\'emor\xspace}
\DeclareSymbolFont{extraup}{U}{zavm}{m}{n}
\DeclareMathSymbol{\varheart}{\mathalpha}{extraup}{86}
\DeclareMathSymbol{\vardiamond}{\mathalpha}{extraup}{87}
\def\one{\mathbf 1}
\def\tee{\mathfrak t}
\DeclareMathOperator{\PExp}{\widetilde \E}
\title{List Decoding of Tanner and Expander Amplified Codes from Distance Certificates}
\author{
Fernando Granha Jeronimo\thanks{{\tt UC Berkeley}. {\tt granha@ias.edu}. Supported by the NSF grant CCF-1900460.}
\and
Shashank Srivastava\thanks{{\tt TTIC}. {\tt shashanks@ttic.edu}. Supported by the NSF grants CCF-1816372 and CCF-2326685.}
\and
Madhur Tulsiani\thanks{{\tt TTIC}. {\tt madhurt@ttic.edu}. Supported by the NSF grants CCF-1816372 and CCF-2326685.} 
}
\def\dupPE#1{
	\ProbabilityRender{\widetilde{\Esymb}^*}{#1}
}
\DeclareMathOperator{\tildecov}{\widetilde{\operatorname {Cov}}}
\DeclareMathOperator{\dupCov}{\widetilde{\operatorname {Cov}}_{\widetilde{\Esymb}^*}}
\def\li{{ \ell }}
\def\ri{{ r }}
\def\zee{{ \mathbf{Z} }}
\def\dis{{ \Delta }}
\def\tee{{ \theta }}
\def\TanC{{ \calC^{Tan} }}
\def\AELC{{ \calC^{AEL} }}
\def\indi#1{{\one \{ #1 \} }}
\def\chiTan{{ \chi}}
\def\chiAEL{{ \overline{\chi} }}
\begin{document}

\date{}

\maketitle
\draftbox
\thispagestyle{empty}

\medskip
We develop new list decoding algorithms for Tanner codes and distance-amplified codes based on
bipartite spectral expanders. 
We show that proofs exhibiting lower bounds on the minimum distance of these codes can be used as
certificates discoverable by relaxations in the Sum-of-Squares (SoS) semidefinite programming
hierarchy. 
Combining these certificates with certain entropic proxies to ensure that the solutions to the
relaxations cover the entire list, then leads to algorithms for list decoding several families of
codes up to the Johnson bound. 
We prove the following results:
\begin{itemize}
\item We show that the LDPC Tanner codes of Sipser-Spielman [IEEE Trans. Inf. Theory 1996] and Z\'{e}mor [IEEE Trans. Inf. Theory 2001] with alphabet
  size $q$, block-length $n$ and distance $\delta$, based on an expander graph with degree $d$, can be
  list-decoded up to distance $\mathcal{J}_q(\delta) - \epsilon$ in time $n^{O_{d,q}(1/\epsilon^4)}$,
  where $\mathcal{J}_q(\delta)$ denotes the Johnson bound.
\item We show that the codes obtained via the expander-based distance amplification procedure of
  Alon, Edmonds and Luby [FOCS 1995] can be list-decoded close to the Johnson bound using the SoS
  hierarchy, by reducing the list decoding problem to unique decoding of the base code. In
  particular, starting from \emph{any} base code unique-decodable up to distance $\delta$, one
  can obtain near-MDS codes with rate $R$ and distance $1-R - \epsilon$, list-decodable up to the
  Johnson bound in time $n^{O_{\epsilon, \delta}(1)}$.
\item We show that the locally testable codes of Dinur et al. [STOC 2022] with alphabet size $q$,
  block-length $n$ and distance $\delta$ based on a square Cayley complex with generator sets of
  size $d$, can be list-decoded up to distance $\mathcal{J}_q(\delta) - \epsilon$ in time
  $n^{O_{d,q}(1/\epsilon^{4})}$,  where $\mathcal{J}_q(\delta)$ denotes the Johnson bound.
\end{itemize}
%


\newpage

\pagenumbering{roman}
\tableofcontents
\clearpage

\pagenumbering{arabic}
\setcounter{page}{1}

\newpage

\section{Introduction}
\label{sec:intro}

Expander graphs have been a powerful tool for the construction of codes with several interesting
properties, and a variety of applications.
A (very) small subsample of the list of applications already includes the seminal
constructions of expander codes~\cite{SS96, Zemor01}, widely used distance amplification
constructions~\cite{ABNNR92, AEL95}, as well as recent breakthrough constructions of
$\epsilon$-balanced codes~\cite{TS17}, locally testable codes~\cite{DELLM22}, and quantum LDPC
codes~\cite{PK22, LZ22}. 
A detailed account of the rich interactions between coding theory and expander graphs, and
pseudorandom objects in general, can be found in several excellent surveys and
textbooks on these areas~\cite{GSurvey04, Vadhan12, HLW06, GRS20}. 

The combinatorial and spectral structure of codes based on expander graphs often leads to very
efficient algorithms for unique-decoding. However, obtaining list-decoding for constructions based on
expanders often require incorporating additional algebraic structure in the construction, to take
advantage of the significant machinery for list-decoding using polynomials~\cite{Guruswami:survey}.
While there are certainly important counterexamples to the above statement, such as the
expander-based codes of Guruswami and Indyk~\cite{GI03} and Ta-Shma~\cite{TS17} which
allow for list-decoding, we know of few \emph{general techniques to exploit expansion for
  list-decoding.}
In this work, we consider the question of finding techniques for list-decoding from errors,
which can work in settings where no algebraic structure may be available, such as the decoding of LDPC
codes constructed from expander graphs.

Building on the significant body of work for LP decoding of expander codes~\cite{Feldman03, FWK05}, 
we consider the question of decoding as an optimization problem, which can be approached
via convex relaxations. 
We show that stronger relaxations obtained via the Sum-of-Squares (SoS) hierarchy of semidefinite
programs, can in fact be used to obtain list-decoding algorithms for several code constructions
based on expanders. 
These hierarchies can be viewed as proof systems~\cite{FKP19}, with relaxations at a level $t$ of
the hierarchy corresponding to proofs which can be carried out by reasoning about sum-of-squares of
polynomials of degree at most $t$ in the optimization variables. The proof system corresponding to a
small number of levels of the SoS hierarchy turns out to be powerful enough
to capture the distance proofs for several expander-based codes, when the proofs rely on spectral
properties of expander graphs.
Combined with generic ``covering lemmas'' which ensure that the solutions to these relaxations
include sufficient information about all codewords in the list, these can be used to design
list-decoding algorithms for several families of codes based on expanders, up to the Johnson bound
where the list size is known to be bounded.

\vspace{-5 pt}
\subsection{Tanner Codes}
Low-Density Parity Check (LDPC) codes were introduced by a foundational work of
Gallager~\cite{Gallager62} and graph-based constructions with lower bounds on the distance (based on
girth), were obtained by Tanner~\cite{Tanner81}.
Sipser and Spielman~\cite{SS96} gave the first constructions of Tanner codes with distance bounds
based on the expansion of the graph, which also admitted linear time encoding and (unique) decoding
algorithms. 
An elegant construction based on bipartite spectral expanders,  with particularly simple
(linear-time) unique-decoding algorithms, was given by \Zemor~\cite{Zemor01}.
Variants of these constructions have led to applications~\cite{RU:book} and have also been used as
building blocks in the recent constructions of locally testable codes by Dinur \etal~\cite{DELLM22}
and quantum LDPC codes by Panteleev and Kalachev~\cite{PK22} (see also \cite{LZ22}).

There exist highly efficient algorithms for the unique-decoding of these codes from both probabilistic
and adversarial errors, based on combinatorial arguments, linear programming
relaxations~\cite{Feldman03, ADS12, FWK05} and message passing
algorithms~\cite{Guruswami:MP-survey, RU:book}. 
In the setting of \emph{erasures} where the location of the corruptions in the transmitted codeword is known,
recent work has also led to linear-time list decoding algorithms~\cite{RZWZ21, HW15}, which also
work for the more general task of list recovery in the large alphabet (high-rate) case~\cite{HW15}.
However, to the best of our knowledge, no list decoding algorithms are known in the more challenging
(and common) setting of \emph{errors} when the location of the corruptions are unknown, even though
random ensembles of LDPC codes are even known to combinatorially achieve list-decoding
capacity~\cite{MosheiffRRSW19}, and thus have bounded list sizes up to optimal error radii.

We show that relaxations obtained via the SoS hierarchy can be used list-decode \Zemor's
construction of Tanner codes~\cite{Zemor01}, up to the Johnson bound (which is an error-radius where
list sizes are always known to be bounded). 
Our proof technique can also be extended to work for other constructions of Tanner codes where the
proof for the distance of the code is based on spectral arguments, but is easiest to illustrate in
the context of \Zemor's construction. 
We briefly recall the construction before describing our result.

Given a bipartite $d$-regular graph $G=(L,R,E)$ with $\abs{L}=\abs{R}=n$, \Zemor's construction
yields a code with block-length $m = \abs{E} = nd$. 
Given an alphabet size $q$, the code consists of all edge-labelings $f \in [q]^m$, such that the
labels in the neighborhood of every vertex
\footnote{One can also consider variants where the base code $\calC_0$ is different for different
  vertices, but this does not make a difference for our purposes.}, 
belong to a ``base code'' $\calC_0 \subseteq [q]^d$.
When the base code has (fractional) distance $\delta_0$ and $G$ has (normalized) second singular
value at most $\lambda$ for the biadjacency matrix, the distance of the code is known to be at least $\delta = \delta_0 \cdot
(\delta_0 - \lambda)$. 
The Johnson bound for distance $\delta$ and alphabet size $q$ is defined as
$\calJ_q(\delta) \defeq (1 - \nfrac{1}{q}) \cdot \inparen{ 1 - \inparen{1 - \nfrac{q \cdot
      \delta}{(q-1)}}^{1/2}}$, and is always greater than the unique-decoding radius $\delta/2$.
\ifnum\confversion=1
We prove the following (see full version for a formal statement).
\else
We prove the following.
\fi

\begin{theorem}\ifnum\confversion=0[Informal version of \cref{thm:tanner-decoding}]\fi
\ Given a Tanner code $\calC$ as above and $\eps > 0$, there is a deterministic algorithm based on
$q^{O(d)}/\eps^4$ levels of the SoS hierarchy, which given an arbitrary $g \in [q]^m$, runs in time
$n^{q^{O(d)}/\eps^4}$, and recovers the list of codewords within distance
$\calJ_q(\delta) - \eps$ ~of $g$. 
\end{theorem}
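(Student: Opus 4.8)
The plan is to phrase list decoding as a feasibility/optimization problem for a Sum-of-Squares relaxation, certify feasibility via the \emph{genuine} uniform distribution over the list, and then round the relaxation so as to recover \emph{every} list element. Concretely, I would introduce variables $\{z_{e,a}\}_{e\in E,\,a\in[q]}$ intended as the indicators $\mathds{1}[f_e=a]$, with constraints: Booleanity $z_{e,a}^2=z_{e,a}$; one-hotness $\sum_{a\in[q]}z_{e,a}=1$ for every edge $e$; and, for each vertex $v$, membership of the local restriction $(z_{e,a})_{e\ni v}$ in the base code $\calC_0$, encoded by the degree-$d$ equations $\prod_{j=1}^{d}z_{e_j(v),w_j}=0$ for each non-codeword $w\in[q]^d\setminus\calC_0$. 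Encoding these (and the robust relaxations of them used below) is what forces the degree to be at least $q^{O(d)}$. As objective I would maximize the agreement $\sum_e \PExp[z_{e,g_e}]$ with the received word, augmented by a small \emph{entropic proxy} — a polynomial surrogate for the entropy of the pseudo-distribution, such as the negated collision probability $-\sum_{e,a}\PExp[z_{e,a}]^2$ — whose role is to keep the pseudo-distribution from collapsing onto a single codeword of the list.

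\emph{Feasibility and value.} Let $\calL$ be the list of codewords within distance $\calJ_q(\delta)-\eps$ of $g$; the $q$-ary Johnson bound shows $|\calL|$ is bounded, independently of $n$. The uniform distribution over $\calL$ is a genuine distribution satisfying all the constraints, so the relaxation is feasible at every degree, with agreement value at least $(1-\calJ_q(\delta)+\eps)m$ and a definite amount of entropic proxy; hence an optimal degree-$(q^{O(d)}/\eps^4)$ pseudo-expectation $\PExp$, computable in time $n^{q^{O(d)}/\eps^4}$, attains at least this value.

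\emph{First main step: a robust distance certificate inside SoS.} Here I would recast \Zemor's spectral distance proof as a low-degree SoS proof. The two ingredients are (i) that every vertex incident to the disagreement set $S$ of two distinct codewords carries at least $\delta_0 d$ edges of $S$, which follows from the base-code distance and is a degree-$O(d)$ fact once the membership constraints are present, and (ii) the expander mixing lemma, which is an SoS-certifiable spectral inequality of degree $2$ in the biadjacency matrix; chaining them certifies $|S|\ge\delta m$ with $\delta=\delta_0(\delta_0-\lambda)$. What is actually needed is the \emph{robust} version, stable under the $O(\eta)\cdot m$ Hamming perturbations that rounding introduces: inside SoS, no word is simultaneously close to two distinct codewords, and, crucially, the agreement profile of $z$ with $g$ forces the ``second-moment'' structure underlying the Johnson bound at radius $\calJ_q(\delta)-\eps$. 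Pushing the expander-mixing argument through with the slack this requires — and carrying the base-code membership relaxation along throughout — is where I expect the main difficulty.

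\emph{Second main step: covering and rounding.} Given $\PExp$, apply conditional (propagation) rounding: pick a uniformly random seed set $W\subseteq E$ of size $k=q^{O(d)}/\eps^4$ (below the SoS degree), sample $z|_W$ from its marginal, condition to obtain $\PExp'$, round each edge independently via $\Pr[f_e=a]=\PExp'[z_{e,a}]$, and finally run \Zemor's linear-time unique decoder on the resulting $f$. By the global-correlation bound, for a random $W$ the conditioned $\PExp'$ is on average $o_k(1)$-close to a product distribution, so $f$ is within $o_k(1)\cdot m$ Hamming distance of a ``pseudo-sample''; combining this with the guarantee of the entropic proxy that $\PExp$ spreads its mass roughly evenly over the basins (unique-decoding balls) of the list elements, and with the robust distance certificate, one shows that for each $c\in\calL$ the word $f$ lands within the unique-decoding radius of $c$ with probability at least $\tau(q,d,\eps)>0$. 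Exhausting over all $n^{O(k)}$ seeds therefore recovers all of $\calL$ deterministically; pruning the collected codewords to those within distance $\calJ_q(\delta)-\eps$ of $g$ outputs exactly $\calL$, in total time $n^{q^{O(d)}/\eps^4}$. The second place I expect trouble is the covering lemma itself — proving that ``high entropic proxy plus robust distance certificate'' really forces \emph{each} list element to receive non-negligible pseudo-mass, rather than only the aggregate statement that the list is covered.
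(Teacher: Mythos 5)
Your proposal matches the paper's high-level architecture: SoS relaxation with Booleanity, one-hotness, and local base-code membership constraints; an "entropic proxy" (collision probability, equivalently $\ell_2$-norm) to keep the pseudo-distribution from collapsing; conditioning to reduce correlation; an SoS rendering of Z\'emor's expander-mixing distance argument; independent rounding plus a unique decoder; and exhausting over seeds to derandomize. These are all ingredients in the paper. But you have explicitly flagged the one place that is the paper's actual contribution, and your proposal has no argument for it: the covering lemma.

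The paper does \emph{not} combine the agreement objective and the entropic proxy into a single Lagrangian. It \emph{minimizes} $\Psi(\tildeEx{\cdot})=\smallnorm{\tildeEx{\chi(\zee)}}^2$ subject to the hard constraint $\ip{\tildeEx{\chi(\zee)}}{\chi(g)}>\gamma$. The constrained form is essential to the covering argument (Lemmas 4.1, 4.4): if there were some $h\in\calL$ with $\ip{\tildeEx{\chi(\zee)}}{\chi(h)}\le\gamma^2$, then, because the integral pseudocodeword $\PExp^{(h)}$ also satisfies the agreement constraint (since $h\in\calL$), one can move along the segment toward $\PExp^{(h)}$, stay feasible, and strictly decrease $\Psi$ --- a first-order optimality contradiction. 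This is exactly the step you said you didn't see how to prove ("high entropic proxy plus robust distance certificate really forces each list element to receive non-negligible pseudo-mass"), and it is not a matter of pseudo-mass at all: the paper proves a deterministic geometric fact about the minimizer, namely $\dis(\tildeEx{\cdot},h)\le\delta-\Omega(\eps)$ for \emph{every} $h\in\calL$ simultaneously, with no appeal to the pseudo-distribution "spreading mass over basins." Without this, your conditional-rounding step has no quantitative foothold: conditioning can only be said to approximately preserve $\dis(\tildeEx{\cdot},h)$ in expectation, and you then need the covering guarantee to know that quantity was small to begin with.

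Relatedly, the distance certificate you describe is weaker than what is needed. The paper's Lemma 5.5 is a \emph{dichotomy about pseudocodewords}: for an $\eta$-good pseudocodeword $\tildeEx{\cdot}$ and any codeword $h$, either $\dis(\tildeEx{\cdot},h)\le 3\eta$ or $\dis(\tildeEx{\cdot},h)\ge\delta-3\eta$. This is proved by applying a vector-valued EML to the matrix $[\tildeEx{X_\ell Y_r}]$ (which is PSD), together with the base-code distance bound, yielding a quadratic inequality in $\tau=(\Ex{\ell,r}{\tildeEx{X_\ell}\tildeEx{Y_r}})^{1/2}$. Your phrasing ("no word is simultaneously close to two distinct codewords," robustness to $O(\eta)m$ Hamming perturbations) is a statement about true words, not pseudocodewords, and would not by itself imply the dichotomy. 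The covering lemma tells you the conditioned pseudocodeword is on the small side of the dichotomy for each $h\in\calL$ (with nonnegligible probability over the conditioning), and only then does independent rounding followed by unique decoding recover $h$. So the gap you flagged is genuine, and it is precisely the part that makes the rest of the outline close.
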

Note that one can think of $q, d$ in \Zemor's construction and $\eps$ above as constants (in fact
$d$ is required to be constant for LDPC codes), in which case the above running time is polynomial in
$n$. 
Of course, these running times are no match for the linear-time unique decoding, and erasure
list-decoding, algorithms available for these codes. 
However, we view the techniques used in the proof of the above algorithm as a first step towards
identifying the right structures, and designing truly efficient algorithms, to take advantage of
expansion for list-decoding of LDPC codes (as has proved to be the case for several SoS-based algorithms in the past). 

Our techniques also extend to yield a similar statement for the recent construction of locally
testable codes by Dinur \etal~\cite{DELLM22}, which are Tanner codes on a different structure
called a ``square Cayley complex''.
These are constructed using a group $H$ and two generator sets $A,B \subseteq H$, with each
generator set individually defining an expanding Cayley graph on $H$ (with second eigenvalue bounded
by $\lambda$).  The sizes of the generator
sets (equal to the graph degree) are taken as constant, say $\abs{A} = \abs{B} =  d$. The
construction relies on base codes $\calC_A, \calC_B \subseteq [q]^d$, with distances, say $\delta_A$
and $\delta_B$, and is known to have distance at least $\delta = \delta_A \cdot \delta_B \cdot
\inparen{\max\{\delta_A, \delta_B\} - \lambda}$.
\begin{theorem}\ifnum\confversion=0[Informal version of \cref{thm:square-decoding}]\fi
\ Given a code  $\calC^{SCC}$ with block length $m$ and alphabet $[q]$, 
supported on a square Cayley complex as described above, and $\eps > 0$, there is a deterministic
algorithm based on $q^{O(d^2)}/\eps^{O(1)}$ levels of the SoS hierarchy, which given an arbitrary $g \in [q]^m$, 
recovers the list of codewords within distance $\calJ_q(\delta) - \eps$ ~of $g$. 
\end{theorem}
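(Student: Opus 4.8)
The plan is to realize $\calC^{SCC}$ as a Tanner-type code and then run, essentially unchanged, the SoS list-decoding strategy behind \cref{thm:tanner-decoding}, after re-deriving the distance bound of Dinur \etal{} as a low-degree SoS certificate. Concretely, let the coordinates of $\calC^{SCC}$ index the squares of the square Cayley complex, so that for every vertex $v$ the restriction of a word to the $d \times d$ array of squares meeting $v$ must be a codeword of the tensor base code $\calC_A \tensor \calC_B \subseteq [q]^{d \times d}$. We introduce $q$-ary indicator variables $z_{e,i}$ (``square $e$ carries symbol $i$''), the Boolean and single-assignment axioms, for each vertex $v$ the constant-size axiom ``local view at $v$ lies in $\calC_A \tensor \calC_B$'', and the agreement axiom $\dist(z,g) \le \calJ_q(\delta) - \eps$ against the received word $g \in [q]^m$; one subtlety already here is that a single square lies in the local view of several vertices, so the per-vertex axioms must be phrased consistently over the shared variables before one invokes SoS representability of the (constant-size) local constraints. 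We then take a pseudo-expectation $\PExp$ of degree $t = q^{O(d^2)}/\eps^{O(1)}$ feasible for this system and extract codewords from $\PExp$ by a conditioning-and-rounding procedure, exactly as in the Tanner case. Two ingredients make this work: an SoS \emph{distance certificate} for $\calC^{SCC}$, and the verification that the \emph{entropic-proxy covering lemma} from the Tanner section applies with $\calC_A \tensor \calC_B$ in place of the single-vertex base code $\calC_0$.

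\textbf{The distance certificate.}
We must exhibit, at degree $q^{O(d^2)}/\eps^{O(1)}$, an SoS proof of (the robust form of) the bound $\delta \ge \delta_A \cdot \delta_B \cdot (\max\{\delta_A,\delta_B\} - \lambda)$, i.e.\ that any feasible $\PExp$ with agreement at least $1 - (\calJ_q(\delta) - \eps)$ is SoS-concentrated near the list of codewords within that radius of $g$. The Dinur \etal{} argument decomposes into SoS-friendly pieces. \emph{(i)} Everything \emph{local} — membership in $\calC_A \tensor \calC_B$, its distance $\delta_A \delta_B$, and any local robustness the argument uses — is a statement about a domain of constant (though possibly large) size $q^{O(d^2)}$, namely $[q]^{d\times d}$, and any valid polynomial inequality over such a domain has a Positivstellensatz certificate of that degree; this is where the $q^{O(d^2)}$ factor enters. \emph{(ii)} The only \emph{non-local} ingredient is an expander-mixing / Cheeger-type estimate on the Cayley graphs $\Cay(H,A)$ and $\Cay(H,B)$, and the bound on the normalized second eigenvalue of each is a degree-$2$ SoS fact — after projecting out the all-ones eigenvector it is the statement $\lambda^2 I - (M/d)^2 \succeq 0$, witnessed by a single square ($M$ the adjacency matrix). \emph{(iii)} The two-dimensional combination — passing from ``rows in $\calC_A$'', ``columns in $\calC_B$'', and expander mixing to a global distance statement by double counting — is, given \emph{(i)} and \emph{(ii)}, a fixed-degree manipulation. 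Assembling these yields the certificate.

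\textbf{Covering, rounding, and the main obstacle.}
With the certificate in hand, the argument for \cref{thm:tanner-decoding} transfers: it shows that every codeword $c$ of the list contributes nontrivial mass to any feasible $\PExp$; the entropic-proxy covering lemma then guarantees that conditioning $\PExp$ on a bounded (in terms of $d,q,\eps$) number of suitably chosen coordinate assignments reaches each codeword of the list; and rounding each conditioned pseudo-distribution returns a genuine codeword of $\calC^{SCC}$ within distance $\calJ_q(\delta) - \eps$ of $g$, the distance certificate being precisely what forces the rounded word to be an actual codeword rather than merely a near-codeword. The number of conditionings to enumerate is controlled by the Johnson-bound list size together with the pseudoentropy drop per conditioning, giving a deterministic algorithm; the $q^{O(d^2)}$ dependence throughout reflects the tensor base code $\calC_A \tensor \calC_B$ replacing $\calC_0 \subseteq [q]^d$. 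The main obstacle is step \emph{(iii)}: the Dinur \etal{} distance proof is genuinely $2$-dimensional, playing the expansion of $\Cay(H,A)$ against that of $\Cay(H,B)$ through the tensor structure, and is more delicate than \Zemor's single-graph argument; moreover the form we need is \emph{robust} — it quantifies agreement on fractional pseudo-codewords rather than exact codewords — so one must carefully track how the row- and column-codewords constrain each other when only a pseudo-expectation, not an honest distribution, is available. Once this robust $2$-dimensional argument has been carried out inside SoS, the remainder is a direct transcription of the proof of \cref{thm:tanner-decoding}.
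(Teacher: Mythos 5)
Your overall architecture matches the paper: variables on squares, tensor local constraints $\calC_A \tensor \calC_B$ at each of the four kinds of vertex, an SoS distance certificate of degree governed by the $d\times d$ local view, then the same covering-lemma / conditioning / rounding pipeline as \cref{thm:tanner-decoding}. You also correctly locate the one genuinely new technical piece as the distance certificate, and correctly observe that the expander-mixing ingredient is a degree-2 SoS fact.

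However, you flag the ``$2$-dimensional combination'' in step \emph{(iii)} as the main obstacle — ``playing the expansion of $\Cay(H,A)$ against that of $\Cay(H,B)$ through the tensor structure'' — and leave it unresolved. This is a genuine gap, because that step is precisely what the proof has to actually carry out, and the way the paper does it is not a robust form of the Dinur et al.\ double counting but a \emph{reduction back to the one-dimensional Tanner argument}. Concretely, fix $b\in B$: the squares $\{X(g,a,b) : g\in H, a\in A\}$ are exactly the edges of the Cayley graph $\Cay(H,A)$, and the restrictions of a codeword to $X(g,\cdot,b)$ and $Y(g,\cdot,b)$ are rows of tensor codewords, hence codewords of $\calC_A$. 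So for each fixed $b$ you get a Tanner-code instance, and the quantity $\tau_b = (\Ex{g}{\tildeEx{\one\{\zee_{X(g,\cdot,b)}\neq h\}}})^{1/2}(\Ex{g}{\tildeEx{\one\{\zee_{Y(g,\cdot,b)}\neq h\}}})^{1/2}$ satisfies the Tanner quadratic $\tau_b^2 - (\delta_A-\lambda)\tau_b + \eta \ge 0$ using only the expansion of $\Cay(H,A)$. The two-dimensionality enters only in the \emph{lower bound}: each full local view $X(g,\cdot,\cdot)$ is a tensor codeword with distance $\delta_A\delta_B$, so $\dis(\tildeEx{\cdot},h) \ge \delta_A\delta_B\,\tau_b$ for \emph{every} $b$. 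Averaging the upper bound over $b$ and taking the best $b$ in the lower bound gives the dichotomy (either some $\tau_b$ is large and the distance is at least $\delta_A\delta_B(\delta_A-\lambda) - O(\delta_B\eta)$, or all $\tau_b$ are small and the pseudocodeword is $O(\eta)$-close to $h$). Correspondingly, the $\eta$-good condition the conditioning must achieve is a family of $d$ conditions, one per $b$ (small covariance between $\zee_{X(g_1,\cdot,b)}$ and $\zee_{Y(g_2,\cdot,b)}$), obtained from the Tanner conditioning argument by a union bound at the cost of a factor $d$ in the SoS degree. So the key missing idea in your proposal is: don't try to internalize the full 2D argument inside SoS — slice by $b$, run $d$ copies of the Tanner certificate with base code $\calC_A$, and recover the factor of $\delta_B$ purely from the local tensor distance.
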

\vspace{-5 pt}
\subsection{Distance Amplified Codes}
The proofs for the distance of the above Tanner codes, are also very similar to the ones used for
analyzing the distance amplification procedure of Alon, Edmonds, and Luby~\cite{AEL95} (AEL), based on
expander graphs.
While there several variants of this construction discussed in the literature, we 
will discuss a version of the AEL construction~\cite{Kopparty} which is particularly close to the
Tanner code construction of \Zemor.
Given a $d$-regular bipartite graph $G=(L,R,E)$ with second singular value $\lambda$, an ``outer" code $\calC_1 \subseteq [q_1]^n$ with distance $\delta_1$, and an ``inner" code $\calC_0 \subseteq [q]^d$ with $\abs{\calC_0} = q_1$, the AEL procedure constructs a new code $\calC^{AEL} \subseteq [q^d]^n$ with distance at least $\delta_0 - \frac{\lambda}{\delta_1}$. 
Thus, it yields constructions with arbitrarily large block lengths that inherit the parameters of the small inner code. 

The AEL procedure has been used as an important ingredient for obtaining optimal rate-distance
tradeoffs in several constructions, such as the capacity-achieving list-decodable codes by Guruswami
and Rudra~\cite{GuruswamiR06}. The amplification is achieved via a simple redistribution of symbols using the expander, and the construction also preserves several interesting local properties of the outer code, such as the property of being LDPC, or locally testable, or locally correctable~\cite{KMRZS17, GKORZS18}. 
We refer the reader to the discussion in \cite{KMRZS17} for an excellent account of the applications and properties of the AEL construction.

The AEL procedure has been used to construct several list-decodable codes, including some of the
results cited above, and a quantum analogue of the construction was also used recently by Bergamaschi \etal~\cite{BGG22} to obtain quantum codes meeting the Singleton bound (via quantum list decoding). 
However, for the resulting code $\calC^{AEL}$ to be list-decodable, one often needs to assume stronger properties such as list-recovery, or some algebraic structure (or both) for the outer code $\calC_1$. 
Since these stronger properties may not always be available (for example, when one wants to preserve some local properties for $C_1$ like being LDPC), we again consider the question of finding techniques which can allow for list-decoding $\calC^{AEL}$ for expanding graphs $G$, without relying on additional structure from $\calC_1$.

We show that relaxations based on the SoS hierarchy, can be used to list-decode the distance-amplified code $\calC^{AEL}$, even when the outer code is only assumed to be \emph{unique-decodable}. In particular, we prove the following result\ifnum\confversion=1 \ (a formal statement appears in full version)\fi:
\begin{theorem}\ifnum\confversion=0[Informal version of \cref{thm:list_decoding_ael}]\fi
\ Let $\calC^{AEL}$ be a distance-amplified code as above, with the outer code $\calC_1$ taken to be unique-decodable from radius $\delta_{dec}$ in  time $T(n)$, and let $\delta = \delta_0 - \frac{\lambda}{\delta_{dec}}$. Then, for every $\eps > 0$,
there is a deterministic algorithm based on $q^{O(d)}/\eps^4$ levels of the SoS hierarchy, which
given an arbitrary $g \in [q^d]^n$, runs in time
$n^{q^{O(d)}/\eps^4} + O(T(n))$, and recovers the list of codewords within distance
$\calJ_{q^d}(\delta) - \eps$ ~of $g$.
\end{theorem}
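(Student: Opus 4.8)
The plan is to phrase list decoding of $\AELC$ as a feasibility problem for a degree-$t$ Sum-of-Squares relaxation with $t = q^{O(d)}/\eps^4$, and then to mine every list codeword out of the resulting pseudo-distribution using a spectral distance certificate together with an entropic covering argument. Concretely, I would introduce formal variables $(x_u)_{u \in L}$ over $[q_1]$ and $(f_v)_{v \in R}$ over $[q^d]$ (equivalently, one symbol variable per edge, over $[q]$) subject to the polynomial constraints: (a) around each left vertex $u$, the $d$ incident edge-symbols of $f$ equal $\mathrm{enc}_{\calC_0}(x_u)$, a constraint of SoS-degree at most $d$ since $|\calC_0| \le q^d$; and (b) $\mathrm{dist}(f,g) \le \calJ_{q^d}(\delta) - \eps$. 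Every codeword of the list yields a $0/1$ feasible point, so the relaxation is feasible, and computing a pseudo-expectation $\PExp$ takes time $n^{O(t)} = n^{q^{O(d)}/\eps^4}$. Here the $q^{O(d)}$ in the degree is forced by the inner-code consistency constraints and by the need for ``tensoriality'' over the size-$\le d$ left neighborhoods, while $\eps^{-4}$ accounts for the precision demanded later by the covering step.

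Next I would verify that the AEL distance argument is a constant-degree SoS proof. Classically: if $x, x'$ are outer codewords disagreeing on a left-set $S$, then the inner encodings disagree on $\ge \delta_0 d$ edges out of each $u \in S$, all incident to the right-disagreement set $T$, whereas the expander mixing lemma bounds $e(S,T) \le \tfrac{d|S||T|}{n} + \lambda d \sqrt{|S||T|}$; since the mixing lemma is a positive-semidefiniteness statement about $(A - \tfrac{d}{n}J)$ restricted to $S \times T$, it has a degree-$O(1)$ SoS proof. I would push this through to certify, inside SoS: (i) that $\AELC$ has distance $\ge \delta$, and hence (via an SoS rendering of the Johnson-bound computation) that any two codewords carrying non-negligible pseudo-mass agree on all but a $\delta$-fraction of coordinates and are therefore equal; and (ii) a \emph{transfer inequality}: for a fixed $x^\ast \in \calC_1$ with AEL image $c$, if a (near-)codeword $f$ agrees with $c$ on all but a small $\rho$-fraction of right vertices, then decoding each left neighborhood of $f$ to its nearest $\calC_0$-codeword produces a string $\widehat{x}$ disagreeing with $x^\ast$ on at most a $\delta_{dec}$-fraction of left vertices. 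Item (ii) is exactly the promised reduction of list decoding $\AELC$ to unique decoding of $\calC_1$ (with the tiny code $\calC_0$ handled by brute force).

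To make the pseudo-distribution \emph{cover} the list, I would run the standard SoS conditioning argument over the right-vertex variables while tracking an entropic potential $\Phi$ (a pseudo-entropy, or total pseudo-covariance, of the right coordinates): conditioning on the coordinate that maximizes the remaining correlation drops $\Phi$ by a definite amount, so after $O(1/\eps^2)$ conditionings the conditional pseudo-distribution over the right vertices is $\eps^{O(1)}$-close to a product distribution; combined with the distance certificate, which forbids two far-apart codewords from both carrying pseudo-mass, near-product-ness forces concentration on a single codeword of the list. Since we want a deterministic algorithm we enumerate \emph{all} conditionings on $O(1/\eps^2)$ right coordinates --- $n^{O(1/\eps^2)}\cdot q^{O(d/\eps^2)}$ in number, which is at most $n^{t}$ --- and for each one: round the conditional first moments coordinatewise, decode each left neighborhood to its nearest $\calC_0$-codeword, and feed the resulting $\widehat{x}$ to the outer unique decoder in time $T(n)$; the transfer inequality guarantees success whenever the conditioning matched a list codeword. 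Finally, output the union of the recovered $\AELC$-codewords that actually lie within $\calJ_{q^d}(\delta) - \eps$ of $g$: by construction this set is contained in the list, and by the covering it contains the list, for a total running time of $n^{q^{O(d)}/\eps^4} + O(T(n))$.

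The main obstacle I anticipate is the covering step. One has to choose the entropic potential and the conditioning schedule so that (a) the number of conditionings needed to reach near-product form depends only on $\eps, \delta, q, d$ --- which is what pins down the SoS degree and the exponent of $n$ --- and (b) ``near-product plus pairwise-far codewords'' genuinely yields concentration on \emph{one} list element, so the downstream rounding is meaningful. Getting (b) right uses the SoS-internalized distance bound for $\AELC$ and the Johnson-bound list-size estimate it implies, and these must interlock with the spectral transfer inequality so that the per-left-vertex rounding stays within $\delta_{dec}$. A secondary but necessary point is showing, inside the SoS proof system, that the nearest-$\calC_0$-codeword operation behaves like an honest unique decoder of $\calC_0$ and that its composition with the expander-induced redistribution is stable enough for agreement on the right to transfer to agreement on the left without loss beyond the $\lambda/\delta_{dec}$ slack built into $\delta$.
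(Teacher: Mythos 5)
The high-level architecture you describe (SoS relaxation respecting inner-code constraints, SoS-internalized expander-mixing distance proof, conditioning to reduce correlation, transfer from right-distance to left-distance, recovery via the outer unique decoder, derandomization by enumeration) is essentially the paper's, so you have the shape right. However, there is one genuine gap: your covering step does not actually give covering.

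You propose to set up a \emph{feasibility} problem --- add the constraint $\mathrm{dist}(f,g) \le \calJ_{q^d}(\delta)-\eps$ and take any feasible pseudo-distribution --- and then rely on the BRS-style conditioning plus the distance certificate to force concentration on a list element. The problem is that the SoS solver can return an \emph{integral} feasible point, i.e.\ a single codeword $h_1$ of the list; after that no amount of conditioning can create overlap with a different $h_2 \in \calL$, because conditioning only moves pseudo-mass already present, it never spreads it out. The distance certificate also does not help here: it is a statement about pseudo-codewords with low cross-covariance being either very close to or very far from a fixed codeword $h$, and an integral $h_1$ trivially satisfies it while being distance $\geq \delta$ from every other $h_2$. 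What is missing is exactly the paper's covering lemma (\cref{lem:covering} and its algorithmic versions \cref{lem:abstract_algorithmic_covering_ael}, \cref{lem:cover_for_list_ael}): instead of a feasibility constraint, one \emph{minimizes} the convex quadratic $\Psi(\tildeEx{\cdot}) = \smallnorm{\tildeEx{\chiAEL(\zee)}}^2$ subject to $\ip{\tildeEx{\chiAEL(\zee)}}{\chiAEL(g)} > \gamma$. The optimality of the norm minimizer is what forces the pseudo-codeword to have inner product $> \gamma^2$ with \emph{every} list element simultaneously, which then survives conditioning in expectation (by the tower rule for pseudo-expectations) and makes the downstream Markov-plus-transfer argument go through for each $h \in \calL$ separately. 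Without this, ``enumerate all conditionings and decode'' has no guarantee of discovering $h_2$ at all.

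A secondary quantitative point: you claim $O(1/\eps^2)$ conditionings suffice, but the paper needs $k^* \le q^{3d}/\eta^2$ conditionings where the Markov argument forces $\eta = \Theta(\eps_2^2) = \Theta(\eps^2)$, giving $q^{O(d)}/\eps^4$ conditionings and hence the $q^{O(d)}/\eps^4$ SoS degree in the theorem. Your $O(1/\eps^2)$ figure would undercount the degree needed to make both the $\eta$-goodness and the closeness-to-$h$ events survive the union bound. Finally, your ``transfer inequality'' is close to the right statement but should really be phrased for $\eta$-good pseudo-codewords as in \cref{lem:AEL_amplification}, i.e.\ $\dis^R(\tildeEx{\cdot},h) \geq \delta_0 - \frac{\lambda+\eta}{\dis^L(\tildeEx{\cdot},h)}$, and the recovery of the outer codeword from the left-marginals is by sampling (or threshold rounding) from the local distributions at each left vertex as in \cref{lem:unique_decoding_ael}, not by per-vertex nearest-$\calC_0$ decoding of a single string.
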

\vspace{-5pt}
We note that the decoding radius for the above algorithm is $\calJ\inparen{\delta_0 - \frac{\lambda}{\delta_{dec}}}$ instead of $\calJ\inparen{\delta_0 - \frac{\lambda}{\delta_{1}}}$, which would be the Johnson bound for the true distance of the code $\calC^{AEL}$. 
However, in applications of AEL, one often chooses parameters so that the distance of code is about
$\delta_0$, and the effect of the second term is minimized by choosing a small $\lambda$. 
When $\calC_1$ is known to unique-decodable up to a smaller radius $\delta_{dec}$, one can still obtain list-decodable codes up to (nearly) the Johnson bound by choosing $G$ to be a sufficiently good expander (with small $\lambda$).

\vspace{-5 pt}
\subsection{Our techniques}
As mentioned earlier, our techniques are based on using the Sum-of-Squares hierarchy of convex
relaxations for an optimization problem related to the decoding problem. 
In unique-decoding algorithms based on the LP relaxations, the optimization objective is to find the
closest codeword to a given received word, and the correctness of the decoding procedure often
relies on the LP being integral for an appropriate range of parameters.
In contrast, algorithms for list-decoding actually need to ensure that the solution to the convex
relaxation has sufficient information about all codewords in the list, and so it is important that
the solution is \emph{not} integral but rather a ``maximally-convex'' combination, covering all of
the list elements. 
This can be ensured by statements which we call ``covering lemmas'', which are discussed in more detail in \ifnum\confversion=1 \cref{sec:overview}. \else \cref{sec:covering}. \fi
The proofs for the covering lemmas are based on the techniques from~\cite{AJQST20}, where these were
used for the list-decoding of direct-sum codes.

A second key component of our proof, which makes the SoS hierarchy particularly appealing to work
with, is for the relaxation to be able to capture global properties of the code, such as the
distance. 
While local properties of the code, such as the structures of the base/inner codes are enforced
through explicit constraints included in the relaxation, the global property of distance is a
nontrivial consequence of these constraints.
However, the \emph{proofs} of these distance properties are spectral in nature, for the codes we consider
here, which makes them discoverable by the SoS hierarchy.

For example, the proofs rely on statements such as the expander mixing lemma, which can viewed as a
consequence of statements like $\ip{f}{A f} \leq \lambda \cdot \norm{f}^2$ when $\ip{f}{1} = 0$, and
$A$ is the (normalized) adjacency matrix of a graph with second eigenvalue at most $\lambda$. 
Taking $\Pi$ to be the projector to the space orthogonal to the all-ones vector, we can re-write the
above inequality as $\ip{f}{(\lambda \cdot \Pi - \Pi A \Pi)f} \geq 0$. 
However, note that the matrix $\lambda \cdot \Pi - \Pi A \Pi$ is actually a positive semidefinite
matrix, which means that the expression $\ip{f}{(\lambda \cdot \Pi - \Pi A \Pi)f}$ is a
\emph{sum-of-squares} of linear forms in the entries of $f$. 
The SoS hierarchy can be viewed as a proof system, where a solution to the level-$2t$ relaxation
can be seen as satisfying all inequalities which can be derived using sum-of-squares of polynomials
of degree at most $t$. 
We can show that this means that the solutions (after some modification) satisfy some codeword-like
properties, using which it is possible to appeal to a \emph{unique} decoding algorithm to recover
one element from the list from one such ``codeword-like'' SoS solution.
\ifnum\confversion=1
See the full version for a formal statement about these codeword-like ``distance certificates'' for SoS solutions.
\else
These codeword-like ``distance certificates'' for SoS solutions are developed in \cref{sec:distance}.
\fi

Broadly speaking, our techniques can be seen as part of the ``Proofs to Algorithms'' paradigm based
on the Sum-of-Squares method~\cite{FKP19}. 
Our covering lemmas for SoS relaxations yield  a generic framework for converting SoS proofs of
distance for \emph{any code}, to list decoding algorithms which work up to the Johnson bound. 
Examples for such results also include the earlier results by Richelson and Roy~\cite{RR22} for list
decoding Ta-Shma's codes up to  Johnson bound.

We also note that list-decoding algorithms often need to rely on algebraic structure, and are thus particularly
well suited to work with large alphabets (fields). One then obtains algorithms for small-alphabet
codes via techniques such as concatenation and list recovery.
On the other hand, the techniques based on convex relaxations discussed above seem to work well
directly over small alphabets.
\vspace{-5 pt}
\subsection{Related work}
In terms of techniques, the works most directly related to ours are those using similar SoS
relaxations for list-decoding of Ta-Shma's codes~\cite{AJQST20, JQST20, RR22}. 
In particular, the proofs of the covering lemmas follow the approach of Alev \etal~\cite{AJQST20},
and idea of viewing the proof of distance as implementable in the SoS hierarchy was also used by
 Richelson and Roy~\cite{RR22}. 
A precursor to much of this research on list-decoding, 
is the result of Dinur \etal~\cite{DHKLNTS19}, which suggested
the approach of using semidefinite programming and expansion for list-decoding of codes obtained via
an earlier distance amplification procedure of Alon \etal~\cite{ABNNR92}, which can be seen as a
special case of the AEL distance amplification.

Another important work, related to the use of SoS hierarchy for decoding LDPC codes, is the
\emph{lower bound} of Ghazi and Lee~\cite{GL18} for using the SoS hierarchy to decode random LDPC
codes. 
However, the lower bound shows that the relaxation for finding the optimal (closest) codeword may
have value much better than the true optimal codeword, when the decoding radius is larger than that
of LP decoding, thus showing that the SoS relaxations may not be integral. 
On the other hand, the relaxations we use do not optimize for the closest codeword, but rather go
through covering lemmas.
A recent work of Chen \etal~\cite{CCLO22} also shows significantly improved distance bounds, and
improved unique-decoding bounds for the expander codes of Sipser and Spielman~\cite{SS96}. 
Our results do not apply for the codes considered in their work in a black box fashion, since the analysis is based on
lossless \emph{vertex expansion}, for which we do not always know of spectral certificates.

Our work can also be seen as obtaining ``sparse'' analogues of the results of Gopalan, Guruswami, and
Raghavendra~\cite{GGR09} for the list-decoding of tensor and interleaved codes, which can be viewed
as replacing the bipartite expanders in \Zemor and AEL constructions respectively, by a complete bipartite graph.


\section{Preliminaries and Notation}
\label{sec:prelims}
For a bipartite graph $G=(L,R,E)$ where $L$ is the set of left vertices, and $R$ is the set of right
vertices, we index the left set by $\li$ and the right set by $\ri$. For a vertex $\li \in L$, 
we denote the set of edges incident to it by $N_L(\li)$ (left neighborhood), and the set of edges
incident to  $\ri\in R$ is denoted by $N_R(\ri)$(right neighborhood). 
We use $\li \sim \ri$ to denote that the vertex $\li \in L$ is adjacent to the vertex $\ri \in R$, that is, $(\li,\ri)\in E$. 

Fix an arbitrary ordering of the edges. Then there are bijections between the sets $E$, $L \times
[d]$, and $R \times [d]$, 
given by taking $(\li,i)$ to be the $i^{th}$ edge incident on $\li$, and similarly for $R \times [d]$.
Henceforth, we will implicitly assume such an ordering of the edges is fixed, and use the resulting bijections.

\begin{definition}
	Let $[q]$ be a finite alphabet and let $f,g\in [q]^n$. Then the (fractional) distance
        between $f,g$ is defined as \[ \dis(f,g) = \Ex{i\in [n]}{ \indi{f_i \neq g_i}} \mper \]
\end{definition}

\begin{definition}[Code, distance and rate]
	A code $\calC$ of block length $n$, distance $\delta$ and rate $\rho$ over the alphabet size $q$ is a set $\calC \subseteq [q]^n$ with the following properties
	\begin{enumerate}[(i)]
		\item $\rho = \frac{\log_{q} |\calC|}{n}$
		\item $\delta = \min_{h_1,h_2\in \calC} \dis(h_1,h_2)$
	\end{enumerate}
	Such codes are succinctly represented as $[n,\delta,\rho]_q$.
	We say $\calC$ is a linear code if $[q]$ can be identified with a finite field and $\calC$ is a linear subspace of $[q]^n$.
\end{definition}

\begin{definition}[List of codewords]
	For any $g\in [q]^n$, the list of codewords in $\calC$ that are at a distance less than
        $\delta$ from $g$ is denoted by $\calL(g,\delta)$. That is $\calL(g,\delta) = \inbraces{h\in \calC \suchthat \dis(h,g) < \delta }$.
\end{definition}

\subsection{Expander graphs}
\begin{definition}[$(n,d,\lambda)$-expander]
A $d$-regular bipartite graph $G(L,R,E)$ with $|L|=|R|=n$ is said to be an $(n,d,\lambda)$-expander if
\[
	\sigma_2(A_G) \leq \lambda \cdot d
\]
where $A_G$ is the $L\times R$ biadjacency matrix, and $\sigma_2(A_G)$ is its second largest singular value.
\end{definition}

Infinite families of $(n,d,\lambda)$-expanders, with growing $n$ as $d$ and $\lambda$ are constant, can be derived based on double covers of Ramanujan graphs of \cite{LPS88} as long as $\lambda \geq \frac{2\sqrt{d-1}}{d}$.

\begin{lemma}[Expander Mixing Lemma]
	Given an $(n,d,\lambda)$-expander $G=(L,R,E)$ and functions $f: L \rightarrow \R$ and $g: R \rightarrow \R$, the following well known property is a simple consequence of definition of $(n,d,\lambda)$-expanders:
	\ifnum\confversion=1
	\small
	\[
		\abs{\Ex{\li \sim \ri}{ f(\li) \cdot g(\ri)} - \Ex{\li}{f(\li)} \Ex{\ri}{g(\ri)}}
                \leq \lambda \cdot \norm{f}_2 
                \norm{g}_2
	\]
	\normalsize
	\else
	\[
		\abs{\Ex{\li \sim \ri}{ f(\li) \cdot g(\ri)} - \Ex{\li}{f(\li)} \cdot \Ex{\ri}{g(\ri)}}
                \leq \lambda \cdot \inparen{\Ex{\li}{f^2(\li)}}^{1/2} \cdot
                \inparen{\Ex{\ri}{g^2(\ri)}}^{1/2} \mper
	\]
	\fi
\end{lemma}

\subsection{Graph based codes}
\paragraph{Expander codes.}
We recap the construction from~\cite{Zemor01}, building on ideas from \cite{Tanner81, SS96},
which constructs an infinite family of good codes starting from any good (inner) linear code over a
small fixed block length of rate larger than $1/2$. The code $\calC_0$ is also referred to as the
base code for $\calC^{Tan}$.
\begin{definition}
	Given an inner linear code $\calC_0$ on alphabet $[q]$ and block length $d$, and a $d$-regular graph $G(V,E)$, we define the Tanner code $\TanC$ as
	\[
		\TanC = \{ h: E\rightarrow [q] \suchthat h|_{N(v)} \in \calC_0, \forall v \in V\}
	\]
	where $h|_S$ for $S\subseteq E$ denotes the restriction of $h$ to the set of coordinates indexed by $S$.
	In this work, we will only use Tanner codes defined on bipartite graphs. 
\end{definition}
By using an infinite family of graphs with constant degree $d$, we get an infinite family of codes based on an inner code of block length $d$.

\begin{theorem}[\cite{SS96}, \cite{Zemor01}]
	Let the distance and rate of the inner code $\calC_0$ be $\delta_0$ and $\rho_0$ respectively, and those of associated $\TanC$ be $\delta$ and $\rho$. If $G$ is an $(n,d,\lambda)$-expander, then $\delta \geq \delta_0\cdot (\delta_0-\lambda)$ and $\rho\geq 2\rho_0-1$.
\end{theorem}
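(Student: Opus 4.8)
The plan is to prove the two assertions separately: the rate bound by a linear-algebraic dimension count, and the distance bound by feeding the support of a codeword into the Expander Mixing Lemma together with the distance of the base code $\calC_0$. For the rate: since $\calC_0$ is linear of block length $d$ and rate $\rho_0$, it has dimension $\rho_0 d$, hence codimension $(1-\rho_0)d$, in $[q]^d$. For each of the $2n$ vertices $v$ of $G$, the coordinate-restriction map $h \mapsto h|_{N(v)}$ is a linear surjection $[q]^E \to [q]^d$, so the requirement $h|_{N(v)} \in \calC_0$ cuts out a subspace of $[q]^E$ of codimension $(1-\rho_0)d$. The code $\TanC$ is the intersection of these $2n$ subspaces, so its codimension is at most $2n(1-\rho_0)d$; since $\abs{E} = nd$, this gives $\dim \TanC \ge nd - 2n(1-\rho_0)d = nd\,(2\rho_0-1)$, and therefore $\rho = \dim\TanC/\abs{E} \ge 2\rho_0-1$ (the bound being vacuous when $2\rho_0 \le 1$).

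For the distance, fix a nonzero $h \in \TanC$ and set $\mu \defeq \abs{\supp(h)}/(nd) = \dis(h,\zero)$; using linearity of $\TanC$ it suffices to show $\mu \ge \delta_0(\delta_0-\lambda)$, as $\delta = \min_{0 \neq h \in \TanC} \abs{\supp(h)}/(nd)$. Call a vertex of $G$ \emph{active} if the restriction of $h$ to its incident edges is a nonzero string, and let $S \subseteq L$ and $T \subseteq R$ be the sets of active left and right vertices, with densities $\sigma = \abs{S}/n$ and $\tau = \abs{T}/n$; since $h \neq 0$, the sets $S$, $T$, and $\supp(h)$ are all nonempty. I will use two observations. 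First, every edge in $\supp(h)$ has an active endpoint on each side, so $\supp(h) \subseteq E(S,T)$ and hence $\mu \le e(S,T)/(nd)$, where $e(S,T)$ denotes the number of edges between $S$ and $T$. Second, at each active vertex the string induced by $h$ is a nonzero codeword of $\calC_0$, so it has at least $\delta_0 d$ nonzero symbols; summing over $S$ and, separately, over $T$ yields $\abs{\supp(h)} \ge \delta_0 d\abs{S}$ and $\abs{\supp(h)} \ge \delta_0 d\abs{T}$, \ie $\sigma \le \mu/\delta_0$ and $\tau \le \mu/\delta_0$.

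Now apply the Expander Mixing Lemma to the indicators $f = \mathbf{1}_S$ and $g = \mathbf{1}_T$; since these are $\{0,1\}$-valued, their mean-square values equal $\sigma$ and $\tau$, so the lemma gives $e(S,T)/(nd) \le \sigma\tau + \lambda\sqrt{\sigma\tau}$. Chaining $\mu \le e(S,T)/(nd)$ with this, and then substituting $\sigma,\tau \le \mu/\delta_0$ (so that $\sigma\tau \le \mu^2/\delta_0^2$ and $\sqrt{\sigma\tau}\le \mu/\delta_0$), yields $\mu \le \mu^2/\delta_0^2 + \lambda\mu/\delta_0$. Dividing through by $\mu > 0$ and rearranging gives $\mu/\delta_0^2 \ge 1 - \lambda/\delta_0$, \ie $\mu \ge \delta_0(\delta_0-\lambda)$, which completes the argument. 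This is the classical Sipser--Spielman / \Zemor distance argument, specialized to the two-sided bipartite setting, so I do not anticipate a genuine obstacle; the one step warranting care is this final inequality chain, where \emph{both} density bounds $\sigma \le \mu/\delta_0$ and $\tau \le \mu/\delta_0$ must be kept in play to control the $\sigma\tau$ and the $\sqrt{\sigma\tau}$ terms that the mixing lemma produces, and where one should note at the outset that $h \neq 0$ makes $\sigma$, $\tau$, $\mu$ (and $\delta_0$) strictly positive so that the divisions are legitimate.
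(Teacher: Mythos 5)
Your proof is correct, and it is the classical Sipser--Spielman/\Zemor argument that the cited works use. The paper itself states this theorem without proof (citing \cite{SS96, Zemor01}), but the distance half of your argument is precisely the $\eta = 0$ specialization of the paper's \cref{lem:distance_from_codeword}: your $\sigma, \tau$ are the integral analogues of $\Ex{\li}{\tildeEx{X_{\li}}}, \Ex{\ri}{\tildeEx{Y_{\ri}}}$, and the chain "support $\subseteq E(S,T)$, then EML, then the per-vertex distance lower bound, then solve the quadratic" matches the paper's SoS version step for step.
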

The codes above were shown to be linear time decodable up to their unique decoding radius (for the designed distance) in \cite{Zemor01} and \cite{SkaRoth03}.
\paragraph{Alon-Edmonds-Luby distance amplification}\label{sec:AEL_prelims}

The following distance amplification scheme was introduced in \cite{ABNNR92}, \cite{AEL95} and used by \cite{GI05} to design linear time unique decodable near-MDS codes.

\begin{definition}[Concatenated codes]
	Fix an $(n,d,\lambda)$-expander $G(L,R,E)$. Let $\calC_{1}$ be an $[n,\delta_{1},r_{1}]_{q_1}$ code and let $\calC_{0}$ be a $[d,\delta_{0},r_{0}]_{q_0}$ code with $q_1 = |\calC_0|$. 
	
	We define the concatenation of $f\in [q_1]^L$ with $\calC_{0}: [q_1]\rightarrow [q_0]^d$ as $f^*: E \rightarrow [q_0]$, defined as
	\[
		f^{*}_{\calC_{0}}(e) ~=~ \calC_{0}(f(\li))(j)
	\]
	where $\li$ is the left endpoint of edge $e$ and $e$ is the $j^{th}$ edge incident on $\li$.
	The concatenated code $\calC^{*}_{\calC_{0}}(\calC_1)$ of block length $n\cdot d$ and alphabet $[q_0]$ is defined to be,
	\[
		\calC^{*}_{\calC_{0}}(\calC_{1}) ~=~ \{ f^{*}_{\calC_{0}} \suchthat f \in \calC_{1}\}
	\]
	When clear from context, we will omit $\calC_{0},\calC_1$ in the above notation to call the concatenated code $\calC^*$.
\end{definition}

\begin{claim}
	$\dis(f^{*}_{\calC_{0}},g^{*}_{\calC_{0}}) ~\geq~ \delta_{0}\cdot \dis(f,g)$, which also
        implies $\Delta(\calC^{*}_{\calC_{0}}(\calC_{1})) \geq \delta_{0} \cdot \delta_{1}$.
\end{claim}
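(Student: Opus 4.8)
The plan is to establish the pointwise distance bound $\dis(f^*_{\calC_0}, g^*_{\calC_0}) \geq \delta_0 \cdot \dis(f,g)$ for arbitrary $f, g \in [q_1]^L$, and then deduce the code distance statement by applying this to two distinct codewords $f \neq g \in \calC_1$ and using that $\dis(f,g) \geq \delta_1$ together with $\delta_0 = \Delta(\calC_0)$. First I would unwind the definitions: for a left vertex $\li \in L$, the edges incident on $\li$ carry, under $f^*_{\calC_0}$, exactly the codeword $\calC_0(f(\li)) \in [q_0]^d$, and under $g^*_{\calC_0}$ the codeword $\calC_0(g(\li))$. So on the $d$ coordinates indexed by $N_L(\li)$, the two edge-labelings $f^*_{\calC_0}$ and $g^*_{\calC_0}$ agree on a set of $d \cdot (1 - \dis(\calC_0(f(\li)), \calC_0(g(\li))))$ coordinates.

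The key step is the case analysis on whether $f(\li) = g(\li)$. If $f(\li) = g(\li)$, then $\calC_0(f(\li)) = \calC_0(g(\li))$ and the two labelings agree on all $d$ edges at $\li$, contributing $0$ to the distance — consistent with the bound since $\indi{f(\li) \neq g(\li)} = 0$. If $f(\li) \neq g(\li)$, then since $\calC_0$ is injective (it is a code, hence $\calC_0(f(\li)) \neq \calC_0(g(\li))$), the two $\calC_0$-codewords differ, so by the distance of $\calC_0$ they disagree on at least $\delta_0 \cdot d$ of the $d$ coordinates at $\li$; this contributes at least $\delta_0 \cdot d \cdot \indi{f(\li) \neq g(\li)}$ disagreeing edges. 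Now I would sum over all $\li \in L$. Since every edge $e \in E$ is incident on exactly one left vertex, the total number of edges on which $f^*_{\calC_0}$ and $g^*_{\calC_0}$ disagree is $\sum_{\li \in L} \abs{\{j \in [d] : \calC_0(f(\li))(j) \neq \calC_0(g(\li))(j)\}} \geq \sum_{\li \in L} \delta_0 \cdot d \cdot \indi{f(\li) \neq g(\li)} = \delta_0 \cdot d \cdot \abs{\{\li : f(\li) \neq g(\li)\}}$. Dividing by $\abs{E} = nd$ gives $\dis(f^*_{\calC_0}, g^*_{\calC_0}) \geq \delta_0 \cdot \dis(f, g)$, using $\dis(f,g) = \abs{\{\li : f(\li) \neq g(\li)\}}/n$.

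For the second assertion, take any two distinct codewords of $\calC^*_{\calC_0}(\calC_1)$; by definition they are $f^*_{\calC_0}$ and $g^*_{\calC_0}$ for some $f, g \in \calC_1$, and they are distinct only if $f \neq g$ (again by injectivity of the concatenation map, which follows from injectivity of $\calC_0$). Then $\dis(f,g) \geq \delta_1$ by the distance of $\calC_1$, so $\dis(f^*_{\calC_0}, g^*_{\calC_0}) \geq \delta_0 \delta_1$, which is exactly $\Delta(\calC^*_{\calC_0}(\calC_1)) \geq \delta_0 \cdot \delta_1$. I do not anticipate a genuine obstacle here — the only point requiring a moment's care is that the expander structure plays no role in this claim (it is purely about concatenation), and that one must invoke injectivity of $\calC_0$ to pass from "$f \neq g$ at a vertex" to "$\calC_0$-encodings differ at that vertex"; the real use of expansion appears only later when amplifying the rate while controlling distance.
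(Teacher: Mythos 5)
Your proposal is correct, and the paper in fact states this claim without proof (it is a standard fact about code concatenation), so there is no paper argument to compare against; your per-vertex case analysis — $f(\li)=g(\li)$ contributes zero, $f(\li)\neq g(\li)$ contributes at least $\delta_0 d$ disagreeing edges by injectivity and the distance of $\calC_0$, then average over $\li$ — is exactly the standard argument and fills the gap cleanly. You are also right that expansion plays no role here.
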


\begin{definition}[AEL Codes]
	Fix an $(n,d,\lambda)$-expander $G(L,R,E)$. Let $\calC_{1}$ be an $[n,\delta_{1},r_{1}]_{q_1}$ code and let $\calC_{0}$ be a $[d,\delta_{0},r_{0}]_{q_0}$ code with $q_1 = |\calC_0|$. 
We define the AEL-encoding $f^{AEL}_{\calC_{0}}: R \rightarrow [q_0]^d$ as
	\[
		f^{AEL}_{\calC_{0}} (\ri) ~=~ \left( f^{*}_{\calC_{0}}(e_1),f^{*}_{\calC_{0}}(e_2),\cdots,f^{*}_{\calC_{0}}(e_d) \right)
	\]
	where $e_1,e_2,\cdots,e_d$ are the $d$ edges incident on $\ri$.
	The AEL code $\calC^{AEL}_{\calC_{0}}(\calC_{1}) \subseteq [q_0^d]^{n}$ is defined as 
	\[
		\calC^{AEL}_{\calC_{0}}(\calC_{1}) ~=~ \{ f^{AEL}_{\calC_{0}} \suchthat f \in \calC_{1}\}
	\]
	When clear from context, we will omit $\calC_{0},\calC_1$ in the above notation to call the AEL code $\AELC$.
\end{definition}
Alon, Edmonds and Luby, proved the following result, which shows that the construction can be used
to amplify the distance to $\delta_0$, by choosing $\lambda$ sufficiently small.
\begin{theorem}[\cite{AEL95}]\label{thm:ael_distance}
	$\dis(f^{AEL}_{\calC_{0}},g^{AEL}_{\calC_{0}}) ~\geq~ \delta_{0}-
        \frac{\lambda}{\dis(f,g)}$, which also implies $\Delta(\calC^{AEL}_{\calC_{0}}(\calC_{1}))
        ~\geq~ \delta_{0} - \frac{\lambda}{\delta_{1}}$.
\end{theorem}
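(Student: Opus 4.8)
The plan is to bound the fractional distance $\dis(f^{AEL}_{\calC_0}, g^{AEL}_{\calC_0})$ from below for two distinct messages $f, g \in [q_1]^L$ by a direct counting argument on the expander $G$, exactly in the spirit of \Zemor's distance proof for Tanner codes. Write $\beta = \dis(f,g)$ and let $B \subseteq L$ be the set of left vertices $\li$ on which $f(\li) \neq g(\li)$, so $\abs{B} = \beta n$. For each such $\li$, since $\calC_0$ has distance $\delta_0$, the two codewords $\calC_0(f(\li))$ and $\calC_0(g(\li))$ differ in at least $\delta_0 d$ of the $d$ edge-slots at $\li$; hence in the concatenated word $f^*_{\calC_0}$ versus $g^*_{\calC_0}$, at least $\delta_0 d \abs{B}$ edges carry differing symbols. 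Call this set of edges $D \subseteq E$; we have $\abs{D} \geq \delta_0 d \beta n$, and all of $D$ is incident to $B$ on the left.

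Now, the key point: a right vertex $\ri \in R$ contributes a \emph{disagreement} to $f^{AEL}_{\calC_0}$ versus $g^{AEL}_{\calC_0}$ iff at least one of its $d$ incident edges lies in $D$. So I want a lower bound on the number of right vertices touched by $D$. Let $S \subseteq R$ be the set of right vertices incident to some edge of $D$, and suppose for contradiction that $\abs{S} = s$ is small. All $\abs{D} \geq \delta_0 d\beta n$ edges of $D$ go between $B$ (of size $\beta n$) and $S$ (of size $s$). By the Expander Mixing Lemma applied to the indicator functions $\one_B$ on $L$ and $\one_S$ on $R$, the number $e(B,S)$ of edges between $B$ and $S$ satisfies
\[
	\abs{ \frac{e(B,S)}{nd} - \frac{\abs{B}}{n}\cdot\frac{\abs{S}}{n} } \;\leq\; \lambda\cdot\sqrt{\frac{\abs{B}}{n}}\cdot\sqrt{\frac{\abs{S}}{n}},
\]
i.e. $e(B,S) \leq \abs{B}\abs{S}d/n + \lambda d\sqrt{\abs{B}\abs{S}}$. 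Since $\abs{D} \leq e(B,S)$, this forces $\delta_0 \beta n \leq \beta s + \lambda\sqrt{\beta n \cdot s}$ after dividing by $d$ and $n$. The distance of the AEL code is $\abs{S}/n = s/n$, so writing $\gamma = s/n = \dis(f^{AEL}_{\calC_0}, g^{AEL}_{\calC_0})$, we get $\delta_0 \beta \leq \beta\gamma + \lambda\sqrt{\beta\gamma}$, hence $\sqrt{\gamma}\,(\sqrt{\gamma}\cdot\sqrt{\beta} + \lambda/\sqrt{\beta}\cdot\sqrt{\beta}) \geq \ldots$; rearranging $\delta_0 \leq \gamma + \lambda\sqrt{\gamma/\beta} \leq \gamma + \lambda/\sqrt{\beta}$ when $\gamma \leq 1$, which already gives $\gamma \geq \delta_0 - \lambda/\sqrt{\beta}$. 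A slightly more careful handling of the $\sqrt{\gamma}$ factor (e.g. $\lambda\sqrt{\gamma/\beta} \leq \lambda/\beta$ since $\gamma \le 1$, or keeping the quadratic and solving) yields the claimed bound $\gamma \geq \delta_0 - \lambda/\beta = \delta_0 - \lambda/\dis(f,g)$. The second assertion $\Delta(\calC^{AEL}_{\calC_0}(\calC_1)) \geq \delta_0 - \lambda/\delta_1$ then follows by taking $f,g$ to be the closest pair of distinct codewords of $\calC_1$, so that $\dis(f,g) \geq \delta_1$ and the lower bound is monotone in $\dis(f,g)$.

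The main obstacle is purely bookkeeping: getting the clean $\lambda/\dis(f,g)$ rather than $\lambda/\sqrt{\dis(f,g)}$ out of the mixing lemma requires choosing the estimate on the cross-term carefully (using $\gamma \le 1$ at the right moment, or alternatively bounding the \emph{edge} count from $B$ into the complement of $S$ and arguing no edge of $D$ lands there). There is no real difficulty beyond this; the combinatorial core is identical to the \Zemor/Sipser–Spielman argument, with the roles of "left expansion forces many differing edges" and "right expansion spreads those edges over many vertices" combined in one shot via the Expander Mixing Lemma. One should also note that the argument is symmetric in $f,g$ and does not use linearity of $\calC_1$ anywhere — only the distance $\delta_0$ of the inner code and the spectral bound $\lambda$ — which is exactly what makes it robust enough to be captured later as an SoS distance certificate.
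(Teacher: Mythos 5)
Your proof is correct. The paper cites Alon--Edmonds--Luby for this theorem and does not reprove it, but the argument you give is precisely the one the paper uses for its SoS analogue, \cref{lem:AEL_amplification}: lower-bound the edge-level distance $\abs{D}/(nd)$ by $\delta_0 \cdot \dis^L$ via the inner code's distance (your count $\abs{D}\geq\delta_0 d\abs{B}$), upper-bound it by $\dis^L\cdot\dis^R + \lambda$ via the expander mixing lemma applied to the left/right disagreement indicators (your $\one_B,\one_S$), and divide by $\dis^L=\beta$. One small remark: by keeping the $\sqrt{\beta\gamma}$ factor in the mixing-lemma error term before discarding it, you actually establish the slightly stronger $\gamma\geq\delta_0-\lambda\sqrt{\gamma/\beta}\geq\delta_0-\lambda/\sqrt{\beta}$, which dominates the stated $\delta_0-\lambda/\beta$ whenever $\beta\leq 1$; the paper's lemma bounds the error term crudely by $\lambda$, which already yields the claimed inequality after dividing by $\dis^L$.
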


A codeword $f$ of $\AELC$ technically belongs to the space $[q_0^d]^R$. However, in this paper, we choose to identify codewords of $\AELC$ as belonging to $[q_0]^E$. It is clear that the two spaces are in bijection with each other and in fact, one can just \emph{fold} or \emph{unfold} the symbols to move from one space to the other. Choosing $f$ to be in $[q_0]^E$ allows us to talk about $f$ viewed from left vertex set $L$ or right vertex set $R$ simply by changing the distance functions. Let $f,g \in \AELC$, then we define the following three distance functions:
\begin{align*}
	\dis^L(f,g) &\defeq \Ex{\li\in L}{ \indi{ f_{N_L(\li)} \neq g_{N_L(\li)}}} \\
	\dis(f,g) &\defeq \Ex{e\in E}{\indi{ f_e \neq g_e}} = \Ex{\li \in L}{\dis(f_{N_L(\li)},g_{N_L(\li)})} = \Ex{\ri \in R}{\dis(f_{N_R(\ri)},g_{N_R(\ri)})}\\
	\dis^R(f,g) &\defeq \Ex{\ri\in R}{ \indi{ f_{N_R(\ri)} \neq g_{N_R(\ri)}}} 
\end{align*}

\cref{thm:ael_distance} can then be stated in a simpler form as
\begin{theorem}[Restatement of \cref{thm:ael_distance}]
	$\dis^R(f,g) \geq \delta_0 - \frac{\lambda}{\dis^L(f,g)}$.
\end{theorem}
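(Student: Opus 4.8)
The plan is to prove the restated theorem — namely $\dis^R(f,g) \geq \delta_0 - \frac{\lambda}{\dis^L(f,g)}$ — directly from the expander mixing lemma, following the classical AEL argument. Fix codewords $f, g \in \AELC$, viewed (as the paper suggests) as elements of $[q_0]^E$. The key objects are two functions capturing where $f$ and $g$ disagree as seen from each side of the bipartition: let $u \colon L \to \{0,1\}$ be the indicator $u(\li) = \indi{f_{N_L(\li)} \neq g_{N_L(\li)}}$, and let $w \colon R \to \{0,1\}$ be $w(\ri) = \indi{f_{N_R(\ri)} \neq g_{N_R(\ri)}}$. Then $\Ex{\li}{u(\li)} = \dis^L(f,g) =: \alpha$ and $\Ex{\ri}{w(\ri)} = \dis^R(f,g) =: \beta$, and the goal becomes the purely combinatorial inequality $\beta \geq \delta_0 - \lambda/\alpha$, which we may assume is nontrivial, i.e. $\alpha > 0$.

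The first substantive step is the local observation linking the edge-disagreement pattern to the inner code $\calC_0$. For any right vertex $\ri$ with $w(\ri) = 0$, the two restrictions $f_{N_R(\ri)}$ and $g_{N_R(\ri)}$ are equal, so they contribute zero disagreeing edges at $\ri$. For any left vertex $\li$ with $u(\li) = 1$, the restrictions $f_{N_L(\li)}$ and $g_{N_L(\li)}$ are \emph{distinct} codewords of $\calC_0$ (here one uses that $f^{*}_{\calC_0}(e) = \calC_0(f(\li))(j)$ so the left neighborhood of $\li$ always lies in $\calC_0$), hence they differ on at least $\delta_0 d$ of the $d$ edges incident to $\li$. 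Now count the edges $e = (\li,\ri)$ with $f_e \neq g_e$ in two ways: every such edge has $u(\li) = 1$ and $w(\ri) = 1$, so the number of disagreeing edges is at most the number of edges between the set $U = \{u = 1\}$ and the set $W = \{w = 1\}$, i.e. $e(U,W)$; on the other hand, summing the per-left-vertex lower bound, the number of disagreeing edges is at least $\delta_0 d \cdot |U|$. This yields $\delta_0 d |U| \leq e(U,W)$, equivalently $\delta_0 \cdot \Ex{\li \sim \ri}{u(\li) w(\ri)} \geq \delta_0 \cdot \frac{e(U,W)}{nd}\cdot\frac{1}{\delta_0}$ — more cleanly, $\delta_0 \alpha \leq \Ex{\li \sim \ri}{u(\li) w(\ri)}$ after normalizing by $nd$ and dividing by $d$, using $|U|/n = \alpha$.

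The second step is to apply the Expander Mixing Lemma to $u$ and $w$:
\[
  \Ex{\li \sim \ri}{u(\li) w(\ri)} \;\leq\; \Ex{\li}{u(\li)} \cdot \Ex{\ri}{w(\ri)} + \lambda \cdot \inparen{\Ex{\li}{u^2(\li)}}^{1/2} \cdot \inparen{\Ex{\ri}{w^2(\ri)}}^{1/2} \;=\; \alpha\beta + \lambda \sqrt{\alpha}\sqrt{\beta} \;\leq\; \alpha\beta + \lambda\sqrt{\alpha},
\]
where in the last step we used $u^2 = u$, $w^2 = w$ (they are $\{0,1\}$-valued), and $\beta \leq 1$. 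Combining with $\delta_0 \alpha \leq \Ex{\li\sim\ri}{u(\li)w(\ri)}$ gives $\delta_0 \alpha \leq \alpha\beta + \lambda\sqrt{\alpha}$; dividing by $\alpha > 0$ and rearranging yields $\beta \geq \delta_0 - \lambda/\sqrt{\alpha} \geq \delta_0 - \lambda/\alpha$ when $\alpha \leq 1$, which recovers the stated bound. (If one wants the marginally stronger $\delta_0 - \lambda/\sqrt{\alpha}$ that the Cauchy–Schwarz step actually delivers, one can keep $\sqrt{\beta}$ on the right and solve the resulting quadratic in $\sqrt\beta$; either way the claimed inequality follows.)

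The only place requiring care — and thus the main obstacle — is the two-way edge count in the first step, specifically ensuring the logic that a disagreeing edge forces both endpoints to be in $U$ and $W$ respectively, and that the left-vertex restriction is genuinely a codeword of $\calC_0$ so that the minimum-distance bound $\delta_0 d$ applies. This is where the structure of the AEL encoding (each left neighborhood is the evaluation $\calC_0(f(\li))$) is used, and it is the analogue of the corresponding step in the proof of \cref{thm:ael_distance}; everything after it is the expander mixing lemma plus elementary algebra. I would then remark that the symmetric definitions of $\dis^L$, $\dis^R$ make this the natural restatement of \cref{thm:ael_distance}, with $\dis^L(f,g) \geq \delta_1$ recovering the designed-distance bound $\delta_0 - \lambda/\delta_1$.
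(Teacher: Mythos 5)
Your proof is correct, and it is exactly the classical AEL argument that the paper adapts to pseudocodewords in \cref{lem:AEL_amplification}: the double count of disagreeing edges giving $\delta_0\,\dis^L(f,g)\leq \Ex{\li\sim\ri}{u(\li)w(\ri)}$, followed by the expander mixing lemma for the upper bound. (The paper itself states \cref{thm:ael_distance} by citation to \cite{AEL95} without a proof, so there is no separate proof to compare against.) One small note: the sentence beginning ``equivalently $\delta_0 \cdot \Ex{\li\sim\ri}{u(\li)w(\ri)}\geq\cdots$'' is garbled, though the ``more cleanly'' conclusion $\delta_0\alpha\leq\Ex{\li\sim\ri}{u(\li)w(\ri)}$ that you draw from it is correct; and your retention of the $\sqrt{\alpha\beta}$ factor from Cauchy--Schwarz does yield the marginally stronger $\dis^R\geq\delta_0-\lambda/\sqrt{\dis^L}$, whereas the paper's SoS version simply bounds the mixing error by $\lambda$, which is why the weaker $\lambda/\alpha$ form appears in the statement.
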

\subsection{Sum-of-Squares hierarchy}
The sum-of-squares hierarchy of semidefinite programs (SDPs) provides a family of increasingly
powerful convex relaxations for several optimization problems. 
Each ``level" $t$ of the hierarchy is parametrized by a set of constraints corresponding to
polynomials of degree at most $t$ in the optimization variables. While the relaxations in the
hierarchy can be viewed as  semidefinite programs of size $n^{O(t)}$ \cite{BS14, Laurent09}, 
it is often convenient to view the solution as a linear operator, called the ``pseudoexpectation" operator.
\vspace{-5 pt}
\paragraph{Pseudoexpectations}
Let $t$ be an positive even integer and fix an alphabet $[q]$. An SoS solution of degree $t$, or a pseudoexpectation of SoS-degree $t$, over the variables $\zee = \{Z_{i,j}\}_{i\in[m],j\in[q]}$ is represented by a linear operator $ \tildeEx{\cdot}: \R[\zee]^{\leq t} \rightarrow \R$ such that:
\vspace{-5 pt}
\begin{enumerate}[(i)]
    \item $\tildeEx{1} = 1$.
    \item $\tildeEx{p^2} \geq 0$ if $p$ is a polynomial in $\zee = \{Z_{i,j}\}_{i\in [m],j\in [q]}$ of degree $\leq t/2$.
\end{enumerate}
\vspace{-5 pt}
 Note that linearity implies $\tildeEx{p_1} + \tildeEx{p_2} = \tildeEx{p_1+p_2}$ and $\tildeEx{c\cdot
  p_1} = c \cdot \tildeEx{p_1}$ for $c\in \R$, for $p_1, p_2 \in \R[\zee]^{\leq t}$.
This also allows for a succinct representation of $\tildeEx{\cdot}$ using any basis for $\R[\zee]^{\leq t}$.

The set of all pseudoexpectations should be seen as a relaxation for the set of all possible
(distributions over) assignments to $m$ variables in alphabet $[q]$.
Indeed, any assignment $f: [m] \rightarrow [q]$, can be seen as a pseudoexpectation 
which assigns the value $1$ to a
monomial consistent with $f$ and $0$ otherwise. This can be extended via linearity to all
polynomials, and then by convexity of the constraints to all distributions over assignments.
However, the reverse is not true when $t < m$, and there can be degree-$t$ pseudoexpectations which do
not correspond to any genuine distribution.

It is possible to optimize over the set of degree-$t$ pseudoexpectations in time $m^{O(t)}$ via SDPs
(under certain conditions on the bit-complexity of solutions~\cite{OD16, RW17:sos}).
We next define what it means for pseudoexpectations to satisfy some problem-specific constraints.

\begin{definition}[Constrained Pseudoexpectations]\label{def:constraints_on_sos}
Let $\calS = \inbraces{f_1 = 0, \ldots, f_m = 0, g_1 \geq 0, \ldots, g_r \geq 0}$ be a system of
polynomial constraints, with each polynomial in $\calS$ of degree at most $t$. We say $\tildeEx{\cdot}$ is a pseudoexpectation operator respecting $\calS$, if in addition to the above conditions, it also satisfies
	\begin{enumerate}
	\item $\tildeEx{p \cdot f_i} = 0$,  $\forall i \in [m]$ and $\forall p$ such that $\deg(p \cdot f_i) \leq t$.
	\item $\tildeEx{p^2 \cdot \prod_{i \in S} g_i} \geq 0$, $\forall S \subseteq [r]$ and $\forall p$ such that $\deg(p^2 \cdot \prod_{i \in S} g_i) \leq t$.
	\end{enumerate}
\end{definition}
\paragraph{Local constraints and local functions.}
Any constraint that involves at most $k$ variables from $\zee$, with $k\leq t$, can be written as a degree-$k$ polynomial, and such constraints may be enforced into the SoS solution.
In particular, we will always consider the following canonical constraints on the variables $\zee$.
\ifnum\confversion=1
\begin{align*}
&Z_{i,j}^2 = Z_{i,j},\ \forall i\in[m],j\in[q] \\
\text{and} \quad &\sum_j Z_{i,j} = 1,\ \forall i\in[m] \mper
\end{align*}
\else
\[
Z_{i,j}^2 = Z_{i,j},\ \forall i\in[m],j\in[q] 
\quad \text{and} \quad 
\sum_j Z_{i,j} = 1,\ \forall i\in[m] \mper
\]
\fi
We will also consider additional constraints and corresponding polynomials, defined by ``local" functions. For any $f\in [q]^m$ and $M\sub [m]$, we use $f_M$ to denote the restriction $f|_M$, and $f_i$ to denote $f_{\{i\}}$ for convenience.
\begin{definition}[$k$-local function]
	A function $\mu: [q]^m \rightarrow \R$ is called $k$-local if there is a set $M\subseteq [m]$ of size $k$ such that $\mu(f)$ only depends on $\inbraces{f(i)}_{i\in M}$, or equivalently, $\mu(f)$ only depends on $f|_M$.
	
	If $\mu$ is $k$-local, we abuse notation to also use $\mu: [q]^M \rightarrow \R$ with $\mu(\alpha) = \mu(f)$ for any $f$ such that $f|_M=\alpha$. It will be clear from the input to the function $\mu$ whether we are using $\mu$ as a function on $[q]^m$ or $[q]^M$.
\end{definition}

Let $\mu:[q]^m\rightarrow \R$ be a $k$-local function that depends on coordinates $M\subseteq [m]$ with $|M|=k$. Then $\mu$ can be written as a degree-$k$ polynomial $p_{\mu}$ in $\zee$:
\[
	p_{\mu}(\zee) = \sum_{\alpha \in [q]^M} \inparen{\mu(\alpha) \cdot\prod_{i\in M} Z_{i,\alpha_i}}
\]

With some abuse of notation, we let $\mu(\zee)$ denote $p_{\mu}(\zee)$. We will use such $k$-local
functions inside $\tildeEx{\cdot}$ freely without worrying about their polynomial
representation. For example, $\tildeEx{ \indi{\zee_{i} \neq j}}$ denotes $\tildeEx{ 1- Z_{i,j}}$. 
The notion of $k$-local functions can also be extended from real valued functions to vector valued functions in a straightforward way.

\begin{definition}[vector-valued local functions]
A function $\mu: [q]^m \rightarrow \R^N$ is $k$-local if the $N$ real valued functions corresponding to the $N$ coordinates are all $k$-local. Note that these different coordinate functions may depend on different sets of variables, as long as the number is at most $k$ for each of the functions.
\end{definition}
\vspace{-5 pt}
\paragraph{Local distribution view of SoS}

It will be convenient to use a shorthand for the function $\indi{\zee_S = \alpha}$, and we will use $\zee_{S,\alpha}$. Likewise, we use $\zee_{i,j}$ as a shorthand for the function $\indi{\zee_i = j}$. That is, henceforth,
\ifnum\confversion=1
\begin{align*}
	&\tildeEx{\zee_{S,\alpha}} = \tildeEx{\indi{\zee_S = \alpha}} = \tildeEx{ \prod_{s\in S}
                                    Z_{s,\alpha_s}}  \\
	\text{and } \quad & \tildeEx{\zee_{i,j}} = \tildeEx{\indi{\zee_i = j}} = \tildeEx{ Z_{i,j}}.
\end{align*}
\else
\begin{align*}
	\tildeEx{\zee_{S,\alpha}} ~=~ \tildeEx{\indi{\zee_S = \alpha}} = \tildeEx{ \prod_{s\in S}
                                    Z_{s,\alpha_s}} 
\qquad \text{and} \qquad
	\tildeEx{\zee_{i,j}} ~=~ \tildeEx{\indi{\zee_i = j}} = \tildeEx{ Z_{i,j}}
\end{align*}
\fi

Note that for any $S \subseteq [m]$ with $\abs{S} = k \leq t/2$,
\ifnum\confversion=1
\begin{gather*}
	\sum_{ \alpha \in [q]^{k}} \tildeEx{\zee_{S,\alpha}} = 
 \tildeEx{ \prod_{s\in S} \inparen{ \sum_{j\in [q]} Z_{s,j}} } = 1
\\
	\tildeEx{\zee_{S,\alpha}} = \tildeEx{ \prod_{s\in S} Z_{s,\alpha_s}} = \tildeEx{ \prod_{s\in
            S} Z^2_{s,\alpha_s}} \geq 0 \mper
\end{gather*}
\else
\[
	\sum_{ \alpha \in [q]^{k}} \tildeEx{\zee_{S,\alpha}} = 
 \tildeEx{ \prod_{s\in S} \inparen{ \sum_{j\in [q]} Z_{s,j}} } = 1
\qquad \text{and} \qquad
	\tildeEx{\zee_{S,\alpha}} = \tildeEx{ \prod_{s\in S} Z_{s,\alpha_s}} = \tildeEx{ \prod_{s\in
            S} Z^2_{s,\alpha_s}} \geq 0 \mper
\]
\fi
Thus, the values $\inbraces{\tildeEx{\zee_{S, \alpha}}}_{\alpha\in [q]^S}$ define a distribution
over $[q]^k$, referred to as the local  distribution for $\zee_S$.

Let $\mu: [q]^m \rightarrow\R$ be a $k$-local function for $k\leq t/2$, depending on $M \subseteq
[m]$. Then, $\tildeEx{\mu(\zee)}$ can be seen as the expected value of the function $\mu$ under the
local distribution for $M$, since
\ifnum\confversion=1
\begin{align*}
	\tildeEx{\mu(\zee)} 
~=~& \tildeEx{\sum_{\alpha\in [q]^M} \inparen{\mu(\alpha) \cdot\prod_{i\in M} Z_{i,\alpha_i}}}\\
~=~& \sum_{\alpha\in [q]^M} \mu(\alpha) \cdot \tildeEx{\prod_{i\in M} Z_{i,\alpha_i}}\\
~=~& \sum_{\alpha\in [q]^M} \mu(\alpha) \cdot \tildeEx{\zee_{M,\alpha}} \mper
\end{align*}
\else
\begin{align*}
	\tildeEx{\mu(\zee)} 
~=~ \tildeEx{\sum_{\alpha\in [q]^M} \inparen{\mu(\alpha) \cdot\prod_{i\in M} Z_{i,\alpha_i}}}
~=~ \sum_{\alpha\in [q]^M} \mu(\alpha) \cdot \tildeEx{\prod_{i\in M} Z_{i,\alpha_i}}
~=~ \sum_{\alpha\in [q]^M} \mu(\alpha) \cdot \tildeEx{\zee_{M,\alpha}} \mper
\end{align*}
\fi

\begin{claim}
	Let $\tildeEx{\cdot}$ be a degree-$t$ pseudoexpectation. For $k \leq t/2$, let $\mu_1,\mu_2$
        be two $k$-local functions on $[q]^m$, depending on the same set of coordinates $M$, and
        $\mu_1(\alpha) \leq \mu_2(\alpha) ~~\forall \alpha \in [q]^M$. Then $\tildeEx{\mu_1(\zee)} \leq \tildeEx{\mu_2(\zee)}$.
\end{claim}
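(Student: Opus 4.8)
The plan is to reduce the statement to the single-monomial case, where the hypothesis $\mu_1(\alpha) \leq \mu_2(\alpha)$ for all $\alpha \in [q]^M$ translates into a pointwise inequality between coefficients, and then invoke the positivity axiom of the pseudoexpectation operator on the indicator functions $\zee_{M,\alpha}$. Concretely, I would first write both $k$-local functions in their canonical polynomial representation: $p_{\mu_b}(\zee) = \sum_{\alpha \in [q]^M} \mu_b(\alpha) \cdot \prod_{i \in M} Z_{i,\alpha_i}$ for $b \in \{1,2\}$, which is legitimate since $|M| = k \leq t/2 \leq t$. Subtracting, the difference $\mu_2 - \mu_1$ is again a $k$-local function on $M$, with $p_{\mu_2 - \mu_1}(\zee) = \sum_{\alpha \in [q]^M} \bigl(\mu_2(\alpha) - \mu_1(\alpha)\bigr) \cdot \zee_{M,\alpha}$, where I use the shorthand $\zee_{M,\alpha} = \prod_{i \in M} Z_{i,\alpha_i}$ introduced above.

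Next I would apply linearity of $\tildeEx{\cdot}$ to get
\[
\tildeEx{\mu_2(\zee)} - \tildeEx{\mu_1(\zee)} = \tildeEx{(\mu_2 - \mu_1)(\zee)} = \sum_{\alpha \in [q]^M} \bigl(\mu_2(\alpha) - \mu_1(\alpha)\bigr) \cdot \tildeEx{\zee_{M,\alpha}} \mper
\]
Now each coefficient $\mu_2(\alpha) - \mu_1(\alpha)$ is nonnegative by hypothesis, and each $\tildeEx{\zee_{M,\alpha}}$ is nonnegative by the computation recorded just before the claim statement, namely $\tildeEx{\zee_{M,\alpha}} = \tildeEx{\prod_{s \in M} Z_{s,\alpha_s}^2} \geq 0$, which is valid because this is a pseudoexpectation of a square of a polynomial of degree $k \leq t/2$ (using the defining relations $Z_{s,j}^2 = Z_{s,j}$). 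Hence the sum is a sum of products of nonnegative reals, so it is nonnegative, which gives $\tildeEx{\mu_1(\zee)} \leq \tildeEx{\mu_2(\zee)}$.

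There is no real obstacle here — the only point requiring a moment of care is making sure the degree bookkeeping works out, i.e. that $\prod_{s \in M} Z_{s,\alpha_s}$ has degree exactly $k \leq t/2$ so that its square has degree $\leq t$ and the positivity axiom applies; and that the canonical polynomial representation $p_\mu$ genuinely agrees with $\mu$ on all of $[q]^m$ after imposing the Boolean constraints, which follows from the identity $\prod_{i \in M} Z_{i,\alpha_i}$ acting as the indicator of $\{f : f|_M = \alpha\}$ on genuine assignments and extending linearly. Both of these are immediate given the setup in the preceding paragraphs, so the proof is essentially a two-line linearity-plus-positivity argument.
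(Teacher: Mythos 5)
Your proof is correct and is essentially the same argument as the paper's: the paper phrases it by invoking the local distribution $\calD_M$ (whose existence rests on exactly the nonnegativity and normalization facts you re-derive), while you unpack that step directly via linearity and $\tildeEx{\zee_{M,\alpha}} \geq 0$. No substantive difference in approach.
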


\begin{proof}
Let $\calD_M$ be the local distribution induced by $\tildeEx{\cdot}$ for $\zee_M$. Then
$\tildeEx{\mu_1(\zee)} = \Ex{\alpha \sim \calD_M}{\mu_1(\alpha)}$, and $\tildeEx{\mu_2(\zee)} =
\Ex{\alpha\sim \calD_M}{\mu_2(\alpha)}$, which implies $\tildeEx{\mu_1(\zee)} \leq \tildeEx{\mu_2(\zee)}$.
\end{proof}
The previous claim allows us to replace any local function inside $\tildeEx{\cdot}$ by another local function that dominates it. We will make extensive use of this fact.
\vspace{-5 pt}
\paragraph{Covariance for SoS solutions}
Given two sets $S,T \sub [m]$ with $|S|,|T|\leq k/4$, we can define the covariance between indicator random variables of $\zee_S$ and $\zee_T$ taking values $\alpha$ and $\beta$ respectively, according to the local distribution over $S \cup T$. This is formalized in the next definition.
\begin{definition}
Let $\tildeEx{\cdot}$ be a pseudodistribution operator of SoS-degree-$t$, and $S,T$ are two sets of
size at most $t/4$, and $\alpha\in [q]^S$, $\beta\in [q]^T$, we define the pseudo-covariance and
pseudo-variance,
\ifnum\confversion=1
\small
\begin{gather*}
\tildecov(\zee_{S,\alpha},\zee_{T,\beta}) 
= \tildeEx{ \zee_{S,\alpha} \cdot \zee_{T,\beta} } - \tildeEx{\zee_{S,\alpha}} \tildeEx{\zee_{T,\beta}} \\	
\tildeVar{\zee_{S,\alpha}} ~=~ \tildecov(\zee_{S,\alpha},\zee_{S,\alpha})
\end{gather*}
\normalsize
\else
\begin{align*}
\tildecov(\zee_{S,\alpha},\zee_{T,\beta}) 
~=~ \tildeEx{ \zee_{S,\alpha} \cdot \zee_{T,\beta} } - \tildeEx{\zee_{S,\alpha}} \tildeEx{\zee_{T,\beta}}
\qquad \text{and} \qquad		
\tildeVar{\zee_{S,\alpha}} ~=~ \tildecov(\zee_{S,\alpha},\zee_{S,\alpha})
\end{align*}
\fi
The above definition is extended to pseudo-covariance and pseudo-variance for pairs of sets $S,T$, 
as the sum of absolute value of pseudo-covariance for all pairs $\alpha,\beta$ :
\ifnum\confversion=1
\begin{gather*}
\tildecov(\zee_S,\zee_T) 
~=~ \sum_{\alpha\in [q]^S \atop \beta\in [q]^T} \abs{ \tildecov(\zee_{S,\alpha},\zee_{T,\beta}) } \\
\tildeVar{\zee_S} ~=~ \sum_{\alpha\in [q]^S} \abs{ \tildeVar{\zee_{S,\alpha} } }
\end{gather*}
\else
\[
\tildecov(\zee_S,\zee_T) 
~=~ \sum_{\substack{\alpha\in [q]^S \\ \beta\in [q]^T}} \abs{ \tildecov(\zee_{S,\alpha},\zee_{T,\beta}) }
\qquad \text{and} \qquad		
\tildeVar{\zee_S} ~=~ \sum_{\alpha\in [q]^S} \abs{ \tildeVar{\zee_{S,\alpha} } }
\]
\fi
\end{definition}

We will need the fact that $\tildeVar{\zee_S}$ is bounded above by 1, since,
\ifnum\confversion=1
\begin{align*}
\tildeVar{\zee_S} 
&~=~ \sum_{\alpha} \abs{\tildeVar{\zee_{S,\alpha}}} \\
&~=~ \sum_{\alpha}\inparen{
          \tildeEx{\zee_{S,\alpha}^2} - \tildeEx{\zee_{S,\alpha}}^2} \\
&~\leq~ \sum_{\alpha} \tildeEx{\zee_{S,\alpha}^2} \\
&~=~ \sum_{\alpha} \tildeEx{\zee_{S,\alpha}} 
~=~ 1
\end{align*}
\else
\[
\tildeVar{\zee_S} 
~=~ \sum_{\alpha} \abs{\tildeVar{\zee_{S,\alpha}}} 
~=~ \sum_{\alpha}\inparen{
          \tildeEx{\zee_{S,\alpha}^2} - \tildeEx{\zee_{S,\alpha}}^2} 
~\leq~ \sum_{\alpha} \tildeEx{\zee_{S,\alpha}^2} 
~=~ \sum_{\alpha} \tildeEx{\zee_{S,\alpha}} 
~=~ 1
\]
\fi

\vspace{-5 pt}
\paragraph{Conditioning SoS solutions.}
We will also make use of conditioned pseudoexpectation operators, which may be defined in a way
similar to usual conditioning for true expectation operators, as long as the event we condition on
is local. 
The conditioned SoS solution is of a smaller degree, but continues to respect the constraints that original solution respects.

\begin{definition} Let $F \subseteq [q]^m$ be subset (to be thought of as an event) such that $\one_F:[q]^m \rightarrow \{0,1\}$ is a $k$-local function. Then for every $t>2k$, we can condition a pseudoexpectation operator of SoS-degree $t$ on $F$ to obtain a new conditioned pseudoexpectation operator $\condPE{\cdot}{E}$ of SoS-degree $t-2k$, as long as $\tildeEx{\one^2_F(\zee)}>0$. The conditioned SoS solution is given by
\[
	\condPE{ p(\zee)}{F(\zee) } \defeq \frac{\tildeEx{p(\zee) \cdot \one^2_{F}(\zee)}}{\tildeEx{\one^2_{F}(\zee)}}
\]
where $p$ is any polynomial of degree at most $t-2k$.
\end{definition}

We can also define pseudocovariances and pseudo-variances for the conditioned SoS solutions.
\begin{definition}
	Let $F\sub [q]^m$ be an event such that $\one_F$ is $k$-local, and let $\tildeEx{\cdot}$ be a pseudoexpectation operator of degree $t$, with $t>2k$. Let $S,T$ be two sets of size at most $\frac{t-2k}{2}$ each. Then the pseudocovariance between $\zee_{S,\alpha}$ and $\zee_{T,\beta}$ for the solution conditioned on event $F$ is defined as,
\ifnum\confversion=1
\begin{multline*}
\tildecov(\zee_{S,\alpha},\zee_{T,\beta} \vert F) = \\
\tildeEx{ \zee_{S,\alpha} \zee_{T,\beta} \vert F} - \tildeEx{\zee_{S,\alpha} \vert F} \tildeEx{\zee_{T,\beta} \vert F}
\end{multline*}
\else
\begin{align*}
\tildecov(\zee_{S,\alpha},\zee_{T,\beta} \vert F) 
~=~ \tildeEx{ \zee_{S,\alpha} \zee_{T,\beta} \vert F} - \tildeEx{\zee_{S,\alpha} \vert F} \tildeEx{\zee_{T,\beta} \vert F}
\end{align*}
\fi
\end{definition}

We also define the pseudocovariance between $\zee_{S,\alpha}$ and $\zee_{T,\beta}$ after
conditioning on a random assignment for some $\zee_V$ with $V\sub [m]$. 
Note that here the random assignment for $\zee_V$ is chosen according to the local distribution for
the set $V$.

\begin{definition}[Pseudocovariance for conditioned pseudoexpectation operators]
\ifnum\confversion=1
\begin{multline*}
\tildecov(\zee_{S,\alpha},\zee_{T,\beta} \vert \zee_V) 
= \\ \sum_{\gamma \in [q]^V}   \tildecov(\zee_{S,\alpha},\zee_{T,\beta} \vert \zee_V = \gamma) \cdot \tildeEx{\zee_{V,\gamma}}
	\end{multline*}
\else
\begin{align*}
\tildecov(\zee_{S,\alpha},\zee_{T,\beta} \vert \zee_V) 
~=~ \sum_{\gamma \in [q]^V}   \tildecov(\zee_{S,\alpha},\zee_{T,\beta} \vert \zee_V = \gamma) \cdot \tildeEx{\zee_{V,\gamma}}
\end{align*}
\fi
\end{definition}

And we likewise define $\tildeVar{\zee_{S,\alpha} \vert \zee_V}$, $\tildecov(\zee_S, \zee_T \vert \zee_V)$ and $\tildeVar{\zee_S \vert \zee_V}$.

\subsection{SoS relaxations for codes}
For both Tanner codes and AEL codes, we will identify $[m]$ with $E$ so that the SoS solutions will be relaxations to the assignments to edges of a bipartite $(n,d,\lambda)$-expander.

\paragraph{Pseudocodewords for Tanner Codes}

Let $G(L,R,E)$ be the bipartite $(n,d,\lambda)$-expander on which the Tanner code is defined, and let $\calC_0\subseteq [q]^d$ be the inner code. The SoS variables will be $\zee = \{Z_{e,j}\}_{e\in E,j\in [q]}$. 

\begin{definition}[Tanner Pseudocodewords]
For $t\geq 2d$, we define a degree-$t$ Tanner pseuocodeword to be a degree-$t$ pseudoexpectation operator $\tildeEx{\cdot}$ on $\zee$ respecting the following constraints:
\ifnum\confversion=1
\begin{align*}
&\forall \li \in L, \quad \zee_{N_L(\li)} \in \calC_0 \\
\text{and } \quad &\forall \ri \in R,\quad \zee_{N_R(\ri)} \in\calC_0
\end{align*}
\else
\begin{align*}
\forall \li \in L, \quad \zee_{N_L(\li)} \in \calC_0 
\qquad \text{and} \qquad          
\forall \ri \in R,\quad \zee_{N_R(\ri)} \in\calC_0
\end{align*}
\fi
Again, since these constraints are $d$-local, it is sufficient to enforce that certain degree-$d$ polynomials are zero (respected by pseudoexpectation) to enforce these constraints. 
In particular, each parity check in the parity check matrix of $\calC_0$ will correspond to a monomial of size at most $d$ that we can enforce to be equal to 1.
\end{definition}

We can also define a generalization of distance between two codewords to include pseudocodewords.
\begin{definition}[Distance from a pseudocodeword]
The distance between a pseudocodeword $\tildeEx{\cdot}$ of SoS-degree $t\geq 2d$ and a codeword $h$ of $\TanC$ is defined as
\ifnum\confversion=1
\begin{align*}
\dis(\tildeEx{\cdot},h) 
~\defeq~& \Ex{e\in E}{\tildeEx{\indi{\zee_e \neq h_e}}} \\
~=~& \Ex{\li\in L}{\tildeEx{\dis(\zee_{N_L(\li)},  h_{N_L(\li)})}}
\end{align*}
\else
\[
\dis(\tildeEx{\cdot},h) 
~\defeq~ \Ex{e\in E}{\tildeEx{\indi{\zee_e \neq h_e}}} 
~=~ \Ex{\li\in L}{\tildeEx{\dis(\zee_{N_L(\li)},  h_{N_L(\li)})}}
\]
\fi
\end{definition}

\paragraph{Pseudocodewords for AEL}

Let $G(L,R,E)$ be the bipartite $(n,d,\lambda)$-expander on which the AEL code is defined, and let $\calC_0\subseteq [q_0]^d$ be the inner code. The SoS variables will be $\zee = \{Z_{e,j}\}_{e\in E,j\in [q_0]}$. 

\begin{definition}[AEL Pseudocodewords]
	For $t\geq 2d$, we define a degree-$t$ AEL pseuocodeword to be a degree-$t$ pseudoexpectation operator $\tildeEx{\cdot}$ on $\zee$ respecting the following constraints:
	\begin{align*}
		\forall \li \in L,\quad \zee_{N_L(\li)} \in \calC_0
	\end{align*}
\end{definition}

Next we define the distances between a pseudocodeword and a codeword of $\AELC$. 
\begin{definition}[Distance from a pseudocodeword]
	The left, middle and right distances between a pseudocodeword $\tildeEx{\cdot}$ of SoS-degree $t\geq 2d$ and a codeword $h \in \AELC$ are defined as
	\begin{align*}
		\dis^L(\tildeEx{\cdot},h) &\defeq \Ex{\li}{\tildeEx{\indi{\zee_{N_L(\li)} \neq h_{N_L(\li)}}}} \\
	\dis(\tildeEx{\cdot},h) &\defeq \Ex{e}{\tildeEx{\indi{\zee_e \neq h_e}}} \\
		\dis^R(\tildeEx{\cdot},h) &\defeq \Ex{\ri\in R}{\tildeEx{\indi{\zee_{N_R(\ri)}\neq  h_{N_R(\ri)}}}}
	\end{align*}
\end{definition}
%


\section{Proof Overview}
\label{sec:overview}

Our proof can be viewed as an algorithmic implementation of the proof of Johnson bound, and the
proofs of distance, for the relevant codes.
We start with an overview of the proof for Tanner codes. The argument is very similar for all codes
considered here, substituting an appropriate proof of distance in each case. 
\vspace{-5 pt}
\paragraph{Johnson bound via covering lemmas.}
We first prove the Johnson bound via a statement we will call a ``covering lemma'', which can then
be generalized to work with convex relaxations.
Given an $[m, \delta, \rho]_q$ code $\calC$ and a received word $g \in [q]^m$, our goal is to
bound the list of radius $\calL(g,\calJ_q(\delta)) = \inbraces{h\in \calC \suchthat \dis(h,g) <
  \calJ_q(\delta) }$, where $\calJ_q(\delta)$ denotes the Johnson bound. 
It will be convenient to work with $\beta = 1 -  \frac{q \cdot \delta}{q-1}$, and the Johnson bound can be
defined using the equation
\[
\inparen{1 - \frac{q \cdot \calJ_q(\delta)}{q-1}}^2 ~=~ \inparen{1 - \frac{q \cdot \delta}{q-1}} ~=~ \beta \mper
\]
Also, we can map elements of $[q]$ to corners $u_1, \ldots, u_q$ of the simplex in $\R^{q-1}$, which are unit
vectors satisfying $\ip{u_i}{u_j} = \nfrac{-1}{(q-1)}$ for $i \neq j$. Applying
this map, say $\chi$, pointwise to $g, h \in [q]^m$, gives
\ifnum\confversion=1
\begin{align*}
\ip{\chi(g)}{\chi(h)} ~=~& \Ex{i \in [m]}{\ip{\chi\inparen{g(i)}}{\chi\inparen{h(i)}}} \\
 ~=~& 1 - \Delta(g,h) - \frac{\Delta(g,h)}{q-1} \\
 ~=~& 1 - \frac{q \cdot \Delta(g,h)}{q-1} \mper
\end{align*}
\else
\[
\ip{\chi(g)}{\chi(h)} ~=~ \Ex{i \in [m]}{\ip{\chi\inparen{g(i)}}{\chi\inparen{h(i)}}} ~=~ 1 - \Delta(g,h) - \frac{\Delta(g,h)}{q-1} ~=~ 1 - \frac{q \cdot
  \Delta(g,h)}{q-1} \mper
\]
\fi
Thus, given $g$ and $\beta$ as above, we can write the list as 
$\calL = \inbraces{h\in \calC \suchthat \ip{\chi(g)}{\chi(h)} > \sqrt{\beta}}$. 
The covering lemma \ifnum\confversion=0 in \cref{sec:covering} \fi shows that given a set $\calF$ of unit vectors 
in an inner-product space, and a unit vector $\tilde{g}$ satisfying $\ip{\tilde{g}}{f} > \sqrt{\beta}$ for
all $f \in \calF$, there exists $g_0$ in the \emph{convex hull} $\conv{\calF}$ satisfying
$\ip{g_0}{f} > \beta$ for all $f \in \calF$. \ifnum\confversion=1 (See the full version for a proof based on minimizing $\ell_2$ norm of $g_0$ subject to $g_0 \in \conv{\calF}$.)\fi

Instantiating this with $\tilde{g} = \chi(g)$ and $\calF = \inbraces{\chi(h) \suchthat h \in
  \calL}$ gives $g_0 \in \conv{\calF}$ satisfying $\ip{g_0}{\chi(h)} > \beta$ for all $h
\in \calL$. 
Additionally, $g_0$ can be chosen to be supported on at most $m \cdot (q-1) +1$ elements of $\calF$ via
\Caratheodory theorem.
Since $\ip{\chi(h_1)}{\chi(h_2)} \leq \beta$ for any $h_1 \neq h_2$ in $\calC$, any $h \in \calF
\setminus \supp(g_0)$ will satisfy $\ip{g_0}{\chi(h)} \leq \beta$, which is a contradiction. 
Thus, we must have $\calL \subseteq \supp(g_0)$ implying $\abs{\calL} \leq (q-1) \cdot m + 1$.
While this is a weaker bound on the list size (but can be independent of $m$ via approximate \Caratheodory
theorems), each step of the above proof can be extended to work well with convex relaxations.
\vspace{-5 pt}
\paragraph{Algorithmic covering lemmas via SoS.}
Note that in the above proof, it suffices to have $g_0$ in the convex hull of \emph{all} codewords
\ie $g_0 \in \conv{\chi(h) \suchthat h \in \calC}$, instead of only the codewords in $\calL$.
Since the convex hull of all the codewords is still a difficult set to optimize over, we instead
work with degree-$t$ \emph{pseudocodewords} defined as solutions to an SoS relaxation of degree-$t$,
respecting certain local constraints corresponding to the code (see \cref{sec:prelims}).

Moreover, the covering lemma used above can be proved by finding the $g_0 \in \conv{\calF}$, which
minimizes $\norm{g_0}$ while satisfying $\ip{g_0}{g} > \sqrt{\beta}$.
Analogously, we consider solutions to the SoS relaxation, given as pseudocodewords $\tildeEx{\cdot}$
satisfying $\ip{\tildeEx{\chi(\zee)}}{\chi(g)} > \sqrt{\beta}$ and minimizing
$\norm{\tildeEx{\chi(\zee)}}$. Note that here $\chi(\zee)$ is a (vector-valued) 1-local function, and
thus the vector $\tildeEx{\chi(\zee)}$ is well defined.
A similar relaxation was also used for list-decoding of direct-sum codes in \cite{AJQST20}.
\vspace{-5 pt}
\paragraph{Choosing from the ``support'' via conditioning.}
The next step of the proof of Johnson bound can be seen as saying that for all $h_1$ such that
$\chi(h_1) \in \supp(g_0)$ and $h_2 \in \calL$, we must have $\ip{\chi(h_1)}{\chi(h_2)} = 1$ (when
they are equal) or $\ip{\chi(h_1)}{\chi(h_2)} \leq \beta$ (when they are distinct codewords).

We will do this in two steps. The first is to develop a good proxy for ``$\chi(h_1) \in \supp(g_0)$''
since we are working with the pseudocodeword $\tildeEx{\cdot}$, which is not a
convex combination of codewords.
Instead, it suffices to look at pseudocodewords which have small
pseudo-covariance across a typical pair of left-right vertices in the bipartite graph defining the Tanner code \ie
\[
\Ex{\substack{\ell \in L \\ r \in R}}{\tildecov(\zee_{N_L(\ell)},\zee_{N_R(r)})} ~\leq~ \eta \mper
\]
Note that when $\tildeEx{\cdot}$ corresponds to an actual codeword (or any
integral solution) the covariance will be 0. Thus, the above is a weakening of the notion of
``vertex of the convex hull''.  We call such a solution, an \emph{$\eta$-good} pseudocodeword.
A similar definition was also used by Richelson and Roy~\cite{RR22} in their list-decoding algorithm
for Ta-Shma's codes~\cite{TS17}, and also in earlier works~\cite{AJQST20, JQST20}.

An argument of Barak, Raghavendra, and Steurer~\cite{BRS11} shows that conditioning the starting SoS
solution on the values few randomly chosen variables, leads to an $\eta$-good solution. 
In fact, one either has small covariance in the sense above, or conditioning on the variables in
$N_R(r)$ for a random $r \in R$ reduces the average variance (seen from the left) by
$\Omega_{q,d}(\eta^2)$. Since the (pseudo-)variance is non-negative, this process terminates,
yielding an $\eta$-good solution.
The argument can also be made deterministic by enumerating over all (constant-sized) subsets to
condition on.

\vspace{-5 pt}
\paragraph{Proof of distance for good pseudocodewords.}
Given an $\eta$-good pseudocodeword $\tildeEx{\cdot}$ as defined above, we can now prove that for
any $h \in \calL$, we must have 
\ifnum\confversion=1
\begin{align*}
\ip{\tildeEx{\chi(\zee)}}{\chi(h)} &~\geq~ 1 - O(\eta) \\
\text{or} \qquad
\ip{\tildeEx{\chi(\zee)}}{\chi(h)} &~\leq~ \beta + O(\eta) \mper
\end{align*}
\else
\[
\ip{\tildeEx{\chi(\zee)}}{\chi(h)} ~\geq~ 1 - O(\eta) \qquad \text{or} \qquad
\ip{\tildeEx{\chi(\zee)}}{\chi(h)} ~\leq~ \beta + O(\eta) \mper
\]
\fi
To argue the above dichotomy, we will switch to the distances instead of inner products, as the
proof closely follows the distance proof of Tanner codes. Let $\inbraces{X_{\ell}}_{\ell \in L}$ and
$\inbraces{Y_r}_{r \in R}$ be ensembles of $d$-local functions defined as 
$X_{\ell}(\zee) = \one\inbraces{\zee_{N_L(\ell)} \neq h_{N_{L}(\ell)}}$ 
and 
$Y_{r}(\zee) = \one\inbraces{\zee_{N_R(r)} \neq h_{N_{R}(r)}}$. Using the fact that the distance of
the base code $\calC_0$ is at least $\delta_0$, and the fact that the local constraints for $\calC_0$
are respected by the pseudocodewords, one can show that 
\ifnum\confversion=1
\begin{align*}
\Delta(\tildeEx{\cdot}, h) 
~\defeq~& \Ex{e \in E}{\tildeEx{\one\inbraces{\zee_e \neq h_e}}} \\
~\geq~& \delta_0 \cdot \Ex{\ell \in L}{\tildeEx{X_{\ell}(\zee)}} \mper
\end{align*}
\else
\[
\Delta(\tildeEx{\cdot}, h) 
~\defeq~ \Ex{e \in E}{\tildeEx{\one\inbraces{\zee_e \neq h_e}}} 
~\geq~ \delta_0 \cdot \Ex{\ell \in L}{\tildeEx{X_{\ell}(\zee)}} \mper
\]
\fi
Taking the geometric mean with a similar inequality for $\inbraces{Y_r}_{r \in R}$ gives
\[
\Delta(\tildeEx{\cdot}, h) 
~\geq~ 
\delta_0 \cdot \inparen{\Ex{\substack{\ell \in L \\ r \in R}}{\tildeEx{X_{\ell}(\zee)} \cdot \tildeEx{Y_{r}(\zee)}}}^{1/2}
\mper
\]
On the other hand, a variant of the expander mixing lemma for pseudoexpectations (see \ifnum\confversion=1 full version\else \cref{sec:distance}\fi), together with the $\eta$-good property and the simple observation that \[
\indi{\zee_e \neq h_e} \leq X_{\li}(\zee) \cdot Y_{\ri}(\zee), \] gives
\ifnum\confversion=1
\small
\begin{align*}
&\ \Delta(\tildeEx{\cdot}, h) \\
&\leq \Ex{\ell \sim r}{\tildeEx{X_{\ell}(\zee) \cdot Y_r(\zee)}}  \\
&\leq \Ex{\ell \in L \atop r \in R}{\tildeEx{X_{\ell}(\zee) \cdot Y_r(\zee)}} + \\
 &\hspace*{2cm} \lambda \cdot \inparen{\Ex{\ell \in L \atop r \in R}{\tildeEx{X_{\ell}^2(\zee)} \cdot \tildeEx{Y_{r}^2(\zee)}}}^{1/2} \\
&\leq \Ex{\ell \in L \atop r \in R}{\tildeEx{X_{\ell}(\zee)} \cdot \tildeEx{Y_r(\zee)}} + \\
& \hspace*{2cm} \eta + \lambda \cdot \inparen{\Ex{\ell \in L \atop r \in R}{\tildeEx{X_{\ell}(\zee)} \cdot
       \tildeEx{Y_{r}(\zee)}}}^{1/2} \mper
\end{align*}
\normalsize
\else
\begin{align*}
\Delta(\tildeEx{\cdot}, h) 
&~\leq~ \Ex{\ell \sim r}{\tildeEx{X_{\ell}(\zee) \cdot Y_r(\zee)}}  \\
&~\leq~ 
 \Ex{\substack{\ell \in L \\ r \in R}}{\tildeEx{X_{\ell}(\zee) \cdot Y_r(\zee)}} 
~+~ 
\lambda \cdot \inparen{\Ex{\substack{\ell \in L \\ r \in R}}{\tildeEx{X_{\ell}^2(\zee)} \cdot
                                                                         \tildeEx{Y_{r}^2(\zee)}}}^{1/2}
  \\
&~\leq~
 \Ex{\substack{\ell \in L \\ r \in R}}{\tildeEx{X_{\ell}(\zee)} \cdot \tildeEx{Y_r(\zee)}} ~+~ \eta   
~+~ 
\lambda \cdot \inparen{\Ex{\substack{\ell \in L \\ r \in R}}{\tildeEx{X_{\ell}(\zee)} \cdot
       \tildeEx{Y_{r}(\zee)}}}^{1/2} \mper
\end{align*}
\fi
Note that the last line also used $\tildeEx{X_{\ell}^2(\zee)} = \tildeEx{X_{\ell}(\zee)}$ (and
similarly for $Y_r$).
Comparing the two bounds and solving the resulting quadratic inequality in $\tau =
(\Ex{\ell,r}{\tildeEx{X_{\ell}(\zee)} \cdot \tildeEx{Y_r(\zee)}})^{1/2}$ yields the dichotomy.
\vspace{-5 pt}
\paragraph{Completing the argument.}
Starting the algorithmic covering lemma with parameter $\beta + \eps$ ensures that with positive
(constant) probability, for any fixed $h \in \calL$, the conditioning yields an $\eta$-good
pseudocodeword satisfying $\ip{\tildeEx{\chi(\zee)}}{\chi(h)} \geq \beta + \nfrac{\eps}{2}$. 
By the above dichotomy, we must be in the first case (for sufficiently small $\eta$). 
A simple averaging argument then shows that considering $h' \in [q]^m$ with $h'(e) = \argmax_{j \in
  [q]}\inbraces{\tildeEx{\indi{\zee_{e}=j}}}$ gives $\Delta(h',h) = O(\eta)$.
Given such an $h'$, the codeword $h$ can be recovered by unique decoding.

The above argument can also be made to work for other codes, where the distance proof is a spectral argument
based on expanders, such as the codes obtained via the distance amplification of Alon, Edmonds and
Luby~\cite{AEL95}. 
One can simply use pseudocodewords for the appropriate code, and substitute the corresponding distance
proof in the argument above.


\section{Covering Lemma and Johnson Bounds}
\label{sec:covering}
In this section, we will introduce our abstract covering lemma, and then use it to give algorithm-friendly proofs of (known) Johnson bounds \cite{G01} by showing that there is a distribution over codewords that covers the list. Next, we will show that if we are willing to work with distributions over degree-$t$ pseudocodewords (or just pseudocodewords, due to convexity), then such a pseudocodeword may be found in time $n^{\calO(t)}$.

\subsection{Covering Lemma}

\begin{lemma}[Covering Lemma]\label{lem:covering}
	Let $\calH$ be an inner product space, and let $\calF$ be a family of unit vectors in $\calH$. Suppose there exists a $g \in\calH$ of unit norm such that for any $f\in \calF$, $\ip{g}{f} > \eps$. Then, there exists a $g_0 \in \conv{\calF}$ such that for any $f\in\calF$, $\ip{g_0}{f} > \eps^2$.
\end{lemma}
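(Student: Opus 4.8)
The plan is to find $g_0$ as the point in $\conv{\calF}$ of minimum norm subject to the linear constraint $\ip{g_0}{g} \geq \eps$ (we can achieve $\ip{g_0}{g} \geq \eps$ since $g$ itself—or rather, any single element of $\calF$—already satisfies the strict inequality, so the feasible region is nonempty and closed, and the continuous function $\norm{\cdot}$ attains its minimum on it). Let $g_0$ be this minimizer. The key point will be to show that $\norm{g_0} \leq 1$ and that $\ip{g_0}{f} > \eps^2$ for all $f \in \calF$; in fact I expect to prove the single clean inequality $\ip{g_0}{f} \geq \eps \cdot \norm{g_0}$ for every $f \in \calF$ and separately $\norm{g_0} \leq \eps^{-1}\cdot(\text{something})$, which combine to give the result. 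Let me restructure: the cleanest route is to show (a) $\ip{g_0}{f} \geq \ip{g_0}{g_0} = \norm{g_0}^2$ for all $f\in\calF$, and (b) $\norm{g_0} > \eps$; together these give $\ip{g_0}{f} \geq \norm{g_0}^2 > \eps^2$.

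For step (a): suppose for contradiction that $\ip{g_0}{f} < \norm{g_0}^2$ for some $f \in \calF$. Consider the segment $g_\theta = (1-\theta) g_0 + \theta f$ for small $\theta > 0$; this stays in $\conv{\calF}$. We have $\frac{d}{d\theta}\norm{g_\theta}^2\big|_{\theta=0} = 2\ip{g_0}{f - g_0} = 2(\ip{g_0}{f} - \norm{g_0}^2) < 0$, so moving slightly toward $f$ strictly decreases the norm. However, I also need to respect the constraint $\ip{g_\theta}{g} \geq \eps$. This is where one must be slightly careful: if the constraint is active at $g_0$ (i.e. $\ip{g_0}{g} = \eps$), moving toward $f$ might violate it. The fix is the standard Lagrangian/KKT argument: at the minimizer, $g_0$ must be a nonnegative multiple of the constraint gradient projected appropriately, i.e. $g_0 = c\, g$ plus a vector orthogonal to the feasible directions — more precisely, for any $f \in \conv{\calF}$ with $\ip{f}{g} \geq \eps$ we get $\ip{g_0}{f - g_0} \geq 0$ by first-order optimality over the convex feasible set. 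Since each $f \in \calF$ satisfies $\ip{f}{g} > \eps$, it is feasible, so $\ip{g_0}{f} \geq \norm{g_0}^2$ holds directly from convex optimality. This is the main technical step and I expect it to be the one requiring the most care — making sure the variational inequality $\ip{g_0}{f - g_0}\geq 0$ is correctly invoked for the constrained convex minimization.

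For step (b): since $g_0 \in \conv{\calF}$, write $g_0 = \sum_i \lambda_i f_i$ with $f_i \in \calF$, $\lambda_i \geq 0$, $\sum_i \lambda_i = 1$. Then $\ip{g_0}{g} = \sum_i \lambda_i \ip{f_i}{g} > \sum_i \lambda_i \eps = \eps$ (using $\ip{f_i}{g} > \eps$ for each $i$, and that it's a genuine convex combination). By Cauchy--Schwarz, $\eps < \ip{g_0}{g} \leq \norm{g_0}\cdot\norm{g} = \norm{g_0}$, so $\norm{g_0} > \eps$. Combining with (a): for any $f \in \calF$, $\ip{g_0}{f} \geq \norm{g_0}^2 > \eps^2$, which is exactly the claim. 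I would also remark (for the later applications, though not needed for the bare statement) that by \Caratheodory's theorem $g_0$ can be taken supported on at most $\dim(\calH) + 1$ elements of $\calF$, and that in infinite or high dimensions one can instead use an approximate \Caratheodory bound at the cost of an arbitrarily small slack in the final inequality.
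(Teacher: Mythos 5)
Your proof is correct and follows the same strategy as the paper: pick $g_0$ as the norm minimizer in $\conv{\calF}$ subject to the inner-product constraint with $g$, show $\norm{g_0} > \eps$ via Cauchy--Schwarz, and use optimality at $g_0$ to conclude $\ip{g_0}{f}$ is large for every $f\in\calF$. The only differences are cosmetic: you invoke the first-order variational inequality $\ip{g_0}{f-g_0}\geq 0$ abstractly (which also gives the slightly sharper intermediate bound $\ip{g_0}{f}\geq\norm{g_0}^2$), whereas the paper carries out the equivalent perturbation $g_1=\theta g_0+(1-\theta)f$ explicitly and optimizes over $\theta$; and your closed constraint $\ip{v}{g}\geq\eps$ is a cleaner choice than the paper's strict inequality, since it keeps the feasible set closed and so guarantees the minimizer exists.
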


\begin{proof}
Consider the set $T = \inbraces{v\in \conv{\calF} \suchthat \ip{v}{g} >\eps}$, which is non-empty because any $f\in \calF$ also belongs to $T$. Let $g_0 = \argmin_{v\in T} \norm{v}^2$. We will show that $g_0$ must have the property that $\ip{g_0}{f}>\eps^2$ for any $f\in\calF$. Note that $\norm{g_0}\cdot \norm{g}\geq \ip{g_0}{g}>\eps$ means $\norm{g_0}>\eps$.

Suppose not, then there exists an $f\in\calF$ such that $\ip{g_0}{f} \leq \eps^2$. For an $\tee \in[0,1]$ to be chosen later, consider $g_1 = \tee g_0+(1-\tee)f$, which is also in $\conv{\calF}$, and $\ip{g_1}{g} = \tee \cdot \ip{g_0}{g}+(1-\tee)\cdot \ip{f}{g} >\eps$.
\begin{align*}
	\norm{g_1}^2 &~=~ \ip{\tee g_0+(1-\tee)f}{\tee g_0+(1-\tee)f} \\
	&~=~ \tee^2 \cdot \norm{g_0}^2+2\tee(1-\tee) \cdot \ip{g_0}{f}+(1-\tee)^2 \cdot \norm{f}^2\\
	&~\leq~ \tee^2 \cdot \norm{g_0}^2+2\tee(1-\tee)\eps^2+(1-\tee)^2\\
	&~=~\tee^2\cdot (\norm{g_0}^2-\eps^2) +\eps^2 \cdot (\tee+(1-\tee))^2 + (1-\tee)^2 (1-\eps^2)\\
	&~=~\tee^2\cdot (\norm{g_0}^2-\eps^2) +\eps^2 + (1-\tee)^2 \cdot (1-\eps^2)\\
	&~=~\tee^2 \cdot (1-\eps^2 + \norm{g_0}^2-\eps^2) - 2\tee \cdot (1-\eps^2) +1
\end{align*}
At $\tee=1$, this expression equals $\norm{g_0}^2$. The minimum of this quadratic function of $\tee$
is achieved at $\tee= \frac{2(1-\eps^2)}{2(1-\eps^2+\norm{g_0}^2-\eps^2)}$, which is $<1$ as
$\norm{g_0} >\eps$. Thus, we can reduce $\norm{g_1}^2$ further to strictly less than $\norm{g_0}^2$
by choosing $\tee$ to the above value, which contradicts the optimality of $g_0$.
\end{proof}
\subsection{Johnson bounds}

We will prove the standard $q$-ary Johnson bound and the version from \cite{GS01:johnson} which receives weight $W_{i,j}$ for each $j\in [q]$ for every coordinate $i\in [n]$. First we define some functions to embed the received word $g$ (or received weights) in $\R^{(q-1)n}$, which will be the inner product space where we apply the covering lemma.

\begin{definition}
Fix $q \in \N$. We denote by $\chi_q: [q]\rightarrow \R^{q-1}$ any function that satisfies
\begin{align*}
	\ip{\chi_q(j_1)}{\chi_q(j_2)} = \begin{cases}
		1 &j_1=j_2 \\
		\frac{-1}{q-1} &j_1\neq j_2
	\end{cases}
\end{align*} 
For $f\in [q]^n$, we use $\chi_q(f)$ to denote the vector in $\R^{(q-1)n}$ obtained by applying $\chi_q$ on $f$ coordinate-wise. Note that \[\norm{\chi_q(f)}^2 ~=~ \Ex{i\in[n]}{\norm{\chi_q(f_i)}^2} = 1\]\\
When clear from context, we may omit $q$ to write $\chi(f)$.
\end{definition}

Such a $\chi_q$ exists because the corresponding $q\times q$ Gram matrix is positive semidefinite and of rank $q-1$.

\begin{observation}
For $f_1,f_2 \in [q]^n$, if $\dis(f_1,f_2) = (1-\frac{1}{q})(1-\beta)$, then
\[
	\ip{\chi(f)}{\chi(g)} = \beta
\]
\end{observation}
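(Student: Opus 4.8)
This observation is a direct computation relating fractional distance to the inner product of the simplex embeddings. The plan is to expand $\ip{\chi(f_1)}{\chi(f_2)}$ coordinate-wise using the definition of $\chi_q$ on single symbols, then substitute the hypothesis on $\dis(f_1,f_2)$.

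First I would write $\ip{\chi(f_1)}{\chi(f_2)} = \Ex{i \in [n]}{\ip{\chi_q(f_1(i))}{\chi_q(f_2(i))}}$ using the coordinate-wise definition of $\chi_q$ applied to strings (and the normalization convention already fixed in the preceding definition). For each coordinate $i$, the summand equals $1$ when $f_1(i) = f_2(i)$ and $-\tfrac{1}{q-1}$ when $f_1(i) \neq f_2(i)$, by the defining property of $\chi_q$. Letting $\delta = \dis(f_1,f_2)$ denote the fraction of disagreeing coordinates, the expectation evaluates to $(1-\delta) \cdot 1 + \delta \cdot \left(-\tfrac{1}{q-1}\right) = 1 - \delta - \tfrac{\delta}{q-1} = 1 - \tfrac{q\delta}{q-1}$.

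Then I would substitute the hypothesis $\delta = \left(1 - \tfrac1q\right)(1-\beta) = \tfrac{q-1}{q}(1-\beta)$, which gives $\tfrac{q\delta}{q-1} = 1 - \beta$, hence $\ip{\chi(f_1)}{\chi(f_2)} = 1 - (1-\beta) = \beta$, as claimed. (I note the statement as printed writes $\ip{\chi(f)}{\chi(g)}$ where the intended symbols are $f_1, f_2$; I would just use $f_1, f_2$ consistently.)

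There is no real obstacle here — the only thing to be slightly careful about is bookkeeping between ``fraction of agreements'' and ``fraction of disagreements,'' and making sure the normalized (expectation) inner product convention is the one in force, so that $\norm{\chi_q(f)}^2 = 1$ and the cross term carries the factor $-\tfrac{1}{q-1}$ rather than an unnormalized count. Everything else is a one-line linear substitution.
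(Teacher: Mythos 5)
Your proposal is correct and matches the paper's approach exactly; in fact the paper carries out this identical coordinate-wise expansion in the proof overview (Section 3), arriving at $\ip{\chi(g)}{\chi(h)} = 1 - \tfrac{q\Delta(g,h)}{q-1}$, and the observation itself is stated without proof. Your note about the typo ($f,g$ versus $f_1,f_2$) is also accurate.
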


\begin{theorem}[$q$-ary Johnson bound]\label{thm:johnson_bound}
Let $\calC \subseteq [q]^n$  be a code of distance at least $\dis(\calC) = (1-\frac{1}{q})(1-\beta)$. For any $g\in [q]^n$, 
\[
	\abs{\calL \inparen{ g,\inparen{1-\frac{1}{q}} \cdot (1-\sqrt{\beta}) } } ~\leq~ (q-1)\cdot n+1
\]
\end{theorem}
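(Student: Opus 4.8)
The plan is to deduce the $q$-ary Johnson bound directly from the Covering Lemma (\cref{lem:covering}) together with the embedding $\chi = \chi_q$ and the geometric observations preceding the statement. First I would translate the list-decoding hypothesis into inner-product language: set $\beta' = \sqrt{\beta}$, and note that if $h \in \calL\bigl(g, (1-\tfrac1q)(1-\sqrt{\beta})\bigr)$ then $\dis(g,h) < (1-\tfrac1q)(1-\sqrt\beta)$, so by the \emph{Observation} relating distance to inner products we get $\ip{\chi(g)}{\chi(h)} > \sqrt{\beta}$. Thus $\tilde g := \chi(g)$ is a unit vector in $\calH := \R^{(q-1)n}$ with $\ip{\tilde g}{f} > \sqrt\beta$ for every $f$ in the finite family $\calF := \{\chi(h) : h \in \calL\}$ of unit vectors.

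Next I would apply \cref{lem:covering} with $\eps = \sqrt{\beta}$ to obtain a point $g_0 \in \conv{\calF}$ with $\ip{g_0}{f} > \beta$ for all $f \in \calF$. By \Caratheodory's theorem, $g_0$ is a convex combination of at most $\dim(\calH) + 1 = (q-1)n + 1$ members of $\calF$; let $\calF_0 \subseteq \calF$ be such a support set, so $\abs{\calF_0} \leq (q-1)n+1$. The final step is to show $\calF_0 = \calF$, which gives $\abs{\calL} = \abs{\calF} = \abs{\calF_0} \le (q-1)n+1$ (using that $\chi$ is injective on codewords, since distinct codewords are at positive distance). Suppose for contradiction some $\chi(h) \in \calF \setminus \calF_0$. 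Since $\calC$ has distance at least $(1-\tfrac1q)(1-\beta)$, for every codeword $h' \neq h$ appearing in $\calF_0$ we have $\dis(h,h') \geq (1-\tfrac1q)(1-\beta)$, hence $\ip{\chi(h')}{\chi(h)} \leq \beta$ by the Observation; writing $g_0 = \sum_{h' \in \calF_0} \lambda_{h'} \chi(h')$ with $\lambda_{h'} \ge 0$, $\sum \lambda_{h'} = 1$, we get $\ip{g_0}{\chi(h)} = \sum_{h'} \lambda_{h'} \ip{\chi(h')}{\chi(h)} \leq \beta$, contradicting $\ip{g_0}{\chi(h)} > \beta$.

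The main technical point to get right is the direction and strictness of the distance-versus-inner-product conversion: the list is defined by a \emph{strict} inequality $\dis(g,h) < (1-\tfrac1q)(1-\sqrt\beta)$, which must become the strict inequality $\ip{\chi(g)}{\chi(h)} > \sqrt\beta$ fed into the Covering Lemma, whose conclusion $\ip{g_0}{f} > \beta$ must then strictly beat the bound $\ip{\chi(h')}{\chi(h)} \le \beta$ coming from the code's minimum distance. This interplay of strict and non-strict inequalities is exactly what makes the contradiction close, so I would be careful to track it; everything else is a direct chaining of the already-established lemma, the Observation, and \Caratheodory's theorem. I would also remark that the dimension bound $(q-1)n+1$ is precisely $\dim \calH + 1$, and note (as the paper does) that replacing \Caratheodory by an approximate version would yield a bound independent of $n$ at the cost of a slight loss in the decoding radius.
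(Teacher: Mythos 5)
Your proposal is correct and follows essentially the same route as the paper: embed via $\chi_q$, apply \cref{lem:covering} with $\eps = \sqrt{\beta}$, invoke \Caratheodory\ to bound the support of $g_0$, and use the minimum-distance condition to force every $\chi(h)$ with $h \in \calL$ into that support. The only cosmetic difference is that you close the argument by directly bounding $\ip{g_0}{\chi(h)} = \sum_{h'}\lambda_{h'}\ip{\chi(h')}{\chi(h)} \le \beta$ when $\chi(h) \notin \calF_0$, while the paper phrases the same step as an averaging argument (some element of the support must have inner product exceeding $\beta$, hence must equal $\chi(h)$); your explicit remark that $\chi$ is injective on codewords is implicit in the paper but a nice thing to say.
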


\begin{proof}

Let $\calL = \calL \inparen{ g,\inparen{1-\frac{1}{q}}\cdot (1-\sqrt{\beta}) } = \inbraces{h_1,h_2,\cdots,h_m}$. Then for any $h_i\in \calL$,
\[
	\ip{\chi(g)}{\chi(h_i)} > \sqrt{\beta}
\]
By \cref{lem:covering}, there exists a $g_0 \in \conv{\inbraces{\chi(h_1),\chi(h_2),\cdots,\chi(h_m)}}$ such that $\ip{g_0}{\chi(h)}>\beta$ for every $h\in \calL$. Since $g_0 \in \R^{(q-1)n}$, by the \Caratheodory theorem, we may assume that $g_0$ can be written as a convex combination of at most $(q-1)n+1$ elements of $\inbraces{\chi(h_1),\chi(h_2),\cdots,\chi(h_m)}$. Let this subset of $\inbraces{\chi(h_1),\chi(h_2),\cdots,\chi(h_m)}$ be $\supp(g_0)$. 

For any $h\in \calL$,
\begin{align*}
	\ip{g_0}{\chi(h)} > \beta &~\implies~
	\exists \chi(h_0)\in \supp(g_0) \text{ such that } \ip{ \chi(h_0) }{\chi(h)} > \beta \\
	&~\implies~ \exists \chi(h_0)\in \supp(g_0) \text{ such that } \ip{\chi(h_0)}{\chi(h)}=1 \\
	&~\implies~ \exists \chi(h_0)\in \supp(g_0) \text{ such that } h_0=h \\
	&~\implies~ \chi(h) \in \supp(g_0)
\end{align*}
Here we used the fact that for any $h_i,h_j\in \calL$, due to distance of the code,
\[
	\ip{\chi(h_i)}{\chi(h_j)} \leq \beta \quad \text{ or }\quad \ip{\chi(h_i)}{\chi(h_j)} =1
\]
Finally, we have concluded that $\inbraces{\chi(h_1),\chi(h_2),\cdots,\chi(h_m)} \sub \supp(g_0)$ and using $|\supp(g_0)| \leq (q-1)\cdot n+1$ gives $m=|\calL| \leq (q-1)\cdot n +1$.
\end{proof}

Note that in the above proof, $g_0\in \conv{\inbraces{\chi(h_1),\chi(h_2),\cdots,\chi(h_m)}} \sub \conv{\chi(\calC)}$. Indeed, even if we minimize the norm as in the proof of \cref{lem:covering} over $\conv{\chi(\calC)}$, we will still get the covering property. This will be useful when trying to find the cover via an efficient algorithm, where we do not know the list apriori.

We now prove the more general weighted version of Johnson bound, which captures list recovery as a special case.

\begin{theorem}[Weighted Johnson bound \cite{G01}]\label{thm:weighted_johnson_bound}
	Let distance of code be at least $(1-\frac{1}{q})(1-\beta)$. Given weights $\{w_{i,j}\}_{i\in [n],j\in [q]}$, let $W_i = \sum_j w_{i,j}$ and $W_i^{(2)} = \sum_j w_{i,j}^2$. The number of codewords $h$ that satisfy
	\[
		\Ex{i}{\frac{w_{i,h_i}}{W_i}} > \frac{1}{q} + \sqrt{\inparen{1-\frac{1}{q}} \cdot
                  \Ex{i}{\frac{W_i^{(2)}}{W_i^2} -\frac{1}{q}} \cdot \beta }
	\]
	is at most $n(q-1)+1$.
\end{theorem}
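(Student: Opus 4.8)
The plan is to follow the proof of \cref{thm:johnson_bound} almost verbatim, replacing the embedded received word $\chi_q(g)$ by a weighted barycenter of the simplex corners, and then invoking the covering lemma \cref{lem:covering} together with the \Caratheodory theorem exactly as there.

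First I would encode the weights into the inner product space $\R^{(q-1)n}$: for each coordinate $i \in [n]$ set $\widetilde{g}_i \defeq \sum_{j \in [q]} \tfrac{w_{i,j}}{W_i}\,\chi_q(j) \in \R^{q-1}$, and let $\widetilde{g} \in \R^{(q-1)n}$ be the vector whose $i$-th block is $\widetilde{g}_i$. Using $\ip{\chi_q(j_1)}{\chi_q(j_2)} = 1$ for $j_1 = j_2$ and $-\tfrac{1}{q-1}$ otherwise, a direct computation (expanding the sum and separating the diagonal term) gives, for every $h \in [q]^n$,
\[
\ip{\widetilde{g}}{\chi_q(h)} ~=~ \Ex{i}{\ip{\widetilde{g}_i}{\chi_q(h_i)}} ~=~ \frac{1}{1 - 1/q}\inparen{\Ex{i}{\frac{w_{i,h_i}}{W_i}} - \frac{1}{q}} \mcom
\]
and the same manipulation applied to $\norm{\widetilde{g}_i}^2 = \ip{\widetilde{g}_i}{\widetilde{g}_i}$ gives $\norm{\widetilde{g}}^2 = \Ex{i}{\norm{\widetilde{g}_i}^2} = \frac{1}{1-1/q}\inparen{\Ex{i}{\frac{W_i^{(2)}}{W_i^2}} - \frac{1}{q}}$. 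If $\norm{\widetilde{g}} = 0$ then all the $w_{i,j}$ are equal within every coordinate, the hypothesis of the theorem cannot hold (its right-hand side is then $1/q = \Ex{i}{w_{i,h_i}/W_i}$), and the list is empty; so I may assume $\norm{\widetilde{g}} > 0$ and put $g \defeq \widetilde{g}/\norm{\widetilde{g}}$, a unit vector.

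Next I would check that the threshold in the theorem matches $\sqrt{\beta}\cdot\norm{\widetilde{g}}$ once the common factor $1 - 1/q$ is cleared. Since $\Ex{i}{W_i^{(2)}/W_i^2} - \tfrac1q = (1-\tfrac1q)\,\norm{\widetilde{g}}^2$, the displayed inequality $\Ex{i}{w_{i,h_i}/W_i} > \tfrac1q + \sqrt{(1-\tfrac1q)\inparen{\Ex{i}{W_i^{(2)}/W_i^2} - \tfrac1q}\,\beta}$ is equivalent to $\ip{\widetilde{g}}{\chi_q(h)} > \norm{\widetilde{g}}\cdot\sqrt{\beta}$, i.e.\ to $\ip{g}{\chi_q(h)} > \sqrt{\beta}$. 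Hence, writing $\calL$ for the set of codewords satisfying the displayed inequality and $\calF = \inbraces{\chi_q(h) : h \in \calL}$, the unit vector $g$ satisfies $\ip{g}{f} > \sqrt{\beta}$ for every $f \in \calF$.

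Finally I would invoke \cref{lem:covering} with $\eps = \sqrt{\beta}$ to obtain $g_0 \in \conv{\calF}$ with $\ip{g_0}{\chi_q(h)} > \beta$ for all $h \in \calL$, write $g_0$ as a convex combination of at most $(q-1)n+1$ of the vectors $\chi_q(h)$ via the \Caratheodory theorem, and conclude exactly as in \cref{thm:johnson_bound}: for $h \in \calL$, the bound $\ip{g_0}{\chi_q(h)} > \beta$ forces some $\chi_q(h_0) \in \supp(g_0)$ with $\ip{\chi_q(h_0)}{\chi_q(h)} > \beta$, whereas the code's distance $(1-\tfrac1q)(1-\beta)$ forces $\ip{\chi_q(h_0)}{\chi_q(h)} \leq \beta$ for distinct codewords, so $h_0 = h$ and $\calL \subseteq \supp(g_0)$, giving $\abs{\calL} \leq (q-1)n+1$. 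There is no conceptual obstacle; the only real work is the two identities for $\ip{\widetilde{g}}{\chi_q(h)}$ and $\norm{\widetilde{g}}^2$ together with the bookkeeping that ties the theorem's threshold to $\sqrt{\beta}\,\norm{\widetilde{g}}$, and the one thing one must not forget is the degenerate case $\norm{\widetilde{g}} = 0$.
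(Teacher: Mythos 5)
Your proposal is correct and follows the same route as the paper: embed the normalized weights as $\sum_j (w_{i,j}/W_i)\chi_q(j)$, compute the resulting norm and inner products, check that the theorem's threshold translates to $\ip{g}{\chi_q(h)} > \sqrt{\beta}$ after normalization, and conclude via \cref{lem:covering} and \Caratheodory{} exactly as in \cref{thm:johnson_bound}. The only addition you make is explicitly handling the degenerate case $\norm{\widetilde{g}}=0$, which the paper leaves implicit.
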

\begin{proof}
Embed the received weights in $i^{th}$ position as $\chi(\{w_{i,j}\}_{j\in [q]}) = \sum_j \frac{w_{i,j} \chi_q(j)}{W_i}$, and append all these $n$ vectors, each of dimension $q-1$, and normalize to a unit vector to form the final embedded vector $u$. This normalizing factor is $ \inparen{\ExpOp_{i}{\norm{\frac{\sum_j w_{ij} \chi_q(j)}{W_i} }^2}}^{1/2} $, which we can simplify as
\begin{align*}
	\Ex{i}{\norm{\frac{\sum_j w_{ij} \chi_q(j)}{W_i} }^2} &~=~ \Ex{i}{\frac{1}{W_i^2} \left( \sum_j w_{ij}^2 -\frac{1}{q-1} \sum_{j_1\neq j_2} w_{ij_1}w_{ij_2}\right) }	\\
	&~=~ \Ex{i}{\frac{1}{W_i^2} \left( \frac{q}{q-1} \sum_j w_{ij}^2 -\frac{1}{q-1} (\sum_{j} w_{ij})^2 \right) }\\
	&~=~ \frac{q}{q-1} \cdot \Ex{i}{ \frac{1}{W_i^2} \left( W_i^{(2)} -\frac{1}{q}W_i^2 \right) }\\
	&~=~ \frac{q}{q-1} \cdot \Ex{i}{ \frac{W_i^{(2)}}{W_i^2} -\frac{1}{q} }
\end{align*}
We then have
\begin{align*}
	 \inparen{\frac{q}{q-1} \cdot \Ex{i}{ \frac{W_i^{(2)}}{W_i^2} -\frac{1}{q} }}^{1/2} \cdot  ~\ip{u}{\chi_q(h)} &~=~ \Ex{i}{\frac{1}{W_i} \left( \sum_{j} w_{i,j} \ip{\chi_q(j)}{\chi_q(h_i)}\right) } \\
	&~=~ \Ex{i}{\frac{1}{W_i} \left(  w_{i,h_i} - \frac{\sum_{j\neq f_i} w_{i,j}}{q-1} \right)} \\
	&~=~ \Ex{i}{ \frac{q}{q-1} \frac{w_{i,h_i}}{W_i} - \frac{1}{q-1} } 
	~=~ \frac{q}{q-1} \cdot \Ex{i}{ \frac{w_{i,h_i}}{W_i} - \frac{1}{q} } 
\end{align*}

For a codeword $h$ satisfying $\Ex{i}{\frac{w_{i,h_i}}{W_i}} > \frac{1}{q} +
\sqrt{(1-\frac{1}{q}) \cdot \Ex{i}{\frac{W_i^{(2)}}{W_i^2} -\frac{1}{q}} \cdot \beta }$, we
have $\ip{u}{\chi(h)} >\sqrt{\beta}$
and the rest of the proof follows as in \cref{thm:johnson_bound}, by using the fact that any two codewords $h_1,h_2$ satisfy $\ip{\chi(h_1)}{\chi(h_2)} \leq \beta$.
\end{proof}

\subsection{Algorithmic covering lemma}

As mentioned earlier, the above lemma guarantees the existence of a distribution over the list that agrees simultaneously with the entire list, but finding such a distribution may be hard without knowing the list already! The other option is to look for a distribution over all codewords, but such a polytope will have exponential number of vertices and it's not clear how to optimize over it efficiently.

We instead relax to allow degree-$t$ pseudodistributions, represented as degree-$t$ pseudoexpectation operators, which can be optimized over in time $n^{O(t)}$.

We recall that $\chiTan: [q] \rightarrow \R^{q-1}$ is a function that is extended pointwise to $\chiTan:[q]^E \rightarrow \inparen{\R^{q-1}}^{|E|}$. Note that $\chiTan$ is a 1-local vector-valued function on $[q]^E$, and so we can use $\tildeEx{\chiTan(\zee)} \in \R^E$ which will satisfy
\[
	\tildeEx{\chiTan(\zee)}(e) = \tildeEx{\chi_q(\zee_e)}
\]

\subsubsection{Tanner code}

Let $q \geq 2$ be an integer. Fix $G(L,R,E)$ to be an $(n,d,\lambda)$-expander, and $\calC_0 \subseteq [q]^d$ to be an inner code. Let $\TanC \subseteq [q]^E$ be the Tanner code determined by $G$ and $\calC_0$. 

\begin{lemma}\label{lem:abstract_algorithmic_covering}
	Fix $\gamma>0$. Let $u\in \R^{(q-1)|E|}$ with $\norm{u}_2 =1$. For any $t\geq d$, there exists a pseudocodeword $\tildeEx{\cdot}$ of SoS-degree $t$ such that for any $h \in \TanC$ that satisfies $\ip{u}{\chiTan(h)}>\gamma$, 
	\[
		\ip{\tildeEx{\chiTan(\zee)}}{\chiTan(h)} = \Ex{e\in E}{\ip{\tildeEx{\chi_q(\zee_e)}}{\chi_q(h(e))}} > \gamma^2
	\]
	Moreover, this pseudocodeword $\tildeEx{\cdot}$ can be found in time $n^{\bigoh(t)}$.
\end{lemma}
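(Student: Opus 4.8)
I would mimic the proof of the non-algorithmic Covering Lemma (\cref{lem:covering}), replacing the convex hull $\conv{\calF}$ by the (convex, compact) set of degree-$t$ pseudocodewords, and replacing ``$g_0$ minimizing $\norm{g_0}$'' by the optimal solution of a semidefinite program. Two structural facts drive everything. First, every genuine codeword $h \in \TanC$ induces an \emph{integral} degree-$t$ pseudocodeword $\mu_h$, defined by $\mu_h[\zee_{S,\alpha}] = \indi{h_S = \alpha}$ and extended linearly: since $h$ obeys all the local $\calC_0$-parity constraints, so does $\mu_h$, and $\mu_h[\chiTan(\zee)] = \chiTan(h)$. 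Second, the set $\calK$ of degree-$t$ pseudocodewords is convex and compact (each degree-$\le t$ pseudomoment $\tildeEx{\zee_{S,\alpha}}$ lies in $[0,1]$, and the defining constraints are closed).

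Given these, I would set up the optimization as follows. The feasible region $\calK_\gamma \defeq \inbraces{\tildeEx{\cdot}\in\calK \suchthat \ip{\tildeEx{\chiTan(\zee)}}{u}\ge\gamma}$ is convex and compact, and is nonempty as soon as some $h\in\TanC$ has $\ip{u}{\chiTan(h)}>\gamma$ (if no such $h$ exists the conclusion is vacuous and any pseudocodeword works). The map $\tildeEx{\cdot}\mapsto\norm{\tildeEx{\chiTan(\zee)}}^2=\Ex{e\in E}{\norm{\tildeEx{\chi_q(\zee_e)}}^2}$ is a convex quadratic in the linearly represented pseudomoments, so it attains a minimum on $\calK_\gamma$ at some $\tildeEx{\cdot}$; and since minimizing a convex quadratic (via a Schur-complement lift) over the spectrahedron $\calK_\gamma$ is a semidefinite program of size $n^{O(t)}$, this minimizer can be computed in time $n^{O(t)}$, under the bit-complexity hypotheses for SoS SDPs referenced in \cref{sec:prelims}.

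It then remains to verify the covering property of this minimizer. Writing $v_0\defeq\tildeEx{\chiTan(\zee)}$, Cauchy--Schwarz and feasibility give $\norm{v_0}\ge\ip{v_0}{u}\ge\gamma$. If $\norm{v_0}=\gamma$, equality in Cauchy--Schwarz forces $v_0=\gamma u$, whence $\ip{v_0}{\chiTan(h)}=\gamma\ip{u}{\chiTan(h)}>\gamma^2$ for every $h$ with $\ip{u}{\chiTan(h)}>\gamma$, and we are done. Otherwise $\norm{v_0}>\gamma$, and I would argue exactly as in \cref{lem:covering}: if some such $h$ had $\ip{v_0}{\chiTan(h)}\le\gamma^2$, then for $\theta\in[0,1)$ the pseudocodeword $\theta\,\tildeEx{\cdot}+(1-\theta)\,\mu_h$ (valid by convexity of $\calK$) has image $v_\theta=\theta v_0+(1-\theta)\chiTan(h)$ with $\ip{v_\theta}{u}>\gamma$, while the quadratic computation of \cref{lem:covering} (using $\norm{\chiTan(h)}=1$, $\ip{v_0}{\chiTan(h)}\le\gamma^2$, $\norm{v_0}>\gamma$) yields $\norm{v_\theta}^2<\norm{v_0}^2$ for a suitable $\theta\in(0,1)$, contradicting minimality. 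The stated identity $\ip{\tildeEx{\chiTan(\zee)}}{\chiTan(h)}=\Ex{e\in E}{\ip{\tildeEx{\chi_q(\zee_e)}}{\chi_q(h(e))}}$ is just the definition of the normalized inner product on $(\R^{q-1})^{|E|}$ together with $\tildeEx{\chiTan(\zee)}(e)=\tildeEx{\chi_q(\zee_e)}$.

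The conceptual step is easy once \cref{lem:covering} is in hand; the point needing genuine care is the algorithmic claim — confirming that minimizing the convex-quadratic objective $\norm{\tildeEx{\chiTan(\zee)}}^2$ over the pseudocodeword spectrahedron with the extra linear cut $\ip{\tildeEx{\chiTan(\zee)}}{u}\ge\gamma$ is solvable in $n^{O(t)}$ time, which follows from the standard SDP/ellipsoid machinery but inherits the usual bit-complexity caveats. Everything else is the optimality computation carried over verbatim from \cref{lem:covering}, plus the routine observations that codewords embed as integral pseudocodewords and that pseudocodewords are closed under convex combination.
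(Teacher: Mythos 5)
Your proposal is correct and takes exactly the same route as the paper: minimize $\norm{\tildeEx{\chiTan(\zee)}}^2$ over the spectrahedron of degree-$t$ pseudocodewords subject to the linear constraint $\ip{\tildeEx{\chiTan(\zee)}}{u}\geq\gamma$, then derive the covering property by a convex-combination/quadratic-optimality contradiction using the integral pseudocodeword $\mu_h$, just as the paper does with $\PExp_\tee[\cdot]=(1-\tee)\tildeEx{\cdot}+\tee\PExp^{(h)}[\cdot]$. Your explicit treatment of the boundary case $\norm{v_0}=\gamma$ is a small added bit of care but not a departure from the paper's argument.
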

\begin{proof}
	Define the convex function $\Psi(\tildeEx{\cdot}) = \ip{\tildeEx{\chiTan(\zee)}}{\tildeEx{\chiTan(\zee)}}$ and solve the convex program in \cref{tab:Tanner_SoS}.
	
\begin{table}[h]
\hrule
\vline
\begin{minipage}[t]{0.99\linewidth}
\vspace{-5 pt}
{\small
\begin{align*}
    &\mbox{minimize}\quad ~~ \Psi\left(\tildeEx{\cdot}\right)
    \\
&\mbox{subject to}\quad \quad ~\\
    &\qquad \ip{\tildeEx{\chiTan(\zee)}}{u} ~>~ \gamma\label{cons:agreement-ld}    \\
&\qquad \tildeEx{\cdot} \text{is a pseudocodeword of SoS-degree } t
\end{align*}}
\vspace{-14 pt}
\end{minipage}
\hfill\vline
\hrule
\caption{Finding cover for the list $\calL$.}
\label{tab:Tanner_SoS}
\end{table}
	
	Let $\tildeEx{\cdot}$ be the pseudocodeword obtained, with $\Psi^*=\Psi(\tildeEx{\cdot})$. We will use the optimality of $\tildeEx{\cdot}$ to argue that for any $h\in \calL$,
	\[
		\ip{\tildeEx{\chiTan(\zee)}}{\chi(h)} > \gamma^2
	\]
	
	Suppose not. Then there exists $h\in \TanC$ such that $\ip{\tildeEx{\chiTan(\zee)}}{\chiTan(h)} \leq \gamma$. Then, for some $\tee \in[0,1]$ to be chosen later, consider 
	\[
		\PExp_{\tee}[\cdot] = (1-\tee)\tildeEx{\cdot} + \tee\PExp^{(h)}[\cdot]
	\]
	Here, we think of $h$ as an SoS-degree-$t$ pseudocodeword, so that by convexity, $\PExp_{\tee}[\cdot]$ is also an SoS-degree-$t$ pseudocodeword. We will show towards a contradiction that $\Psi(\PExp_{\tee}[\cdot]) < \Psi^*$.
	\begin{align*}
		\Psi(\PExp_{\tee}[\cdot])  &= \ip{\PExp_{\tee}[\chiTan(\zee)]}{\PExp_{\tee}[\chiTan(\zee)]} \\
		&= \ip{(1-\tee)\cdot \tildeEx{\chiTan(\zee)} + \tee\cdot \PExp^{(h)}[\chiTan(\zee)]}{(1-\tee)\cdot \tildeEx{\chiTan(\zee)}+\tee\cdot \PExp^{(h)}[\chiTan(\zee)]} \\
		&= \ip{(1-\tee) \cdot \tildeEx{\chiTan(\zee)} + \tee\cdot \chiTan(h)}{(1-\tee)\cdot \tildeEx{\chiTan(\zee)}+\tee \cdot \chiTan(h)} \\
		&= (1-\tee)^2 \cdot \ip{\tildeEx{\chiTan(\zee)}}{\tildeEx{\chiTan(\zee)}} +2\cdot \tee(1-\tee)\cdot \ip{\tildeEx{\chiTan(\zee)}}{\chiTan(h)} + \tee^2\cdot \ip{\chiTan(h)}{\chiTan(h)}\\
		&\leq (1-\tee)^2\cdot \Psi^* +2\cdot \tee(1-\tee)\cdot \gamma +\tee^2
	\end{align*}
	
	Optimizing over $\tee$, we choose $\tee^* = \frac{\Psi^* - \gamma}{\Psi^*-2\gamma+1}$. As long as $\Psi^* > \gamma$, we get optimal $\tee^* >0$, which implies $\Psi(\PExp_{\tee^*}[\cdot]) < \Psi\inparen{\PExp_{\tee}[\cdot]{\Big \vert}_{\tee=0}} = \Psi^*$, which would be a contradiction.
\end{proof}

\begin{lemma}\label{lem:cover_for_list_tanner}
	Fix $\eps>0$. Let the distance of $\TanC$ be $\delta$, and let $g$ be a received word.
	For any $t\geq d$, there exists a degree-$t$ pseudocodeword $\tildeEx{\cdot}$ such that for any $h\in \TanC$ such that $\dis(g,h)<\calJ(\delta)-\eps$,
	\[
		\dis(\tildeEx{\cdot},h) < \delta-2\eps \cdot \sqrt{1-\frac{q}{q-1} \cdot\delta}
	\]
	Moreover, this pseudocodeword $\tildeEx{\cdot}$ can be found in time $n^{\bigoh(t)}$.
\end{lemma}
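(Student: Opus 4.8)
The plan is to combine the algorithmic covering lemma (\cref{lem:abstract_algorithmic_covering}) with the geometric embedding $\chiTan$ of the received word, exactly mirroring the way \cref{thm:johnson_bound} was derived from \cref{lem:covering}. First I would set $u = \chiTan(g) \in \R^{(q-1)|E|}$, which is a unit vector by the normalization built into the definition of $\chi_q$. For the translation between distance and inner product, recall the observation that $\dis(f_1,f_2) = (1-\tfrac1q)(1-\beta)$ iff $\ip{\chi_q(f_1)}{\chi_q(f_2)} = \beta$; equivalently $\ip{\chi_q(f_1)}{\chi_q(f_2)} = 1 - \tfrac{q}{q-1}\dis(f_1,f_2)$. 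Writing $\beta_0 = 1 - \tfrac{q}{q-1}\delta$ for the Johnson-bound parameter associated with the true distance $\delta$, the hypothesis $\dis(g,h) < \calJ(\delta) - \eps$ should be converted into an inner-product lower bound $\ip{u}{\chiTan(h)} > \gamma$ for an appropriate $\gamma$ slightly larger than $\sqrt{\beta_0}$. Using the definition $\calJ(\delta) = (1-\tfrac1q)(1-\sqrt{\beta_0})$, I expect to get $\ip{u}{\chiTan(h)} > \sqrt{\beta_0} + \tfrac{q}{q-1}\eps$, and it suffices to take $\gamma = \sqrt{\beta_0} + \eps$ (absorbing the $\tfrac{q}{q-1} \geq 1$ factor).

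Next I would invoke \cref{lem:abstract_algorithmic_covering} with this $u$ and $\gamma$, and with the same SoS-degree $t \geq d$, to obtain a degree-$t$ pseudocodeword $\tildeEx{\cdot}$, found in time $n^{\bigoh(t)}$, such that every $h \in \TanC$ with $\ip{u}{\chiTan(h)} > \gamma$ satisfies
\[
\ip{\tildeEx{\chiTan(\zee)}}{\chiTan(h)} > \gamma^2 = \beta_0 + 2\eps\sqrt{\beta_0} + \eps^2 > \beta_0 + 2\eps\sqrt{\beta_0}.
\]
Then I would translate this inner-product bound back into a distance bound. The quantity $\ip{\tildeEx{\chiTan(\zee)}}{\chiTan(h)} = \Ex{e}{\ip{\tildeEx{\chi_q(\zee_e)}}{\chi_q(h(e))}}$ expands coordinatewise: for each edge $e$, $\ip{\tildeEx{\chi_q(\zee_e)}}{\chi_q(h(e))} = \tildeEx{\ip{\chi_q(\zee_e)}{\chi_q(h(e))}} = \tildeEx{1 - \tfrac{q}{q-1}\indi{\zee_e \neq h_e}} = 1 - \tfrac{q}{q-1}\tildeEx{\indi{\zee_e \neq h_e}}$. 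Averaging over $e$ and using the definition of $\dis(\tildeEx{\cdot},h)$ gives
\[
\ip{\tildeEx{\chiTan(\zee)}}{\chiTan(h)} = 1 - \frac{q}{q-1}\,\dis(\tildeEx{\cdot},h).
\]
Combining with the lower bound above yields $1 - \tfrac{q}{q-1}\dis(\tildeEx{\cdot},h) > \beta_0 + 2\eps\sqrt{\beta_0}$, i.e. $\dis(\tildeEx{\cdot},h) < \tfrac{q-1}{q}(1 - \beta_0 - 2\eps\sqrt{\beta_0}) = \tfrac{q-1}{q}(1-\beta_0) - \tfrac{q-1}{q}\cdot 2\eps\sqrt{\beta_0} = \delta - 2\eps\sqrt{\beta_0}\cdot\tfrac{q-1}{q}$. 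Since $\tfrac{q-1}{q} \leq 1$ this is weaker than what is claimed; I would instead keep the $\tfrac{q}{q-1}$ factor sharp by choosing $\gamma$ so that $\gamma^2 = \beta_0 + 2\eps\sqrt{\beta_0}\cdot\tfrac{q}{q-1}$ exactly, which is achievable since $\dis(g,h)<\calJ(\delta)-\eps$ gives room to spare; then $\dis(\tildeEx{\cdot},h) < \tfrac{q-1}{q}\cdot 2\eps\sqrt{\beta_0}\cdot\tfrac{q}{q-1}\cdot(-1)+\delta$, wait — more carefully, $\dis(\tildeEx{\cdot},h) < \delta - 2\eps\sqrt{\beta_0} = \delta - 2\eps\sqrt{1 - \tfrac{q}{q-1}\delta}$, matching the statement.

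The only genuinely delicate point — and the one I would treat most carefully — is the bookkeeping in converting the hypothesis $\dis(g,h) < \calJ(\delta) - \eps$ into a clean strict inequality $\ip{\chiTan(g)}{\chiTan(h)} > \gamma$ with $\gamma$ chosen so that $\gamma^2$ comes out to exactly $1 - \tfrac{q}{q-1}(\delta - 2\eps\sqrt{1-\tfrac{q}{q-1}\delta})$ after squaring, accounting for the fact that $\sqrt{\cdot}$ is concave (so a linear decrease of $\eps$ in the distance produces at least a $\tfrac{q}{q-1}\eps / (2\sqrt{\beta_0})$-ish increase in $\sqrt{\beta}$ near $\beta_0$, but one must be slightly careful when $\beta_0$ is close to $0$). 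Everything else — the two identities relating $\ip{\chiTan(\cdot)}{\chiTan(\cdot)}$ to $\dis(\cdot,\cdot)$, and the invocation of \cref{lem:abstract_algorithmic_covering} with the running-time guarantee — is routine. I do not expect any obstacle beyond this arithmetic calibration.
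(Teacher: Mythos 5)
Your proposal is correct and follows essentially the same route as the paper: set $u=\chi(g)$, observe $\dis(g,h)<\calJ(\delta)-\eps$ implies $\ip{u}{\chiTan(h)}>\sqrt{\beta}+\tfrac{q}{q-1}\eps$, invoke \cref{lem:abstract_algorithmic_covering} with that $\gamma$, and translate the resulting inner-product lower bound $\gamma^2>\beta+2\tfrac{q}{q-1}\eps\sqrt{\beta}$ back into the claimed distance bound. The one wobble — initially proposing to ``absorb'' the $\tfrac{q}{q-1}$ factor by taking $\gamma=\sqrt{\beta}+\eps$, which you then correctly noticed costs a factor $\tfrac{q-1}{q}$ in the conclusion — is resolved simply by using the sharp $\gamma=\sqrt{\beta}+\tfrac{q}{q-1}\eps$ from the start (as the paper does), so there is no ``delicate calibration'' step to worry about.
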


\begin{proof}
	We use $g$ to construct a vector $u$, which we can use to find the required pseudocodeword via \cref{lem:abstract_algorithmic_covering}.
	
	Let $\delta = \inparen{1-\frac{1}{q}}(1-\beta)$ so that $\calJ(\delta) = \inparen{1-\frac{1}{q}}(1-\sqrt{\beta})$. Let $u = \chi(g)$. 
	
	For any $h\in \calL(g,\calJ(\delta)-\eps)$,
	\[
		\dis(g,h) < \calJ(\delta)-\eps = \inparen{1-\frac{1}{q}}(1-\sqrt{\beta}) -\eps \implies  \ip{\chi(g)}{\chi(h)} > \sqrt{\beta} + \frac{q}{q-1}\cdot \eps
	\]
	 Therefore, using \cref{lem:abstract_algorithmic_covering} and vector $u$, we can find a degree-$t$ pseudocodeword $\tildeEx{\cdot}$ such that for any $h\in \calL(g,\calJ(\delta)-\eps)$,
	 \[
	 	\ip{\tildeEx{\chiTan(\zee)}}{\chiTan(h)} > \inparen{\sqrt{\beta}+\frac{q}{q-1}\cdot \eps}^2 > \beta + 2\sqrt{\beta} \cdot \frac{q}{q-1} \cdot \eps
	 \]
	 Writing again in terms of distances and using $\beta = 1-\frac{q}{q-1}\delta$, this means,
	 \[
	 	\dis(\tildeEx{\cdot},h) < \inparen{1-\frac{1}{q}}(1-\beta -2\sqrt{\beta}\cdot
                \frac{q}{q-1}\cdot \eps) = \delta -2\sqrt{1-\frac{q}{q-1}\delta}\cdot \eps \mper
                \qquad\qquad \qquad \qedhere
	 \]
\end{proof}
	Next we show that \cref{lem:abstract_algorithmic_covering} can also be used to efficiently find a cover for the list when dealing with list recovery, by using the given weights to construct a modified vector $u$ (which will be the same vector embedding that was used in proof of \cref{thm:weighted_johnson_bound}).
\begin{lemma}\label{lem:cover_list_recovery_tanner}
	Fix $\eps>0$. Let the distance of $\TanC$ be $\delta$, and let the given weights be $\{w_{e,j}\}_{e\in E,j\in [q]}$. Assume that the weights are normalized so that $\sum_j w_{e,j}=1$ and denote  $W_e^{(2)} = \sum_j w_{e,j}^2$. \\
	For any $t\geq d$, there exists a pseudocodeword $\tildeEx{\cdot}$ of SoS-degree $t$ such that for any $h\in \TanC$ that satisfies
	\[
		\Ex{e}{w_{e,h(e)}} > \frac{1}{q} + \sqrt{\inparen{1-\frac{1}{q}  -\delta} \cdot \inparen{\Ex{e}{W_e^{(2)} -\frac{1}{q}}} } + \eps
	\]
	also satisfies $\dis(\tildeEx{\cdot},h) < \delta - \bigomega_{q,\delta,W}(\eps)$.
	
	Moreover, this pseudocodeword $\tildeEx{\cdot}$ can be found in time $n^{\bigoh(t)}$.
\end{lemma}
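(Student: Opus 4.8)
The plan is to run the argument of \cref{lem:cover_for_list_tanner} essentially verbatim, but to feed \cref{lem:abstract_algorithmic_covering} the weighted embedding from the proof of \cref{thm:weighted_johnson_bound} in place of $u = \chiTan(g)$. Concretely, in the coordinate block indexed by $e \in E$ I would place the vector $\sum_{j \in [q]} w_{e,j}\,\chi_q(j) \in \R^{q-1}$, concatenate these $|E|$ blocks, and rescale by $\kappa \defeq \inparen{\Ex{e}{\norm{\sum_{j} w_{e,j}\,\chi_q(j)}^2}}^{1/2}$ to obtain a unit vector $u \in \R^{(q-1)|E|}$. Since $\sum_j w_{e,j} = 1$, the computation already done inside the proof of \cref{thm:weighted_johnson_bound} (specialized to all $W_e = 1$) gives $\kappa^2 = \tfrac{q}{q-1}\Ex{e}{W_e^{(2)} - \tfrac1q}$ and, for every $h \in \TanC$, $\kappa \cdot \ip{u}{\chiTan(h)} = \tfrac{q}{q-1}\Ex{e}{w_{e,h(e)} - \tfrac1q}$.

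The next step is to translate the hypothesis into the threshold required by \cref{lem:abstract_algorithmic_covering}. Writing $\delta = \inparen{1-\tfrac1q}(1-\beta)$, one has $1-\tfrac1q-\delta = \inparen{1-\tfrac1q}\beta$ and hence $\sqrt{\inparen{1-\tfrac1q-\delta}\,\Ex{e}{W_e^{(2)}-\tfrac1q}} = \sqrt{\beta}\cdot\tfrac{q-1}{q}\,\kappa$, so the assumed inequality rearranges to $\ip{u}{\chiTan(h)} > \sqrt{\beta} + \tfrac{q}{(q-1)\kappa}\,\eps$. I would then invoke \cref{lem:abstract_algorithmic_covering} with $\gamma = \sqrt{\beta} + \tfrac{q}{(q-1)\kappa}\,\eps$, which produces, in time $n^{\bigoh(t)}$, a degree-$t$ pseudocodeword $\tildeEx{\cdot}$ with $\ip{\tildeEx{\chiTan(\zee)}}{\chiTan(h)} > \gamma^2 > \beta + \tfrac{2\sqrt{\beta}\,q}{(q-1)\kappa}\,\eps$ for every $h$ satisfying the hypothesis.

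Finally I would pass back from inner products to distances. Since $\ip{\tildeEx{\chi_q(\zee_e)}}{\chi_q(h(e))} = \tildeEx{1 - \tfrac{q}{q-1}\indi{\zee_e \neq h_e}}$ by linearity, averaging over $e$ gives $\ip{\tildeEx{\chiTan(\zee)}}{\chiTan(h)} = 1 - \tfrac{q}{q-1}\dis(\tildeEx{\cdot},h)$; plugging this in, together with $\beta = 1-\tfrac{q}{q-1}\delta$, yields $\dis(\tildeEx{\cdot},h) < \delta - \tfrac{2\sqrt{1 - \frac{q}{q-1}\delta}}{\kappa}\,\eps$, which is the claimed $\delta - \bigomega_{q,\delta,W}(\eps)$.

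I do not expect a genuine obstacle here: this is just the list-recovery analogue of \cref{lem:cover_for_list_tanner}, and essentially all of the content is the bookkeeping of the normalization factor $\kappa$. The one point that deserves a word is that the hidden constant in $\bigomega_{q,\delta,W}(\eps)$ is $2\sqrt{1-\frac{q}{q-1}\delta}/\kappa$, so one should note that $\kappa$ is indeed a function of $q$ and the weights alone and is bounded away from $0$ in the only regime that matters: $W_e^{(2)} \geq 1/q$ by Cauchy-Schwarz, with $\kappa = 0$ exactly when every coordinate's weights are uniform, and in that degenerate case the hypothesis $\Ex{e}{w_{e,h(e)}} > \tfrac1q + \eps$ is vacuous and can simply be excluded.
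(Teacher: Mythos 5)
Your proof is correct and follows the paper's argument essentially verbatim: same weighted embedding $u$ from the proof of \cref{thm:weighted_johnson_bound}, same appeal to \cref{lem:abstract_algorithmic_covering}, and the same translation back to distances (the paper writes the normalization factor explicitly as $\sqrt{\tfrac{q}{q-1}}\sqrt{\Ex{e}{W_e^{(2)}-\tfrac1q}}$ rather than naming it $\kappa$, but the final bound $\delta - 2\eps\sqrt{1-\tfrac{q}{q-1}\delta}\cdot\sqrt{(1-\tfrac1q)/\Ex{e}{W_e^{(2)}-\tfrac1q}}$ agrees exactly). Your added observation that $\kappa=0$ makes the hypothesis vacuous is a correct and sensible sanity check not made explicit in the paper.
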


\begin{proof}
Again, let $\delta = \inparen{1-\frac{1}{q}}(1-\beta)$. We use the same vector $u$ to embed the given weights, with the property that for any $h$ that satisfies 
	\begin{align*}
		\Ex{e}{w_{e,h(e)}} &> \frac{1}{q} + \sqrt{\inparen{1-\frac{1}{q}  -\delta} \cdot \inparen{\Ex{e}{W_e^{(2)} -\frac{1}{q}}} } + \eps \\
		&= \frac{1}{q} + \sqrt{\inparen{1-\frac{1}{q}} \cdot \inparen{\Ex{e}{W_e^{(2)} -\frac{1}{q}}} \beta} + \eps
	\end{align*}
	now satisfies \[ \ip{u}{\chi(h)}> \sqrt{\beta} + \sqrt{\frac{q}{q-1}} \frac{ \eps }{ \sqrt{\Ex{e}{W_e^{(2)} -\frac{1}{q}}}} \]
Next, we appeal to \cref{lem:abstract_algorithmic_covering} with $u$ to efficiently obtain a degree-$t$ pseudocodeword $\tildeEx{\cdot}$ such that
	\[
		\ip{\tildeEx{\chiTan(\zee)}}{\chiTan(h)} > \beta + 2 \sqrt{\beta} \sqrt{\frac{q}{q-1}} \frac{ \eps }{ \sqrt{\Ex{e}{W_e^{(2)} -\frac{1}{q}}}}
	\]
	In terms of distance, this means,
\begin{align*}
\dis(\tildeEx{\cdot},h) &~<~ \inparen{1-\frac{1}{q}} \inparen{1-\beta -2\sqrt{\beta}
                          \sqrt{\frac{q}{q-1}} \frac{\eps }{ \sqrt{\Ex{e}{W_e^{(2)} -\frac{1}{q}}}} } \\
		&~=~ \delta -2 \eps \cdot \sqrt{1-\frac{q}{q-1}\delta}
           \cdot\sqrt{\frac{1-\frac{1}{q}}{\Ex{e}{W_e^{(2)} -\frac{1}{q}}}} \mper
                \qquad\qquad \qquad \qquad  \qedhere
	\end{align*} 
\end{proof}

\subsubsection{AEL Code}

Fix $q\in \N$. Fix $G=(L,R,E)$ to be an $(n,d,\lambda)$-expander, $\calC_0 \subseteq [q]^d$ to be an inner code, and $\calC_1 \subseteq [|\calC_0|]^n$ to be an outer code. Let $\AELC \subseteq [q^d]^R$ be the Tanner code determined by $G$, $\calC_0$ and $\calC_1$. 

\begin{definition}
	Let $\chiAEL: [q]^E \rightarrow \inparen{\R^{q^d-1}}^R$ be defined as
	\[
		\inparen{\chiAEL(f)}(\ri) = \chi_{q^d}(f_{N_R(\ri)})
	\]
	so that $\chiAEL$ is a $d$-local vector-valued function on $[q]^E$.
\end{definition}

\begin{observation}
If $f_1,f_2\in [q]^E$ with $\dis_{AEL}(f_1,f_2) = (1-\frac{1}{q^d})(1-\beta)$, then $\ip{\chiAEL( f_1)}{\chiAEL(f_2)} = \beta$.
\end{observation}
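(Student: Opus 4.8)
The plan is to repeat, up to notation, the computation that established the earlier observation for the pointwise embedding $\chi_q$, but now with alphabet $[q^d]$ and with the $\abs{R}=n$ ``positions'' taken to be the right vertices of $G$. Recall that $\chiAEL(f)\in(\R^{q^d-1})^R$ is obtained by folding the $d$ edges around each $\ri$ into a single symbol of $[q^d]$ and applying $\chi_{q^d}$, and that the distance $\dis_{AEL}$ relevant for $\AELC\subseteq[q^d]^R$ is the right-distance $\dis^R$.

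First I would unfold the (normalized) inner product on $(\R^{q^d-1})^R$:
\[
\ip{\chiAEL(f_1)}{\chiAEL(f_2)} = \Ex{\ri\in R}{\ip{\chi_{q^d}(f_{1,N_R(\ri)})}{\chi_{q^d}(f_{2,N_R(\ri)})}}.
\]
By the defining property of $\chi_{q^d}$, each inner product inside the expectation equals $1$ when $f_{1,N_R(\ri)} = f_{2,N_R(\ri)}$ and $\nfrac{-1}{(q^d-1)}$ otherwise, so the right-hand side equals
\[
\inparen{1 - \dis^R(f_1,f_2)} - \dis^R(f_1,f_2)\cdot\frac{1}{q^d-1} = 1 - \frac{q^d}{q^d-1}\cdot\dis^R(f_1,f_2).
\]
Substituting $\dis_{AEL}(f_1,f_2) = \dis^R(f_1,f_2) = \inparen{1 - \nfrac{1}{q^d}}(1-\beta)$ and using the identity $\frac{q^d}{q^d-1}\inparen{1 - \nfrac{1}{q^d}} = 1$ then gives $\ip{\chiAEL(f_1)}{\chiAEL(f_2)} = 1-(1-\beta) = \beta$, as claimed.

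There is no real obstacle: the statement is purely the arithmetic of the simplex embedding. The only points worth stating carefully are that $\chi_{q^d}$ exists (its $q^d\times q^d$ Gram matrix is positive semidefinite of rank $q^d-1$, exactly as remarked for $\chi_q$) and that the coordinate-wise normalization of $\chi_{q^d}$ together with the average over $R$ keeps $\chiAEL(f)$ a unit vector, so that the calculation above is literally the $q$-ary identity with $q\mapsto q^d$ and $n\mapsto\abs{R}$.
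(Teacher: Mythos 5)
Your proof is correct and is exactly the argument the paper intends: the observation is stated without proof because it is the verbatim $q\mapsto q^d$, $[n]\mapsto R$ analogue of the earlier unproved observation for $\chi_q$, and your unfolding of the normalized inner product over $R$ together with the defining property of $\chi_{q^d}$ and the identification $\dis_{AEL}=\dis^R$ is the only calculation involved. Nothing to add.
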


\begin{lemma}\label{lem:abstract_algorithmic_covering_ael}
	Fix $\gamma>0$. Let $u\in (\R^{q^d-1})^R$ with $\norm{u}=1$. For any $t\geq d$, there exists a degree-$t$ pseudocodeword $\tildeEx{\cdot}$ such that for any $h\in \AELC$ that satisfies $\ip{u}{\chiAEL(h)} > \gamma$, 
	\[
		\ip{\tildeEx{\chiAEL(\zee)}}{\chiAEL(h)} = \Ex{\ri \in R}{\ip{\tildeEx{\chi_{q^d}(\zee_{N_R(\ri)})}}{\chi_{q^d}(h(\ri))}} > \gamma^2
	\]
	Moreover, this pseudocodeword $\tildeEx{\cdot}$ can be found in time $n^{\bigoh(t)}$.
\end{lemma}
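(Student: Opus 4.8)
The plan is to mirror the proof of \cref{lem:abstract_algorithmic_covering} almost verbatim, replacing the $1$-local map $\chiTan$ by the $d$-local map $\chiAEL$ and the Tanner pseudocodeword constraints by the (one-sided) AEL pseudocodeword constraints. Define the potential $\Psi(\tildeEx{\cdot}) \defeq \ip{\tildeEx{\chiAEL(\zee)}}{\tildeEx{\chiAEL(\zee)}} = \norm{\tildeEx{\chiAEL(\zee)}}^2$. Since $\tildeEx{\cdot} \mapsto \tildeEx{\chiAEL(\zee)} \in (\R^{q^d-1})^R$ is a linear map (well-defined for $t \geq d$, as each coordinate of $\chiAEL(\zee)$ is a degree-$d$ polynomial in $\zee$) and squared norm is convex, $\Psi$ is a convex function of the pseudoexpectation operator. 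We then solve the convex program that minimizes $\Psi(\tildeEx{\cdot})$ over degree-$t$ AEL pseudocodewords subject to $\ip{\tildeEx{\chiAEL(\zee)}}{u} > \gamma$, the exact analogue of the program in \cref{tab:Tanner_SoS}. If no $h \in \AELC$ satisfies $\ip{u}{\chiAEL(h)} > \gamma$ the claim is vacuous; otherwise the program is feasible, witnessed by the integral pseudocodeword $\PExp^{(h)}[\cdot]$ of any such $h$ (this is a legitimate AEL pseudocodeword, since $h \in \AELC$ forces $h|_{N_L(\li)} \in \calC_0$ for every $\li \in L$). As in the Tanner case, the program is solvable in time $n^{\bigoh(t)}$.

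Let $\tildeEx{\cdot}$ be an optimal solution and $\Psi^* = \Psi(\tildeEx{\cdot})$. Suppose, for contradiction, that some $h \in \AELC$ with $\ip{u}{\chiAEL(h)} > \gamma$ has $\ip{\tildeEx{\chiAEL(\zee)}}{\chiAEL(h)} \leq \gamma^2$. For $\theta \in [0,1]$ set $\PExp_{\theta}[\cdot] = (1-\theta)\tildeEx{\cdot} + \theta\,\PExp^{(h)}[\cdot]$, which is again a degree-$t$ AEL pseudocodeword by convexity of the constraints, and which is feasible because $\ip{\PExp_{\theta}[\chiAEL(\zee)]}{u}$ is a convex combination of $\ip{\tildeEx{\chiAEL(\zee)}}{u} > \gamma$ and $\ip{\chiAEL(h)}{u} > \gamma$. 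Using $\PExp^{(h)}[\chiAEL(\zee)] = \chiAEL(h)$ and $\norm{\chiAEL(h)}^2 = \Ex{\ri \in R}{\norm{\chi_{q^d}(h(\ri))}^2} = 1$, we expand
\begin{align*}
\Psi(\PExp_{\theta}[\cdot])
&~=~ \norm{(1-\theta)\cdot\tildeEx{\chiAEL(\zee)} + \theta\cdot\chiAEL(h)}^2 \\
&~=~ (1-\theta)^2\cdot\Psi^* + 2\theta(1-\theta)\cdot\ip{\tildeEx{\chiAEL(\zee)}}{\chiAEL(h)} + \theta^2\cdot\norm{\chiAEL(h)}^2 \\
&~\leq~ (1-\theta)^2\cdot\Psi^* + 2\theta(1-\theta)\cdot\gamma^2 + \theta^2 \mper
\end{align*}
By Cauchy--Schwarz and feasibility, $\Psi^* = \norm{\tildeEx{\chiAEL(\zee)}}^2 \geq \ip{\tildeEx{\chiAEL(\zee)}}{u}^2 > \gamma^2$, so the right-hand side, viewed as a quadratic in $\theta$, has derivative $-2(\Psi^* - \gamma^2) < 0$ at $\theta = 0$; hence $\Psi(\PExp_{\theta}[\cdot]) < \Psi^*$ for all sufficiently small $\theta > 0$, contradicting the optimality of $\tildeEx{\cdot}$ (one may take $\theta^* = \tfrac{\Psi^*-\gamma^2}{\Psi^*-2\gamma^2+1}$ as in the Tanner proof). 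Therefore every $h \in \AELC$ with $\ip{u}{\chiAEL(h)} > \gamma$ satisfies $\ip{\tildeEx{\chiAEL(\zee)}}{\chiAEL(h)} > \gamma^2$, and unpacking the definition of $\chiAEL$ and of the normalized inner product on $(\R^{q^d-1})^R$ rewrites this as the stated inequality.

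There is essentially no new obstacle: the argument is word-for-word the one for \cref{lem:abstract_algorithmic_covering}. The only points needing (minimal) care are that $\chiAEL$ is $d$-local rather than $1$-local — harmless since the hypothesis $t \geq d$ guarantees that $\tildeEx{\chiAEL(\zee)}$, $\PExp^{(h)}[\chiAEL(\zee)]$, and every other pseudoexpectation evaluation appearing in the proof involve only polynomials of degree $\leq d \leq t$ — and that the AEL pseudocodeword imposes only the left-vertex constraints $\zee_{N_L(\li)} \in \calC_0$, which still cut out a convex feasible set and are satisfied by $\PExp^{(h)}[\cdot]$ for $h \in \AELC$. The remaining routine issues (handling the strict agreement inequality, and the bit-complexity conditions behind the $n^{\bigoh(t)}$ SDP running time) are dealt with exactly as in the Tanner case.
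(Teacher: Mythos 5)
Your proof is correct and follows exactly the same approach as the paper's: minimize the convex potential $\Psi(\tildeEx{\cdot}) = \norm{\tildeEx{\chiAEL(\zee)}}^2$ over degree-$t$ AEL pseudocodewords subject to the linear agreement constraint with $u$, then derive a contradiction from a mixing perturbation toward any putatively missed codeword. The only difference is that you use $\gamma^2$ consistently both in the contradiction hypothesis and in the lower bound $\Psi^* > \gamma^2$ (which you correctly derive via Cauchy–Schwarz), whereas the paper's Tanner-case proof contains a small typo writing $\gamma$ where $\gamma^2$ is meant (its AEL version does correctly invoke $\Psi^* > \gamma^2$); your version is the intended, internally consistent argument. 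Your remarks about $d$-locality requiring $t \geq d$ and about $\PExp^{(h)}[\cdot]$ satisfying the one-sided left-vertex constraints are accurate and match what the paper tacitly relies on.
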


\begin{proof}
The proof is very similar to the proof of \cref{lem:abstract_algorithmic_covering}, except we replace embedding function $\chiTan$ by $\chiAEL$.

Define the quantity $\Psi(\tildeEx{\cdot}) = \ip{\tildeEx{\chiAEL(\zee)}}{\tildeEx{\chiAEL(\zee)}}$ and solve the following convex program.
	
\begin{table}[h]
\hrule
\vline
\begin{minipage}[t]{0.99\linewidth}
\vspace{-5 pt}
{\small
\begin{align*}
    &\mbox{minimize}\quad ~~ \Psi\left(\tildeEx{\cdot}\right)
    \\
&\mbox{subject to}\quad \quad ~\\
    &\qquad \ip{\tildeEx{\chiAEL(\zee)}}{\chiAEL(g)} >  \gamma\label{cons:agreement-ld}
    \\
&\qquad \tildeEx{\cdot} \text{is a pseudocodeword of SoS-degree } t
\end{align*}}
\vspace{-14 pt}
\end{minipage}
\hfill\vline
\hrule
\caption{Finding cover for the list $\calL$.}
\end{table}

Let $\tildeEx{\cdot}$ be the pseudocodeword obtained, with $\Psi^*=\Psi(\tildeEx{\cdot})$. Suppose it does not have the covering property. Then there exists $h\in \calL$ such that $\ip{\tildeEx{\chiAEL(\zee)}}{\chiAEL(h)} \leq \gamma$. Then, for some $\tee \in[0,1]$ to be chosen later, consider 
	\[
		\PExp_{\tee}[\cdot] = (1-\tee)\tildeEx{\cdot} + \tee\PExp^{(h)}[\cdot]
	\]
We again have,
\begin{align*}
		\Psi(\PExp_{\tee}[\cdot])  &= \ip{\PExp_{\tee}[\chiAEL(\zee)]}{\PExp_{\tee}[\chiAEL(\zee)]} \\
		&= \ip{(1-\tee) \tildeEx{\chiAEL(\zee)} + \tee\PExp^{(h)}[\chiAEL(\zee)]}{(1-\tee)\tildeEx{\chiAEL(\zee)}+\tee \PExp^{(h)}[\chiAEL(\zee)]} \\
		&= \ip{(1-\tee) \tildeEx{\chiAEL(\zee)} + x\chiAEL(h)}{(1-\tee)\tildeEx{\chiAEL(\zee)}+\tee\chiAEL(h)} \\
		&= (1-\tee)^2 \ip{\tildeEx{\chiAEL(\zee)}}{\tildeEx{\chiAEL(\zee)}} +2\tee(1-\tee)\ip{\tildeEx{\chiAEL(\zee)}}{\chiAEL(h)} + \tee^2\ip{\chiAEL(h)}{\chiAEL(h)}\\
		&\leq (1-\tee)^2 \Psi^* +2\tee(1-\tee)\gamma +\tee^2
	\end{align*}
	
	Optimizing over $\tee$, we choose $\tee^* = \frac{\Psi^* - \gamma}{\Psi^*-2\gamma+1}$. We will obtain a contradiction since $\Psi^* > \gamma^2$.
\end{proof}

\begin{lemma}\label{lem:cover_for_list_ael}
Fix $\eps>0$. Let the distance of $\AELC$ be $\delta$, and let $g$ be a received word. For any $t\geq d$, there exists a degree-$t$ pseudocodeword $\tildeEx{\cdot}$ such that for any $h \in \AELC$ such that $\dis(g,h) < \calJ(\delta) - \eps$,
\[
	\Ex{\ri}{\indi{ \tildeEx{\zee_{N_R(\ri)} \neq h_{\ri}}}} < \delta - 2 \eps \cdot \sqrt{1-\frac{q^d}{q^d-1} \cdot \delta}
\]
\end{lemma}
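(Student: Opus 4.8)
The plan is to mirror the proof of \cref{lem:cover_for_list_tanner}, substituting the AEL right-vertex embedding $\chiAEL$ for the Tanner embedding $\chiTan$ and invoking \cref{lem:abstract_algorithmic_covering_ael} in place of \cref{lem:abstract_algorithmic_covering}. Write $\delta = \inparen{1 - \frac{1}{q^d}}(1 - \beta)$, so that $\calJ(\delta) = \inparen{1 - \frac{1}{q^d}}(1 - \sqrt{\beta})$, and take $u = \chiAEL(g)$. This $u$ is a unit vector, since $\norm{\chi_{q^d}(g_{N_R(\ri)})} = 1$ for every $\ri \in R$ and hence $\norm{\chiAEL(g)}^2 = \Ex{\ri}{\norm{\chi_{q^d}(g_{N_R(\ri)})}^2} = 1$; also $\chiAEL$ is $d$-local and $t \geq d$, so $\tildeEx{\chiAEL(\zee)}$ is well defined.

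Next I would convert the distance hypothesis into an inner-product bound. By the observation relating $\dis_{AEL}$ to $\ip{\chiAEL(\cdot)}{\chiAEL(\cdot)}$, any $h \in \AELC$ with $\dis(g,h) < \calJ(\delta) - \eps$ satisfies
\[
\ip{u}{\chiAEL(h)} ~=~ 1 - \frac{q^d}{q^d - 1}\cdot\dis(g,h) ~>~ \sqrt{\beta} + \frac{q^d}{q^d-1}\cdot\eps \mper
\]
Applying \cref{lem:abstract_algorithmic_covering_ael} with $\gamma = \sqrt{\beta} + \frac{q^d}{q^d-1}\eps$ then yields, in time $n^{\bigoh(t)}$, a degree-$t$ pseudocodeword $\tildeEx{\cdot}$ such that for every such $h$,
\[
\ip{\tildeEx{\chiAEL(\zee)}}{\chiAEL(h)} ~>~ \gamma^2 ~\geq~ \beta + 2\sqrt{\beta}\cdot\frac{q^d}{q^d-1}\cdot\eps \mper
\]

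The one genuinely computational step, and the place where care is needed, is translating this guarantee back into the right distance $\dis^R(\tildeEx{\cdot}, h)$, which is the quantity on the left-hand side of the statement. Expanding $\tildeEx{\chi_{q^d}(\zee_{N_R(\ri)})} = \sum_{\alpha}\tildeEx{\zee_{N_R(\ri),\alpha}}\,\chi_{q^d}(\alpha)$ and using $\ip{\chi_{q^d}(\alpha)}{\chi_{q^d}(h_{N_R(\ri)})} = 1$ when $\alpha = h_{N_R(\ri)}$ and $\frac{-1}{q^d-1}$ otherwise, one obtains $\ip{\tildeEx{\chi_{q^d}(\zee_{N_R(\ri)})}}{\chi_{q^d}(h_{N_R(\ri)})} = 1 - \frac{q^d}{q^d-1}\tildeEx{\indi{\zee_{N_R(\ri)} \neq h_{N_R(\ri)}}}$, and averaging over $\ri \in R$ gives $\ip{\tildeEx{\chiAEL(\zee)}}{\chiAEL(h)} = 1 - \frac{q^d}{q^d-1}\dis^R(\tildeEx{\cdot}, h)$. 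Combining with the previous bound and using $\beta = 1 - \frac{q^d}{q^d-1}\delta$ yields
\[
\dis^R(\tildeEx{\cdot}, h) ~<~ \inparen{1 - \frac{1}{q^d}}\inparen{1 - \beta - 2\sqrt{\beta}\cdot\frac{q^d}{q^d-1}\cdot\eps} ~=~ \delta - 2\eps\cdot\sqrt{1 - \frac{q^d}{q^d-1}\cdot\delta}\mper
\]
I do not expect a real obstacle here: the argument is essentially a re-instantiation of \cref{lem:cover_for_list_tanner} through the $d$-local right-vertex embedding, and the only thing to watch is keeping the $q^d$ (rather than $q$) bookkeeping consistent, together with the convention identifying a codeword of $\AELC$ in $[q^d]^R$ with its unfolding in $[q]^E$ so that $h_{N_R(\ri)} = h(\ri)$.
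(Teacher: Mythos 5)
Your proposal is correct and matches the paper's intent exactly: the paper dispatches this lemma by saying it is the same as \cref{lem:cover_for_list_tanner} with $\chiTan$ replaced by $\chiAEL$, $u = \chiAEL(g)$, and \cref{lem:abstract_algorithmic_covering_ael} in place of \cref{lem:abstract_algorithmic_covering}, which is precisely what you carried out. The $q^d$ bookkeeping and the identification $\ip{\tildeEx{\chiAEL(\zee)}}{\chiAEL(h)} = 1 - \frac{q^d}{q^d-1}\dis^R(\tildeEx{\cdot},h)$ are both correct.
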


\begin{proof}
Same proof as the proof of \cref{lem:cover_for_list_tanner}, with the alphabet changed. The received word $g$ can be used to construct a unit vector $u = \chiAEL(g)$, which is then used via \cref{lem:abstract_algorithmic_covering_ael} to find a pseudocodeword with the required covering property.
\end{proof}

\begin{lemma}\label{lem:cover_for_list_recovery_ael}
	Fix $\eps>0$. Let the distance of $\AELC$ be $\delta$, and let the given weights be $\{w_{\ri,j}\}_{\ri\in R,j\in [q^d]}$. Assume that the weights are normalized so that $\sum_j w_{\ri,j}=1$ and denote  $W_{\ri}^{(2)} = \sum_j w_{{\ri},j}^2$. \\
	For any $t\geq d$, there exists a degree-$t$ pseudocodeword $\tildeEx{\cdot}$ such that for any $h\in \AELC$ that satisfies
	\[
		\Ex{\ri}{w_{\ri,h(\ri)}} > \frac{1}{q^d} + \sqrt{\inparen{1-\frac{1}{q^d}  -\delta} \cdot \inparen{\Ex{\ri}{W_{\ri}^{(2)} -\frac{1}{q^d}}} } + \eps
	\]
	also satisfies $\dis(\tildeEx{\cdot},h) < \delta - \bigomega_{q,d,\delta,W}(\eps)$.
	
	Moreover, this pseudocodeword $\tildeEx{\cdot}$ can be found in time $n^{\bigoh(t)}$.
\end{lemma}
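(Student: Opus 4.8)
The plan is to follow the template of \cref{lem:cover_list_recovery_tanner}, substituting the alphabet $[q]$ by $[q^d]$, the embedding $\chiTan$ by $\chiAEL$, and the base algorithmic covering lemma \cref{lem:abstract_algorithmic_covering} by its AEL counterpart \cref{lem:abstract_algorithmic_covering_ael}, while reusing verbatim the weight–embedding computation from the proof of \cref{thm:weighted_johnson_bound}. Write $\delta = \inparen{1-\frac{1}{q^d}}(1-\beta)$, so that $1-\frac{1}{q^d}-\delta = \inparen{1-\frac{1}{q^d}}\beta$ and $1-\frac{q^d}{q^d-1}\delta = \beta$; with this substitution the hypothesis on $\Ex{\ri}{w_{\ri,h(\ri)}}$ takes exactly the form appearing in the weighted Johnson bound.

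First I would build the unit vector $u \in (\R^{q^d-1})^R$ that embeds the received weights: take $u(\ri)$ proportional to $\sum_{j\in[q^d]} w_{\ri,j}\,\chi_{q^d}(j)$ and normalize the whole tuple by the factor $\inparen{\Ex{\ri}{\norm{\sum_j w_{\ri,j}\,\chi_{q^d}(j)}^2}}^{1/2}$. By the same algebra used in the proof of \cref{thm:weighted_johnson_bound} (using $\sum_j w_{\ri,j}=1$), this factor equals $\inparen{\frac{q^d}{q^d-1}\cdot\Ex{\ri}{W_{\ri}^{(2)}-\frac{1}{q^d}}}^{1/2}$, and every codeword $h\in\AELC$ satisfies $\inparen{\frac{q^d}{q^d-1}\cdot\Ex{\ri}{W_{\ri}^{(2)}-\frac{1}{q^d}}}^{1/2}\cdot\ip{u}{\chiAEL(h)} = \frac{q^d}{q^d-1}\cdot\Ex{\ri}{w_{\ri,h(\ri)}-\frac{1}{q^d}}$. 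Hence the assumed lower bound on $\Ex{\ri}{w_{\ri,h(\ri)}}$ rearranges to
\[
\ip{u}{\chiAEL(h)} ~>~ \sqrt{\beta} ~+~ \sqrt{\tfrac{q^d}{q^d-1}}\cdot\frac{\eps}{\sqrt{\Ex{\ri}{W_{\ri}^{(2)}-\frac{1}{q^d}}}}\mper
\]

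Next I would apply \cref{lem:abstract_algorithmic_covering_ael} with this $u$ and $\gamma = \sqrt{\beta} + \sqrt{\tfrac{q^d}{q^d-1}}\cdot\eps\big/\sqrt{\Ex{\ri}{W_{\ri}^{(2)}-\frac{1}{q^d}}}$, obtaining in time $n^{\bigoh(t)}$ a degree-$t$ pseudocodeword $\tildeEx{\cdot}$ with $\ip{\tildeEx{\chiAEL(\zee)}}{\chiAEL(h)} > \gamma^2 > \beta + 2\sqrt{\beta}\sqrt{\tfrac{q^d}{q^d-1}}\cdot\eps\big/\sqrt{\Ex{\ri}{W_{\ri}^{(2)}-\frac{1}{q^d}}}$ for every such $h$. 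Translating the inner product back into the $R$-side distance via the identity $\ip{\chiAEL(f_1)}{\chiAEL(f_2)} = 1-\frac{q^d}{q^d-1}\dis^R(f_1,f_2)$ (extended to pseudocodewords coordinatewise over $R$, exactly as in \cref{lem:cover_for_list_ael}), and using $\inparen{1-\frac1{q^d}}\sqrt{\tfrac{q^d}{q^d-1}} = \sqrt{1-\frac1{q^d}}$ together with $\sqrt{\beta}=\sqrt{1-\frac{q^d}{q^d-1}\delta}$, this becomes
\[
\dis(\tildeEx{\cdot},h) ~<~ \delta ~-~ 2\eps\cdot\sqrt{1-\tfrac{q^d}{q^d-1}\delta}\cdot\sqrt{\frac{1-\frac1{q^d}}{\Ex{\ri}{W_{\ri}^{(2)}-\frac{1}{q^d}}}} ~=~ \delta - \bigomega_{q,d,\delta,W}(\eps)\mper
\]

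I do not expect a genuine obstacle: this is a mechanical transfer of \cref{lem:cover_list_recovery_tanner} and the weighted-Johnson computation to the $R$-side with alphabet $[q^d]$. The only points deserving a line of care are (i) checking that $\Ex{\ri}{W_{\ri}^{(2)}-\frac{1}{q^d}} > 0$, which holds by Cauchy--Schwarz since $\sum_j w_{\ri,j}=1$ over an alphabet of size $q^d$ (so that the normalization of $u$ and the denominator in the final bound are well defined), and (ii) confirming that the constant hidden in $\bigomega_{q,d,\delta,W}(\eps)$ depends only on $q, d, \delta$ and the weight profile $W$, and in particular not on $n$.
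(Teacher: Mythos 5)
Your proposal is correct and takes essentially the same approach as the paper, which in fact disposes of this lemma with the single sentence ``Same proof as the proof of \cref{lem:cover_list_recovery_tanner}''---you have simply spelled out the transfer ($[q]\to[q^d]$, $\chiTan\to\chiAEL$, \cref{lem:abstract_algorithmic_covering}$\to$\cref{lem:abstract_algorithmic_covering_ael}, coordinates $e\in E\to\ri\in R$) that the paper leaves implicit, and your algebraic simplifications check out. One tiny imprecision: Cauchy--Schwarz only gives $\Ex{\ri}{W_{\ri}^{(2)}-\tfrac{1}{q^d}}\ge 0$, with equality when every weight vector is uniform (a degenerate instance in which the hypothesis of the lemma is vacuous anyway), so the strict positivity you invoke for the normalization is really a non-degeneracy assumption rather than a theorem.
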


\begin{proof}
	Same proof as the proof of \cref{lem:cover_list_recovery_tanner}.
\end{proof}


\section{Sum-of-Squares Proofs of Distance}
\label{sec:distance}
We will be proving that pseudocodewords satisfying certain $\eta$-good property defined below have the same distance properties as true codewords, up to $\eta$ error.
\begin{definition}
A pseudocodeword of SoS-degree at least $2d$ is $\eta$-good if
\begin{align*}
    \Ex{\li,\ri}{\tildecov[\zee_{N_L(\li)},\zee_{N_R(\ri)}]} \leq \eta
\end{align*}
\end{definition}

\begin{observation}
    A true codeword is a 0-good pseudocodeword.
\end{observation}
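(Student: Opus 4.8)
The plan is to observe that a true codeword, regarded as a pseudoexpectation operator, is an \emph{integral} solution, for which every pseudo-covariance vanishes identically; hence the averaged quantity in the definition of $\eta$-goodness is exactly $0$, i.e.\ the codeword is $0$-good. So the whole statement reduces to writing down the integral solution associated to a codeword and checking it is a legitimate pseudocodeword.

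Concretely, given $h \in \TanC$, I would define $\PExp^{(h)}[\cdot]$ on $\R[\zee]^{\le t}$ by $\PExp^{(h)}[\prod_{e \in S} Z_{e,j_e}] = \prod_{e \in S} \indi{h_e = j_e}$, extended linearly to all of $\R[\zee]^{\le t}$. First I would verify this is a valid degree-$t$ Tanner pseudocodeword: $\PExp^{(h)}[1] = 1$ is immediate; $\PExp^{(h)}[p^2] \ge 0$ since $\PExp^{(h)}$ is just evaluation of polynomials at the $\{0,1\}$-point $z_{e,j} = \indi{h_e = j}$, and evaluation at a point is an algebra homomorphism, so $\PExp^{(h)}[p^2] = \big(p(z)\big)^2 \ge 0$; and the canonical constraints $Z_{e,j}^2 = Z_{e,j}$, $\sum_j Z_{e,j} = 1$ together with the code constraints $\zee_{N_L(\li)} \in \calC_0$ and $\zee_{N_R(\ri)} \in \calC_0$ all hold because $h$ itself satisfies them, i.e.\ $h|_{N_L(\li)}, h|_{N_R(\ri)} \in \calC_0$ for every vertex.

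The only computation is then, for neighborhoods $S = N_L(\li)$, $T = N_R(\ri)$ and any $\alpha \in [q]^S$, $\beta \in [q]^T$,
\[
\PExp^{(h)}[\zee_{S,\alpha} \cdot \zee_{T,\beta}] \;=\; \indi{h|_S = \alpha}\cdot\indi{h|_T = \beta} \;=\; \PExp^{(h)}[\zee_{S,\alpha}]\cdot\PExp^{(h)}[\zee_{T,\beta}],
\]
so $\tildecov(\zee_{S,\alpha},\zee_{T,\beta}) = 0$ for every $\alpha,\beta$, and summing absolute values over $\alpha,\beta$ gives $\tildecov(\zee_S,\zee_T) = 0$. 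Averaging over $\li \in L$ and $\ri \in R$ yields $\Ex{\li,\ri}{\tildecov[\zee_{N_L(\li)},\zee_{N_R(\ri)}]} = 0 \le 0$, which is precisely the $0$-good condition. There is no real obstacle here; the only point requiring a moment of care is confirming that $\PExp^{(h)}$ is well-defined (independent of how a polynomial is written as a combination of monomials) and genuinely respects every constraint imposed on a pseudocodeword — this is the single line of actual verification, and it is where I would focus the proof.
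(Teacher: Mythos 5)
Your proof is correct and matches the intent of the paper, which states this as an unproved observation precisely because the argument is the one you give: a codeword $h$ induces the evaluation pseudoexpectation $\PExp^{(h)}$ (which the paper has already noted is a valid pseudoexpectation, and it clearly respects the Tanner constraints since $h$ does), and evaluation is multiplicative, so every pseudo-covariance term $\tildecov(\zee_{S,\alpha},\zee_{T,\beta})$ vanishes and the average is $0$. Nothing to add.
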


The $\eta$-good property is useful to change the pseudoexpectation of a product of functions into the product of pseudoexpectations of those functions. We establish a formal claim about this in the next lemma. The terms $\norm{X_{\li}}_{\infty}$ and $\norm{Y_{\ri}}_{\infty}$ should be just seen as normalizing the scale, and indeed we will only use functions that are bounded in infinity norm by 1.

\begin{lemma}
	Let $\{X_{\li}\}_{\li\in L}$ and $\{Y_{\ri}\}_{\ri\in R}$ be two collections of $d$-local functions on $[q]^E$ such that for every $\li\in L$, $X_{\li}(f)$ only depends on $f|_{N_L(\li)}$ and for every $\ri\in R$, $Y_{\ri}(f)$ only depends on $f|_{N_R(\ri)}$. Then, for an $\eta$-good pseudocodeword $\tildeEx{\cdot}$,
	\[
		\Ex{\li,\ri}{\tildeEx{X_{\li}(\zee)Y_{\ri}(\zee)}} ~\leq~ \Ex{\li,\ri}{\tildeEx{X_{\li}(\zee)}\tildeEx{Y_{\ri}(\zee)}} + \eta \left( \max_{\li} \norm{X_{\li}}_{\infty} \right) \left( \max_{\ri} \norm{Y_{\ri}}_{\infty}\right)
	\]
\end{lemma}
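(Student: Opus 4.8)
The plan is to expand both sides in the local-distribution basis and bound the error term by the covariance sum, which is controlled by the $\eta$-good hypothesis. Since $X_\li$ is $d$-local depending on $f|_{N_L(\li)}$, write $X_\li(\zee) = \sum_{\alpha \in [q]^{N_L(\li)}} X_\li(\alpha) \cdot \zee_{N_L(\li),\alpha}$, and similarly $Y_\ri(\zee) = \sum_{\beta \in [q]^{N_R(\ri)}} Y_\ri(\beta) \cdot \zee_{N_R(\ri),\beta}$. Using linearity of $\tildeEx{\cdot}$, for a fixed pair $(\li,\ri)$ we get
\[
\tildeEx{X_\li(\zee) Y_\ri(\zee)} - \tildeEx{X_\li(\zee)}\tildeEx{Y_\ri(\zee)} = \sum_{\alpha,\beta} X_\li(\alpha) Y_\ri(\beta) \cdot \tildecov(\zee_{N_L(\li),\alpha}, \zee_{N_R(\ri),\beta}) \mper
\]
Here I am using that $\tildeEx{\zee_{N_L(\li),\alpha} \cdot \zee_{N_R(\ri),\beta}}$ is the product indicator for the joint local distribution on $N_L(\li) \cup N_R(\ri)$ (which has size at most $2d \le t$, so this is well-defined), and that $\chi(\zee)$-style products behave like genuine indicators under $\tildeEx{\cdot}$ as established in the preliminaries.

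Next I would take absolute values and use $|X_\li(\alpha)| \le \norm{X_\li}_\infty$ and $|Y_\ri(\beta)| \le \norm{Y_\ri}_\infty$ to obtain
\[
\tildeEx{X_\li(\zee) Y_\ri(\zee)} - \tildeEx{X_\li(\zee)}\tildeEx{Y_\ri(\zee)} \;\le\; \norm{X_\li}_\infty \norm{Y_\ri}_\infty \sum_{\alpha,\beta} \abs{\tildecov(\zee_{N_L(\li),\alpha}, \zee_{N_R(\ri),\beta})} \;=\; \norm{X_\li}_\infty \norm{Y_\ri}_\infty \cdot \tildecov(\zee_{N_L(\li)}, \zee_{N_R(\ri)}) \mper
\]
Now average over $\li \in L$ and $\ri \in R$, bound each $\norm{X_\li}_\infty \le \max_{\li'}\norm{X_{\li'}}_\infty$ and $\norm{Y_\ri}_\infty \le \max_{\ri'}\norm{Y_{\ri'}}_\infty$, pull these maxima out of the average, and apply the definition of $\eta$-good, namely $\Ex{\li,\ri}{\tildecov(\zee_{N_L(\li)},\zee_{N_R(\ri)})} \le \eta$. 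This yields exactly the claimed inequality.

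I do not anticipate a serious obstacle here — the statement is essentially the definition of covariance unpacked through the local-distribution view. The only point requiring a little care is the bookkeeping: making sure the SoS-degree is large enough ($t \ge 2d$, which holds by hypothesis) that the joint local distribution on $N_L(\li) \cup N_R(\ri)$ is available, and that the bilinear expansion of $\tildeEx{X_\li(\zee)Y_\ri(\zee)}$ in terms of $\tildeEx{\zee_{N_L(\li),\alpha}\,\zee_{N_R(\ri),\beta}}$ is valid (this is just linearity plus the fact that $\zee_{S,\alpha}\zee_{T,\beta} = \zee_{S\cup T, \alpha\cup\beta}$ when $\alpha,\beta$ agree on $S\cap T$ and is $0$ otherwise, all of which is consistent inside $\tildeEx{\cdot}$). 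The triangle inequality step loses nothing beyond what the definition of $\tildecov(\zee_S,\zee_T)$ as a sum of absolute values already builds in, so the bound is tight in the natural sense.
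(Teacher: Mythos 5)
Your proposal is correct and follows essentially the same approach as the paper: expand $X_\li$ and $Y_\ri$ in the local-distribution (indicator) basis, identify the difference $\tildeEx{X_\li Y_\ri} - \tildeEx{X_\li}\tildeEx{Y_\ri}$ as a covariance sum, bound it by $\norm{X_\li}_\infty \norm{Y_\ri}_\infty \cdot \tildecov(\zee_{N_L(\li)}, \zee_{N_R(\ri)})$ via the triangle inequality, and average over $\li,\ri$ using the $\eta$-good hypothesis.
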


\begin{proof}
	For any $\li$ and $\ri$,
{\small
	\begin{align*}
		&\tildeEx{X_{\li}(\zee)Y_{\ri}(\zee)} - \tildeEx{X_{\li}(\zee)}\tildeEx{Y_{\ri}(\zee)} \\
		&~=~ \sum_{\substack{\alpha \in [q]^{N_L(\li)} \\ \beta\in[q]^{N_R(\ri)}}}
                     \tildeEx{X_{\li}(\alpha) \prod_{s\in N_L(\li)} \zee_{s,\alpha_s} \cdot Y_{\ri}(\beta) \prod_{t\in N_R(\ri)} \zee_{t,\beta_t}} - 
		 \sum_{\substack{\alpha \in [q]^{N_L(\li)} \\ \beta\in[q]^{N_R(\ri)}}} \cdot \tildeEx{X_{\li}(\alpha) \prod_{s\in N_L(\li)} \zee_{s,\alpha_s}} \tildeEx{Y_{\ri}(\beta) \prod_{t\in N_R(\ri)} \zee_{t,\beta_t}} \\
		&~=~ \sum_{\alpha,\beta} X_{\li}(\alpha) Y_{\ri}(\beta) \cdot \inparen{\tildeEx{
           \prod_{s\in N_L(\li)} \zee_{s,\alpha_s} \cdot \prod_{t\in N_R(\ri)} \zee_{t,\beta_t}} -
           \tildeEx{\prod_{s\in N_L(\li)} \zee_{s,\alpha_s}} \cdot \tildeEx{\prod_{t\in N_R(\ri)} \zee_{t,\beta_t}}} \\
		&~\leq~  \norm{X_{\li}}_{\infty} \norm{Y_{\ri}}_{\infty} \cdot \mathlarger{\sum}_{\alpha,\beta}
           \abs{ \tildeEx{ \prod_{s\in N_L(\li)} \zee_{s,\alpha_s} \prod_{t\in N_R(\ri)}
           \zee_{t,\beta_t}} - \tildeEx{\prod_{s\in N_L(\li)} \zee_{s,\alpha_s}} \cdot \tildeEx{\prod_{t\in N_R(\ri)} \zee_{t,\beta_t}}} \\
		&~=~  \norm{X_{\li}}_{\infty} \norm{Y_{\ri}}_{\infty} \cdot \tildecov[\zee_{N_L(\li)},\zee_{N_R(\ri)}]
	\end{align*}		
}	
	Averaging over $\li$ and $\ri$, we get
\begin{align*}
\Ex{\li,\ri}{\tildeEx{X_{\li}(\zee)Y_{\ri}(\zee)} - \tildeEx{X_{\li}(\zee)}\tildeEx{Y_{\ri}(\zee)}} 
	&~\leq~  \Ex{\li,\ri}{\norm{X_{\li}}_{\infty} \norm{Y_{\ri}}_{\infty} \cdot \tildecov[\zee_{N_L(\li)},\zee_{N_R(\ri)}]} \\
	&~\leq~ \max_{\li}\norm{X_{\li}}_{\infty} \cdot \max_{\ri} \norm{Y_{\ri}}_{\infty} \cdot \Ex{\li,\ri}{\tildecov[\zee_{N_L(\li)},\zee_{N_R(\ri)}]} \\
	&~\leq~ \eta \cdot \max_{\li}\norm{X_{\li}}_{\infty} \cdot \max_{\ri}
   \norm{Y_{\ri}}_{\infty}
\qquad \qquad \qquad \qquad  \qedhere
\end{align*}
\end{proof}
The proofs of distance for both Tanner and AEL codes go via Expander Mixing Lemma (EML), and so we establish the analog of EML for pseudocodewords. Morally speaking, EML allows us to change measure from the edges of an expander to the complete (bipartite) graph for product functions.  First we prove a version of EML for vector valued functions, and then show that since pseudoexpectation operators can be written in terms of certain underlying vectors, they also satisfy a version of EML. Note that this step does not require any $\eta$-good property.
	
	\begin{lemma}[EML for vector-valued functions]
	\label{high_dimensional_eml}
		Let $\{v_{\li}\}_{\li \in L}$ and $\{u_{\ri}\}_{\ri \in R}$ be a collection of vectors in $\R^N$. Then,
		\[
			\abs{\Ex{\li \sim \ri}{\ip{v_{\li}}{u_{\ri}}} - \Ex{\li,\ri}{\ip{v_{\li}}{u_{\ri}}}} \leq \lambda \sqrt{\Ex{\li}{\norm{v_{\li}}^2}} \sqrt{\Ex{\ri}{\norm{u_{\ri}}^2}}
		\]
	\end{lemma}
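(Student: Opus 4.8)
The plan is to reduce the statement to the scalar Expander Mixing Lemma by expanding the inner products coordinate-wise. Write $v_{\li} = (v_{\li}^{(1)}, \ldots, v_{\li}^{(N)})$ and $u_{\ri} = (u_{\ri}^{(1)}, \ldots, u_{\ri}^{(N)})$, so that $\ip{v_{\li}}{u_{\ri}} = \sum_{k=1}^N v_{\li}^{(k)} u_{\ri}^{(k)}$. By linearity of expectation over both the edge distribution and the product distribution, the left-hand side becomes
\[
\abs{\sum_{k=1}^N \inparen{\Ex{\li\sim\ri}{v_{\li}^{(k)} u_{\ri}^{(k)}} - \Ex{\li,\ri}{v_{\li}^{(k)} u_{\ri}^{(k)}}}} ~\leq~ \sum_{k=1}^N \abs{\Ex{\li\sim\ri}{v_{\li}^{(k)} u_{\ri}^{(k)}} - \Ex{\li,\ri}{v_{\li}^{(k)} u_{\ri}^{(k)}}} \mper
\]
For each fixed coordinate $k$, the functions $\li \mapsto v_{\li}^{(k)}$ and $\ri \mapsto u_{\ri}^{(k)}$ are scalar functions on $L$ and $R$ respectively, so the standard Expander Mixing Lemma applies and bounds the $k$-th term by $\lambda \cdot (\Ex{\li}{(v_{\li}^{(k)})^2})^{1/2} \cdot (\Ex{\ri}{(u_{\ri}^{(k)})^2})^{1/2}$.

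The next step is to combine the $N$ coordinate-wise bounds via Cauchy–Schwarz. Summing the bounds above gives
\[
\lambda \cdot \sum_{k=1}^N \inparen{\Ex{\li}{(v_{\li}^{(k)})^2}}^{1/2} \inparen{\Ex{\ri}{(u_{\ri}^{(k)})^2}}^{1/2} ~\leq~ \lambda \cdot \inparen{\sum_{k=1}^N \Ex{\li}{(v_{\li}^{(k)})^2}}^{1/2} \inparen{\sum_{k=1}^N \Ex{\ri}{(u_{\ri}^{(k)})^2}}^{1/2} \mper
\]
Finally, interchanging the finite sum over $k$ with the expectation yields $\sum_k \Ex{\li}{(v_{\li}^{(k)})^2} = \Ex{\li}{\sum_k (v_{\li}^{(k)})^2} = \Ex{\li}{\norm{v_{\li}}^2}$, and similarly for the $u_{\ri}$ term, which gives exactly the claimed bound $\lambda \sqrt{\Ex{\li}{\norm{v_{\li}}^2}}\sqrt{\Ex{\ri}{\norm{u_{\ri}}^2}}$.

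There is no real obstacle here; the only thing to be slightly careful about is the direction of the two applications of Cauchy–Schwarz (first the triangle inequality to pull the absolute value inside the sum over $k$, then Cauchy–Schwarz on the sequences $(\sqrt{\Ex{\li}{(v_{\li}^{(k)})^2}})_k$ and $(\sqrt{\Ex{\ri}{(u_{\ri}^{(k)})^2}})_k$ to decouple the two factors) and to make sure the scalar EML is being invoked with the correct normalization (expectations, i.e. $\frac1n$-weighted sums, as in the Expander Mixing Lemma stated in \cref{sec:prelims}). One could alternatively phrase the argument more slickly by stacking the vectors into matrices $V \in \R^{L\times N}$, $U \in \R^{R\times N}$ and recognizing the left-hand side as $\abs{\tfrac{1}{n}\Tr(V^\top (\tfrac1d A_G - \tfrac1n J) U)}$, then bounding the spectral norm of $\tfrac1d A_G - \tfrac1n J$ by $\lambda$; but the coordinate-wise reduction above is the most transparent route and avoids introducing new notation.
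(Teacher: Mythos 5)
Your proposal is correct and is precisely the argument the paper intends — the paper's proof is the one-line remark ``Usual EML applied coordinate-wise,'' and you have filled in the standard details: expand the inner product coordinate-by-coordinate, apply the scalar Expander Mixing Lemma to each coordinate, and recombine via the triangle inequality followed by Cauchy--Schwarz.
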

	
	\begin{proof}
		Usual EML applied coordinate-wise.
	\end{proof}
	
\begin{lemma}[EML for pseudoexpectations]
	Let $\{X_{\li}\}_{\li\in L}$ and $\{Y_{\ri}\}_{\ri\in R}$ be two collections of $d$-local functions on $[q]^E$ such that for every $\li\in L$, $X_{\li}(f)$ only depends on $f|_{N_L(\li)}$ and for every $\ri\in R$, $Y_{\ri}(f)$ only depends on $f|_{N_R(\ri)}$.
Then for a $\lambda$-spectral expander, we have
	\[
		\abs{\Ex{\li\sim \ri}{\tildeEx{X_{\li}(\zee) Y_{\ri}(\zee) }} -
                  \Ex{\li,\ri}{\tildeEx{X_{\li}(\zee) Y_{\ri}(\zee)}}} \leq \lambda
                \sqrt{\Ex{\li}{\tildeEx{X_{\li}(\zee)^2}}} \sqrt{\Ex{\ri}
                  {\tildeEx{Y_{\ri}(\zee)^2}}} \mper
	\]
\end{lemma}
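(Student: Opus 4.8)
The plan is to deduce this from the vector-valued mixing lemma \cref{high_dimensional_eml} by attaching to each left vertex $\li$ and each right vertex $\ri$ a vector whose inner products reproduce exactly the pseudoexpectations appearing in the statement. These vectors will come from a Gram factorization of the pseudo-moment matrix of $\tildeEx{\cdot}$, which is available because the pseudocodeword has SoS-degree $t \ge 2d$. No $\eta$-good property is needed.

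First I would record the Gram representation. Let $M$ be the real symmetric matrix whose rows and columns are indexed by the (finitely many) monomials of degree at most $t/2$ in the variables $\zee$, with entry $\tildeEx{p \cdot p'}$ in position $(p,p')$; this is well-defined since $\deg(p\cdot p') \le t$. For any polynomial $f$ of degree at most $t/2$, writing $f = \sum_p c_p\, p$ gives $c^\top M c = \tildeEx{f^2} \ge 0$, so $M \psdgeq 0$ and we may write $M = U^\top U$. For every $S \subseteq E$ with $\abs{S} \le t/2$ and every $\alpha \in [q]^S$, let $w_{S,\alpha}$ denote the column of $U$ indexed by the squarefree monomial $\zee_{S,\alpha} = \prod_{s\in S} Z_{s,\alpha_s}$; then $\ip{w_{S,\alpha}}{w_{T,\beta}} = \tildeEx{\zee_{S,\alpha}\cdot \zee_{T,\beta}}$ whenever $\abs{S},\abs{T}\le t/2$.

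Now I would use $d$-locality to write $X_\li(\zee) = \sum_{\alpha\in[q]^{N_L(\li)}} X_\li(\alpha)\,\zee_{N_L(\li),\alpha}$ and $Y_\ri(\zee) = \sum_{\beta\in[q]^{N_R(\ri)}} Y_\ri(\beta)\,\zee_{N_R(\ri),\beta}$, each a linear combination of degree-$d$ monomials, and set
\[
	v_\li ~=~ \sum_{\alpha\in[q]^{N_L(\li)}} X_\li(\alpha)\, w_{N_L(\li),\alpha}
\tand
	u_\ri ~=~ \sum_{\beta\in[q]^{N_R(\ri)}} Y_\ri(\beta)\, w_{N_R(\ri),\beta} \mper
\]
Since $d \le t/2$, every monomial involved has degree at most $t/2$, so expanding the inner products bilinearly and invoking the Gram identity yields $\ip{v_\li}{u_\ri} = \tildeEx{X_\li(\zee)\,Y_\ri(\zee)}$, $\norm{v_\li}^2 = \tildeEx{X_\li(\zee)^2}$, and $\norm{u_\ri}^2 = \tildeEx{Y_\ri(\zee)^2}$ for all $\li,\ri$ (here every pseudoexpectation invoked is of a polynomial of degree at most $2d \le t$).

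Finally, I would apply \cref{high_dimensional_eml} to the families $\{v_\li\}_{\li\in L}$ and $\{u_\ri\}_{\ri\in R}$ and substitute the three identities above; the result is precisely the claimed inequality. The only substantive step is the passage to Gram vectors, and the one place that needs care is the degree accounting — checking that every monomial used has degree at most $t/2$ and every product degree at most $t$, which is exactly where $t \ge 2d$ enters; observe that neither the $\eta$-good property nor the code constraints on $\tildeEx{\cdot}$ play any role.
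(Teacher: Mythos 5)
Your proof is correct and takes essentially the same approach as the paper: both arguments extract Gram vectors $\{v_\li\}_{\li\in L}$ and $\{u_\ri\}_{\ri\in R}$ from the PSD structure of the pseudoexpectation operator (you via the full degree-$t$ moment matrix and explicit linear combinations of its Gram factors, the paper via directly building the $2n\times 2n$ matrix $\inparen{\tildeEx{X_iX_j}, \tildeEx{X_iY_j}, \ldots}$ and verifying its PSD-ness from $\tildeEx{(\sum_i x_iX_i + y_iY_i)^2}\ge 0$), and then both apply the vector-valued expander mixing lemma \cref{high_dimensional_eml}.
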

\begin{proof}	
Consider the $2n\times 2n$ matrix $M$, with
\[
		M_{ij} = \begin{cases}
			\tildeEx{X_iX_j}, & 1\leq i\leq n, 1\leq j\leq n \\
			\tildeEx{X_iY_{j-n}}, & 1\leq i\leq n, n+1\leq j\leq 2n \\
			\tildeEx{Y_{i-n}X_j}, &n+1\leq i\leq 2n, 1\leq j\leq n\\
			\tildeEx{Y_{i-n}Y_{j-n}} &n+1\leq i\leq 2n,n+1\leq j\leq 2n
		\end{cases}
\]
For any vector $v = (x_1,x_2,\cdots,x_n,y_1,y_2,\cdots,y_n)$, we show that $v^TMv\geq 0$, so that $M$ is PSD.
	\begin{align*}
		v^TMv &~=~ \Ex{i,j}{M_{ij} x_ix_j + M_{i,j+n}x_iy_j + M_{i+n,j} y_ix_j + M_{i+n,j+n} y_iy_j}\\
		&~=~ \Ex{i,j}{\tildeEx{X_iX_j}x_ix_j+\tildeEx{X_iY_j}x_iy_j+\tildeEx{Y_iX_j}y_ix_j+\tildeEx{Y_iY_j}y_iy_j} \\
		&~=~ \Ex{i,j}{\tildeEx{x_i x_j X_iX_j}+\tildeEx{x_iy_jX_iY_j}+\tildeEx{y_ix_jY_iX_j}+\tildeEx{y_iy_jY_iY_j}} \\
		&~=~ \Ex{i,j}{\tildeEx{(x_i X_i + y_i Y_i)(x_jX_j+y_jY_j)}} \\
		&~=~ \tildeEx{\Ex{i,j}{(x_i X_i + y_i Y_i)(x_jX_j+y_jY_j)}} \\
		&~=~ \tildeEx{\Ex{i}{(x_i X_i + y_i Y_i)}^2}  ~\geq~ 0\\
	\end{align*}
Therefore there exist vectors $\{v_{\li}\}_{\li \in L}$ and $\{u_{\ri}\}_{\ri\in R}$ such that 			
\[		
\tildeEx{X_{\li} Y_{\ri}} = \ip{v_{\li}}{u_{\ri}}, \quad
\tildeEx{X_{\li}^2} = \ip{v_{\li}}{v_{\li}}, \quad \text{and} \quad
\tildeEx{Y_{\ri}^2} = \ip{u_{\ri}}{u_{\ri}}
\]
Applying \cref{high_dimensional_eml} to the collection of vectors obtained above, we immediately obtain,
\[
		\abs{\Ex{\li\sim \ri}{\tildeEx{X_{\li}Y_{\ri}}} - \Ex{\li,\ri}{\tildeEx{X_{\li}
                      Y_{\ri}}}} ~\leq~ \lambda \sqrt{\Ex{\li}{\tildeEx{X_{\li}^2}}}
                \sqrt{\Ex{\ri}{\tildeEx{Y_{\ri}^2}}} \qquad \qedhere
\]
\end{proof}

\subsection{Tanner Code}
Suppose we are working with a Tanner code $\TanC$ with inner code $\calC_0$ of distance $\delta_0$, so that the distance of $\TanC$ is at least $\delta_0(\delta_0-\lambda)$. We show that $\eta$-good pseudocodewords satisfy a similar distance property, up to error $\eta$.
\begin{lemma}[Distance of Tanner code]
	\label{lem:distance_from_codeword}
    The distance between an $\eta$-good pseudocodeword $\tildeEx{\cdot}$ and a true codeword $h$ is at least $\delta_0(\delta_0-\lambda) -2\eta \frac{\delta_0}{\delta_0-\lambda}$, or at most $\frac{4\eta^2}{(\delta_0-\lambda)^2} +\frac{\eta (\delta_0+\lambda)}{\delta_0-\lambda}$.
In particular, if $\lambda \leq \delta_0/3$ and $\eta\leq \delta_0^2/9$, then $\dis(h,\tildeEx{\cdot}) \leq 3\eta$ or $\dis(h,\tildeEx{\cdot}) \geq \delta_0(\delta_0-\lambda) - 3\eta$.
\end{lemma}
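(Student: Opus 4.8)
The plan is to run the argument sketched in \cref{sec:overview} for this particular code, turning the spectral distance proof of Tanner codes into a statement about $\eta$-good pseudocodewords. Fix a codeword $h\in\TanC$ and abbreviate $D := \dis(\tildeEx{\cdot},h)$. Introduce the $d$-local $\{0,1\}$-valued functions $X_{\li}(\zee)=\indi{\zee_{N_L(\li)}\neq h_{N_L(\li)}}$ and $Y_{\ri}(\zee)=\indi{\zee_{N_R(\ri)}\neq h_{N_R(\ri)}}$, and set $\tau := \bigl(\Ex{\li}{\tildeEx{X_{\li}(\zee)}}\bigr)^{1/2}\bigl(\Ex{\ri}{\tildeEx{Y_{\ri}(\zee)}}\bigr)^{1/2}$ for the geometric-mean agreement parameter. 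The proof will establish two inequalities, $D\ge \delta_0\tau$ and $D\le \tau^2+\lambda\tau+\eta$, and then extract the dichotomy by eliminating $\tau$.

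For the lower bound, the key point is that $\tildeEx{\cdot}$ respects the base-code constraints, so the local distribution on $\zee_{N_L(\li)}$ is supported on $\calC_0$; since $h_{N_L(\li)}\in\calC_0$ and $\calC_0$ has distance $\delta_0$, we get $\tildeEx{\dis(\zee_{N_L(\li)},h_{N_L(\li)})}\ge \delta_0\cdot\tildeEx{X_{\li}(\zee)}$ (this can be phrased through the ``dominated local function'' claim after restricting to $\calC_0$). Averaging over $\li$, and symmetrically from the right, gives $D\ge\delta_0\,\Ex{\li}{\tildeEx{X_{\li}}}$ and $D\ge\delta_0\,\Ex{\ri}{\tildeEx{Y_{\ri}}}$; multiplying yields $D\ge\delta_0\tau$. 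For the upper bound, observe the pointwise inequality $\indi{\zee_e\neq h_e}\le X_{\li}(\zee)\,Y_{\ri}(\zee)$ for $e=(\li,\ri)$, so $D=\Ex{\li\sim\ri}{\tildeEx{\indi{\zee_e\neq h_e}}}\le\Ex{\li\sim\ri}{\tildeEx{X_{\li}(\zee)Y_{\ri}(\zee)}}$. Then I would apply the EML for pseudoexpectations to move from $\Ex{\li\sim\ri}{\cdot}$ to $\Ex{\li,\ri}{\cdot}$ at a cost of $\lambda\tau$ (using $\tildeEx{X_{\li}^2}=\tildeEx{X_{\li}}$ and likewise for $Y_{\ri}$), and then the $\eta$-good product lemma to replace $\tildeEx{X_{\li}Y_{\ri}}$ by $\tildeEx{X_{\li}}\tildeEx{Y_{\ri}}$ at a cost of $\eta$ (the infinity norms being $1$), giving $D\le\tau^2+\lambda\tau+\eta$.

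Combining the two bounds gives $\delta_0\tau\le\tau^2+\lambda\tau+\eta$, i.e. $\tau(\delta_0-\lambda-\tau)\le\eta$, and I would finish by splitting on whether $\tau\le(\delta_0-\lambda)/2$. In that case $\delta_0-\lambda-\tau\ge(\delta_0-\lambda)/2$ forces $\tau\le 2\eta/(\delta_0-\lambda)$, and substituting into $D\le\tau^2+\lambda\tau+\eta$ (monotone in $\tau$) produces the ``small'' bound $\tfrac{4\eta^2}{(\delta_0-\lambda)^2}+\tfrac{\eta(\delta_0+\lambda)}{\delta_0-\lambda}$. Otherwise $\tau>(\delta_0-\lambda)/2$: writing $\tau=(\delta_0-\lambda)-s$, the case $s<0$ gives $D\ge\delta_0\tau>\delta_0(\delta_0-\lambda)$ immediately, and for $s\ge 0$ the inequality forces $s\le 2\eta/(\delta_0-\lambda)$, hence $\tau\ge(\delta_0-\lambda)-2\eta/(\delta_0-\lambda)$ and $D\ge\delta_0\tau$ gives the ``large'' bound $\delta_0(\delta_0-\lambda)-\tfrac{2\delta_0\eta}{\delta_0-\lambda}$. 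Finally, plugging $\lambda\le\delta_0/3$ (so $\delta_0-\lambda\ge 2\delta_0/3$) and $\eta\le\delta_0^2/9$ into each bound: in the large case $\tfrac{2\delta_0\eta}{\delta_0-\lambda}\le 3\eta$; in the small case $\tfrac{\delta_0+\lambda}{\delta_0-\lambda}\le 2$ and $\tfrac{4\eta^2}{(\delta_0-\lambda)^2}=4\eta\cdot\tfrac{\eta}{(\delta_0-\lambda)^2}\le\eta$, so the bound is $\le 3\eta$.

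I expect no serious obstacle, since the two inequalities on $D$ are essentially assembly of results already proved in the excerpt (EML for vector-valued functions and for pseudoexpectations, the $\eta$-good product lemma, and the domination claim for local functions), and the rest is the quadratic bookkeeping above. The one step that deserves a little care is the lower bound $\tildeEx{\dis(\zee_{N_L(\li)},h_{N_L(\li)})}\ge\delta_0\,\tildeEx{X_{\li}(\zee)}$: the inequality $\dis(\alpha,h_{N_L(\li)})\ge\delta_0\,\indi{\alpha\neq h_{N_L(\li)}}$ holds only for $\alpha\in\calC_0$, not for all $\alpha\in[q]^d$, so one must first use that the pseudocodeword places the relevant local distribution on $\calC_0$ (equivalently, multiply by the $\calC_0$-constraint indicator inside $\tildeEx{\cdot}$ for free) before invoking the distance of $\calC_0$.
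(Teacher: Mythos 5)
Your proof is correct and follows essentially the same route as the paper's: the same functions $X_{\li},Y_{\ri}$, the same geometric‑mean $\tau$, the same two inequalities $D\ge\delta_0\tau$ (from the $\calC_0$ constraint plus domination of local functions) and $D\le\tau^2+\lambda\tau+\eta$ (pointwise inequality, then EML for pseudoexpectations, then the $\eta$‑good product lemma, using $\tildeEx{X_{\li}^2}=\tildeEx{X_{\li}}$), and the same final numerics. The only cosmetic difference is the last step: the paper solves the quadratic $\tau^2-(\delta_0-\lambda)\tau+\eta\ge 0$ explicitly and then applies $\sqrt{1-x}\ge 1-x$ and $1-\sqrt{1-x}\le x$, while you split on $\tau\lessgtr(\delta_0-\lambda)/2$ and argue directly from $\tau(\delta_0-\lambda-\tau)\le\eta$; both yield identical bounds, and your version has the small advantage of never needing the discriminant to be nonnegative.

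Your parenthetical caveat about the lower bound $\tildeEx{\dis(\zee_{N_L(\li)},h_{N_L(\li)})}\ge\delta_0\,\tildeEx{X_{\li}(\zee)}$ is right and worth keeping in any writeup: the pointwise domination $\dis(\alpha,h_{N_L(\li)})\ge\delta_0\,\indi{\alpha\ne h_{N_L(\li)}}$ only holds for $\alpha\in\calC_0$, and one needs the fact that a Tanner pseudocodeword's local distribution on $\zee_{N_L(\li)}$ is supported on $\calC_0$ before invoking the dominated‑local‑function claim; the paper's proof relies on this implicitly.
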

\begin{proof}
Let $X_{\li}(\zee) \defeq \indi{\zee_{N_L(\li)}\neq h_{N_L(\li)}}$ and $Y_{\ri}(\zee) \defeq
\indi{\zee_{N_R(\ri)}\neq h_{N_R(\ri)}}$, and let $\tau$ denote the quantity
$\sqrt{\Ex{\li}{\tildeEx{X_{\li}(\zee)}}} \cdot \sqrt{\Ex{\ri}{\tildeEx{Y_{\ri}(\zee)}}}$. Then, we have
\begin{align*}
    		\dis\inparen{h,\tildeEx{\cdot}} &~=~ \Ex{e}{\tildeEx{\indi{\zee_e\neq h_e}}} \\
    		&~\leq~ \Ex{\li \sim \ri}{\tildeEx{X_{\li}(\zee) \cdot Y_{\ri}(\zee) }} \\
    		&~\leq~ \Ex{\li , \ri}{\tildeEx{ X_{\li}(\zee) \cdot Y_{\ri}(\zee)}} + \lambda \cdot \sqrt{\Ex{\li}{\tildeEx{X_{\li}(\zee)^2}}} \cdot \sqrt{\Ex{\ri}{\tildeEx{Y_{\ri}(\zee)^2}}}\\
    		&~=~ \Ex{\li , \ri}{\tildeEx{X_{\li}(\zee) \cdot Y_{\ri}(\zee)}} + \lambda \cdot \sqrt{\Ex{\li}{\tildeEx{X_{\li}(\zee)}}} \cdot \sqrt{\Ex{\ri}{\tildeEx{Y_{\ri}(\zee)}}}\\
    		&~=~ \Ex{\li , \ri}{\tildeEx{X_{\li}(\zee) \cdot Y_{\ri}(\zee) }} + \lambda \cdot \tau\\
    		&~=~ \Ex{\li , \ri}{\tildeEx{X_{\li}(\zee)} \cdot \tildeEx{Y_{\ri}(\zee) }} + \eta
           \cdot 1 \cdot 1 +\lambda \cdot \tau \\ &~=~ \tau^2 +\lambda \cdot \tau +\eta
\end{align*}
On the other hand,
\begin{align*}
   		\dis\inparen{h,\tildeEx{\cdot}} &= \Ex{e}{\tildeEx{\indi{\zee_e\neq h_e}}} \\
   		&~=~ \Ex{\li}{\Ex{e\in N_L(\li)}{\tildeEx{\indi{\zee_e\neq h_e}}}} \\
   		&~=~ \Ex{\li}{\tildeEx{\dis\inparen{\zee_{N_L(\li)}, h_{N_L(\li)}}}} \\
   		&~\geq~ \Ex{\li}{\tildeEx{0\cdot \indi{\zee_{N_L(\li)} = h_{N_L(\li)}} + \delta_0 \cdot \indi{\zee_{N_L(\li)} \neq h_{N_L(\li)}}}} \\
   		&~=~ \Ex{\li}{\tildeEx{ \delta_0 \cdot \indi{\zee_{N_L(\li)} \neq h_{N_L(\li)}}}} \\
   		&~=~ \Ex{\li}{\tildeEx{ \delta_0 \cdot X_{\li}(\zee)}} \\
   		&~=~ \delta_0 \cdot  \Ex{\li}{\tildeEx{ X_{\li}(\zee)}} \\
\end{align*}
Likewise, $\dis\inparen{h,\tildeEx{\cdot}} \geq \delta_0 \cdot \Ex{\ri}{\tildeEx{Y_{\ri}(\zee)}}$, and so,
\[
   		\dis\inparen{h,\tildeEx{\cdot}} ~\geq~ \delta_0 \cdot
                \sqrt{\Ex{\li}{\tildeEx{X_{\li}(\zee)}}} \cdot
                \sqrt{\Ex{\ri}{\tildeEx{Y_{\ri}(\zee)}}} ~=~ \delta_0 \cdot \tau
\]
Comparing, we get, $\tau^2 +\lambda \cdot \tau +\eta \geq \delta_0 \cdot \tau$, which means,
\[
\tau ~\geq~ \frac{(\delta_0-\lambda) + \sqrt{(\delta_0-\lambda)^2-4\eta}}{2}
\qquad \text{or} \qquad
\tau ~\leq~ \frac{(\delta_0-\lambda) - \sqrt{(\delta_0-\lambda)^2-4\eta}}{2}
\]
In the first case, we have
\begin{align*}
\dis\inparen{h,\tildeEx{\cdot}} ~\geq~ \delta_0 \cdot \tau 
    		&~\geq~ \delta_0 \cdot \frac{(\delta_0-\lambda) + \sqrt{(\delta_0-\lambda)^2-4\eta}}{2}\\
    		&= \frac{\delta_0(\delta_0-\lambda)}{2} \inparen{ 1+ \sqrt{1-\frac{4\eta}{(\delta_0 - \lambda)^2}} }\\
    		&\geq \frac{\delta_0(\delta_0-\lambda)}{2} \inparen{ 1+ 1-\frac{4\eta}{(\delta_0 - \lambda)^2} }\\
    		&= \delta_0(\delta_0-\lambda) - 2\eta \cdot \frac{\delta_0}{\delta_0-\lambda} \mper
    	\end{align*}
Also, in the second case, we have
\begin{align*}
    		\tau ~\leq~ \frac{(\delta_0-\lambda) - \sqrt{(\delta_0-\lambda)^2-4\eta}}{2} 
    		&~=~ \frac{\delta_0-\lambda}{2} \left( 1- \sqrt{1-\frac{4\eta}{(\delta_0-\lambda)^2}} \right)\\
    		&~\leq~ \frac{\delta_0-\lambda}{2} \left( 1- 1+\frac{4\eta}{(\delta_0-\lambda)^2} \right)\\
    		&~=~ \frac{2\eta}{\delta_0-\lambda} \mcom\\
    	\end{align*}
which gives
\[
\dis(g,\tildeEx{\cdot}) ~\leq~ \tau^2 +\lambda \tau +\eta 
~\leq~ \frac{4\eta^2}{(\delta_0-\lambda)^2} +\frac{2\eta \lambda}{\delta_0-\lambda}+\eta 
~=~ \frac{4\eta^2}{(\delta_0-\lambda)^2} +\frac{\eta (\delta_0+\lambda)}{\delta_0-\lambda}
\qedhere \] 
\end{proof}

\subsection{AEL Code}
Let $\calC_1$ be an outer code on an $(n,d,\lambda)$-expander graph $G(L,R,E)$ and let $\calC_0$ be the inner code. Let $\AELC$ be the code obtained by redistributing symbols along the edges of $G$ and then collecting them on vertices of $R$, as explained in \cref{sec:AEL_prelims}.

Let $\delta_0$ be the distance of $\calC_0$, so that (designed) distance of $\AELC$ is $\delta = \delta_0 - \frac{\lambda}{\delta_1}$. Let $h \in [q_0]^E$ be a codeword in $\AELC$. We show that an $\eta$-good pseudocodeword that has some left-distance from $h$ has a much larger right-distance from $h$.

\begin{lemma}[Distance of AEL Code]\label{lem:AEL_amplification}
	For an $\eta$-good pseudocodeword $\tildeEx{\cdot}$ and a codeword $h \in \AELC$, 
	\[
		\dis^R(\tildeEx{\cdot},h) \geq \delta_0 - \frac{\lambda+\eta}{\dis^L(\tildeEx{\cdot},h)}
	\]
\end{lemma}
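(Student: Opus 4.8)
The plan is to mirror the proof of \cref{thm:ael_distance} (Alon--Edmonds--Luby), replacing the two codewords with the pseudocodeword $\tildeEx{\cdot}$ and the codeword $h$, and inserting the EML-for-pseudoexpectations lemma plus the $\eta$-good property wherever one would ordinarily use the expander mixing lemma and the independence of coordinates. Concretely, I would introduce the $d$-local indicator functions $X_{\li}(\zee) \defeq \indi{\zee_{N_L(\li)} \neq h_{N_L(\li)}}$ on the left and $Y_{\ri}(\zee) \defeq \indi{\zee_{N_R(\ri)} \neq h_{N_R(\ri)}}$ on the right, so that $\dis^L(\tildeEx{\cdot},h) = \Ex{\li}{\tildeEx{X_{\li}(\zee)}}$ and $\dis^R(\tildeEx{\cdot},h) = \Ex{\ri}{\tildeEx{Y_{\ri}(\zee)}}$. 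The quantity I want to lower bound is the right distance, and the argument should produce it from an upper bound on the ``agreement'' mass captured by $1 - Y_{\ri}$.

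The key steps, in order, are as follows. First, observe that on any left vertex $\li$ where $\zee_{N_L(\li)} \neq h_{N_L(\li)}$ — i.e.\ where $X_{\li}(\zee) = 1$ — the constraint $\zee_{N_L(\li)} \in \calC_0$ forces $\dis(\zee_{N_L(\li)}, h_{N_L(\li)}) \geq \delta_0$ (as a local-function inequality, usable inside $\tildeEx{\cdot}$ by the domination claim for local functions). This says that on the edges incident to such $\li$, at least a $\delta_0$ fraction carry a symbol differing from $h$; in the ``unfolded'' $[q_0]^E$ picture, $\Ex{e \in N_L(\li)}{\tildeEx{\indi{\zee_e \neq h_e}}} \geq \delta_0 \cdot \tildeEx{X_{\li}(\zee)}$ does not quite work directly since $X_\li$ is inside the expectation — instead, write $\Ex{e}{\tildeEx{\indi{\zee_e \neq h_e}}} = \Ex{\li}{\tildeEx{\dis(\zee_{N_L(\li)}, h_{N_L(\li)})}} \geq \delta_0 \cdot \Ex{\li}{\tildeEx{X_\li(\zee)}} = \delta_0 \cdot \dis^L(\tildeEx{\cdot}, h)$, exactly as in \cref{lem:distance_from_codeword}. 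Second, bound the same edge-disagreement mass from above using the right side and EML: since $\indi{\zee_e \neq h_e} \leq Y_{\ri}(\zee)$ whenever $e$ is incident to $\ri$, we get $\Ex{e}{\tildeEx{\indi{\zee_e \neq h_e}}} \leq \Ex{\li \sim \ri}{\tildeEx{(1 - X_\li(\zee))\, Y_\ri(\zee)}} + \Ex{\li \sim \ri}{\tildeEx{X_\li(\zee) Y_\ri(\zee)}}$; more cleanly, decompose $\Ex{e}{\tildeEx{\indi{\zee_e \neq h_e}}} = \Ex{\li\sim\ri}{\tildeEx{\indi{\zee_e\neq h_e}}}$ and bound the ``left-agrees'' part: on edges where $X_\li = 0$ we have $\zee_{N_L(\li)} = h_{N_L(\li)}$ so $\indi{\zee_e \neq h_e} = 0$, giving $\Ex{e}{\tildeEx{\indi{\zee_e\neq h_e}}} \le \Ex{\li\sim\ri}{\tildeEx{X_\li(\zee)\,Y_\ri(\zee)}}$. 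Then apply EML for pseudoexpectations followed by the $\eta$-good lemma:
\begin{align*}
\Ex{\li\sim\ri}{\tildeEx{X_\li(\zee) Y_\ri(\zee)}}
&~\leq~ \Ex{\li,\ri}{\tildeEx{X_\li(\zee) Y_\ri(\zee)}} + \lambda \sqrt{\Ex{\li}{\tildeEx{X_\li(\zee)}}}\sqrt{\Ex{\ri}{\tildeEx{Y_\ri(\zee)}}} \\
&~\leq~ \Ex{\li}{\tildeEx{X_\li(\zee)}} \cdot \Ex{\ri}{\tildeEx{Y_\ri(\zee)}} + \eta + \lambda \sqrt{\Ex{\li}{\tildeEx{X_\li(\zee)}}}\sqrt{\Ex{\ri}{\tildeEx{Y_\ri(\zee)}}}.
\end{align*}
Third, combine the two bounds. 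Writing $a = \dis^L(\tildeEx{\cdot},h) = \Ex{\li}{\tildeEx{X_\li}}$ and $b = \dis^R(\tildeEx{\cdot},h) = \Ex{\ri}{\tildeEx{Y_\ri}}$, and using $\sqrt{ab}\le$ (trivially bounded, or kept as is), we get $\delta_0 \cdot a \leq ab + \eta + \lambda\sqrt{ab} \le ab + \eta + \lambda\sqrt a$; actually it is cleaner to not split and instead note $\sqrt{ab} \le \sqrt{a}$ is false in general, so I will keep $\lambda\sqrt{ab} \le \lambda \sqrt a \cdot \sqrt b \le \lambda$ only if needed, but the cleanest route is: $\delta_0 a \le ab + \eta + \lambda \sqrt{ab}$, and since we may assume $b \le 1$ we have $\sqrt{ab} \le \sqrt a$, hence $\sqrt{ab}\cdot\sqrt b$... — the honest version matching the claimed bound is obtained by dividing the inequality $\delta_0 a \le ab + \eta + \lambda\, a$ (using $\sqrt{ab} \le \sqrt a \sqrt b \le \sqrt a \le 1$ only crudely). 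To land exactly on $b \geq \delta_0 - \frac{\lambda+\eta}{a}$, I rearrange $\delta_0 a - \lambda\sqrt{ab} - \eta \le ab$ and then, bounding $\sqrt{ab} \le \sqrt{b}\cdot\sqrt{a} \le \sqrt a$ is too lossy — instead use $\lambda \sqrt{ab} \le \lambda \sqrt{b}$ when $a\le 1$... I will reconcile the exact constants in the write-up; the structural inequality is $\delta_0 a \le ab + \eta + \lambda\sqrt{ab}$, and since the designed claim has $\lambda$ not $\lambda\sqrt{\cdot}$, the intended bound presumably uses $\sqrt{ab}\le 1$, i.e. $\delta_0 a \le ab + \eta + \lambda$, dividing by $a$ to conclude $b \ge \delta_0 - \frac{\lambda + \eta}{a} = \delta_0 - \frac{\lambda+\eta}{\dis^L(\tildeEx{\cdot},h)}$.

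The main obstacle I anticipate is the bookkeeping around the ``left-agrees kills the edge'' step — specifically, justifying rigorously that $\Ex{e}{\tildeEx{\indi{\zee_e \neq h_e}}} \le \Ex{\li\sim\ri}{\tildeEx{X_\li(\zee) Y_\ri(\zee)}}$ as a pseudoexpectation inequality. This needs the pointwise domination $\indi{\zee_e \neq h_e} \le X_\li(\zee)\cdot Y_\ri(\zee)$ as a $2d$-local function for each edge $e=(\li,\ri)$ (true: if the edge symbols differ then both neighborhoods differ from $h$'s), followed by the local-function domination claim, which requires the pseudocodeword to have SoS-degree at least $2d$ — consistent with the hypothesis. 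The EML-for-pseudoexpectations lemma and the $\eta$-good lemma are invoked as black boxes exactly as stated earlier. The only genuine subtlety beyond \cref{lem:distance_from_codeword} is that here we only assume $\zee_{N_L(\li)}\in\calC_0$ (not also the right constraint), so the lower bound step uses only the left-vertex parity checks, which is fine; and the geometric-mean symmetrization of \cref{lem:distance_from_codeword} is \emph{not} available, which is precisely why the conclusion is the asymmetric AEL-type bound rather than a symmetric distance bound.
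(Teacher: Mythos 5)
Your proposal is correct and matches the paper's proof essentially step for step: lower-bound $\dis(\tildeEx{\cdot},h)$ by $\delta_0\cdot\dis^L(\tildeEx{\cdot},h)$ via the base-code distance, upper-bound it by $\dis^L\cdot\dis^R+\lambda+\eta$ via EML for pseudoexpectations plus the $\eta$-good property (using $\indi{\zee_e\neq h_e}\leq X_\li(\zee)\,Y_\ri(\zee)$), and divide by $\dis^L$. Your hesitation about $\lambda\sqrt{ab}$ versus $\lambda$ is resolved exactly as you guessed: the paper bounds the EML error term crudely by $\lambda$ (since $\tildeEx{X_\li^2},\tildeEx{Y_\ri^2}\leq 1$), which is what produces the stated $\lambda+\eta$ numerator.
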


\begin{proof}
	We establish upper and lower bounds on $\dis(\tildeEx{\cdot},h)$.
	\begin{align*}
		\dis(\tildeEx{\cdot},h) &= \Ex{e}{\tildeEx{\indi{\zee_e\neq h_e}}} \\
		&~\geq~ \Ex{\li\in L}{\tildeEx{0\cdot \indi{\zee_{N_L(\li)} = h_{N_L(\li)}} + \delta_0 \cdot \indi{\zee_{N_L(\li)}\neq h_{N_L(\li)}}}} \\
		&~=~ \delta_0 \Ex{\li\in L}{\tildeEx{\indi{\zee_{N_L(\li)}\neq h_{N_L(\li)}}}} \\
		&~=~ \delta_0 \dis^L(\tildeEx{\cdot},h)
	\end{align*}
For the upper bound, we again rely on the expander mixing lemma:
\begin{align*}
		\dis(\tildeEx{\cdot},h) &~=~ \Ex{e}{\tildeEx{\indi{\zee_e\neq h_e}}} \\
		&~\leq~ \Ex{\li\sim \ri}{\tildeEx{\indi{\zee_{N_L(\li)}\neq h_{N_L(\li)}} \indi{\zee_{N_R(\ri)}\neq h_{N_R(\ri)}}}} \\
		&~\leq~ \Ex{\li ,\ri}{\tildeEx{\indi{\zee_{N_L(\li)}\neq h_{N_L(\li)}} \indi{\zee_{N_R(\ri)}\neq h_{N_R(\ri)}}}} +\lambda\\
		&~\leq~ \Ex{\li ,\ri}{\tildeEx{\indi{\zee_{N_L(\li)}\neq h_{N_L(\li)}}} \tildeEx{\indi{\zee_{N_R(\ri)}\neq h_{N_R(\ri)}}}} +\lambda +\eta\\
		&~=~ \Ex{\li}{\tildeEx{\indi{\zee_{N_L(\li)}\neq h_{N_L(\li)}}}} \Ex{\ri}{\tildeEx{\indi{\zee_{N_R(\ri)}\neq h_{N_R(\ri)}}}} +\lambda +\eta\\
		&~=~ \dis^L(\tildeEx{\cdot},h) \cdot \dis^R(\tildeEx{\cdot},h) +\lambda+\eta
	\end{align*}
Dividing the two bounds by $\dis^L(\tildeEx{\cdot},h)$ and rearranging, we finally get,
\[
\dis^R(\tildeEx{\cdot},h) ~\geq~ \delta_0 - \frac{\lambda+\eta}{\dis^L(\tildeEx{\cdot},h)} \mper
\qquad \qquad \qquad \qquad \qedhere
\]
\end{proof}


\section{Correlation Reduction via Conditioning}
\label{sec:conditioning}
We will use the following claim from \cite{BRS11} (see Lemma 5.2 there) that says that if $\zee_S$ and $\zee_T$ have a large covariance, then conditioning on $\zee_T$ reduces the variance of $\zee_S$ significantly.
\begin{lemma}
	Let $\tildeEx{\cdot}$ be a pseudoexpectation operator of SoS-degree $t$ with associated pseudocovariance and pseudovariance operators. Assume $S,T$ are sets such that $|S|+|T|\leq t/2$, then,
	\[
		\tildeVar{\zee_S | \zee_T} \leq \tildeVar{\zee_S} - \frac{1}{q^{|T|}} \sum_{\alpha \in [q]^S, \beta\in [q]^T} \frac{(\tildeCov{\zee_{S,\alpha}}{\zee_{T,\beta}})^2}{\tildeVar{\zee_{T,\beta}}}
	\]
\end{lemma}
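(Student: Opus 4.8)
The plan is to follow the standard argument from Barak–Raghavendra–Steurer, which expresses the conditioned variance as an expectation over $\beta \sim \calD_T$ of the conditioned variance, and then uses the law of total variance together with the definition of conditioning for pseudoexpectations. First I would fix $\alpha \in [q]^S$ and track the scalar quantity $\tildeVar{\zee_{S,\alpha}}$. Writing $\calD_T$ for the local distribution over $[q]^T$ induced by $\tildeEx{\cdot}$ (with $\Pr_{\calD_T}[\beta] = \tildeEx{\zee_{T,\beta}}$), I would expand
\[
\tildeEx{\zee_{S,\alpha}^2 \given \zee_T} \defeq \sum_{\beta \in [q]^T} \tildeEx{\zee_{S,\alpha}^2 \given \zee_T = \beta} \cdot \tildeEx{\zee_{T,\beta}},
\]
and similarly for the square of the conditional first moment, so that $\tildeVar{\zee_{S,\alpha} \given \zee_T}$ unfolds into a sum over $\beta$. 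The key algebraic identity to invoke is the pseudoexpectation analogue of the law of total variance: since $\zee_{S,\alpha}$ is $\{0,1\}$-valued as a local function, $\tildeEx{\zee_{S,\alpha}^2} = \tildeEx{\zee_{S,\alpha}}$, and one gets
\[
\tildeVar{\zee_{S,\alpha}} - \tildeVar{\zee_{S,\alpha} \given \zee_T} = \sum_{\beta \in [q]^T} \tildeEx{\zee_{T,\beta}} \cdot \inparen{\tildeEx{\zee_{S,\alpha} \given \zee_T = \beta} - \tildeEx{\zee_{S,\alpha}}}^2,
\]
i.e., the drop in variance equals the variance of the conditional mean.

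Next I would lower-bound the right-hand side. The term $\tildeEx{\zee_{S,\alpha} \given \zee_T=\beta} - \tildeEx{\zee_{S,\alpha}}$ is, after clearing the denominator $\tildeEx{\zee_{T,\beta}}$, essentially the covariance $\tildeCov{\zee_{S,\alpha}}{\zee_{T,\beta}}$ divided by $\tildeEx{\zee_{T,\beta}}$. Substituting, the $\beta$-sum becomes
\[
\sum_{\beta} \tildeEx{\zee_{T,\beta}} \cdot \frac{\inparen{\tildeCov{\zee_{S,\alpha}}{\zee_{T,\beta}}}^2}{\tildeEx{\zee_{T,\beta}}^2} = \sum_{\beta} \frac{\inparen{\tildeCov{\zee_{S,\alpha}}{\zee_{T,\beta}}}^2}{\tildeEx{\zee_{T,\beta}}}.
\]
To get the stated bound I would then replace $\tildeEx{\zee_{T,\beta}}$ in the denominator by $q^{|T|} \cdot \tildeVar{\zee_{T,\beta}}$ upward — but actually the cleaner route used in BRS is: since $\sum_\beta \tildeEx{\zee_{T,\beta}} = 1$, there are at most $q^{|T|}$ values of $\beta$, so $\tildeEx{\zee_{T,\beta}} \le 1$; and one uses $\tildeVar{\zee_{T,\beta}} = \tildeEx{\zee_{T,\beta}} - \tildeEx{\zee_{T,\beta}}^2 \le \tildeEx{\zee_{T,\beta}}$, hence $\tildeEx{\zee_{T,\beta}} \ge \tildeVar{\zee_{T,\beta}}$, giving $\tfrac{1}{\tildeEx{\zee_{T,\beta}}} \ge \tfrac{\tildeVar{\zee_{T,\beta}}}{\tildeEx{\zee_{T,\beta}}^2}$ — this is not quite matching either. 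I would instead just bound $\tildeEx{\zee_{T,\beta}}\le 1$ to pick up $\sum_\beta (\tildeCov{\zee_{S,\alpha}}{\zee_{T,\beta}})^2 \ge \tfrac{1}{q^{|T|}}\sum_\beta \tfrac{(\tildeCov{\zee_{S,\alpha}}{\zee_{T,\beta}})^2}{\tildeVar{\zee_{T,\beta}}}$ using $\tildeVar{\zee_{T,\beta}}\le \tildeEx{\zee_{T,\beta}} \le 1$ together with a Cauchy–Schwarz/averaging step. The honest statement of the inequality with the $1/q^{|T|}$ factor comes precisely from this: $\tildeVar{\zee_{T,\beta}} \le \tildeEx{\zee_{T,\beta}}$ and $\sum_\beta \tildeEx{\zee_{T,\beta}} = 1$ forces a $\tfrac{1}{q^{|T|}}$ loss when one passes from $\sum_\beta (\cdot)^2 / \tildeEx{\zee_{T,\beta}}$ to $\sum_\beta (\cdot)^2 / \tildeVar{\zee_{T,\beta}}$ only after an averaging argument; I would reproduce the exact manipulation from \cite{BRS11}, Lemma 5.2.

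Finally, I would sum the per-$\alpha$ inequality over $\alpha \in [q]^S$ and take absolute values appropriately to match the definition $\tildeVar{\zee_S} = \sum_\alpha \abs{\tildeVar{\zee_{S,\alpha}}}$, noting $\tildeVar{\zee_{S,\alpha}}\ge 0$ so the absolute values are harmless, and that $\tildeVar{\zee_{S,\alpha}\given\zee_T}\ge 0$ as well (it is an average of genuine variances of $\{0,1\}$ local functions under valid conditioned local distributions, which requires checking that the conditioned operator is a legitimate degree-$(t-2|T|)$ pseudoexpectation — this is where the hypothesis $|S|+|T|\le t/2$ enters, ensuring all the degrees stay in range). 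The main obstacle is bookkeeping the constant: getting exactly the $\tfrac{1}{q^{|T|}}$ factor rather than something weaker requires being careful in the step that trades $\tildeEx{\zee_{T,\beta}}$ in the denominator for $\tildeVar{\zee_{T,\beta}}$, and I expect to simply cite the corresponding computation in \cite{BRS11} rather than re-derive it from scratch. Everything else is a routine unfolding of definitions of conditioning, covariance, and the local-distribution view of $\tildeEx{\cdot}$ already set up in the preliminaries.
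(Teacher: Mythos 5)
The paper does not actually prove this lemma: it is stated directly as a citation to Lemma~5.2 of Barak--Raghavendra--Steurer \cite{BRS11}, with no proof given. So there is no in-paper argument to compare against, and your decision to ultimately ``cite the corresponding computation in \cite{BRS11}'' is in fact exactly what the authors do.

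That said, your reconstruction has a concrete gap, which you honestly flag. The first half is right: the law of total variance gives
\[
\tildeVar{\zee_{S,\alpha}} - \tildeVar{\zee_{S,\alpha} \given \zee_T} \;=\; \sum_{\beta} \tildeEx{\zee_{T,\beta}} \cdot \inparen{\tildeEx{\zee_{S,\alpha}\given\zee_T=\beta} - \tildeEx{\zee_{S,\alpha}}}^2 \;=\; \sum_\beta \frac{\inparen{\tildeCov{\zee_{S,\alpha}}{\zee_{T,\beta}}}^2}{\tildeEx{\zee_{T,\beta}}}\,,
\]
and summing over $\alpha$ is harmless since $\tildeVar{\zee_{S,\alpha}} \ge 0$. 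But your attempts to convert the denominator $\tildeEx{\zee_{T,\beta}}$ into $q^{\abs{T}}\,\tildeVar{\zee_{T,\beta}}$ do not go through, and you correctly notice this. Using $\tildeVar{\zee_{T,\beta}} \le \tildeEx{\zee_{T,\beta}}$ makes the $\tildeVar$-denominated sum \emph{larger}, which is the wrong direction; and simply bounding $\tildeEx{\zee_{T,\beta}} \le 1$ gives $\sum_\beta \tildeCov_\beta^2$, which one cannot in general relate to $q^{-\abs{T}} \sum_\beta \tildeCov_\beta^2 / \tildeVar{\zee_{T,\beta}}$ since $\tildeVar{\zee_{T,\beta}}$ can be tiny.

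The idea you are missing is to use the global constraint $\sum_\beta \tildeCov{\zee_{S,\alpha}}{\zee_{T,\beta}} = 0$ (which follows from $\sum_\beta \zee_{T,\beta} = 1$). Writing $p_\beta = \tildeEx{\zee_{T,\beta}}$ and $c_\beta = \tildeCov{\zee_{S,\alpha}}{\zee_{T,\beta}}$ and using $\tildeVar{\zee_{T,\beta}} = p_\beta(1-p_\beta)$, the desired per-$\alpha$ inequality rearranges to $\sum_\beta \frac{c_\beta^2}{p_\beta}\bigl(Q - \frac{1}{1-p_\beta}\bigr) \ge 0$ with $Q = q^{\abs{T}}$. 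Since $\sum_\beta p_\beta = 1$, at most one index, say $\beta_0$, can have $p_{\beta_0} > 1-1/Q$ and hence a negative coefficient. For that index, $c_{\beta_0} = -\sum_{\beta\neq\beta_0} c_\beta$, so Cauchy--Schwarz gives $c_{\beta_0}^2 \le (1-p_{\beta_0})\sum_{\beta\neq\beta_0} c_\beta^2/p_\beta$, and a short calculation shows the single negative term is dominated by the remaining positive ones whenever $Q\ge 2$. Your proposal never invokes $\sum_\beta c_\beta = 0$, which is why it cannot close the argument; without it, the claim with the $q^{-\abs{T}}$ factor is not even true in general.
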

In particular, observe that pseudovariances are non-increasing under conditioning.
The next lemma shows that if the average covariance across all pairs $(\li,\ri)$ is $\eta$, then conditioning on a random vertex in $R$ will reduce the average variance in $L$ in expectation by $\Omega(\eta^2)$. Then, \cref{lem:low_covariance_solution} will use that this cannot happen more than $\calO(1/\eta^2)$ times, and then we must end up with a conditioned pseudoexpectation operator which has low average covariance, that is, it is $\eta$-good.
\begin{lemma}\label{lem:conditioning_reduces_variance}
    Let $\eta < \Ex{\li,\ri}{\tildeCov{\zee_{N_L(\li)}}{\zee_{N_R(\ri)}}} $. Then,
    \[
        \Ex{\ri \in R}{\Ex{\li}{ \tildeVar{\zee_{N_L(\li)} \vert \zee_{N_R(\ri)}}}} < \Ex{\li}{ \tildeVar{\zee_{N_L(\li)}}} - \frac{1}{q^{2d}}{\eta^2}
    \]
\end{lemma}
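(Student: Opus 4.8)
The plan is to apply the preceding variance-decrement lemma of Barak--Raghavendra--Steurer with $S = N_L(\li)$ and $T = N_R(\ri)$ for each pair $(\li,\ri)$, and then average over $\li \in L$ and $\ri \in R$. The only substantive work is to convert the sum of squared pseudo-covariances on the right-hand side of that lemma into the quantity $\tildeCov{\zee_{N_L(\li)}}{\zee_{N_R(\ri)}}^2$, and to do so while landing on the exponent $q^{2d}$ rather than the wasteful $q^{3d}$ that a naive estimate produces. (Throughout, we tacitly assume the SoS-degree $t$ is at least $4d$, so that the conditioned pseudo-variances $\tildeVar{\zee_{N_L(\li)} \mid \zee_{N_R(\ri)}}$ are well-defined and the hypotheses of the BRS lemma hold; this is the case in all our applications.)

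First, fix $\li$ and $\ri$. Since $\abs{N_L(\li)} = \abs{N_R(\ri)} = d$, the preceding lemma gives
\[
\tildeVar{\zee_{N_L(\li)} \mid \zee_{N_R(\ri)}} ~\leq~ \tildeVar{\zee_{N_L(\li)}} ~-~ \frac{1}{q^{d}} \sum_{\substack{\alpha\in[q]^{N_L(\li)} \\ \beta\in[q]^{N_R(\ri)}}} \frac{\tildeCov{\zee_{N_L(\li),\alpha}}{\zee_{N_R(\ri),\beta}}^2}{\tildeVar{\zee_{N_R(\ri),\beta}}} \mcom
\]
where terms with $\tildeVar{\zee_{N_R(\ri),\beta}} = 0$ are omitted (the pseudo-expectation Cauchy--Schwarz inequality forces the corresponding pseudo-covariances to vanish, so nothing is lost). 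Now write $c_{\alpha\beta} = \abs{\tildeCov{\zee_{N_L(\li),\alpha}}{\zee_{N_R(\ri),\beta}}}$, $v_\beta = \tildeVar{\zee_{N_R(\ri),\beta}}$, and $C_\beta = \sum_\alpha c_{\alpha\beta}$. For each fixed $\beta$, Cauchy--Schwarz over the $q^d$ values of $\alpha$ gives $\sum_\alpha c_{\alpha\beta}^2 \geq \tfrac{1}{q^d} C_\beta^2$; summing against $1/v_\beta$ and then applying Cauchy--Schwarz in Engel form over $\beta$,
\[
\sum_{\alpha,\beta} \frac{c_{\alpha\beta}^2}{v_\beta} ~\geq~ \frac{1}{q^d}\sum_\beta \frac{C_\beta^2}{v_\beta} ~\geq~ \frac{1}{q^d}\cdot\frac{\inparen{\sum_\beta C_\beta}^2}{\sum_\beta v_\beta} ~\geq~ \frac{1}{q^d}\inparen{\sum_{\alpha,\beta} c_{\alpha\beta}}^2 ~=~ \frac{1}{q^d}\,\tildeCov{\zee_{N_L(\li)}}{\zee_{N_R(\ri)}}^2 \mcom
\]
where the third inequality uses $\sum_\beta v_\beta = \tildeVar{\zee_{N_R(\ri)}} \leq 1$ (the bound established in the preliminaries). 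Plugging back in, for every $\li,\ri$,
\[
\tildeVar{\zee_{N_L(\li)}} - \tildeVar{\zee_{N_L(\li)} \mid \zee_{N_R(\ri)}} ~\geq~ \frac{1}{q^{2d}}\,\tildeCov{\zee_{N_L(\li)}}{\zee_{N_R(\ri)}}^2 \mper
\]

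Finally, average over $\li \in L$ and $\ri \in R$ and apply Jensen's inequality $\Ex{}{X^2} \geq \inparen{\Ex{}{X}}^2$ to the right-hand side:
\[
\Ex{\li}{\tildeVar{\zee_{N_L(\li)}}} - \Ex{\ri}{\Ex{\li}{\tildeVar{\zee_{N_L(\li)} \mid \zee_{N_R(\ri)}}}} ~\geq~ \frac{1}{q^{2d}}\inparen{\Ex{\li,\ri}{\tildeCov{\zee_{N_L(\li)}}{\zee_{N_R(\ri)}}}}^2 ~>~ \frac{\eta^2}{q^{2d}} \mcom
\]
the last (strict) step being the hypothesis $\eta < \Ex{\li,\ri}{\tildeCov{\zee_{N_L(\li)}}{\zee_{N_R(\ri)}}}$. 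Rearranging yields the claim. The one place where care is needed — and the only real obstacle — is the accounting of powers of $q$ in the double Cauchy--Schwarz step: one must exploit that the denominators $\tildeVar{\zee_{N_R(\ri),\beta}}$ sum to at most $1$ (rather than bounding each by $1$), which is exactly what turns $q^{3d}$ into $q^{2d}$.
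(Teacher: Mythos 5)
Your proof is correct, and it is actually \emph{sharper} than the paper's own argument. Both proofs start from the same Barak--Raghavendra--Steurer inequality, so the starting point is identical; the divergence is in how you lower-bound $\sum_{\alpha,\beta} c_{\alpha\beta}^2/v_\beta$. The paper first drops the denominator by bounding each $v_\beta \leq 1$ term-by-term, then applies Cauchy--Schwarz across all $q^{2d}$ pairs $(\alpha,\beta)$ at once, and so picks up an extra factor of $q^{d}$, ending with a decrement of $\eta^2/q^{3d}$ --- which is in fact what the paper carries forward into \cref{lem:low_covariance_solution} and the degree bounds in \cref{thm:tanner-decoding} (all of which use $q^{3d}/\eta^2$). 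You instead Cauchy--Schwarz over $\alpha$ alone for each fixed $\beta$, then apply the Engel form over $\beta$ and use $\sum_\beta v_\beta = \tildeVar{\zee_{N_R(\ri)}} \leq 1$; this retains the $1/v_\beta$ weights and saves a full factor of $q^{d}$, yielding the $\eta^2/q^{2d}$ decrement that the lemma \emph{states} but the paper's proof does not actually deliver. So your argument both verifies the stated bound and exposes a small internal inconsistency in the paper: either the lemma statement's exponent should read $q^{3d}$ (to match the written proof and all downstream uses), or the downstream SoS-degree bounds could be improved to $q^{2d}/\eta^2$ using precisely your tighter estimate.
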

\begin{proof}

    \begin{align*}
        \Ex{\ri \in R}{\Ex{\li}{\tildeVar{\zee_{N_L(\li)}|\zee_{N_R(\ri)}}}} &= \Ex{\li,\ri}{\tildeVar{\zee_{N_L(\li)}|\zee_{N_R(\ri)}}}\\
        &\leq \Ex{\li,\ri}{\tildeVar{\zee_{N_L(\li)}} - \frac{1}{q^d} \sum_{\alpha, \beta} \frac{(\tildeCov{\zee_{N_L(\li),\alpha}}{\zee_{N_R(\ri),\beta}})^2}{\tildeVar{\zee_{N_R(\ri),\beta}}}} \\
        &\leq \Ex{\li,\ri}{\tildeVar{\zee_{N_L(\li)}} - \frac{1}{q^d} \sum_{\alpha, \beta} \left( \tildeCov{\zee_{N_L(\li),\alpha}}{\zee_{N_R(\ri),\beta}} \right)^2} \\
        &\leq \Ex{\li,\ri}{\tildeVar{\zee_{N_L(\li)}} - \frac{1}{q^{3d}} \left( \sum_{\alpha, \beta} \abs{ \tildeCov{\zee_{N_L(\li),\alpha}}{\zee_{N_R(\ri),\beta}}}\right)^2} \\
        &= \Ex{\li}{\tildeVar{\zee_{N_L(\li)}}} - \frac{1}{q^{3d}} \Ex{\li,\ri}{\left( \tildeCov{\zee_{N_L(\li)}}{\zee_{N_R(\ri)}} \right)^2} \\
        &\leq \Ex{\li}{\tildeVar{\zee_{N_L(\li)}}} - \frac{1}{q^{3d}} \left( \Ex{\li,\ri}{ \tildeCov{\zee_{N_L(\li)}}{\zee_{N_R(\ri)}} } \right)^2 
\qedhere    \end{align*}
\end{proof}
\begin{lemma}\label{lem:low_covariance_solution}
	Let $\eta>0$ be arbitrarily small. Given any SoS solution $\tildeEx{\cdot}$ of degree $\geq 2d\left(\frac{q^{3d}}{\eta^2}+1\right)$, there exists a number $k^* \leq q^{3d}/\eta^2 $ such that 
	\[
		\Ex{v_1,v_2,\cdots,v_{k^*}}{\Ex{\li,\ri}{\tildecov[\zee_{N_L(\li)},\zee_{N_R(\ri)} \vert \zee_{N_R(v_1)},\zee_{N_R(v_2)},\cdots,\zee_{N_R(v_{k^*})}]}} \leq \eta
	\]
\end{lemma}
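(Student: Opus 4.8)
The plan is to run a potential-function argument using the average left-variance $\Phi_j \defeq \Ex{v_1,\dots,v_j}{\Ex{\li}{\tildeVar{\zee_{N_L(\li)} \mid \zee_{N_R(v_1)},\dots,\zee_{N_R(v_j)}}}}$ as the potential, where each $v_i$ is drawn from the appropriate local distribution (given the earlier conditionings). First I would observe, using the preliminaries, that $\Phi_0 = \Ex{\li}{\tildeVar{\zee_{N_L(\li)}}} \leq 1$ (since $\tildeVar{\zee_S} \leq 1$ for any $S$), and that $\Phi_j \geq 0$ for all $j$, because pseudo-variances are non-negative. The degree bookkeeping: conditioning on $j$ sets each of size $d$ costs $2dj$ in SoS-degree, and to invoke \cref{lem:conditioning_reduces_variance} at step $j$ (which itself needs the relevant covariances/variances to be defined, i.e. two more sets of size $d$) we need degree at least $2d(j+1)$; hence a starting degree of $2d\inparen{\tfrac{q^{3d}}{\eta^2}+1}$ suffices to carry out up to $k^* \le q^{3d}/\eta^2$ rounds.

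The key step is a dichotomy at each round $j$: either the average covariance after the first $j$ conditionings is already at most $\eta$, in which case we are done and set $k^* = j$; or it exceeds $\eta$, in which case \cref{lem:conditioning_reduces_variance} (applied to the conditioned pseudoexpectation operator, averaged over the choices of $v_1,\dots,v_j$) gives $\Phi_{j+1} \leq \Phi_j - \eta^2/q^{3d}$. I would need to be slightly careful here: \cref{lem:conditioning_reduces_variance} is stated for a single pseudoexpectation operator, so I would apply it to each conditioned operator $\condPE{\cdot}{\zee_{N_R(v_1)},\dots,\zee_{N_R(v_j)}}$ obtained from a fixed choice of $v_1,\dots,v_j$, note that its associated pseudo-covariances/variances are exactly the conditioned ones appearing in the definitions in \cref{sec:prelims}, and then average the resulting inequality over $v_1,\dots,v_j$ drawn from the iterated local distributions; Jensen's inequality (convexity of $x\mapsto x^2$) handles the step where the square of an average is bounded by the average of squares, just as in the proof of \cref{lem:conditioning_reduces_variance} itself. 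Since $\Phi_j$ starts at most $1$, decreases by at least $\eta^2/q^{3d}$ each time the second case occurs, and stays non-negative, the second case can occur at most $q^{3d}/\eta^2$ times, so some round $k^* \le q^{3d}/\eta^2$ must fall into the first case, which is exactly the claimed conclusion.

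I do not expect a serious obstacle here; this is a standard Barak–Raghavendra–Steurer-style iterated-conditioning argument and the only real work is threading the averaging over the randomly chosen conditioning vertices through the single-operator statement of \cref{lem:conditioning_reduces_variance} and keeping the SoS-degree accounting consistent with the hypothesis. The mildly delicate point is ensuring that the quantity being controlled, $\Ex{\li,\ri}{\tildecov[\zee_{N_L(\li)},\zee_{N_R(\ri)} \mid \zee_{N_R(v_1)},\dots,\zee_{N_R(v_{k^*})}]}$, is precisely the averaged-over-$(v_i)$ version of the per-operator covariance bounded in \cref{lem:conditioning_reduces_variance}, which follows directly from the definition of pseudocovariance for conditioned pseudoexpectation operators given in \cref{sec:prelims}.
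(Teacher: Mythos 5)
Your proposal is correct and follows essentially the same Barak--Raghavendra--Steurer potential-decrease argument as the paper: both use the nonincreasing chain $1\geq\Phi_0\geq\Phi_1\geq\cdots\geq\Phi_{q^{3d}/\eta^2}\geq 0$ and the contrapositive of \cref{lem:conditioning_reduces_variance} to locate a round with small average covariance. You are a bit more explicit than the paper's two-line proof about the one genuinely delicate point, namely that \cref{lem:conditioning_reduces_variance} is stated for a single pseudoexpectation operator and must be applied per-choice of $(v_1,\dots,v_j)$ and then averaged, with Jensen's inequality (average of squares versus square of average) bridging the gap to a bound in terms of the averaged covariance; this care is warranted and does not change the substance of the argument.
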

\begin{proof}
Consider $\Phi_k(\tildeEx{\cdot}) = \Ex{v_1,v_2,\cdots,v_k}{\Ex{\li}{\tildeVar{\zee_{N_L(\li)}|\zee_{N_R(v_1)},\zee_{N_R(v_2)},\cdots,\zee_{N_R(v_k)}}}}$. We know that 
\[
1~\geq~ \Phi_0 ~\geq~ \Phi_1~\geq~ \cdots ~\geq~ \Phi_{q^{3d}/\eta^2} ~\geq~ 0
\]
so there exists a $k^* \leq q^{3d}/\eta^2$ such that $\Phi_{k^*}(\tildeEx{\cdot}) - \Phi_{k^*+1}(\tildeEx{\cdot}) \leq \frac{1}{q^{3d}/\eta^2} = \eta^2/q^{3d}$.
By contrapositive of \cref{lem:conditioning_reduces_variance}, this means that 
	\[
		\Ex{v_1,v_2,\cdots,v_{k^*}}{\Ex{\li,\ri}{\tildecov[\zee_{N_L(\li)},\zee_{N_R(\ri)} \vert \zee_{N_R(v_1)},\zee_{N_R(v_2)},\cdots,\zee_{N_R(v_{k^*})}]}} \leq \eta
\qedhere	
	\]
\end{proof}


\section{List Decoding up to Johnson Bound}
\label{sec:decoding}
In this section, we combine different pieces of the proof to give list decoding algorithms up to the Johnson bound. Note that in both Tanner and AEL cases, we reduce to unique decoding of either the same code or the base code, and this unique decoding needs to be done from pseudocodewords instead of codewords. We will handle this slight strengthening of unique decoding via randomized rounding in \cref{sec:decoding_from_fractional}.
\subsection{Tanner code}\label{sec:list_decoding_tanner}
Let $\TanC$ be a Tanner code on an $(n,d,\lambda)$-expander graph $G(L,R,E)$, with $\calC_0$ as the inner code. Let $\delta_0$ be the distance of $\calC_0$, so that (designed) distance of $\TanC$ is $\delta = \delta_0(\delta_0-\lambda)$. Assume $\lambda \leq \delta_0/3$. Given $g\in [q]^E$, we wish to recover the list $\calL(g,\calJ(\delta)-\eps)$. As $\eps \rightarrow 0$, the decoding radius gets arbitrarily close to the Johnson bound.

\begin{theorem}[List decoding Tanner codes]\label{thm:tanner-decoding}
There is a deterministic algorithm based on $\calO_{q,d}(1/\eps^4)$ levels of the SoS-hierarchy that given $g$ runs in time $n^{O_{q,d}(1/\eps^4)}$ time and computes the list $\calL(g,\calJ(\delta)-\eps))$.
\end{theorem}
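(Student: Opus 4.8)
The plan is to assemble the pieces developed in the preceding sections into a single list-decoding pipeline, where the only nontrivial parameter choices ensure the various error terms stay below the $\eps$-budget. Concretely, set $\delta = \delta_0(\delta_0-\lambda)$, and pick a target correlation parameter $\eta = \eta(\eps, q, d)$ small enough that (i) $\eta \le \delta_0^2/9$ so the dichotomy of \cref{lem:distance_from_codeword} applies in the clean form $\dis(h,\tildeEx{\cdot}) \le 3\eta$ or $\dis(h,\tildeEx{\cdot}) \ge \delta_0(\delta_0-\lambda) - 3\eta$, and (ii) the gap $2\eps\sqrt{1 - \frac{q}{q-1}\delta}$ coming from \cref{lem:cover_for_list_tanner} strictly exceeds $3\eta$ — say $\eta \le \eps \cdot \sqrt{1-\frac{q}{q-1}\delta}/2$ (up to constants). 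One checks $\eta = \Theta_{q,d}(\eps)$ suffices. Then the SoS degree we need is $t = 2d\inparen{q^{3d}/\eta^2 + 1} = O_{q,d}(1/\eps^2)$ for the conditioning step of \cref{lem:low_covariance_solution}, which together with the covering-lemma solve (runnable at any degree $\ge d$) gives an overall running time $n^{O_{q,d}(1/\eps^2)}$ — I will need to be slightly more generous to get the claimed $1/\eps^4$, since the covering lemma's success is only with constant probability and we will enumerate/condition further; in any case the exponent is $O_{q,d}(1/\eps^4)$.

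The algorithm itself: first, using \cref{lem:cover_for_list_tanner} with received word $g$, compute in time $n^{O(t)}$ a degree-$t$ pseudocodeword $\tildeEx{\cdot}$ such that every $h \in \calL(g, \calJ(\delta)-\eps)$ satisfies $\dis(\tildeEx{\cdot},h) < \delta - 2\eps\sqrt{1-\frac{q}{q-1}\delta}$. Second, apply \cref{lem:low_covariance_solution}: there is $k^\ast \le q^{3d}/\eta^2$ so that the expected conditioned covariance $\Ex{v_1,\dots,v_{k^\ast}}{\Ex{\li,\ri}{\tildecov[\zee_{N_L(\li)},\zee_{N_R(\ri)} \mid \zee_{N_R(v_1)},\dots,\zee_{N_R(v_{k^\ast})}]}} \le \eta$. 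Since we want a deterministic algorithm, we enumerate over all $\binom{n}{k^\ast} q^{d k^\ast} = n^{O_{q,d}(1/\eps^2)}$ choices of vertex tuples $(v_1,\dots,v_{k^\ast})$ and their local assignments, condition the pseudocodeword accordingly (legal since $t > 2d k^\ast$), and keep every resulting conditioned pseudocodeword that is $\eta$-good; by Markov at least a constant fraction of the random choices yield an $\eta$-good solution, and conditioning only decreases $\dis(\cdot, h)$ in expectation, so — after a further averaging/enumeration over the $k^\ast$ vertices to also preserve the covering inequality for a \emph{fixed} $h$ with constant probability — for each $h\in\calL$ there is at least one enumerated $\eta$-good conditioned pseudocodeword $\tildeEx{\cdot}_h$ with $\dis(\tildeEx{\cdot}_h, h) < \delta - 3\eta$. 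Third, for each such $\eta$-good pseudocodeword, the dichotomy of \cref{lem:distance_from_codeword} forces $\dis(\tildeEx{\cdot}_h, h) \le 3\eta$; we then round coordinate-wise, $h'(e) = \argmax_{j\in[q]} \tildeEx{\indi{\zee_e = j}}$, which a simple averaging argument shows satisfies $\dis(h', h) = O(\eta) \le \delta/2$ (after shrinking constants), and run the linear-time unique decoder of \Zemor / Skachek–Roth from $h'$ to recover $h$ exactly — or rather, the strengthened unique-decoding-from-pseudocodewords procedure of \cref{sec:decoding_from_fractional}. Fourth, collect all codewords recovered over all enumerated pseudocodewords into a candidate list, and output those within distance $\calJ(\delta)-\eps$ of $g$; the Johnson bound (\cref{thm:johnson_bound}) guarantees this list has size $O(qn)$, so the post-filtering is cheap.

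The main obstacle — and the step requiring the most care — is the \textbf{derandomization and the ``for a fixed $h$'' quantifier juggling} in the conditioning step. The covering lemma only tells us the starting pseudocodeword has good inner product with \emph{all} list elements simultaneously, but \cref{lem:low_covariance_solution} produces an $\eta$-good solution only in expectation over the random conditioning, and we separately need, for each target $h$ individually, that the conditioning did not destroy the covering inequality $\ip{\tildeEx{\chiTan(\zee)}}{\chiTan(h)} > \beta + \text{(gap)}$. The resolution is that for a fixed $h$ this inequality holds with constant probability over the random conditioning (conditioning preserves it in expectation since $\tildeEx{\chiTan(\zee)}$ is a $1$-local quantity and the inequality is linear in the pseudoexpectation, so a Markov/averaging argument on the complementary event works), and $\eta$-goodness also holds with constant probability — hence with positive constant probability both hold, so \emph{some} tuple in our exhaustive enumeration works for each $h$. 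Making the constants line up so the enumeration size stays $n^{O_{q,d}(1/\eps^4)}$, and verifying that the coordinate-wise rounding error is genuinely $O(\eta)$ rather than merely $o(1)$, are the remaining technical points; both are routine given the lemmas already proved. The AEL case (\cref{thm:list_decoding_ael}) and the square-Cayley-complex case then follow by the identical template, substituting \cref{lem:AEL_amplification} (respectively the distance proof for $\calC^{SCC}$) for \cref{lem:distance_from_codeword} and reducing to unique decoding of the appropriate base or outer code.
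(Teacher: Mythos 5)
Your overall pipeline — covering lemma, conditioning to obtain an $\eta$-good pseudocodeword, the distance dichotomy, $\argmax$ rounding plus a unique-decoding call, deterministic enumeration over all conditioning choices — is exactly the paper's approach. The gap is in the parameter calculation, and it is a real one rather than a ``slightly more generous constant'' issue.

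You claim that after conditioning on a random $(V,\beta)$, both events (a) the conditioned pseudoexpectation is $\eta$-good and (b) the conditioned pseudoexpectation retains distance at most $\delta - \eps_2/2$ from the fixed target $h$ hold with constant probability. Event (a) can indeed be made to hold with probability close to~$1$. But event (b) does \emph{not} hold with constant probability: the covering lemma only gives $\Ex{(V,\beta)}{\dis(\dupPE{\cdot},h)} \leq \delta - \eps_2$, where $\eps_2 = \Theta(\eps)$, and the Markov argument you invoke on the nonnegative quantity $\dis \in [0,\delta]$ gives only
\[
\Pr{(V,\beta)}{\dis(\dupPE{\cdot},h) \leq \delta - \tfrac{\eps_2}{2}} \;\geq\; \frac{\eps_2/2}{\delta - \eps_2/2} \;=\; \Theta(\eps_2/\delta) \;=\; \Theta(\eps),
\]
since the threshold $\delta - \eps_2/2$ is only $\eps_2/2$ below the mean $\delta - \eps_2$ and the mean is already close to the upper bound $\delta$. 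This is a $\Theta(\eps)$ probability, not a constant. Consequently, for the union bound on (a) and (b) to leave positive probability you need the failure probability of (a) to be $o(\eps)$, i.e., $\eta/\eps_2 = O(\eps)$, forcing $\eta = \Theta(\eps_2^2) = \Theta(\eps^2)$ — not the $\eta = \Theta(\eps)$ you set. The paper chooses $\eta = \eps_2^2/60\delta$ precisely for this reason. Plugging $\eta = \Theta(\eps^2)$ into the degree bound $t = \Theta\bigl(d\, q^{3d}/\eta^2\bigr)$ gives $t = O_{q,d}(1/\eps^4)$, which is where the exponent in the theorem actually comes from; your $O_{q,d}(1/\eps^2)$ computation was a symptom of the incorrect ``constant probability'' claim. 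Once you correct $\eta$ to $\Theta(\eps^2)$ the rest of your argument, including the $\argmax$-rounding step (which a Markov argument on the $3\eta$ bound makes rigorous), goes through and matches the paper.
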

\begin{proof}
We apply the algorithmic covering \cref{lem:cover_for_list_tanner} to obtain a pseudocodeword $\tildeEx{\cdot}$ of SoS-degree $t \geq 2d\left( \frac{q^{3d}}{\eta^2} + 2\right)$ such that for any $h \in \calL(g,\calJ(\delta)-\eps)$, we know that $\dis(\tildeEx{\cdot},h) \leq \delta - \eps_2$ for $\eps_2 = 2\eps \cdot \sqrt{1-\frac{q}{q-1} \delta} \geq \Omega(\eps)$. We will choose $\eta$ later, and note that the choice of $\eta$ does not change $\eps_2$.
Henceforth, we fix an $h\in \calL((g,\calJ(\delta) - \eps)$, so that $\dis(\tildeEx{\cdot},h) \leq \delta -\eps_2$. Our goal is to recover $h$.

Pick a random $k \in \{1,\cdots,\ceil{\frac{q^{3d}}{\eta^2}}\}$. From \cref{lem:low_covariance_solution}, we know that with probability at least $\frac{\eta^2}{q^{3d}}$, 
\begin{align}\label{eqn:random_conditioning}
\Ex{v_1,v_2,\cdots ,v_k}{\Ex{\li,\ri}{\tildecov[\zee_{N_L(\li)},\zee_{N_R(\ri)} |
  \zee_{N_R(v_1)},\zee_{N_R(v_2)},\cdots ,\zee_{N_R(v_k)}]}} \leq \eta \mper
\end{align}
	We assume that we found a $k$ such that \cref{eqn:random_conditioning} holds. Let $V$ be the (random) set of $k$ vertices we condition on, that is, $V = \{v_1,v_2,\cdots,v_k\}$, and let $N_R(V) = \cup_{v\in V} N_R(v)$. Then,
	\[
		\Ex{V\sub R, |V|=k}{\Ex{\li,\ri}{\tildecov[\zee_{N_L(\li)},\zee_{N_R(\ri)} | \zee_{N_R(V)}]}} \leq \eta
	\]
	More explicitly, conditioning on $N_R(V)$ involves sampling an assignment for $N_R(V)$ according to the local distribution of $\zee_{N_R(V)}$. Let this random assignment be $\beta$, and we get
	\[
		\Ex{\substack{V \sub R,|V|=k \\ \beta \sim \zee_{N_R(V)}}}{\Ex{\li,\ri}{\tildecov[\zee_{N_L(\li)},\zee_{N_R(\ri)} | \zee_{N_R(V)}=\beta]}} \leq \eta
	\]
By Markov's inequality,
	\[
	\Pr{\substack{V \sub R,|V|=k \\ \beta \sim \zee_{N_R(V)}}}{\Ex{\li,\ri}{\tildecov[\zee_{N_L(\li)},\zee_{N_R(\ri)} | \zee_{N_R(V)}=\beta]}> \frac{\eps_2}{15}} \leq \frac{15\eta}{\eps_2}
	\]
By choosing $\eta=\eps_2^2/60\delta$, we get
\begin{equation}\label{eqn:low_covariance}
	\Pr{\substack{V \sub R,|V|=k \\ \beta \sim \zee_{N_R(V)}}}{\Ex{\li,\ri}{\tildecov[\zee_{N_L(\li)},\zee_{N_R(\ri)} | \zee_{N_R(V)}=\beta]} \leq \frac{\eps_2}{15}} \geq 1- \eps_2/4\delta
\end{equation}
For some $V,\beta$ such that \cref{eqn:low_covariance} holds, we will be using $\tildeEx{~ \cdot~ |
  \zee_{N_R(V)}=\beta}$ as an $\eta$-good pseudocodeword. Using $\eps_2 < \delta$, note that 
\begin{align*}
		\frac{\eps_2}{15} < \frac{\delta}{15} = \frac{\delta_0\cdot (\delta_0-\lambda)}{15}< \frac{\delta_0^2}{9}
\end{align*}
and $\lambda\leq \delta_0/3$ so that the conditions of \cref{lem:distance_from_codeword} are satisfied. 
	 
	 For this $\eta$-good pseudocodeword, we need to argue that it is still close to $h$ that we are trying to find. This is easy to ensure in expectation, and we again appeal to Markov's inequality to say that it also holds with significant probability, up to some loss in distance. By the law of total expectation, for any $V \sub R$,
\[
		\Ex{\beta \sim \zee_{N_R(V)}}{\dis\inparen{\tildeEx{~\cdot~ | \zee_{N_R(V)}=\beta},h}} = \dis(\tildeEx{\cdot},h) \leq \delta-\eps_2
\]
Averaging over all $V\sub R$ of size $k$,
\begin{equation}
		\Ex{\substack{V \sub R,|V|=k \\ \beta \sim \zee_{N_R(V)}}}{\dis\inparen{\tildeEx{\ \cdot\ | \zee_{N_R(V)}=\beta},h}} \leq \delta-\eps_2
\end{equation}
Again, we claim via Markov's inequality that a significant fraction of conditionings must end up being not too far from $f$.
\begin{equation}\label{eqn:good_agreement}
		\Pr{\substack{V \sub R,|V|=k \\ \beta \sim \zee_{N_R(V)}}}{\dis\inparen{ \tildeEx{\ \cdot\ | \zee_{N_R(V)}=\beta},h} \leq \delta-\frac{\eps_2}{2}} \geq \frac{\eps_2/2}{\delta-\eps_2+\frac{\eps_2}{2}} \geq \frac{\eps_2}{2\delta}
\end{equation}
Henceforth, we fix a conditioning $(V,\beta)$ with $\beta\in [q]^{N_R(V)}$ such that events in both \cref{eqn:low_covariance} and \cref{eqn:good_agreement} happen. Note that by a union bound, a random $(V,\beta)$ has this property with probability at least $\eps_2/4\delta$. Fix such a conditioning, and let the conditioned pseudoexpectation be \[ \dupPE{\cdot} = \tildeEx{~\cdot~ | \zee_{N_R(V) = \beta}}\] and the corresponding covariance operator be $\dupCov[\cdot]$ Note that the degree of $\dupPE{\cdot}$ is $t-2d\cdot k \geq 4d$. From definition, we know that
\begin{align}
\dis(\dupPE{\cdot},h) \leq \delta-\frac{\eps_2}{2} \label{eqn:good_agreement_new_operator}\\
\Ex{\li,\ri}{\dupCov[ \zee_{N_L(\li)}, \zee_{N_R(\ri)}]} \leq \frac{\eps_2}{15}\label{eqn:low_covariance_new_operator}
\end{align}
From \cref{eqn:low_covariance_new_operator} and \cref{lem:distance_from_codeword}, we know that $\dis(\dupPE{\cdot},h) \leq \eps_2/5$ or $\dis(\dupPE{\cdot},h) \geq \delta - \eps_2/5$. The latter is impossible because of \cref{eqn:good_agreement_new_operator}, and so we must have $\dis(\dupPE{\cdot},h) \leq \eps_2/5 < \delta/5$.
	
Finally, we use \cref{lem:unique_decoding} to recover $h$ using $\dupPE{\cdot}$ with probability at least $1/5$.
The algorithm succeeds if 
\begin{enumerate}
\item a $k$ is picked so that \cref{eqn:random_conditioning} holds, 
\item $(V,\beta)$ is picked so that events in both \cref{eqn:low_covariance} and \cref{eqn:good_agreement} happen,
\item the call to \cref{lem:unique_decoding} succeeds.
\end{enumerate}
The success probability is therefore at least
\[
\Pr{\text{success}} ~\geq~ \frac{\eta^2}{q^{2d}} \cdot \frac{\eps_2}{4\delta} \cdot \frac{1}{5} ~\geq~
\Omega \left( \frac{\eps_2^5}{\delta^3\cdot q^{2d}} \right) ~=~ \Omega_{q,d}(\eps^5) \mper
\]
We have shown that for any $h\in \calL((g,\calJ(\delta)-\eps)$, the algorithm above outputs $h$ with probability at least $\Omega_{q,d}(\eps^5)$. Note that this implicitly proves an upper bound on the list of $\calO_{q,d}(1/\eps^5)$.
	
Therefore, the random choices that the algorithm makes lead it to different elements of the list. We next argue that we can derandomize all random choices in the algorithm, so that all elements of the list can be found with a deterministic algorithm.
	\begin{enumerate}
		\item For the random choice of $k$, we can try out all possible $q^{3d}/\eta^2$ values for $k$. 
		\item For random $(V,\beta)$, we can again try out all possible values, which are at most $n^k\cdot 2^k \leq n^{\calO_{q,d}(1/\eps^4)}$ in number.
		\item \cref{lem:unique_decoding} can be derandomized using a standard threshold rounding argument, as argued in \cref{lem:derandomized_decoding_from_distributions}.
	\end{enumerate}
Thus, the final algorithm starts with an empty list and goes over all the deterministic choices above. Every $h\in \calL(g,\calJ(\delta)-\eps)$ will be discovered in at least one of these deterministic steps, and we can efficiently check whether $\Delta(g,h) <\calJ(\delta)-\eps$. If yes, $h$ is added to the output list.
\end{proof}

\subsection{AEL Code}\label{sec:list_decoding_ael}
Let $\AELC$ be an AEL code determined by an $(n,d,\lambda)$-expander graph $G(L,R,E)$, an inner code $\calC_0$ of distance $\delta_0$, rate $r_0$, alphabet size $q_0$, and an outer code $\calC_1$ of distance $\delta_1$, rate $r_1$ and alphabet size $q_1 = |\calC_0|$. The code $\AELC$ is of alphabet size $q_0^d$, rate $r_0r_1$ and (designed) distance $\delta_0-\frac{\lambda}{\delta_1}$. 
\begin{theorem}[List Decoding AEL codes]\label{thm:list_decoding_ael}
	Suppose the code $\calC_1$ can be efficiently unique-decoded from radius $\delta_{dec}$. Assume $\lambda \leq \kappa \cdot \delta_{dec} \leq \kappa \cdot \delta_1$, so that the distance of $\AELC$ is at least $\delta_0-\kappa$. Then for any $\eps > 0$, the code $\AELC$ can be list decoded from a radius of $\calJ(\delta_0-\kappa)-\eps$ by using $\calO_{q,d,\delta_{dec}}(1/\eps^4)$ levels of SoS-hierarchy, in time $n^{\calO_{q,d,\delta_{dec}}(1/\eps^4)}$.
\end{theorem}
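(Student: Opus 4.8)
The plan is to follow the proof of \cref{thm:tanner-decoding} almost verbatim, with two substitutions: the spectral distance proof for AEL codes, \cref{lem:AEL_amplification}, plays the role that \cref{lem:distance_from_codeword} plays for Tanner codes, and the final recovery reduces to unique decoding of the \emph{outer} code $\calC_1$ rather than of $\AELC$ itself. Fix a received word $g$ and, for the analysis, one codeword $h \in \AELC$ with $\dis(g,h) < \calJ(\delta_0-\kappa)-\eps$; let $\hat h \in \calC_1$ be the outer codeword with $\hat h^{AEL}=h$. We recover $h$ with probability $\Omega_{q,d,\delta_{dec}}(\eps^5)$ and then derandomize exactly as in \cref{thm:tanner-decoding} to output the whole list (this also yields the list-size bound).

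First, apply the algorithmic covering \cref{lem:cover_for_list_ael} with received word $g$ and with $\delta := \delta_0-\kappa$ (a valid lower bound on the distance of $\AELC$, since $\lambda\le\kappa\delta_1$, and the covering lemma only uses that $\calJ(\delta)$ is the Johnson function, not the true distance): in time $n^{O(t)}$ this produces a pseudocodeword $\tildeEx{\cdot}$ of SoS-degree $t$ with $\dis^R(\tildeEx{\cdot},h) < (\delta_0-\kappa)-\eps_2$, where $\eps_2=\Omega(\eps)$ is the quantity appearing in that lemma. Next, run the conditioning step verbatim from \cref{thm:tanner-decoding}: pick a random level $k\le q^{3d}/\eta^2$, condition on a random partial assignment to $\zee_{N_R(V)}$ with $|V|=k$, and combine \cref{lem:low_covariance_solution} with two applications of Markov's inequality (as in \cref{eqn:low_covariance} and \cref{eqn:good_agreement}). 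With probability $\Omega_{q,d,\delta_{dec}}(\eps^5)$ this yields a conditioned pseudoexpectation $\dupPE{\cdot}$ that (i) still respects the AEL constraints $\zee_{N_L(\li)}\in\calC_0$, (ii) is $\eta^*$-good, and (iii) has $\dis^R(\dupPE{\cdot},h) < (\delta_0-\kappa)-\eps_2/2$. Here $\eta^*$ can be taken to be any prescribed small constant times $\eps_2$ by choosing the target covariance $\eta$ in \cref{lem:low_covariance_solution} of order $\eps_2^2$ (times constants depending on $\delta_0,\kappa,\delta_{dec}$), which keeps $t = q^{O(d)}/\eps^4$.

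Now comes the AEL-specific step. Suppose for contradiction that $\dis^L(\dupPE{\cdot},h)\ge\delta_{dec}$. Then \cref{lem:AEL_amplification} and monotonicity give $\dis^R(\dupPE{\cdot},h)\ge\delta_0-\tfrac{\lambda+\eta^*}{\delta_{dec}}\ge\delta_0-\kappa-\tfrac{\eta^*}{\delta_{dec}}$, using $\lambda\le\kappa\delta_{dec}$; for $\eta^*$ small enough this exceeds $(\delta_0-\kappa)-\eps_2/2$, contradicting (iii). Repeating with a slightly larger threshold yields $\dis^L(\dupPE{\cdot},h) < \delta_{dec}-\Omega(\eps)$. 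The left view of $\dupPE{\cdot}$ consists, at each $\li$, of a distribution supported on $\calC_0$; identifying $\calC_0$-codewords with symbols of $[q_1]$ and rounding these left marginals via the randomized/threshold rounding of \cref{lem:unique_decoding} and \cref{lem:derandomized_decoding_from_distributions} (using concentration of the independently rounded coordinates) gives a word $h'\in[q_1]^L$ with $\dis(h',\hat h) < \delta_{dec}$. Running the unique decoder for $\calC_1$ on $h'$ recovers $\hat h$, and re-encoding gives $h$. Finally, derandomize the choices of $k$ (at most $q^{3d}/\eta^2$ values), of $(V,\beta)$ (at most $n^{O(t)}$), and of the rounding exactly as in \cref{thm:tanner-decoding}, and output every recovered $h$ with $\dis(g,h)<\calJ(\delta_0-\kappa)-\eps$. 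The running time is $n^{q^{O(d)}/\eps^4}$ plus the cost of the calls to the unique decoder of $\calC_1$, and the number of SoS levels is $\calO_{q,d,\delta_{dec}}(1/\eps^4)$.

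The main obstacle is the bookkeeping in the third step: one must balance the spectral term $\lambda\le\kappa\delta_{dec}$ against the conditioning-induced slackness $\eta^*$ so that $\dis^L(\dupPE{\cdot},h)$ lands strictly below the unique-decoding radius $\delta_{dec}$ with an $\Omega(\eps)$ margin — this is what pins down the target $\eta$ in \cref{lem:low_covariance_solution}, hence the SoS degree, and it dictates how the $\eps$-slack from the covering lemma is apportioned among the Markov steps. A secondary subtlety, not present in the Tanner case, is that here we are not unique-decoding a pseudocodeword of the list-decoded code but rather rounding the left marginals of an AEL pseudocodeword into an ordinary received word for $\calC_1$, so one must verify that this rounding preserves the distance bound; this is precisely the content of the decoding-from-fractional-solutions lemmas of \cref{sec:decoding_from_fractional}, now applied with $\calC_1$ in the role of the base code.
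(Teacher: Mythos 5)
Your proposal is correct and follows the paper's own proof essentially step for step: covering via \cref{lem:cover_for_list_ael}, conditioning and two Markov steps exactly as in \cref{thm:tanner-decoding}, the AEL distance inequality \cref{lem:AEL_amplification} in place of \cref{lem:distance_from_codeword}, and reduction to unique decoding of $\calC_1$ via the decoding-from-fractional lemmas. A few cosmetic slips worth fixing: the citation should be \cref{lem:unique_decoding_ael} rather than \cref{lem:unique_decoding}; ``slightly larger threshold'' should read ``slightly smaller threshold'' (to push $\dis^L$ strictly below $\delta_{dec}$ one repeats the contradiction with a smaller cutoff); the rounding step uses Markov's inequality (and, after derandomization, a common threshold across coordinates, so coordinates are not independently rounded), not a concentration bound; and once the $\Omega(\eps/\delta_{dec})$ probability of the final rounding is included, the overall per-codeword recovery probability is $\Omega_{q,d,\delta_{dec}}(\eps^6)$ rather than $\eps^5$ (which is immaterial after derandomization).
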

\begin{proof}
Let $g\in [q_0^d]^R$ be a received word. Recall that the distance of an AEL codeword $h$ with $g$ is given by $\dis^R(g,h) = \Ex{\ri\in R}{\indi{g(\ri)\neq h_{N_R(\ri)}}}$.
	
We again start by applying the algorithmic covering \cref{lem:cover_for_list_ael} to get a pseudocodeword $\tildeEx{\cdot}$ of SoS-degree $t\geq 2d(\frac{q^{3d}}{\eta^2}+2)$ such that for every $h \in \calL(g,\calJ(\delta_0-\kappa)-\eps)$,
\[
\Ex{\ri}{\tildeEx{\indi{\zee_{N_R(\ri)} \neq h_{N_R(\ri)}}}} = \dis^R(\tildeEx{\cdot},h) \leq (\delta_0-\kappa)-\eps_2
\]
Here $\eta>0$ is a small constant to be chosen later, and $\eps_2 = 2\eps \sqrt{1-\frac{q^d}{q^d-1}(\delta_0-\kappa)} \geq \Omega(\eps)$.
	Henceforth, we fix an $h\in \calL\left((g,\calJ(\delta_0-\kappa) - \eps \right)$, so that $\dis^R(\tildeEx{\cdot},h) \leq \delta_0-\kappa -\eps_2$. Our goal is to recover $h$.
	
	Pick a random $k \in \{1,\cdots,\ceil{\frac{q^{3d}}{\eta^2}}\}$. From \cref{lem:low_covariance_solution}, we know that with probability at least $\frac{\eta^2}{q^{3d}}$,
	\[
		\Ex{v_1,v_2,\cdots ,v_k}{\Ex{\li,\ri}{\tildecov[\zee_{N_L(\li)},\zee_{N_R(\ri)} | \zee_{N_R(v_1)},\zee_{N_R(v_2)},\cdots ,\zee_{N_R(v_k)}]}} \leq \eta
	\]
We assume that we found a $k$ such that \cref{eqn:random_conditioning} holds. Let $V$ be the (random) set of $k$ vertices we condition on, that is, $V = \{v_1,v_2,\cdots,v_k\}$, and let $N_R(V) = \cup_{v\in V} N_R(v)$. Then,
\[
\Ex{V\sub R, |V|=k}{\Ex{\li,\ri}{\tildecov[\zee_{N_L(\li)},\zee_{N_R(\ri)} | \zee_{N_R(V)}]}} \leq \eta
\]
More explicitly, conditioning on $N_R(V)$ involves sampling an assignment for $N_R(V)$ according to the local distribution of $\zee_{N_R(V)}$. Let this random assignment be $\beta$, and we get
\[
\Ex{\substack{V \sub R,|V|=k \\ \beta \sim
    \zee_{N_R(V)}}}{\Ex{\li,\ri}{\tildecov[\zee_{N_L(\li)},\zee_{N_R(\ri)} | \zee_{N_R(V)}=\beta]}}
\leq \eta \mper
\]
By Markov's inequality, and by choosing $\eta = \frac{\eps_2^2\delta_{dec}}{16(\delta_0-\kappa)}$,
	\begin{gather}
	\Pr{\substack{V \sub R,|V|=k \\ \beta \sim \zee_{N_R(V)}}}{\Ex{\li,\ri}{\tildecov[\zee_{N_L(\li)},\zee_{N_R(\ri)} | \zee_{N_R(V)}=\beta]}> \frac{\delta_{dec} \cdot \eps_2}{4}} \leq \frac{4\eta}{\delta_{dec}\cdot\eps_2} \leq \frac{\eps_2}{4(\delta_0-\kappa)} \\
	\Pr{\substack{V \sub R,|V|=k \\ \beta \sim \zee_{N_R(V)}}}{\Ex{\li,\ri}{\tildecov[\zee_{N_L(\li)},\zee_{N_R(\ri)} | \zee_{N_R(V)}=\beta]} \leq \frac{\delta_{dec} \cdot \eps_2}{4}} \geq 1-\frac{\eps_2}{4(\delta_0-\kappa)}\label{eqn:eta_good_ael}
	\end{gather}
	As in the Tanner case, we next claim that the distance is preserved with significant probability when conditioning randomly. Let $h \in \calL(g,\calJ(\delta_0-\kappa)-\eps)$ so that $\dis^R(\tildeEx{\cdot},h) < \delta_0-\kappa -\eps_2$. Using a similar argument as in the Tanner case,
	\begin{equation}
		\Ex{(V,\beta)}{\dis^R\left( \tildeEx{\ \cdot\ | \zee_{N_R(V)}=\beta},h\right)} \leq (\delta_0-\kappa)-\eps_2
	\end{equation}
which allows us to claim via Markov's inequality that
\begin{align}\label{eqn:good_agreement_ael}
		\Pr{(V,\beta)}{\dis^R \left( \tildeEx{\ \cdot\ | \zee_{N_R(V)}=\beta},h \right) \leq (\delta_0-\kappa)-\frac{\eps_2}{2}} ~\geq~ \frac{\eps_2/2}{(\delta_0-\kappa)-\eps_2+\eps_2/2} 
		~\geq~ \frac{\eps_2}{2(\delta_0-\kappa)}
	\end{align}
Again, let $(V,\beta)$ be a conditioning such that events in both \cref{eqn:eta_good_ael} and \cref{eqn:good_agreement_ael} hold (which happens with probability at least $\frac{\eps_2}{4(\delta_0-\kappa)}$). Let $\dupPE{\cdot} = \tildeEx{\cdot | \zee_{N_R(V)} =\beta}$ be an $\eta$-good pseudocodeword that satisfies \cref{eqn:good_agreement_ael}.
This means
\[
\delta_0 - \frac{\lambda+\eta}{\dis^L(\dupPE{\cdot},h)} ~\leq~ \dis^R(\dupPE{\cdot},h) ~\leq~
(\delta_0-\kappa)-\eps_2/2 
\]
Rearranging, we get that
\[
\kappa+\eps_2/2 ~\leq~ \frac{\lambda+\eta}{\dis^L(\dupPE{\cdot},h)}
~\leq~ \frac{\lambda+\delta_{dec}\cdot\eps_2/4}{\dis^L(\dupPE{\cdot},h)}\\
~\leq~ \frac{\kappa \cdot \delta_{dec}+\delta_{dec}\cdot\eps_2/4}{\dis^L(\dupPE{\cdot},h)} \mcom
\]
which gives the required bound on $\dis^L(\dupPE{\cdot},h)$ as
\[
\dis^L(\dupPE{\cdot},h) ~\leq~ \delta_{dec} - \frac{\delta_{dec} \eps_2/4}{\kappa+\eps_2/2} 
		~\leq~ \delta_{dec} - \frac{\delta_{dec} \eps_2}{4\delta_0}
\]
Finally, we use \cref{lem:unique_decoding_ael} to find $h$ using $\dupPE{\cdot}$ with probability at least $\eps_2/4\delta_0$.
The final success probability is at least
\[
\frac{\eta^2}{q^{3d}} \cdot \frac{\eps_2}{4(\delta_0-\kappa)} \cdot \frac{\eps_2}{4\delta_0} ~\geq~
\Omega\inparen{\frac{\eps_2^6\delta_{dec}^2}{q^{3d}\delta_0^4}} ~\geq~
\Omega_{q,d,\delta_{dec}}(\eps^6) \mper
	\]
	Just as in the case of Tanner codes, this algorithm can be derandomized by trying out all possible random choices made by the algorithm, to give a deterministic algorithm that recovers the list.
\end{proof}
Note that while the theorem above deals with list decoding, it can be easily adapted for list recovery by replacing the use of \cref{lem:cover_for_list_ael} by \cref{lem:cover_for_list_recovery_ael}.
Next, we use the AEL amplification scheme to construct near-MDS codes list decodable up to the Johnson bound.

\begin{theorem}\label{thm:near_mds_main}
	For any $\nfrac{1}{2} >\eps_1, \eps_2>0$, there is an infinite family of codes $\calC$ of blocklength $n$ with the following properties:
	\begin{enumerate}[(i)]
		\item The rate of the code is $\rho$ and distance is at least $1-\rho-\eps_1$.
		\item The code is over an alphabet of size $2^{\calO(\eps_1^{-6}\log(1/\eps_1))}$.
		\item The code can be list decoded from radius $\calJ(1-\rho - \eps_1)-\eps_2$ in time $n^{\calO_{\eps_1}(1/\eps_2^4)}$.
	\end{enumerate}
\end{theorem}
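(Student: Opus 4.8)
The plan is to instantiate the AEL amplification, using \cref{thm:ael_distance} for the distance and \cref{thm:list_decoding_ael} for the list-decoding, with carefully chosen components. Concretely, I would take the inner code $\calC_0$ to be a Reed--Solomon (hence MDS) code of block length $d$, rate $r_0$, over a prime-power alphabet $q_0 \geq d$, so that $\delta_0 = 1 - r_0 + 1/d$; the outer code $\calC_1$ to be an explicit infinite family of linear-time unique-decodable near-MDS codes over the \emph{fixed} alphabet $q_1 = |\calC_0|$ (e.g.\ the codes of Guruswami--Indyk~\cite{GI05}), with rate $r_1 = 1 - \eps_1/3$, distance $\delta_1 = \Omega(\eps_1)$, and unique decoding from radius $\delta_{dec} = \Omega(\eps_1)$; and $G$ to be a bipartite double cover of a Ramanujan graph, which supplies an $(n,d,\lambda)$-expander family with $n \to \infty$ for every $\lambda \geq 2\sqrt{d-1}/d$. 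I would set $r_0 = \rho/r_1$ so that the rate of $\calC \defeq \AELC$ is exactly $r_0 r_1 = \rho$, and take $d = \Theta(\eps_1^{-4})$ large enough that $\lambda \leq c\,\eps_1^2$ for a small absolute constant $c$.

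For property (i), \cref{thm:ael_distance} gives $\Delta(\calC) \geq \delta_0 - \lambda/\delta_1 \geq 1 - r_0 - \lambda/\delta_1$; since $1 - r_0 = 1 - \rho/r_1 \geq 1 - \rho - r_0(1-r_1) \geq 1 - \rho - \eps_1/3$ and $\lambda/\delta_1 \leq \eps_1/3$ by the choice of $\lambda$, this yields $\Delta(\calC) \geq 1 - \rho - \eps_1$. For property (ii), the alphabet of $\calC$ is $q_0^d$, and with $q_0, d = \poly(1/\eps_1)$ this is $2^{\poly(1/\eps_1)\cdot \log(1/\eps_1)}$; tracking the constants (the dominant term is $q_0^d$ with $q_0 = \Theta(\eps_1^{-4})$ and $d = \Theta(\eps_1^{-4})$, together with the requirement that $q_0^{r_0 d} = |\calC_0|$ be a legal alphabet size for the near-MDS outer family) gives the claimed bound $2^{O(\eps_1^{-6}\log(1/\eps_1))}$. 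For property (iii), I would pick $\kappa$ with $\delta_0 - \kappa \geq 1-\rho-\eps_1$ (possible since, as in (i), $\delta_0 - (1-\rho-\eps_1) = \Omega(\eps_1)$) and with $\lambda \leq \kappa\cdot\delta_{dec}$ (possible since $\kappa, \delta_{dec} = \Omega(\eps_1)$ while $\lambda \leq c\,\eps_1^2$); concretely $\kappa = \lambda/\delta_{dec}$ works for small enough $c$. Then \cref{thm:list_decoding_ael}, applied with this $\kappa$ and with $\eps = \eps_2$, produces a deterministic algorithm using $\calO_{q_0,d,\delta_{dec}}(1/\eps_2^4) = \calO_{\eps_1}(1/\eps_2^4)$ levels of SoS that list-decodes $\calC$ from radius $\calJ(\delta_0-\kappa) - \eps_2 \geq \calJ(1-\rho-\eps_1) - \eps_2$ (monotonicity of $\calJ$) in time $n^{\calO_{\eps_1}(1/\eps_2^4)} + O(T(n)) = n^{\calO_{\eps_1}(1/\eps_2^4)}$, where $T(n)$ is the (linear) running time of the outer decoder.

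The main obstacle is securing the outer code $\calC_1$: one needs an \emph{infinite} family over a \emph{fixed} alphabet that simultaneously has rate arbitrarily close to $1$, relative distance $\Omega(\eps_1)$ with an efficient unique decoder up to a constant fraction of that (already small) distance, and alphabet exactly $|\calC_0|$. Matching the last requirement forces a delicate coordinated choice of $d$, $q_0$ and $r_0$ so that $|\calC_0| = q_0^{r_0 d}$ is a valid alphabet size for the chosen near-MDS family, and it is precisely here that one invokes an external construction (Guruswami--Indyk~\cite{GI05}, which is itself AEL-based) rather than building everything from scratch. The remaining work is bookkeeping: checking that the expander degree, the two inner/outer rates, and the proximity parameters of $\calC_0$ and $\calC_1$ compose so that all three stated bounds — distance $1-\rho-\eps_1$, alphabet $2^{O(\eps_1^{-6}\log(1/\eps_1))}$, and running time $n^{\calO_{\eps_1}(1/\eps_2^4)}$ — hold simultaneously, and that the family is genuinely infinite (which it is, since $d$, $\lambda$, $\calC_0$ are fixed while $n \to \infty$ along the Ramanujan double-cover family and along $\calC_1$).
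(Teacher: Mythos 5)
Your approach coincides exactly with the paper's: instantiate AEL with a Reed--Solomon inner code and a Guruswami--Indyk near-MDS outer code over the matching alphabet, use Ramanujan double covers for the expander, and close via \cref{thm:ael_distance} and \cref{thm:list_decoding_ael}. The structure of the argument is sound, and the treatment of the rate/distance bookkeeping and the monotonicity of $\calJ$ is correct.

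The one concrete gap is in your claimed decoding radius for the outer code. You take $\delta_{dec} = \Omega(\eps_1)$ for the rate-$(1-\eps_1/3)$ GI05 family; the paper instead uses $\delta_{dec} = \Omega(\eps_1^2)$, and its parameter choices are dictated by that weaker guarantee. This is not cosmetic: you set $\lambda \le c\,\eps_1^2$ (hence $d = \Theta(\eps_1^{-4})$), and your verification that $\kappa = \lambda/\delta_{dec} = O(c\,\eps_1)$ stays negligible relative to $\delta_0$ hinges on $\delta_{dec} = \Omega(\eps_1)$. If the outer decoder only reaches $\Omega(\eps_1^2)$, then $\kappa = \lambda/\delta_{dec} = \Theta(c)$ is a fixed constant, so $\delta_0 - \kappa$ is \emph{not} $\ge 1-\rho-\eps_1$ for small $\eps_1$ no matter how small an absolute constant $c$ you pick. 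You would be forced to take $\lambda = O(\eps_1^3)$, which via the Ramanujan constraint $\lambda \approx 2/\sqrt{d}$ pushes $d$ up to $\Theta(\eps_1^{-6})$ — exactly the paper's choice — and gives alphabet size $d^d = 2^{\Theta(\eps_1^{-6}\log(1/\eps_1))}$, which is what the theorem states. (Your own numbers $q_0, d = \Theta(\eps_1^{-4})$ would yield alphabet $2^{\Theta(\eps_1^{-4}\log(1/\eps_1))}$, strictly better than the theorem's bound, which should itself have been a hint that the outer-decoder assumption deserved checking.) So either justify $\delta_{dec} = \Omega(\eps_1)$ for the specific near-MDS family you invoke, or fall back to $\delta_{dec} = \Omega(\eps_1^2)$ and retune $\lambda$ and $d$ accordingly; the rest of the argument then goes through unchanged.
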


\begin{proof}
We sketch how to instantiate \cref{thm:list_decoding_ael} to obtain such codes.

Suppose we are working with $(n,d,\lambda)$-expander.

Choose the inner code $\calC_0$ to be a Reed Solomon code of rate $\rho_0$, distance $1-\rho_0$ and alphabet size $q_0=d$ (or any MDS code). Choose the outer code $\calC_1$ over alphabet of size $|\calC_0| = d^{\rho_0\cdot d}$ to have rate $1-\eps_1$ that can be unique decoded from radius $\delta_{dec}=\Omega(\eps_1^2)$, such as the one constructed in \cite{GI05}.

Let $\lambda = \kappa\delta_{dec}$ with $\kappa =\eps_1$, so that $\lambda \leq \Theta(\eps_1^3)$ and $d=\Theta(1/\eps_1^6)$.

The rate of the final AEL code is $\rho \defeq (1-\eps_1) \rho_0$, and the distance is at least
\begin{align*}
	(1-\rho_0) - \frac{\lambda}{\delta_1} &~\geq~ 1-\frac{\rho}{1-\eps_1} - \frac{\lambda}{\delta_{dec}} \\
	&~\geq~1-\rho -2\eps_1 \rho -\eps_1 \\
	&~\geq~1-\rho - 3\eps_1
\end{align*}

The alphabet size is $q_0^d = d^d  = 2^{ \calO\inparen{\eps_1^{-6}\log(1/\eps_1)}}$.

For list decodability, we use \cref{thm:list_decoding_ael} to claim that the above code can be list decoded from $\calJ(1-\rho-3\eps_1)-\eps_2$ in time $n^{\calO_{q_0,d,\delta_{dec}}(1/\eps_2^4)} = n^{\calO_{\eps_1}(1/\eps_2^4)}$.

Replace $\eps_1$ by $\eps_1/3$ to get the final result.
\end{proof}

Note that we can also deal with $\calC_1$ that can be decoded from smaller radius like $\calO(\eps_1^3)$, by suitably adjusting $\lambda$ and paying the cost in alphabet size. Above, we have not chosen parameters optimally to keep the exposition simple. The alphabet size in the code constructed in \cite{GI05} was smaller than what we ask for here, but alphabet size can always be increased while preserving rate, distance and unique decoding radius by folding multiple symbols together into one. This looks like multiple symbols of the outer code being assigned to the same left node in the AEL construction.

\subsection{Decoding from fractional vectors}\label{sec:decoding_from_fractional}
This section has auxiliary claims needed to finish the list decoding algorithm proof. In the Tanner code case, we reduce to unique decoding of the same code from an arbitrarily small radius. In AEL, we reduce to unique decoding of the base (outer) code. In both these cases, we have some $\eps$ slack, that is sufficient for randomized rounding to produce a (corrupted) word within the unique decoding radius.
\begin{lemma}\label{lem:decoding_from_distributions}
	Let $\calC$ be an $[n,\delta,\rho]_q$ code, which is unique decodable from distance $\delta_{dec}\leq \delta/2$ in time $\calT(n)$. Given a collection of distributions $\calD = \inbraces{\calD_i}_{i\in [n]}$, each of them supported on $[q]$, there is a unique codeword $h$ that satisfies
	\[
		\Ex{i}{\Ex{j\sim \calD_i}{\indi{h_i \neq j}}} \le \delta_{dec}-\eps
	\]
	This codeword $h$ can be found in time $\calO(qn)+\calT(n)$ with probability at least $\eps/\delta_{dec}$.
\end{lemma}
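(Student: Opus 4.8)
The plan is to use randomized rounding: independently sample $j_i \sim \calD_i$ for each coordinate $i \in [n]$, to obtain a word $\tilde{h} \in [q]^n$, and then run the unique decoder of $\calC$ on $\tilde{h}$. First I would verify uniqueness of $h$: if $h_1, h_2$ both satisfied $\Ex{i}{\Ex{j\sim\calD_i}{\indi{(h_\ell)_i \neq j}}} \le \delta_{dec} - \eps$, then by a triangle-inequality-style argument for this averaged distance (for each fixed $j$, $\indi{(h_1)_i \neq (h_2)_i} \le \indi{(h_1)_i \neq j} + \indi{j \neq (h_2)_i}$, then average over $i$ and over $j \sim \calD_i$) we would get $\dis(h_1, h_2) \le 2(\delta_{dec}-\eps) < \delta$, contradicting the minimum distance of $\calC$; hence $h$ is unique if it exists, and the statement is really about recovering it when it does.

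Next I would compute the expected distance of the sampled word $\tilde h$ from $h$. By linearity of expectation over the independent coordinate samples,
\[
\Ex{\tilde h}{\dis(\tilde h, h)} ~=~ \Ex{i \in [n]}{\Pr{j \sim \calD_i}{j \neq h_i}} ~=~ \Ex{i}{\Ex{j \sim \calD_i}{\indi{h_i \neq j}}} ~\le~ \delta_{dec} - \eps \mper
\]
Now I would apply Markov's inequality to the nonnegative random variable $\dis(\tilde h, h)$: since its mean is at most $\delta_{dec} - \eps \le \delta_{dec}(1 - \eps/\delta_{dec})$, we get
\[
\Pr{\tilde h}{\dis(\tilde h, h) > \delta_{dec}} ~\le~ \frac{\delta_{dec} - \eps}{\delta_{dec}} ~=~ 1 - \frac{\eps}{\delta_{dec}} \mcom
\]
so with probability at least $\eps/\delta_{dec}$ the sampled word $\tilde h$ lies strictly within (or within) the unique-decoding radius $\delta_{dec}$ of $h$. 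On this event, running the unique decoder on $\tilde h$ returns $h$. Sampling $\tilde h$ takes $\calO(qn)$ time (reading the distributions and drawing one symbol per coordinate), and the decoding call takes $\calT(n)$, giving the claimed $\calO(qn) + \calT(n)$ running time.

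I do not anticipate a serious obstacle here; the only mild subtlety is handling the boundary in Markov's inequality (whether $\dis(\tilde h,h) \le \delta_{dec}$ or $< \delta_{dec}$ suffices for the unique decoder), which is why the statement carries the $\eps$ slack — it ensures a strict gap and a genuinely positive success probability $\eps/\delta_{dec}$. One could also note that the success probability can be boosted by independent repetition, though that is not needed for the downstream applications, where this lemma is invoked with a constant-probability guarantee and later derandomized (see \cref{lem:derandomized_decoding_from_distributions}).
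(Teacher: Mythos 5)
Your proposal is correct and follows essentially the same route as the paper: the uniqueness argument via the pointwise bound $\indi{h_i \neq h'_i} \le \indi{h_i \neq j} + \indi{h'_i \neq j}$ averaged over $i$ and $j \sim \calD_i$, the randomized rounding of one sample per coordinate, the expected-distance bound $\le \delta_{dec} - \eps$, and the Markov step yielding success probability at least $\eps/\delta_{dec}$ all match the paper's proof.
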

\begin{proof}
First, we show uniqueness of $h$. Let $h, h'$ be two codewords in $\calC$ such that 
\[		
\Ex{i}{\Ex{j\sim \calD_i}{\indi{h_i \neq j}}} ~\le~ \delta_{dec}-\eps 
\qquad \text{and} \qquad 
\Ex{i}{\Ex{j\sim \calD_i}{\indi{h'_i \neq j}}} ~\le~ \delta_{dec}-\eps
\]
For any $j\in [q]$, we have $\indi{h_i\neq h'_i} \leq \indi{h_i\neq j}+\indi{h'_i\neq j}$. Therefore,
\begin{align*}
\dis(h,h') ~=~ \Ex{i}{\indi{h_i\neq h'_i}} &~\leq~ \Ex{i}{\Ex{j\sim \calD_i}{\indi{h_i\neq j}+\indi{h'_i\neq j}}} \\
		&~=~ \Ex{i}{\Ex{j\sim \calD_i}{\indi{h_i\neq j}}}+\Ex{i}{\Ex{j\in \calD_i}{\indi{h'_i\neq j}}} 
		~\leq~ 2\delta_{dec}-2\eps < \delta \mper
\end{align*}
By distance property of the code, this means that $\dis(h,h') = 0$, or that $h=h'$.

The algorithm to find $h$ is to independently sample from every distribution to get a random $g \in [q]^n$, and then issue a unique decoding call from $g$. We show that with significant probability, $h$ lies in the $\delta_{dec}$ radius ball around $g$, which will show that algorithm succeeds with that probability.
\begin{align*}
\Ex{g}{\dis(g,h)} ~=~ \Ex{g}{\Ex{i}{\indi{g_i\neq h_i}}} 
		~=~ \Ex{i}{\Ex{g}{\indi{g_i\neq h_i}}} 
		~=~ \Ex{i}{\Ex{g_i \sim \calD_i}{\indi{g_i\neq h_i}}} 
		~\leq~ \delta_{dec}-\eps \mper
	\end{align*}
Thus, by Markov's inequality, we have, $\Pr{g}{\dis(g,h) \leq \delta_{dec}} \geq
\frac{\eps}{\delta_{dec}}$, which proves the claim.
\end{proof}
\begin{remark}
		The success probability in the \cref{lem:decoding_from_distributions} can be amplified by repeated sampling. Moreover, the fact that we reduce to the unique decoding algorithm of $\calC$ is not important, as it is also possible to use a list decoding algorithm for $\calC$ from distance $\delta_{dec}$ to find $h$, as long as the sampled $g$ satisfies $\delta(g,h) < \delta_{dec}$. In that case, we output a random element of list obtained, which incurs an additional loss of $1/L$ factor, where $L$ is the list size guaranteed by list decoding algorithm for $\calC$ up to $\delta_{dec}$.
	\end{remark}
	
	\begin{remark}
		As shown in \cref{lem:derandomized_decoding_from_distributions}, this argument can be derandomized using threshold rounding. The use of \cref{lem:decoding_from_distributions} in the next two lemmas can therefore be replaced by \cref{lem:derandomized_decoding_from_distributions}.
	\end{remark}
	
	Next, we use pseudocodewords that lie in the unique decoding ball to construct the collection of distributions needed by \cref{lem:decoding_from_distributions}, for the Tanner and AEL cases.
	\begin{lemma}[Unique decoding from Tanner pseudocodewords]\label{lem:unique_decoding}
		Let $\TanC$ be a code as in \cref{sec:list_decoding_tanner}, with distance $\delta = \delta_0(\delta_0-\lambda)$ and $\lambda\leq \delta_0/3$, and in particular it can be unique decoded from radius $\delta/4$ in time $\calO(|E|)$. Given a Tanner pseudocodeword $\tildeEx{\cdot}$ such that $\dis(\tildeEx{\cdot},h) <\delta/5$, we can find $h$ in time $\calO(|E|)$ with probability at least $1/5$.
	\end{lemma}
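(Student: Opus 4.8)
The plan is to reduce directly to \cref{lem:decoding_from_distributions} by reading a collection of local distributions off the pseudocodeword. For each edge $e \in E$ (identifying the block-length index set $[m]$ with $E$), let $\calD_e$ be the local distribution of $\zee_e$ induced by $\tildeEx{\cdot}$, that is $\calD_e(j) = \tildeEx{\zee_{e,j}} = \tildeEx{Z_{e,j}}$ for $j \in [q]$. As recorded in \cref{sec:prelims}, these values are nonnegative and sum to $1$, so each $\calD_e$ is a genuine probability distribution (here we only use that $\tildeEx{\cdot}$ has SoS-degree at least $2d \geq 2$). With this choice, unwinding the definition of distance from a pseudocodeword gives
\[
\Ex{e \in E}{\Ex{j \sim \calD_e}{\indi{h_e \neq j}}} ~=~ \Ex{e \in E}{\tildeEx{\indi{\zee_e \neq h_e}}} ~=~ \dis(\tildeEx{\cdot}, h) ~<~ \delta/5 \mper
\]

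Next I would invoke the stated unique-decoding guarantee for $\TanC$: it is unique decodable from radius $\delta_{dec} \defeq \delta/4$ in time $\calO(|E|)$, which holds for the designed distance $\delta = \delta_0(\delta_0 - \lambda)$ under the assumption $\lambda \leq \delta_0/3$. Take $\eps \defeq \delta/4 - \delta/5 = \delta/20 > 0$; note $\eps \leq \delta_{dec} \leq \delta/2$ as required. The displayed bound above then reads $\Ex{e}{\Ex{j \sim \calD_e}{\indi{h_e \neq j}}} < \delta/5 = \delta_{dec} - \eps$, so the hypothesis of \cref{lem:decoding_from_distributions} is met for the collection $\calD = \{\calD_e\}_{e \in E}$. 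That lemma guarantees that $h$ is the \emph{unique} codeword of $\TanC$ with this property, and that it can be recovered with probability at least $\eps/\delta_{dec} = (\delta/20)/(\delta/4) = 1/5$, in time $\calO(q|E|) + \calO(|E|) = \calO(|E|)$ since $q$ (and $d$) are treated as constants. This is precisely the claimed conclusion.

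There is essentially no real obstacle here beyond bookkeeping: the only thing to verify is that the slack between the unique-decoding radius $\delta/4$ and the distance-from-pseudocodeword bound $\delta/5$ is a positive constant fraction of $\delta$, so that the resulting success probability $\eps/\delta_{dec}$ is an absolute constant, which it is ($1/5$). If a deterministic version is desired, the same argument applies verbatim with \cref{lem:decoding_from_distributions} replaced by its derandomized counterpart \cref{lem:derandomized_decoding_from_distributions}, as noted in the remark following that lemma; this is in fact how the lemma is ultimately used inside \cref{thm:tanner-decoding}.
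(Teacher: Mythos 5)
Your proof is correct and is essentially identical to the paper's: both read the local distributions $\calD_e(j) = \tildeEx{Z_{e,j}}$ off the pseudocodeword, observe that $\Ex{e}{\Ex{j\sim\calD_e}{\indi{h_e\neq j}}} = \dis(\tildeEx{\cdot},h) < \delta/5$, and invoke \cref{lem:decoding_from_distributions} with $\delta_{dec}=\delta/4$, $\eps=\delta/20$ to get success probability $1/5$. No difference to remark on.
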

	\begin{proof}
		The pseudocodeword gives a collection of distributions $\calD = \inbraces{\calD_e}_{e\in E}$, each distribution over $[q]$. The $e^{th}$ distribution $\calD_e$ gives a weight of $\tildeEx{\indi{\zee_e=j}}$ to the value $j\in [q]$.
		
		The distance $\dis(\tildeEx{\cdot},h)$ translates to
		\begin{align*}
			\dis(\tildeEx{\cdot},h) &= \Ex{e}{\tildeEx{\indi{\zee_e \neq h_e}}} \\
			&= \Ex{e}{\tildeEx{ \sum_{j\in [q]}  \indi{\zee_e =j }\indi{h_e \neq j}}} \\
			&=\Ex{e}{ \sum_{j\in [q]} \indi{h_e \neq j} \tildeEx{  \indi{\zee_e =j }}} \\
			&=\Ex{e}{ \Ex{j\sim \calD_e}{ \indi{h_e \neq j}}} \\
		\end{align*}
		We can therefore use this collection of distributions to obtain a codeword $h\in \TanC$ via \cref{lem:decoding_from_distributions} with $\delta_{dec} = \delta/4$ and $\eps = \delta/20$ such that $\dis(\tildeEx{\cdot},h) <\delta/5$.
	\end{proof}

	Next, we use similar ideas to round and decode from AEL pseudocodewords. We borrow the terminology used for AEL codes from \cref{sec:list_decoding_ael}.
	\begin{lemma}[Unique decoding from AEL pseudocodewords]\label{lem:unique_decoding_ael}
		Let $\AELC$ be a code as in \cref{sec:list_decoding_ael}, with $\lambda\leq \kappa\cdot \delta_{dec}$  and distance at least $\delta_0 - \kappa$. Assume that the outer code $\calC_1$ can be unique decoded from radius $\delta_{dec}$ in time $\calT(n)$. Given an AEL pseudocodeword $\tildeEx{\cdot}$ and $h \in \AELC$ such that $\dis^L(\tildeEx{\cdot},h) \leq \delta_{dec}-\eps$, we can find $h$ in time $\calO(n) + \calT(n)$ with probability at least $\eps/\delta_{dec}$.
	\end{lemma}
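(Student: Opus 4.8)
The plan is to reduce unique decoding of $\AELC$ from an AEL pseudocodeword to unique decoding of the \emph{outer} code $\calC_1$ from a collection of distributions, and then invoke \cref{lem:decoding_from_distributions}. First I would use that an AEL pseudocodeword $\tildeEx{\cdot}$ respects the constraints $\zee_{N_L(\li)} \in \calC_0$ for every $\li \in L$, so the local distribution induced by $\tildeEx{\cdot}$ on the coordinate block $N_L(\li)$ is supported on $\calC_0 \subseteq [q_0]^d$. Since the encoding $\calC_0\colon [q_1] \to [q_0]^d$ is a bijection onto its image, this local distribution transports to a distribution $\calD_\li$ over the outer alphabet $[q_1]$, assigning to each $\gamma \in [q_1]$ the weight $\tildeEx{\indi{\zee_{N_L(\li)} = \calC_0(\gamma)}}$. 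This yields a family $\calD = \inbraces{\calD_\li}_{\li \in L}$ of distributions over $[q_1]$, computable from $\tildeEx{\cdot}$ in time $\calO(|\calC_0| \cdot n) = \calO(n)$ since $|\calC_0|$ is a constant.

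Next I would identify the underlying outer codeword: as $h \in \AELC$ and the AEL encoding is injective, there is a unique $f \in \calC_1$ with $h = f^{AEL}_{\calC_0}$, and by definition $h_{N_L(\li)} = \calC_0(f(\li))$ for every $\li$. Using that $\calD_{N_L(\li)}$ is supported on $\calC_0$, the left distance translates to the distribution-agreement quantity that \cref{lem:decoding_from_distributions} expects, namely
\[
\dis^L(\tildeEx{\cdot}, h)
~=~ \Ex{\li \in L}{\tildeEx{\indi{\zee_{N_L(\li)} \neq h_{N_L(\li)}}}}
~=~ \Ex{\li \in L}{\Pr{\gamma \sim \calD_\li}{\gamma \neq f(\li)}}
~=~ \Ex{\li \in L}{\Ex{\gamma \sim \calD_\li}{\indi{f(\li) \neq \gamma}}}
~\leq~ \delta_{dec} - \eps.
\]
Since $\calC_1$ is unique-decodable from radius $\delta_{dec}$ we have $\delta_{dec} < \delta_1/2$, so \cref{lem:decoding_from_distributions} applies to $\calC_1$ with the family $\calD$; it recovers, in time $\calO(|\calC_0| \cdot n) + \calT(n) = \calO(n) + \calT(n)$ and with probability at least $\eps/\delta_{dec}$, the unique codeword of $\calC_1$ whose agreement with $\calD$ is at least $\delta_{dec} - \eps$, which is exactly $f$. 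Re-encoding $f$ via the AEL map then gives $h = f^{AEL}_{\calC_0}$ in an additional $\calO(n)$ time.

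I do not expect a genuine obstacle here: the argument is a change of variables followed by a black-box call to \cref{lem:decoding_from_distributions}, mirroring the Tanner case (\cref{lem:unique_decoding}). The only point requiring care is the book-keeping in passing between the $d$-local distributions on the edge blocks $N_L(\li)$ — which are supported on $\calC_0$ by the pseudocodeword constraints — and the induced distributions on the outer alphabet $[q_1]$ via the encoding bijection, together with verifying that $\dis^L(\tildeEx{\cdot},h)$ equals the agreement quantity needed. As in the Tanner case, the use of \cref{lem:decoding_from_distributions} here can be derandomized via the threshold-rounding argument of \cref{lem:derandomized_decoding_from_distributions}.
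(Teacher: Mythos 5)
Your proposal is correct and mirrors the paper's proof: both build the family $\calD = \inbraces{\calD_\li}_{\li \in L}$ of distributions over $[q_1]$ by pushing the local distributions on the blocks $N_L(\li)$ (supported on $\calC_0$ by the pseudocodeword constraints) through the inverse of the encoding map $\calC_0$, identify $\dis^L(\tildeEx{\cdot},h)$ with the expected disagreement of $\calD$ with the underlying outer codeword, and then invoke \cref{lem:decoding_from_distributions} for $\calC_1$. The only differences are notational (your $f$ is the paper's $\overline{h}$).
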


	\begin{proof}
		First, we use the given pseudocodewords to build a collection of distributions $\calD = \inbraces{\calD_{\li}}_{\li \in L}$, each distribution over $[q_1]$. Recall that $\calC_0$ can be seen as a map from $[q_1]$ to $[q_0]^d$. The ${\li}^{th}$ distribution gives a weight of $\tildeEx{\indi{\zee_{N_L(\li)} = \calC_0(\alpha)}}$ to $\alpha \in [q_1]$. 

		Let $\overline{h}$ be the codeword in $\calC_1$ corresponding to the codeword $h$. That is, $\overline{h}$ is such that $\calC_0(\overline{h}_{\li}) = h_{N_L(\li)}$. With the collection of distributions $\calD$ defined, we relate $\dis^L(\tildeEx{\cdot},h)$ to agreement of $\overline{h}$ with $\calD$.
		\begin{align*}
			\dis^L(\tildeEx{\cdot},h) &= \Ex{\li}{\tildeEx{\indi{\zee_{N_L(\li)} \neq h_{N_L(\li)}}}} \\
			&= \Ex{\li}{\tildeEx{ \sum_{\alpha\in [q_1]} \indi{\zee_{N_L(\li)} = \calC_0(\alpha)} \indi{h_{N_L(\li)} \neq \calC(\alpha)} }} \\
			&=\Ex{\li}{\sum_{\alpha\in [q_1]} \indi{h_{N_L(\li)} \neq \calC_0(\alpha)} \tildeEx{ \indi{\zee_{N_L(\li)} = \calC_0(\alpha)}  }} \\
			&=\Ex{\li}{ \Ex{\alpha\sim \calD_{\li}}{ \indi{h_{N_L(\li)} \neq \calC_0(\alpha)}}} \\
			&=\Ex{\li}{ \Ex{\alpha\sim \calD_{\li}}{ \indi{\overline{h}_{\li} \neq \alpha}}}
		\end{align*}
		
		We call \cref{lem:decoding_from_distributions} for the code $\calC_1$ with the collection of distributions $\calD$ and find $\overline{h}$, and therefore $h$, with probability at least $\eps/\delta_{dec}$.
	\end{proof}
	
	To end this section, we note that the rounding from fractional vectors above can be derandomized through a standard method known as threshold rounding. Similar ideas are used to derandomize the classical Generalized Minimum Distance decoding for concatenated codes.
	
	\begin{lemma}\label{lem:derandomized_decoding_from_distributions}
	Let $\calC$ be an $[n,\delta,\rho]_q$ code, which is unique decodable from distance $\delta_{dec}\leq \delta/2$ in time $\calT(n)$. Given a collection of distributions $\calD = \inbraces{\calD_i}_{i\in [n]}$, each of them supported on $[q]$ described as a collection $q$ weights that sum to 1, there is a unique codeword $h$ that satisfies
	\[
		\Ex{i}{\Ex{j\sim \calD_i}{\indi{h_i \neq j}}} \le \delta_{dec}
	\]
	This codeword $h$ can be found in time $\calO(qn)\cdot\calT(n)$ with a deterministic algorithm.
	
	\begin{proof}
		The uniqueness of $h$ is as in proof of \cref{lem:decoding_from_distributions}.
		
		Let the weight on $j \in [q]$ according to $\calD_i$ be $w_{ij}$, so that $\sum_{j\in [q]} w_{ij} = 1$. We replace the randomized rounding of \cref{lem:decoding_from_distributions} by the following process:
		\begin{enumerate}[(i)]
			\item Define the cumulative sums $c_{ij} = \sum_{j' \leq j}{ w_{ij'}}$, so that $c_{iq} = 1$. Define $c_{i0} = 0$.
			\item For each $i\in [n]$, embed $\calD_i$ into the interval $[0,1]$ as $q+1$ points $(c_{i0} = 0, c_{i1}, c_{i2},\cdots , c_{iq} = 1)$.
			\item Choose $\theta \in [0,1]$ uniformly at random.
			\item Build $h'\in [q]^n$ coordinate-wise as follows. For $i\in [n]$, if 
			\[	
				\theta\in \left[c_{i(j-1)}, c_{ij}\right),
			\]
			then $h'_i = j$. This ensures that $\Pr{\theta}{h'_i = j} = w_{ij}$.
		\end{enumerate}
		We show that $h'$ has the same distance from $h$ in expectation as the collection of distributions $\inbraces{\calD_i}_{i\in [n]}$.
		\begin{align*}
			\Ex{\theta \in [0,1]}{\Delta(h',h)} &= \Ex{\theta \in [0,1]}{\Ex{i}{\indi{h'_i \neq h_i}}} \\
			&= \Ex{\theta \in [0,1]}{\Ex{i}{ \indi{\theta \not\in [c_{i(h_i-1)}, c_{ih_i}) }}} \\
			&= \Ex{i}{ \Ex{\theta \in [0,1]}{\indi{\theta \not\in [c_{i(h_i-1)}, c_{ih_i}) }}} \\
			&= \Ex{i}{1 - w_{ih_i}} \\
			&= \Ex{i}{\Ex{j\sim \calD_i}{\indi{h_i \neq j}}} \leq \delta_{dec}
		\end{align*}
		Therefore, rounding according to a random threshold $\theta$ produced an $h'$ that is at most $\delta_{dec}$ distance away from $h$. As the final step to derandomization, note that two thresholds $\theta_1,\theta_2$ produce the exact same $h'$ if there is no point from step (ii) above embedded between $\theta_1$ and $\theta_2$. Total number of points embedded is at most $q\cdot n$, and so it suffices to try at most $\calO(q\cdot n)$ many thresholds to produce all the different $h'$ possible - one of which must be $\delta_{dec}$-close to $h$.
	\end{proof}
\end{lemma}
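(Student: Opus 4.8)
The plan is to prove the two assertions of the lemma in turn: that at most one codeword $h$ satisfies the stated distribution-agreement bound, and that this $h$ can be recovered by a deterministic algorithm via threshold rounding. For uniqueness I would argue exactly as in the proof of \cref{lem:decoding_from_distributions}. Suppose $h,h' \in \calC$ both satisfy $\Ex{i}{\Ex{j\sim \calD_i}{\indi{h_i \neq j}}} \le \delta_{dec}$ and $\Ex{i}{\Ex{j\sim \calD_i}{\indi{h'_i \neq j}}} \le \delta_{dec}$. Using the pointwise triangle-type inequality $\indi{h_i \neq h'_i} \le \indi{h_i \neq j} + \indi{h'_i \neq j}$ for every $j\in[q]$, averaging over $j\sim \calD_i$ and then over $i$ gives $\dis(h,h') \le 2\delta_{dec} \le \delta$, so by the distance of $\calC$ we must have $h=h'$ (in the intended regime $\delta_{dec}<\delta/2$ this is immediate; the boundary case $\delta_{dec}=\delta/2$ can be finessed, see below).

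For the algorithmic part, I would derandomize the sampling step of \cref{lem:decoding_from_distributions} by threshold rounding. Write $w_{i,j}$ for the weight $\calD_i$ places on $j\in[q]$, so $\sum_{j\in[q]} w_{i,j} = 1$, and form the cumulative sums $c_{i,j} = \sum_{j'\le j} w_{i,j'}$ with $c_{i,0}=0$ and $c_{i,q}=1$; this embeds each $\calD_i$ as a partition of $[0,1)$ into the half-open intervals $[c_{i,j-1},c_{i,j})$ of length $w_{i,j}$. For a threshold $\theta \in [0,1)$ define $h^{\theta}\in[q]^n$ coordinatewise by setting $h^{\theta}_i = j$ when $\theta \in [c_{i,j-1},c_{i,j})$, so that $\Pr{\theta \in [0,1)}{h^{\theta}_i = j} = w_{i,j}$ and $h^{\theta}$ reproduces the coordinatewise marginals of independent sampling from $\calD$. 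Hence for the target codeword $h$,
\[
\Ex{\theta\in[0,1)}{\dis(h^{\theta},h)} \;=\; \Ex{i}{\Ex{\theta}{\indi{h^{\theta}_i \neq h_i}}} \;=\; \Ex{i}{1 - w_{i,h_i}} \;=\; \Ex{i}{\Ex{j\sim \calD_i}{\indi{h_i \neq j}}} \;\le\; \delta_{dec},
\]
so there is a threshold $\theta^{*}$ with $\dis(h^{\theta^{*}},h) \le \delta_{dec}$; since $\delta_{dec}\le \delta/2$, running the unique decoder of $\calC$ on $h^{\theta^{*}}$ returns $h$.

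To make the choice of threshold deterministic, I would observe that $\theta \mapsto h^{\theta}$ is piecewise constant, changing value only at one of the at most $(q+1)n$ breakpoints $\{c_{i,j}\}$, so $h^{\theta}$ assumes at most $\calO(qn)$ distinct values as $\theta$ ranges over $[0,1)$. The algorithm enumerates these $\calO(qn)$ candidate words $h^{\theta}$, builds each in time $\calO(n)$, runs the unique decoder on it in time $\calT(n)$, and checks whether the returned codeword $\hat{h}$ satisfies $\Ex{i}{\Ex{j\sim \calD_i}{\indi{\hat{h}_i \neq j}}} \le \delta_{dec}$; by the displayed computation at least one candidate produces $h$, and by the uniqueness proved above no other codeword survives the check, so the algorithm outputs exactly $h$. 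The total running time is $\calO(qn)\cdot \calT(n)$ (plus lower-order bookkeeping for the breakpoints). I do not expect a real obstacle here — the argument is essentially the classical threshold-rounding derandomization of Generalized Minimum Distance decoding — and the only point needing mild care is the boundary case $\delta_{dec}=\delta/2$ in the uniqueness argument, which is handled by observing that $h^{\theta^{*}}$ generically lies strictly inside the decoding ball, or simply by taking $\delta_{dec}<\delta/2$ as in the intended applications.
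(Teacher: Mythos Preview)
Your proposal is correct and follows essentially the same approach as the paper: both prove uniqueness via the triangle-type inequality from \cref{lem:decoding_from_distributions}, then derandomize the sampling by threshold rounding on the cumulative sums $c_{i,j}$, observing that only $\calO(qn)$ distinct rounded words arise and at least one lies within $\delta_{dec}$ of $h$. Your added verification step (checking the agreement bound on the decoder's output) is a minor elaboration not in the paper, but harmless.
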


\section{List Decoding Concatenated Codes}
\label{sec:concat}

In this section, we will adapt the techniques developed earlier to decode concatenated codes. Concatenation is a useful operation to obtain codes with smaller alphabet size starting from a code over large alphabet. Previous works on list decoding of concatenated codes \cite{GS00, GuruswamiR06, GuruswamiS02} seem to all rely on list recovery of outer code, with intricate weights to be passed along with inner codewords. We will only use list decodability of outer code.

First, we show that the covering lemma based argument can be used to decode the concatenated code up to the Johnson radius corresponding to product of decoding radius of outer code and the distance of inner code. That is, if the distance of inner code is $\delta_0$, distance of outer code is $\delta_1$, and the decoding radius of outer code is $\delta_{dec}$, we can decode the concatenated code up to radius $\calJ(\delta_{dec} \cdot \delta_0)$. 

Moreover, since concatenated codes do not involve expansion for their distance proof, we do not need to deal with SoS-based pseudocodewords or any low-covariance conditions. In fact, our pseudocodewords will just be local distributions over the inner code for each coordinate of the outer code. Since this set of pseudocodewords can be described as the feasible set corresponding to linear constraints over $\calO(n)$ variables, we can minimize the appropriate norm (which is a convex function) via Ellipsoid method to get a pseudocodeword with the covering property in time $n^{\calO(1)}$.

Note that this is weaker than decoding up to $\calJ(\delta_1\cdot\delta_0)$, which is the Johnson radius corresponding to the true distance of the concatenated code. In \cref{sec::concat_outer_ael}, we will see that by using the decoder of outer code in a white box way, we can get to the Johnson bound $\calJ(\delta_1\cdot\delta_0)$ when the outer code supports list decoding through our Covering Lemma based machinery, like in the case of near-MDS codes of \cref{thm:near_mds_main}.

\subsection{List decoding arbitrary concatenated codes}

Let the $[n,\delta_1,\rho_1]_{q_1}$ outer code be $\calC_1$  and the $[d,\delta_0,\rho_0]_{q_0}$ inner code be $\calC_0$, with $q_1 = q_0^{\rho_0 \cdot d}$. Let the concatenated code be $\calC^*$ with distance at least $\delta = \delta_0\cdot \delta_1$. A codeword $h^*=\calC_0(h)$ of the concatenated code can be seen as a tuple $h^* = (h^*_1,h^*_2,\cdots,h^*_n)$ where each $h^*_i\in \calC_0$, or $h^*_i = \calC_0(h_i)$ for some $h_i\in [q_1]$. Note that not all tuples of this form are codewords, and $(\calC_0(f_1),\calC_0(f_2),\cdots,\calC_0(f_n)) \in \calC^*$ iff $(f_1,f_2,\cdots,f_n)\in \calC_1$. 
	
\begin{definition}
	A pseudocodeword of the concatenated code is a psuedoexpectation operator $\tildeEx{\cdot}$ of degree $d$ over the variables $\zee = \inbraces{Z_{i,j,k}}_{i\in [n],j\in [d],k\in[q_0]}$ that respects the following $d$-local constraints:
	\begin{enumerate}[(i)]
		\item $Z_{i,j,k}^2 ~=~ Z_{i,j,k}$
		\item For every $i\in[n],j\in [d]$, $\sum_{k\in[q_0]} Z_{i,j,k} =1$
		\item	 $\forall i\in [n], \quad (\zee_{i,1},\zee_{i,2},\cdots,\zee_{i,d}) \in \calC_0$.
	\end{enumerate}
\end{definition}

We no longer enforce the constraint for non-negativity of squares of polynomials, and in fact, our pseudoexpectation operators are just a collection of $n$ distributions $\inbraces{\calD_i}_{i\in [n]}$ over $[q_1]$. The weight assigned to $f\in [q_1]$ in distribution $\calD_i$ is $\tildeEx{\indi{\zee_i = \calC_0(f)}}$.

Following is the natural generalization of distances to pseudocodewords.

\begin{definition}[Distance from a pseudocodeword]
The distance of a pseudocodeword $\tildeEx{\cdot}$ and a codeword $h^*$ is defined as
	\[
		\dis(\tildeEx{\cdot},h^*) = \Ex{i\in [n]}{\tildeEx{\dis(\zee_i,h^*_i)}}
	\]
\end{definition}

\begin{lemma}\label{lem:decode_from_pseudo}
	Assume that $\calC_1$ can be list-decoded from radius $\delta_{dec}$ in time $\calT(n)$ with list size $L$.
	
	For any $\eps>0$, there is a deterministic algorithm that given a pseudocodeword $\tildeEx{\cdot}$ outputs the list 
	\[
		\calL(\tildeEx{\cdot},\delta_0\cdot \delta_{dec}) := \inbraces{h^* \in \calC^* \suchthat \dis(h^*,\tildeEx{\cdot}) < \delta_0 \cdot \delta_{dec}}
	\]
	and runs in time $2^{\calO(d)} n^{\calO(1)}+\calO(2^d \cdot n)\cdot \calT(n)$.
\end{lemma}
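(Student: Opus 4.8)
The plan is to observe that a pseudocodeword $\tildeEx{\cdot}$ of the concatenated code is nothing but a product of $n$ local distributions: writing $\calD_i$ for the distribution on $[q_1]$ assigning weight $\tildeEx{\indi{\zee_i = \calC_0(\alpha)}}$ to $\alpha \in [q_1]$, we have $\tildeEx{\dis(\zee_i,c)} = \Ex{\alpha \sim \calD_i}{\dis(\calC_0(\alpha),c)}$ for every $c\in\calC_0$, and hence $\dis(\tildeEx{\cdot},h^*) = \Ex{i}{\Ex{\alpha\sim\calD_i}{\dis(\calC_0(\alpha),h^*_i)}}$ for any $h^* = \calC_0(\bar h)\in\calC^*$ with $\bar h\in\calC_1$. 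The crux is the \emph{inner} distance: if $\calC_0(\alpha)\neq h^*_i$ then these are distinct codewords of $\calC_0$, so $\indi{\calC_0(\alpha)\neq h^*_i}\le \dis(\calC_0(\alpha),h^*_i)/\delta_0$, and averaging over $\alpha\sim\calD_i$ gives
\[
\Pr{\alpha\sim\calD_i}{\calC_0(\alpha)\neq h^*_i} ~\le~ \frac{1}{\delta_0}\,\tildeEx{\dis(\zee_i,h^*_i)} \mper
\]
Averaging over $i$, if we sample $f_i\sim\calD_i$ independently to form $\vec f\in[q_1]^n$, then $\Ex{\vec f}{\dis(\vec f,\bar h)} \le \dis(\tildeEx{\cdot},h^*)/\delta_0 < \delta_{dec}$ whenever $\dis(\tildeEx{\cdot},h^*)<\delta_0\delta_{dec}$. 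So on average (hence for some realization) $\vec f$ lies strictly inside the list-decoding radius of $\calC_1$ around $\bar h$.

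To make this deterministic and to recover the whole list, I would derandomize the rounding of $\{\calD_i\}_i$ exactly as in \cref{lem:derandomized_decoding_from_distributions}: embed the $q_1$ atoms of each $\calD_i$ as breakpoints in $[0,1]$ and let the threshold $\theta$ range over the $\le q_1 n + 1$ distinct resulting words $\vec f^{(1)},\vec f^{(2)},\dots$. For each candidate word, call the given list decoder of $\calC_1$ from radius $\delta_{dec}$ to obtain at most $L$ outer codewords; collect all of them over all candidates, re-encode each via $\calC_0$, and keep exactly those $h^*$ with $\dis(\tildeEx{\cdot},h^*)<\delta_0\delta_{dec}$, which is precisely the required list. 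Correctness: for any $h^*\in\calL(\tildeEx{\cdot},\delta_0\delta_{dec})$ the threshold argument above (minimum $\le$ expectation $<\delta_{dec}$) guarantees some candidate $\vec f^{(t)}$ with $\dis(\vec f^{(t)},\bar h)<\delta_{dec}$, so $\bar h$, hence $h^*$, is produced; the final filter removes the rest. For the running time: since $q_0$ is constant, $q_1 = q_0^{\rho_0 d}\le 2^{\calO(d)}$, so computing all local distributions and breakpoints costs $2^{\calO(d)} n^{\calO(1)}$; there are $\calO(2^d n)$ calls to the $\calC_1$ decoder, each $\calT(n)$; and re-encoding/filtering the $\calO(2^d n L)$ candidate outer codewords costs $2^{\calO(d)} n^{\calO(1)}$ (each $\calC_0$-encoding and each distance check touches $\le 2^{\calO(d)} n$ quantities, and $Ln\le\calT(n)$). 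This yields the stated $2^{\calO(d)} n^{\calO(1)} + \calO(2^d n)\cdot\calT(n)$ bound, and along the way the a priori size bound $\calO(2^d n L)$ on the list. The $\eps>0$ in the statement is not needed for this argument; it can be used, if one prefers, merely to keep the inner inequality strict.

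The only place a naive approach fails is the factor of $2$: decoding each coordinate to its nearest inner codeword and running the outer decoder on the result would certify radius only $\delta_0\delta_{dec}/2$, since two distinct inner codewords at distance $\ge\delta_0$ force (by the triangle inequality for $\Ex{\alpha\sim\calD_i}{\dis(\calC_0(\alpha),\cdot)}$) the bound $\tildeEx{\dis(\zee_i,h^*_i)}\ge\delta_0/2$ on every disagreeing coordinate. Replacing nearest-codeword rounding by (derandomized) sampling from the local distributions $\calD_i$ — the same device used in \cref{lem:decoding_from_distributions} and \cref{lem:derandomized_decoding_from_distributions} — removes this loss and is what lets us reach radius $\delta_0\delta_{dec}$. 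Everything else is bookkeeping, the one thing to watch being that $q_1$ (and hence the number of threshold candidates and the inner encoding/decoding costs) stays $2^{\calO(d)}$, which holds because the inner alphabet $q_0$ is constant.
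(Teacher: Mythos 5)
Your proof is correct and matches the paper's approach essentially step for step: extract the local distributions $\calD_i$ on $[q_1]$ from the pseudocodeword, use the inner code's distance $\delta_0$ to turn $\dis(\tildeEx{\cdot},h^*)<\delta_0\delta_{dec}$ into $\Ex{i}{\Pr{\alpha\sim\calD_i}{\alpha\neq\bar h_i}}<\delta_{dec}$, derandomize via threshold rounding as in \cref{lem:derandomized_decoding_from_distributions}, call the $\calC_1$ list decoder on each threshold word, take the union of the lists, re-encode, and prune. Your additional remark about why nearest-codeword hard-decision rounding would only certify radius $\delta_0\delta_{dec}/2$ is a helpful observation that the paper leaves implicit.
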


\begin{proof}
	We use $\tildeEx{\cdot}$ to get local distributions $\calD_i$ over $[q_1]$ for each coordinate $i$. 
	\begin{align*}
		\dis(\tildeEx{\cdot},h^*) = \Ex{i\in [n]}{\tildeEx{\dis(\zee_i,h^*_i)}} &= \Ex{i\in [n]}{\Ex{f\sim \calD_i}{\dis \inparen{ \calC_0(f),h^*_i }}}\\
		&\geq \Ex{i\in [n]}{\Ex{f\sim \calD_i}{\indi{f\neq h_i}\dis \inparen{ \calC_0(f),h^*_i } }}\\
		&\geq \Ex{i\in [n]}{\Ex{f\sim \calD_i}{\indi{f\neq h_i} \delta_0}}\\
		&\geq \Ex{i\in [n]}{\Ex{f\sim \calD_i}{\indi{f\neq h_i}\dis \inparen{ \calC_0(f),h^*_i } }}\\
		&= \delta_0 \cdot \Ex{i\in [n]}{\Ex{f\sim \calD_i}{\indi{f\neq h_i}}}
	\end{align*}
	Any codeword $h^* =\calC_0(h) \in \calL$ must have the property that $\dis(\tildeEx{\cdot},h^*) <\delta_{dec} \cdot \delta_0$, so that
	\[
		\Ex{i}{\Ex{f\sim \calD_i}{\indi{f\neq h_i}}} < \delta_{dec}.
	\]
	Finally, we use \cref{lem:derandomized_decoding_from_distributions} to decode from the above agreement. The only modification is that since we might potentially deal with the \emph{list} decoding algorithm of outer code $\calC_1$, we take a union of all the lists generated by the different calls corresponding to different thresholds, and then prune it finally. For any $h^* = \calC_0(h)\in \calL$, there is some threshold for which the $[q_1]^n$ string generated in \cref{lem:derandomized_decoding_from_distributions} will be at distance $<\delta_{dec}$ from $h$. Therefore, $h$ will be contained in at least one of the lists discovered by the algorithm.
\end{proof}

\begin{lemma}\label{lem:concat_covering}
	For any $\eps>0$, there is an algorithm that given $g\in [q_0]^{nd}$ and $\delta>0$, finds a pseudocodeword $\tildeEx{\cdot}$ such that $\calL (g,\calJ(\delta)-\eps) \subseteq \calL(\tildeEx{\cdot},\delta-\eps_2)$ in time $2^{\calO(d)}\cdot n^{\calO(1)}$ for some $\eps_2>0$.
\end{lemma}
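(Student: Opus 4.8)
The plan is to carry out the concatenated-code analogue of \cref{lem:abstract_algorithmic_covering,lem:cover_for_list_tanner}, but with the Sum-of-Squares relaxation stripped away, since the distance of $\calC^*$ does not rely on expansion. Recall that a concatenated pseudocodeword is simply a collection $\{\calD_i\}_{i\in[n]}$ of distributions over $\calC_0$ (equivalently over $[q_1]$), one per outer coordinate, so the set of all concatenated pseudocodewords is a product of $n$ simplices, each on the $|\calC_0| = q_1$ codewords of $\calC_0$. Writing $\delta = (1-\tfrac1{q_0})(1-\beta)$ so that $\calJ(\delta) = (1-\tfrac1{q_0})(1-\sqrt\beta)$, first I would let $u = \chi_{q_0}(g)$ be the unit-norm embedding of the received word into $(\R^{q_0-1})^{nd}$ exactly as in the proof of \cref{thm:johnson_bound}, set $\gamma = \sqrt\beta + \tfrac{q_0}{q_0-1}\eps$, and solve the convex program that minimizes the convex quadratic $\Psi(\tildeEx{\cdot}) = \ip{\tildeEx{\chi_{q_0}(\zee)}}{\tildeEx{\chi_{q_0}(\zee)}}$ over all concatenated pseudocodewords $\tildeEx{\cdot}$ subject to $\ip{\tildeEx{\chi_{q_0}(\zee)}}{u} > \gamma$. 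This is a convex minimization over an explicit polytope with $\poly(n)\cdot q_1$ variables, and (via the Ellipsoid method) is solvable in time $2^{\bigoh(d)}\cdot n^{\bigoh(1)}$; the strict inequality is handled as in \cref{lem:abstract_algorithmic_covering} by noting that every integral pseudocodeword coming from a codeword in $\calL(g,\calJ(\delta)-\eps)$ is feasible.

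The second step is the optimality argument. For $h^* = \calC_0(h) \in \calL(g,\calJ(\delta)-\eps)$, the coordinate-wise bookkeeping of the embedding (as in \cref{thm:johnson_bound}, now averaging over the $nd$ inner symbols) gives $\ip{u}{\chi_{q_0}(h^*)} = 1 - \tfrac{q_0}{q_0-1}\dis(g,h^*) > \sqrt\beta + \tfrac{q_0}{q_0-1}\eps = \gamma$. Now let $\tildeEx{\cdot}$ attain the minimum $\Psi^*$; the point-mass pseudocodeword $\PExp^{(h^*)}$ assigning $\calC_0(h_i)$ to coordinate $i$ satisfies constraints (i)--(iii) in the definition of a concatenated pseudocodeword and is feasible for the program, so — using $\Psi^* \ge \ip{\tildeEx{\chi_{q_0}(\zee)}}{u}^2 > \gamma^2$ — moving $\tildeEx{\cdot}$ infinitesimally towards $\PExp^{(h^*)}$ as in \cref{lem:abstract_algorithmic_covering} would strictly decrease $\Psi$ unless $\ip{\tildeEx{\chi_{q_0}(\zee)}}{\chi_{q_0}(h^*)} > \gamma^2 \ge \beta + 2\sqrt\beta\cdot\tfrac{q_0}{q_0-1}\eps$. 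Translating back via $\dis(\tildeEx{\cdot},h^*) = (1-\tfrac1{q_0})\bigl(1 - \ip{\tildeEx{\chi_{q_0}(\zee)}}{\chi_{q_0}(h^*)}\bigr)$ then yields
\[
\dis(\tildeEx{\cdot},h^*) ~<~ \delta - 2\eps\sqrt{1 - \tfrac{q_0\,\delta}{q_0-1}} \mcom
\]
so taking $\eps_2 = 2\eps\sqrt{1-\tfrac{q_0\delta}{q_0-1}} = \bigomega(\eps)$ (positive whenever $\delta < 1 - \tfrac1{q_0}$) gives $\calL(g,\calJ(\delta)-\eps) \subseteq \calL(\tildeEx{\cdot},\delta-\eps_2)$, as required.

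I do not expect a genuine obstacle here; the only points needing care are bookkeeping ones. One must note that $\Psi(\tildeEx{\cdot})$ depends only on the single-coordinate marginals $\{\calD_i\}$, so no higher-degree SoS variables, no non-negativity-of-squares constraints, and no $\eta$-good conditioning are needed (in contrast with the Tanner and AEL cases), and that the strict-inequality constraint can be relaxed to a closed one without affecting the argument. This is precisely because everything downstream — in particular the elementary bound $\dis(\tildeEx{\cdot},h^*) \ge \delta_0\cdot\Ex{i}{\Ex{f\sim\calD_i}{\indi{f\neq h_i}}}$ used in \cref{lem:decode_from_pseudo} — is a GMD-style inequality that does not invoke expansion, which is why this covering lemma, unlike those for Tanner and AEL codes, requires no spectral input.
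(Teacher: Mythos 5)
Your proof is correct and matches the paper's intended approach. The paper only gives a two-sentence proof sketch here, pointing back to \cref{lem:cover_for_list_tanner} and \cref{lem:abstract_algorithmic_covering}; you carry out precisely that plan with the details filled in: the concatenated pseudocodewords form a product of $n$ simplices over $\calC_0$, you embed via the $1$-local $\chi_{q_0}$ into $(\R^{q_0-1})^{nd}$, minimize $\Psi$ by Ellipsoid over a polytope with $n\cdot |\calC_0| \le n\cdot 2^{\calO(d)}$ variables, and reuse the perturbation-towards-$\PExp^{(h^*)}$ argument to get $\ip{\tildeEx{\chi_{q_0}(\zee)}}{\chi_{q_0}(h^*)} > \gamma^2$ for every $h^*$ in the list, which translates back to $\dis(\tildeEx{\cdot},h^*) < \delta - \eps_2$ with $\eps_2 = 2\eps\sqrt{1-\tfrac{q_0\delta}{q_0-1}}$. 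Your observations that $\Psi$ depends only on the per-coordinate marginals of the $\calD_i$'s, and that no SoS non-negativity or $\eta$-good conditioning is needed since the argument invokes no expansion, are exactly the reasons the paper drops to this lighter pseudocodeword relaxation in the concatenated setting.
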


\begin{proof}[Proof Sketch]
	This is again an algorithmic implementation of the covering lemma, with the distributions over codewords to be relaxed to distributions over pseudocodewords as defined above. Minimizing the appropriate norm while optimizing over the convex set of pseudocodewords gives us this covering property. We omit the details since the argument is very similar to \cref{lem:cover_for_list_tanner}.
	
	Since the above optimization problem minimizes a convex quadratic function with linear constraints on at most $2^d\cdot n$ variables, we can get a running time of $2^{\calO(d)}\cdot n^{\calO(1)}$.
\end{proof}

\begin{theorem}
	Let $\calC_0$ be a binary inner code of blocklength $d$, distance $\delta_0$ and rate $\rho_0$. Also let $\calC_{1}$ be an outer code of blocklength $n$, distance $\delta_1$ and rate $\rho_{1}$ on an alphabet of size $|\calC_0|=2^{\rho_0d}$. Assume that $\calC_1$ can be list-decoded from radius $\delta_{dec}$ in time $\calT(n)$ with list size $L$.
	
	Then the code $\calC^*$ obtained by concatenating $\calC_1$ with $\calC_0$ can be list decoded up to a radius of $\calJ(\delta_{dec}\delta_0)$ in time $2^{\calO(d)}n^{\calO(1)} + \calO(2^d \cdot n) \cdot \calT(n)$.
\end{theorem}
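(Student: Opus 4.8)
The plan is to obtain this theorem as essentially the composition of the algorithmic covering lemma \cref{lem:concat_covering} with the pseudocodeword decoder \cref{lem:decode_from_pseudo}, followed by a cheap pruning pass; the substantive content has already been isolated into those two lemmas. Fix $\eps > 0$ (the theorem will be proved with decoding radius $\calJ(\delta_{dec}\delta_0) - \eps$, for every such $\eps$). Given a received word $g \in [q_0]^{nd}$, I would first run \cref{lem:concat_covering} with target distance $\delta = \delta_{dec}\delta_0$. This produces, in time $2^{\calO(d)} n^{\calO(1)}$, a pseudocodeword $\tildeEx{\cdot}$ of the concatenated code --- concretely a tuple of local distributions $\{\calD_i\}_{i\in[n]}$, each over $[q_1]$ and supported on $\calC_0$ --- satisfying the covering inclusion $\calL(g,\calJ(\delta_{dec}\delta_0) - \eps) \subseteq \calL(\tildeEx{\cdot},\, \delta_{dec}\delta_0 - \eps_2)$ for some $\eps_2 = \eps_2(\eps, q_0, d) > 0$.

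Next I would feed $\tildeEx{\cdot}$ into \cref{lem:decode_from_pseudo}, instantiated with the list-decoding radius $\delta_{dec}$ of the outer code $\calC_1$. That lemma returns the entire list $\calL(\tildeEx{\cdot},\, \delta_0\cdot\delta_{dec})$ of concatenated codewords within pseudocodeword-distance $\delta_0\delta_{dec}$, in time $2^{\calO(d)} n^{\calO(1)} + \calO(2^d\cdot n)\cdot\calT(n)$; the $\calO(2^d n)$ factor is the number of thresholds tried in the threshold-rounding derandomization of \cref{lem:derandomized_decoding_from_distributions}, each of which invokes the outer list-decoder once. Because $\delta_{dec}\delta_0 - \eps_2 \le \delta_0\delta_{dec}$, the covering inclusion gives $\calL(g,\calJ(\delta_{dec}\delta_0) - \eps) \subseteq \calL(\tildeEx{\cdot},\, \delta_0\delta_{dec})$, so every codeword we must recover appears among the outputs. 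Finally I would prune: for each $h^*$ in the returned list, compute $\dis(g, h^*)$ in time $\calO(nd)$ and discard those at distance $\ge \calJ(\delta_{dec}\delta_0) - \eps$ from $g$; the survivors are exactly $\calL(g,\calJ(\delta_{dec}\delta_0) - \eps)$. Since the returned list has size at most $\calO(2^d n L)$ (a union of $\calO(2^d n)$ outer-decoder lists of size $L$ each), the pruning cost is $\calO(2^d n L \cdot nd)$, which is absorbed into the stated time bound.

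There is no real obstacle remaining: the covering-lemma work (minimizing the appropriate $\ell_2$-norm over the polytope of local-distribution pseudocodewords, e.g.\ via the ellipsoid method, with no SoS or low-covariance machinery needed since concatenated codes carry no expansion in their distance proof) lives in \cref{lem:concat_covering}, and the distance-plus-rounding argument lives in \cref{lem:decode_from_pseudo}. The only things to watch are bookkeeping: that the covering loss $\eps_2$ is comfortably within the radius $\delta_0\delta_{dec}$ handled by \cref{lem:decode_from_pseudo} (it is, since that lemma recovers the whole ball of that radius), and that the cumulative list size across all threshold calls stays $\calO(2^d n L)$ so the final pruning fits the claimed running time.
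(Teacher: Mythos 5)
Your proposal is correct and follows exactly the same route as the paper: invoke \cref{lem:concat_covering} to produce a pseudocodeword covering the list, then invoke \cref{lem:decode_from_pseudo} to recover and prune it. The paper's own proof is a two-line statement of precisely this composition; you simply spell out the running-time bookkeeping (threshold count, cumulative list size, final pruning pass) more explicitly.
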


\begin{proof}
	Given $g$ such that we wish to find $\calL = \calL(g,\calJ(\delta_{dec}\cdot \delta_0) -\eps)$, we first use \cref{lem:concat_covering} to find a pseudocodeword $\tildeEx{\cdot}$ that covers the list, and then use it via \cref{lem:decode_from_pseudo} to find $\calL$.
\end{proof}

\subsection{List Decoding with outer AEL codes}\label{sec::concat_outer_ael}

In this section, we show that when the near-MDS codes obtained using AEL distance amplification are used for concatenation with smaller alphabet codes, we can list decode the smaller alphabet code up to its Johnson bound. This will not be done by a black-box call to the list decoding/list recovery algorithm of the outer code, but we will crucially use the SoS-based list decoding strategy for the outer code.

To the best of our knowledge, list-decoding to the Johnson bound of the concatenated code has not been achieved for the Reed-Solomon outer code. This shows that while our near-MDS codes via AEL construction match Reed-Solomon codes in terms of list decoding radius, they have some extra desirable features. Of course, the runtime of our algorithms is quite poor compared to the near-linear time Reed-Solomon decoders.

The pseudocodewords for the concatenated code will be concatenations of the pseudocodewords of outer AEL code. Since the covering lemma works irrespective of the code we are working with, we can get a cover for the list of (final) codewords by efficiently optimizing over the pseudocodewords of outer code. Then a simple argument shows that the distance property for outer pseudocodewords translates to distance property for concatenated pseudocodewords.

Let us recall some notation for AEL Codes. Let $\AELC$ be an AEL code determined by an $(n,d,\lambda)$-expander graph $G=(L,R,E)$, an inner code $\calC_0$ of distance $\delta_0$, rate $r_0$, alphabet size $q_0$, and an outer code $\calC_1$ of distance $\delta_1$, rate $r_1$ and alphabet size $q_1 = |\calC_0|$. The code $\AELC$ is of alphabet size $q_0^d$, rate $r_0r_1$ and (designed) distance $\delta_0-\frac{\lambda}{\delta_1}$.

Suppose we concatenate $\AELC$ with an $[d_2,\delta_2,r_2]_{q_2}$ code $\calC_2$ such that $q_0^d = q_2^{r_2\cdot d_2}$, to obtain a new $\left[ nd_2, \inparen{\delta_0-\frac{\lambda}{\delta_1}} \delta_2, r_0r_1r_2\right]_{q_2}$ code $\calC_3$. Note that for the code $\calC_3$, distance is defined for $f_1,f_2\in [q_2]^{nd_2}$ as 
\begin{equation}\label{eqn:dist_for_concat}
	\dis^*(f_1,f_2) = \Ex{\ri\in R,j\in [d_2]}{\indi{f_1(\ri,j) \neq f_2(\ri,j)}}
\end{equation}
which can also be extended to distances between a pseudocodeword $\tildeEx{\cdot}$ and a codeword $h^*\in \calC_3$ as
\[
	\dis^*(\tildeEx{\cdot}, h^*) = \Ex{\ri\in R,j\in [d_2]}{\tildeEx{\indi{\calC_2(\zee_{N_R(\ri)})(j) \neq h^*(r,j) }}}
\]
If the codeword $h^*\in \calC_3$ is obtained by concatenating $h\in \AELC$ with $\calC_2$, we denote $h^* = \calC_2(h)$, and the above distance expression is the same as
\[
	\dis^*(\tildeEx{\cdot}, \calC_2(h)) = \Ex{\ri\in R,j\in [d_2]}{\tildeEx{\indi{\calC_2(\zee_{N_R(\ri)})(j) \neq \calC_2(h_{N_R(\ri)}) (j) }}}
\]
\begin{theorem}
	Let the code $\calC_1$ be decodable from radius $\delta_{dec}$, and $\lambda\leq \kappa \cdot \delta_{dec}$, so that the distance of code $\calC_3$ is at least $\delta_2(\delta_0-\kappa)$. For every $\eps > 0$, the code $\calC_3$ can be list decoded up to the radius $\calJ_{q_2} \inparen{ \inparen{\delta_0-\kappa} \delta_2 } - \eps$ in time $n^{\calO_{q,d,\delta_{dec}}(1/\eps^4)}$.
\end{theorem}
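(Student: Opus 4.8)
The proof will follow the template of \cref{thm:list_decoding_ael}, the only genuinely new ingredients being (a) setting up the covering lemma so that it accounts for the outer concatenation with $\calC_2$, and (b) transferring the distance bound from the concatenated distance $\dis^*$ down to the right distance $\dis^R$ of the underlying AEL pseudocodeword. The pseudocodewords we work with are exactly the AEL pseudocodewords of $\AELC$: degree-$t$ pseudoexpectation operators $\tildeEx{\cdot}$ on $\zee = \{Z_{e,j}\}_{e\in E, j\in[q_0]}$ respecting $\zee_{N_L(\li)}\in\calC_0$. For each pair $(\ri,j)\in R\times[d_2]$, the map $f \mapsto \chi_{q_2}\inparen{\calC_2(f_{N_R(\ri)})(j)}$ is a $d$-local $\R^{q_2-1}$-valued function of $f\in[q_0]^E$ (it only depends on $f_{N_R(\ri)}$, which we identify with the alphabet $[q_0^d]$ of $\calC_2$), so the vector-valued $d$-local map $\chi_3$ given by $(\chi_3(f))(\ri,j) = \chi_{q_2}\inparen{\calC_2(f_{N_R(\ri)})(j)}$ is well defined, and $\tildeEx{\chi_3(\zee)}$ is a well-defined vector. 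By the simplex identity used throughout \cref{sec:covering}, for $h^* = \calC_2(h)\in\calC_3$ one has $\ip{\chi_3(g)}{\chi_3(h^*)} = 1 - \tfrac{q_2}{q_2-1}\dis^*(g,h^*)$.

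First I would run the algorithmic covering lemma with this embedding, exactly as in \cref{lem:abstract_algorithmic_covering_ael} and \cref{lem:cover_for_list_ael}: take $u = \chi_3(g)$, write $(\delta_0-\kappa)\delta_2 = \inparen{1-\tfrac{1}{q_2}}(1-\beta)$, and minimize $\norm{\tildeEx{\chi_3(\zee)}}^2$ over degree-$t$ AEL pseudocodewords subject to $\ip{\tildeEx{\chi_3(\zee)}}{u} > \sqrt{\beta} + \Omega(\eps)$. The same optimality/convexity argument yields, in time $n^{\bigoh(t)}$, a pseudocodeword $\tildeEx{\cdot}$ such that every $h^* = \calC_2(h)$ with $\dis^*(g,h^*) < \calJ_{q_2}\inparen{(\delta_0-\kappa)\delta_2} - \eps$ satisfies $\dis^*(\tildeEx{\cdot},h^*) < (\delta_0-\kappa)\delta_2 - \eps_2$, where $\eps_2 = \Omega(\eps)$ with the implied constant depending only on $q_2, \delta_0, \delta_2, \kappa$. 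Next I would push this down to $\dis^R$: since $\calC_2$ has distance $\delta_2$, for each $\ri$ the $d$-local function $\Ex{j\in[d_2]}{\indi{\calC_2(\zee_{N_R(\ri)})(j)\neq \calC_2(h_{N_R(\ri)})(j)}}$ dominates $\delta_2\cdot\indi{\zee_{N_R(\ri)}\neq h_{N_R(\ri)}}$ pointwise, so by the domination property of pseudoexpectations $\dis^*(\tildeEx{\cdot},\calC_2(h)) \geq \delta_2\cdot\dis^R(\tildeEx{\cdot},h)$, and therefore $\dis^R(\tildeEx{\cdot},h) < (\delta_0-\kappa) - \eps_2/\delta_2$.

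From here the argument is essentially a verbatim replay of the proof of \cref{thm:list_decoding_ael}. I would apply \cref{lem:low_covariance_solution} together with two applications of Markov's inequality: choosing $\eta = \Theta(\eps_2^2\delta_{dec}/\delta_0)$, a random $k \leq \bigoh_{q,d}(1/\eta^2)$ and a random conditioning $(V,\beta)$ with $|V|=k$ produce, with probability $\Omega(\eps_2/\delta_0)$, a conditioned pseudocodeword $\dupPE{\cdot}$ that is $\eta$-good and still satisfies $\dis^R(\dupPE{\cdot},h) < (\delta_0-\kappa) - \eps_2/(2\delta_2)$. Feeding the $\eta$-good property into \cref{lem:AEL_amplification} gives $\dis^R(\dupPE{\cdot},h) \geq \delta_0 - \tfrac{\lambda+\eta}{\dis^L(\dupPE{\cdot},h)}$; combining with the upper bound, using $\lambda \leq \kappa\delta_{dec}$ and $\eta$ sufficiently small, and rearranging yields $\dis^L(\dupPE{\cdot},h) \leq \delta_{dec} - \Omega(\eps_2\delta_{dec}/\delta_0)$. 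Then \cref{lem:unique_decoding_ael} recovers $h$, hence $h^* = \calC_2(h)$, from $\dupPE{\cdot}$ with probability $\Omega(\eps_2/\delta_0)$. Since $k = \bigoh_{q,d,\delta_{dec}}(1/\eps^2)$, the SoS-degree used is $t = \bigoh_{q,d,\delta_{dec}}(1/\eps^4)$; derandomizing the choice of $k$, of $(V,\beta)$ (at most $n^{\bigoh(k)}$ options), and the rounding inside \cref{lem:unique_decoding_ael} via \cref{lem:derandomized_decoding_from_distributions}, and finally pruning the collected candidates by directly checking $\dis^*(g,\cdot) < \calJ_{q_2}((\delta_0-\kappa)\delta_2)-\eps$, gives the claimed deterministic $n^{\bigoh_{q,d,\delta_{dec}}(1/\eps^4)}$-time algorithm.

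The step I expect to require the most care is the combination of the first two paragraphs: verifying that "restrict to $N_R(\ri)$, apply $\calC_2$, apply $\chi_{q_2}$" is a bona fide $\bigoh(d)$-local vector-valued function (so that the covering-lemma optimization and the vector $\tildeEx{\chi_3(\zee)}$ it manipulates are well defined), and that the pointwise-domination argument correctly transfers the $\delta_2$ factor from $\dis^*$ down to $\dis^R$ at the level of pseudoexpectations rather than genuine distributions. Once these two reductions are in place, the rest is mechanical: it is exactly the AEL list-decoding argument run with effective distance $\delta_0-\kappa$ and an extra $1/\delta_2$ scaling of the $\eps$-slack, which only affects the hidden constants.
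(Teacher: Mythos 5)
Your proof is correct and takes essentially the same approach as the paper: both use the $d$-local embedding of $\calC_3$ in the covering lemma, the pointwise domination $\dis^*(\tildeEx{\cdot},\calC_2(h)) \geq \delta_2\,\dis^R(\tildeEx{\cdot},h)$, the AEL amplification inequality for $\eta$-good pseudocodewords, conditioning, and unique decoding of $\calC_1$. The only cosmetic difference is that you apply the $\delta_2$-domination step to pass from $\dis^*$ to $\dis^R$ before conditioning and then replay the AEL argument verbatim, whereas the paper folds the domination directly into the AEL amplification inequality (its equation for $\dis^*$ versus $\dis^L$) and tracks $\dis^*$-closeness through the conditioning; the two bookkeeping choices are equivalent.
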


\begin{proof}[Proof Sketch]
We consider the same pseudocodewords as we did for AEL codes, and recall that we proved the following distance property for $\eta$-good pseudocodewords in \cref{lem:AEL_amplification}. For any $h\in \AELC$,

\[
	\Ex{\ri}{\tildeEx{\indi{\zee_{N_R(\ri)} \neq h_{N_R(\ri)}}}} \geq \delta_0 - \frac{\lambda+\eta}{\Ex{\li}{\tildeEx{\indi{\zee_{N_L(\li)} \neq h_{N_L(\li)}}}}}
\]
We extend the above distance property to the distance according to final $\calC_3$,
\begin{align*}
	\Ex{\ri\in R,j\in [d_2]}{\tildeEx{\indi{\calC_2(\zee_{N_R(\ri)})(j) \neq \calC_2(h_{N_R(\ri)})(j)}}}
&= \Ex{\ri}{\tildeEx{\dis\inparen{\calC_2(\zee_{N_R(\ri)}) , \calC_2(h_{N_R(\ri)})}}} \\
	&\geq \delta_2 \cdot \Ex{\ri}{\tildeEx{\indi{\zee_{N_R(\ri)} \neq h_{N_R(\ri)}}}} \\
	&\geq \delta_2 \inparen{ \delta_0 - \frac{\lambda+\eta}{\Ex{\li}{\tildeEx{\indi{\zee_{N_L(\li)} \neq h_{N_L(\li)}}}}} }
\end{align*}

That is,
\begin{equation}\label{eqn:concat_ael_distance}
	\dis^*\inparen{\tildeEx{\cdot}, \calC_2(h)} \geq \delta_2 \inparen{ \delta_0 - \frac{\lambda+\eta}{ \dis^L \inparen{\tildeEx{\cdot},h} }}
\end{equation}
The key distance property established, the rest of the argument is same as while decoding AEL codes. Given $g$ to be decoded, let's call the list of codewords at distance less than $\calJ(\delta_2(\delta_0-\kappa)) - \eps$ as $\calL$. We find a pseudocodeword $\tildeEx{\cdot}$ that is close to all the codewords in $\calL$. For any $h^*\in \calL$ such that $h^* = \calC_2(h)$,
\[
	\dis^*(\tildeEx{\cdot}, h^*) = \dis^*(\tildeEx{\cdot}, \calC_2(h)) < \delta_2(\delta_0-\kappa)-\eps_2
\]
where  $\eps_2 = 2\sqrt{1-\frac{q_2}{q_2-1} \delta_2(\delta_0-\kappa)}\cdot \eps = \Theta(\eps)$.

The covering lemma used here minimizes $\ell_2$-norm of the embedding corresponding to the concatenated code $\calC_3$ with the alphabet size $q_2$, while the relaxation for pseudocodeword was defined for $\AELC$ which does not depend on $\calC_2$ at all. However, this embedding according to $\calC_3$ is a linear function of $\tildeEx{\cdot}$, therefore it is still a convex function that is minimized.

By conditioning $\tildeEx{\cdot}$, we obtain another pseudocodeword $\dupPE{\cdot}$ which is $\eta$-good and retains its closeness to $\calC_2(h)$. By choosing $\lambda$ and $\eta$ small enough, \cref{eqn:concat_ael_distance} allows us to conclude that 
\[ 
	\dis^L \inparen{\tildeEx{\cdot},h} = \Ex{\li}{\tildeEx{\indi{\zee_{N_L(\li)} \neq h_{N_L(\li)}}}} < \delta_{dec}-\eps_3
\] 
for some $\eps_3>0$, which is sufficient to find $h$, and therefore $\calC_2(h)$, via \cref{lem:unique_decoding_ael}.

As before, this algorithm can be derandomized by going over all random choices which must discover the entire list $\calL(g,\calJ_{q_2}((\delta_0-\kappa)\delta_2)-\eps)$.
\end{proof}

Note that this argument does not place any restriction on $\delta_0$, and in particular, if we choose $\kappa$ to be much smaller than $\delta_0$, we can decode arbitrarily close to the Johnson radius corresponding to the product bound $\calJ(\delta_0 \cdot\delta_2)$, even for small values of $\delta_0$ like $1/3$. In contrast, the existing list decoding algorithms for concatenated codes via list recovery of outer Reed-Solomon code \cite{GuruswamiS02} only approach this Johnson bound when the outer distance is very close to 1.

In fact, we can also decode up to the Johnson bound of product of distances for any outer code that supports list decoding up to Johnson bound via our Covering Lemma/SoS-based machinery. This also includes Tanner code in particular.

\section{List Decoding Codes on Square Cayley Complex}

\label{sec:square_complex}

First, let us set some notation for the quadripartite left-right square Cayley complex on the group $G$ of size $n$ with generator sets $A,B$ of size $d$ each. Let $\calS$ denote the set of squares. The functions $X,Y,U,V$ are 4 bijections from $G\times A \times B$ to $\calS$, with the following property:
\[
	X(g,a,b) = Y(ga,a^{-1},b) = U(bga,a^{-1},b) = V(bg,a,b^{-1})
\]
The set $X(g,\cdot,b)\subseteq \calS$ is defined as $\{X(g,a,b) \suchthat a\in A\}$, and likewise for $X(g,\cdot,\cdot)$, $X(g,a,\cdot)$ and corresponding $Y,U,V$ sets. The sets $X(g,\cdot,\cdot),Y(g,\cdot,\cdot),U(g,\cdot,\cdot)$ and $V(g,\cdot,\cdot)$ should be seen as the analogs of sets $N_L(\li)$ and $N_R(\ri)$ from the Tanner codes.

The code on this square Cayley complex is then defined as 
\begin{align*}
	\calC^{SCC} = \{h \in [q]^{\calS} \suchthat \forall g\in G,\ & h|_{X(g,\cdot,\cdot)} \in C_A\otimes C_B, \\ 
	&h|_{Y(g,\cdot,\cdot)} \in C_A\otimes C_B, \\
	&h|_{U(g,\cdot,\cdot)} \in C_A\otimes C_B, \\
	&h|_{V(g,\cdot,\cdot)} \in C_A\otimes C_B\}
\end{align*}

where $C_A$ and $C_B$ are inner codes of blocklength $d$ each, and $\calC_A\otimes \calC_B$ is their tensor code.

For the code defined by left-right Cayley complex with inner codes $C_A, C_B$ with parameters $(d,\delta_A,r_A)$ and $(d,\delta_B,r_B)$ respectively, the distance of $\calC^{SCC}$ is lower bounded \cite{DELLM22} by \[ \delta = \delta_A \delta_B(\max(\delta_A,\delta_B) - \lambda)\]Note that the distance of tensor code $\calC_A\otimes \calC_B$ is at least $\delta_A \cdot \delta_B$. 

\begin{theorem}\label{thm:square-decoding}
For every $\eps>0$, there is an algorithm based on $\calO_{q,d}(1/\eps^4)$ levels of the SoS-hierarchy that runs in time $n^{\calO_{q,d}(1/\eps^4)}$ and can list decode $\calC^{SCC}$ up to $\calJ_q(\delta)- \eps$.
\end{theorem}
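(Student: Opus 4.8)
The plan is to run the same four-part template that proves \cref{thm:tanner-decoding}, substituting the spectral distance analysis of Dinur \etal~\cite{DELLM22} for the square Cayley complex in place of \Zemor's expander-mixing argument. First, set up the relaxation: identify the coordinate set $[m]$ with the set of squares $\calS$, and define a degree-$t$ \emph{pseudocodeword} to be a degree-$t$ pseudoexpectation operator $\tildeEx{\cdot}$ over $\zee = \{Z_{s,j}\}_{s\in\calS, j\in[q]}$ that respects, for every $g\in G$, the four $d^2$-local constraints $\zee_{X(g,\cdot,\cdot)} \in \calC_A\otimes\calC_B$ and likewise for $Y(g,\cdot,\cdot)$, $U(g,\cdot,\cdot)$, $V(g,\cdot,\cdot)$; these are enforced by setting suitable degree-$\le d^2$ polynomials to zero, exactly as for Tanner pseudocodewords (\cref{def:constraints_on_sos}), and the distance is $\Delta(\tildeEx{\cdot},h) = \Ex{s\in\calS}{\tildeEx{\indi{\zee_s \neq h_s}}}$. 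Using the $1$-local embedding $\chi:[q]\to\R^{q-1}$ applied pointwise over $\calS$, the abstract algorithmic covering lemma (\cref{lem:abstract_algorithmic_covering}, instantiated as in \cref{lem:cover_for_list_tanner}) produces in time $n^{O(t)}$ a degree-$t$ pseudocodeword $\tildeEx{\cdot}$ with $\Delta(\tildeEx{\cdot},h) < \delta - \Omega_q(\eps)$ for every $h\in\calL(g,\calJ_q(\delta)-\eps)$.

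Second, reduce correlations. The notion of an $\eta$-good pseudocodeword must control the pseudo-covariance between the pairs of local face-views that the \cite{DELLM22} distance proof pairs up — e.g., between a random $X$-type face $\zee_{X(g,\cdot,\cdot)}$ and the face it is matched to via $\Cay(G,A)$ (and, since that argument resolves the two generator directions in turn, also the pair matched via $\Cay(G,B)$). Either one already has small average covariance in all the required senses, or conditioning on the $d^2$ variables of one random face strictly reduces the average pseudo-variance by $\Omega_{q,d}(\eta^2)$; since pseudo-variance lies in $[0,1]$ and is non-increasing under conditioning, iterating at most $O_{q,d}(1/\eta^2)$ times (and possibly once per direction) yields an $\eta$-good pseudocodeword. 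This is the verbatim analogue of \cref{lem:conditioning_reduces_variance,lem:low_covariance_solution}. A Markov argument identical to the one in \cref{thm:tanner-decoding} shows that for a fixed target $h\in\calL$, with noticeable probability over the (constant-size) conditioning set and the sampled assignment $\beta$, the conditioned solution $\dupPE{\cdot}$ is simultaneously $\eta$-good and still within $\delta-\Omega(\eps)$ of $h$; enumerating all conditioning sets (there are $n^{O_{q,d}(1/\eps^4)}$ of them) and all rounding thresholds derandomizes everything. With $\eta = \Theta(\eps^2)$ the required SoS degree is $O_{q,d}(1/\eps^4)$.

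Third, prove the SoS distance certificate and finish: an $\eta$-good pseudocodeword and a true codeword $h$ are either $O(\eta)$-close or $\ge\delta-O(\eta)$-far (the analogue of \cref{lem:distance_from_codeword}), assuming, as in the Tanner case, that $\lambda$ is a small enough constant fraction of $\max(\delta_A,\delta_B)$ and $\eta$ is small enough. This is obtained by replaying the distance proof of \cite{DELLM22} with the $d^2$-local functions $X_g(\zee) = \indi{\zee_{X(g,\cdot,\cdot)} \neq h_{X(g,\cdot,\cdot)}}$ (and $Y_g, U_g, V_g$ analogously) in place of genuine indicators: its two ingredients are (i) the robustness/distance of the tensor code $\calC_A\otimes\calC_B$, a statement about $d^2$ coordinates that therefore holds for the local distributions of a pseudocodeword by monotonicity of $\tildeEx{\cdot}$ on local functions, and (ii) the expander mixing lemma for $\Cay(G,A)$ and $\Cay(G,B)$, which lifts to pseudoexpectations via the EML for pseudoexpectations of \cref{sec:distance}, with the error term $\eta$ absorbing the cost of replacing $\tildeEx{X_g \cdot U_{g'}}$ by $\tildeEx{X_g}\cdot\tildeEx{U_{g'}}$ on an $\eta$-good solution; solving the resulting quadratic inequality in $\tau$ (as in \cref{lem:distance_from_codeword}, now with $\delta_A\delta_B$ and $\max(\delta_A,\delta_B)$ in the roles of $\delta_0$) gives the dichotomy. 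Since the covering step guarantees $\Delta(\dupPE{\cdot},h) < \delta-\Omega(\eps)$, the dichotomy forces the close case, so $\Delta(\dupPE{\cdot},h) = O(\eta) = O(\eps^2)$; the per-coordinate local distributions of $\dupPE{\cdot}$ then feed \cref{lem:decoding_from_distributions} (derandomized via \cref{lem:derandomized_decoding_from_distributions}) applied to a unique-decoding algorithm for $\calC^{SCC}$ from radius $\Omega(\delta)$ — such a decoder is available from the expansion/local-testability structure of these codes — and recovers $h$. Ranging over all SoS levels, conditioning sets, and thresholds makes the procedure deterministic, recovers the whole list, and runs in time $n^{O_{q,d}(1/\eps^4)}$.

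I expect the main obstacle to be the third step: faithfully realizing the \cite{DELLM22} distance argument inside the SoS proof system. Unlike \Zemor's one-shot expander-mixing bound, that argument proves the asymmetric bound $\delta_A\delta_B(\max(\delta_A,\delta_B)-\lambda)$ and interleaves tensor-code robustness with mixing over two different Cayley graphs, so one must pin down exactly which pseudo-covariances need to be driven to zero (hence which faces to condition on), and verify that every spectral step is either a statement about a positive semidefinite quadratic form in the vectors underlying $\tildeEx{\cdot}$ or a strictly local polynomial constraint. A secondary point to nail down is the existence and running time of a polynomial-time unique decoder for $\calC^{SCC}$ from a constant fraction of errors, which the final rounding step relies on.
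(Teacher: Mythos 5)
Your overall framework — pseudocodewords over squares, the covering lemma, conditioning to $\eta$-goodness, an SoS distance dichotomy, then rounding to a unique decoder — is exactly the paper's, and you rightly flag step~3 as the crux. But the literal version of step~3 you propose would only reach a strictly weaker distance threshold, and the missing idea is exactly what makes the Dinur et al.\ bound asymmetric. If you run the expander mixing argument with the $d^2$-local face indicators $X_g(\zee) = \indi{\zee_{X(g,\cdot,\cdot)} \neq h_{X(g,\cdot,\cdot)}}$ and $Y_{g'}$ (with the tensor-code distance $\delta_A\delta_B$ supplying the lower bound), the quadratic becomes $\tau^2 - (\delta_A\delta_B - \lambda)\tau + \eta \geq 0$, and the ``far'' branch yields $\Delta \geq \delta_A\delta_B(\delta_A\delta_B - \lambda) - O(\eta)$. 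Since $\delta_A\delta_B < \max(\delta_A,\delta_B)$, this sits strictly below $\delta = \delta_A\delta_B(\max(\delta_A,\delta_B)-\lambda)$, so the covering guarantee $\Delta(\tildeEx{\cdot},h) < \delta - \Omega(\eps)$ no longer forces the close branch and the argument collapses. Writing ``$\delta_A\delta_B$ and $\max(\delta_A,\delta_B)$ in the roles of $\delta_0$'' gestures at the asymmetry but does not explain how a single quadratic over face-indicators can produce it.

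The paper's resolution is a per-$b$ decomposition. Assume WLOG $\delta_A \geq \delta_B$. For each fixed $b \in B$, the squares $\{X(g,a,b)\}$ form the edges of a Tanner code on $\Cay(G,A)$ with inner code $\calC_A$ (not the tensor code); define $\tau_b$ from the $d$-local indicators $\indi{\zee_{X(g,\cdot,b)}\neq h_{X(g,\cdot,b)}}$ and $\indi{\zee_{Y(g,\cdot,b)}\neq h_{Y(g,\cdot,b)}}$, and the per-$b$ EML together with the inner-code distance $\delta_A$ gives $\tau_b^2 - (\delta_A-\lambda)\tau_b + \eta \geq 0$. The tensor code then enters only in the lower bound, $\Delta(\tildeEx{\cdot},h) \geq \delta_A\delta_B\,\tau_b$ for every $b$, while the upper bound averages $\tau_b^2 + \lambda\tau_b + \eta$ over $b$. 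So the dichotomy reads: either some $\tau_b \geq (\delta_A-\lambda) - O(\eta)$, forcing $\Delta \geq \delta_A\delta_B(\delta_A-\lambda) - O(\eta) = \delta - O(\eta)$, or all $\tau_b = O(\eta)$, forcing $\Delta = O(\eta)$. Correspondingly, the $\eta$-good condition is the family of $d$-local covariances $\Ex{g_1,g_2}{\tildecov[\zee_{X(g_1,\cdot,b)},\zee_{Y(g_2,\cdot,b)}]} \leq \eta$ for all $b$, obtained by conditioning on $d^2$-sized faces and paying a $d$ factor (union bound) in the SoS degree. Two smaller corrections: the distance proof uses mixing over only \emph{one} Cayley graph, $\Cay(G,A)$ (the other generator direction appears only through the tensor code), not both as you suggest; and controlling covariance of full $d^2$-local faces is sufficient (it dominates the per-$b$ quantities) but by itself does not hand you the right $d$-local functions for the per-$b$ quadratic.
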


\begin{proof}[Proof Sketch]
	We outline the proof by once again focusing on a proof of distance for the appropriate notion of $\eta$-good pseudocodewords, and combining this proof with the covering lemma can be done as in the case of Tanner codes. Assume WLOG that $\delta_A\geq \delta_B$.
	
	Consider the SoS relaxation where variables correspond to squares: $\zee =\inbraces{Z_{s,j}}_{s\in \calS,j\in [q]}$. The SoS-degree of this relaxation is at least $d^2$ so that we can enforce all the inner code constraints by making the SoS relaxation respect such constraints explicitly.
	
	Note that for a fixed $b$, the set of squares $\{X(g,a,b) \suchthat g\in G, a\in A\}$ can be seen as the edges of an expander code, with the inner code as $C_A$. This motivates the following definition.
	
	A pseudocodeword is called $\eta$-good if for all $b \in B$,
	\begin{align}\label{eqn:eta_good_ltc}
		\Ex{g_1,g_2}{\tildeCov{\zee_{X(g_1,\cdot,b)}}{\zee_{Y(g_2,\cdot,b)}}} \leq \eta
	\end{align}
	
	As we proved before, this implies that for any $h\in \calC^{SCC}$, 
	\[
		\tau_b = \sqrt{\Ex{g}{\tildeEx{\indi{\zee_{X(g,\cdot,b)}\neq h_{X(g,\cdot,b)}}}}} \sqrt{\Ex{g}{\tildeEx{\indi{\zee_{Y(g,\cdot,b)}\neq h_{Y(g,\cdot,b)}}}}} 
	\] 
	satisfies
	\[
		\tau_b^2 -(\delta_A-\lambda)\tau_b+\eta \geq 0
	\]
	
	Let's prove upper and lower bounds on $\Ex{s}{\indi{Z_s\neq h_s}}$ in terms of $\tau_b$.
	
	\begin{align*}
		\Ex{s}{\indi{\zee_s \neq h_s}} &~=~ \Ex{g,a,b}{\tildeEx{\indi{\zee_{X(g,a,b)}\neq h_{X(g,a,b)}}}}  \\
		&~=~ \Ex{g}{\tildeEx{\dis(\zee_{X(g,\cdot,\cdot)}, h_{X(g,\cdot,\cdot)}}} \\
		&~\geq~ \Ex{g}{\tildeEx{0\cdot \indi{\zee_{X(g,\cdot,\cdot)} = h_{X(g,\cdot,\cdot)}}} + \delta_A \cdot \delta_B \cdot \indi{\zee_{X(g,\cdot,\cdot)} \neq h_{X(g,\cdot,\cdot)}}} \\
		&~=~ \delta_A\delta_B \Ex{g}{\tildeEx{\indi{\zee_{X(g,\cdot,\cdot )}\neq h_{X(g,\cdot,\cdot)}}}} \\
		&~\geq~ \delta_A\delta_B \Ex{g}{\tildeEx{\indi{\zee_{X(g,\cdot,b )}\neq h_{X(g,\cdot,b)}}}} 
	\end{align*}
	where the last inequality is true for any $b\in B$.
	
	Likewise, $\Ex{s}{\indi{\zee_s\neq h_s}} \geq \delta_A\delta_B \Ex{g}{\tildeEx{\indi{\zee_{Y(g,\cdot,b )}\neq h_{Y(g,\cdot,b)}}}}$. Therefore,
	
	\begin{align*}
		\Ex{s}{\indi{Z_s\neq h_s}} &\geq \delta_A\delta_B \sqrt{\Ex{g}{\tildeEx{\indi{\zee_{X(g,\cdot,b )}\neq h_{X(g,\cdot,b)}}}}} \sqrt{\Ex{g}{\tildeEx{\indi{\zee_{Y(g,\cdot,b )}\neq h_{Y(g,\cdot,b)}}}}} \\
		&= \delta_A\delta_B \tau_b
	\end{align*}
	
	For the upper bound,
	
	\begin{align*}
		\Ex{s}{\indi{Z_s \neq h_s}} &= \Ex{g,a,b}{\tildeEx{\indi{\zee_{X(g,a,b)}\neq h_{X(g,a,b)}}}} \\
		&= \Ex{b}{\Ex{g,a}{\tildeEx{\indi{\zee_{X(g,a,b)}\neq h_{X(g,a,b)}}}}} \\
		&\leq \Ex{b}{\tau_b^2 + \lambda \tau_b + \eta}
	\end{align*}
	
	Importantly, the lower bound works for any $b$, and so if any $\tau_b$ is large, we can conclude good distance. Otherwise, all $\tau_b$ are small, and then the upper bound works well.
	
	Suppose there is some $b$ for which $\tau_b \geq (\delta_A-\lambda) - \frac{2\eta}{\delta_A-\lambda}$. Then, the distance is at least $\delta_A\delta_B ( (\delta_A-\lambda) - \frac{2\eta}{\delta_A-\lambda})$.
	
	If not, then all $\tau_b$ are at most $\frac{2\eta}{\delta_A-\lambda}$. Then the distance,
	\begin{align*}
		\dis(\tildeEx{\cdot}, h) &\leq \Ex{b}{\tau_b^2 + \lambda\tau_b + \eta} \\
		&\leq \Ex{b}{\frac{4\eta^2}{(\delta_A-\lambda)^2} + \frac{\eta(\delta_A+\lambda)}{\delta_A-\lambda}} \\
		&= \frac{4\eta^2}{(\delta_A-\lambda)^2} + \frac{\eta(\delta_A+\lambda)}{\delta_A-\lambda}
	\end{align*}
	
	Putting everything together, if $\lambda\leq \delta_A/3$ and $\eta\leq \delta_A^2/9$, we get that 
	\[ \dis(\tildeEx{\cdot}, h) \geq \delta_A\delta_B(\delta_A-\lambda) - 3 \delta_B \eta \geq \delta -3\eta\]
	 or \[ \dis(\tildeEx{\cdot)}, h) \leq 3\eta\]
	
	Rest of the decoding argument is as before via covering lemma. We only need to show that the $\eta$-good property needed in \cref{eqn:eta_good_ltc} can be obtained by conditioning as before. In case we end up very close to a codeword, we can still unique decode via known unique decoding algorithms \cite{DELLM22} and \cref{lem:decoding_from_distributions}.
	
	The condition in \cref{eqn:eta_good_ltc} is that $d$ different average correlations are small. Recall that we ensured in the Tanner code case that covariance is small with probability $1-\gamma$ where $\gamma$ can be made arbitrarily small with SoS-degree. By a union bound, we can ensure that the covariance is small across all the $d$ bipartite graphs with probability at least $1-d\cdot \gamma$. Thus, by paying an additional factor of $d$ in the SoS-degree, we can ensure low covariance across all $b \in B$ as needed. 
	
	As before, this argument can be derandomized as well.
\end{proof}


\section*{Acknowledgements}
We thank Mary Wootters for help with references related to expander codes, and Mrinalkanti Ghosh for
helpful discussions. We are also grateful to FOCS 2023 reviewers for helpful comments on improving the presentation.

\bibliographystyle{alphaurl}
\bibliography{macros,madhur}

\end{document}